\documentclass[10pt]{article}

\usepackage[colorlinks, linkcolor=blue,citecolor=blue]{hyperref}           
\usepackage{cmap}
\usepackage{color}
\usepackage[percent]{overpic}
\usepackage{graphicx,subfigure,amsmath,amssymb,amsfonts,bm,epsfig,epsf,url,dsfont,tcolorbox,bbm,microtype}
\usepackage{amsthm,mathrsfs}
\usepackage{epigraph}
\usepackage{tikz}
\usepackage{multirow}
\usepackage{bbm}      
\usepackage{booktabs}
\usepackage{cases}
\usepackage{enumitem}
\usepackage[small,bf]{caption}
\usepackage[top=1in,bottom=1in,left=1in,right=1in]{geometry}
\usepackage{fancybox}
\usepackage{hyperref}
\usepackage{scrhack}
\hypersetup{
    colorlinks,
    linkcolor={blue!80!black},
    citecolor={red!80!black},
    urlcolor={blue!80!black}
}
\usepackage{algorithm}
\usepackage{verbatim}
\usepackage{algorithmic}
\usepackage{mathtools}
\usepackage{braket}
\newcommand{\graybox}[1]{%
\begin{tcolorbox}[
  width=\linewidth,
  left=1mm,
  right=1mm,
  boxsep=0pt,
  boxrule=0.5pt,
  sharp corners,
  colback=white!95!black]
#1
\end{tcolorbox}
}

\usepackage[normalem]{ulem}
\newcommand\redout{\bgroup\markoverwith{\textcolor{red}{\rule[0.5ex]{2pt}{0.8pt}}}\ULon}

\usepackage{scalerel,stackengine}

\stackMath
\newcommand\reallywidehat[1]{
\savestack{\tmpbox}{\stretchto{%
0  \scaleto{%
    \scalerel*[\widthof{\ensuremath{#1}}]{\kern-.6pt\bigwedge\kern-.6pt}%
    {\rule[-\textheight/2]{1ex}{\textheight}}
  }{\textheight}%
}{0.5ex}}%
\stackon[1pt]{#1}{\tmpbox}%
}

\newcommand{\floor}[1]{\left \lfloor #1 \right \rfloor}

\newtheorem{theorem}{Theorem}
\newtheorem{cor}[theorem]{Corollary}

\newtheorem{lemma}[theorem]{Lemma}
\newtheorem{prop}[theorem]{Proposition}

\newtheorem{introtheorem}{Theorem}

\theoremstyle{remark}
\newtheorem{definition}{Definition}
\newtheorem{remark}{Remark}
\newtheorem{construction}{Construction}
 \newtheorem{obs}{Observation} 

\newtheorem{example}{Example}

\newenvironment{fminipage}%
  {\begin{Sbox}\begin{minipage}}%
  {\end{minipage}\end{Sbox}\fbox{\TheSbox}}

\makeatletter
\newcommand*{\rom}[1]{\expandafter\@slowromancap\romannumeral #1@}
\makeatother

\DeclareMathOperator{\supp}{supp}
\DeclareMathOperator{\Span}{Span}

\usepackage{mathrsfs}

\newcommand{\nc}{\newcommand}
\nc{\us}{{\underline{s}}}
\nc{\un}{{\underline{n}}}  \nc{\ux}{{\underline{x}}} 
\nc{\uX}{{\underline{X}}}  \nc{\uY}{{\underline{Y}}}  \nc{\uv}{{\underline{v}}}     \nc{\uz}{{\underline{z}}} 
\nc{\uzero}{{\underline{0}}} 
\nc{\uTheta}{{\underline{\theta}}} 
\nc{\uy}{{\underline{y}} } \nc{\ue}{{\underline{e}}}  \nc{\uf}{{\underline{f}} } \nc{\ur}{{\underline{r}}} 
\nc{\ub}{{\underline{b}} } \nc{\ua}{{\underline{a}}} 
 \nc{\ui}{{\underline{i}}}  
 \nc{\wt}{{\mathrm{wt}}}  
  \nc{\lcs}{{\mathrm{LCS}}}  
  \nc{\uj}{{\underline{j}}}  
\nc{\Corr}{\mathrm{corr}}
\nc{\1}{{\mathbbm{1}}}
\nc\bfx{\boldsymbol x}
\nc{\Cov}{\mathrm{Cov}}

\newcommand{\Ind}{\mathbbm{1}}

\newcommand{\tr}{\mathrm{tr}}

\newcommand{\abs}[1]{\left|#1\right|}
\newcommand{\R}{\mathbb{R}} 
\newcommand{\N}{\mathbb{N}}

\newcommand{\Z}{\mathbb{Z}}

\newcommand{\C}{\mathbb{C}}

\newcommand{\E}{\mathbb{E}}

 \nc\sC{{\mathscr C}}
  
\def\P{{\mathbb P}}

\newcommand{\calB}{{\cal B}}

\newcommand{\calD}{{\cal D}}
\newcommand{\calE}{{\cal E}}
\newcommand{\calF}{{\cal F}}
\newcommand{\calG}{{\cal G}}
\newcommand{\calH}{{\cal H}}
\newcommand{\calI}{{\cal I}}
\newcommand{\calJ}{{\cal J}}
\newcommand{\calK}{{\cal K}}
\newcommand{\calL}{{\cal L}}
\newcommand{\calM}{{\cal M}}
\newcommand{\calN}{{\cal N}}

\newcommand{\calQ}{{\cal Q}}
\newcommand{\calR}{{\cal R}}
\newcommand{\calS}{{\cal S}}
\newcommand{\calT}{{\cal T}}

\newcommand{\calX}{{\cal X}}

\DeclarePairedDelimiterX{\cond}[1]{[}{]}{\setargs{#1}}
\NewDocumentCommand{\setargs}{>{\SplitArgument{1}{;}}m}
{\setargsaux#1}
\NewDocumentCommand{\setargsaux}{mm}
{\IfNoValueTF{#2}{#1} {#1\,\delimsize|\,\mathopen{}#2}}

\newcommand{\be}{\begin{equation}}
\newcommand{\ee}{\end{equation}}
\newcommand{\beqna}{\begin{eqnarray}}
\newcommand{\eeqna}{\end{eqnarray}}

\newcommand{\p}[1]{\left(#1\right)}
\newcommand{\pp}[1]{\left[#1\right]}
\newcommand{\ppp}[1]{\left\{#1\right\}}
\newcommand{\norm}[1]{\left\|#1\right\|}
\newcommand{\innerP}[1]{\left\langle#1\right\rangle}

\usepackage{xspace}
\usepackage{array}
\usepackage{tabularx}
\usepackage{booktabs} 
\usepackage{adjustbox} 

\newcommand{\s}[1]{\mathsf{#1}}

\newcommand{\su}[1]{\underline{\mathsf{#1}}}

\makeatletter
\def\thanks#1{\protected@xdef\@thanks{\@thanks
        \protect\footnotetext{#1}}}
\makeatother



\usepackage{etoolbox}

\makeatletter
\patchcmd{\thebibliography}
  {\list}
  {\small
   \list}
  {}{}

\patchcmd{\thebibliography}
  {\sloppy}
  {\sloppy
   \setlength{\itemsep}{1pt}
   \setlength{\parsep}{3pt}
   \setlength{\parskip}{3pt}}
  {}{}
\makeatother

\usepackage[refpage,notocbasic]{nomencl}
\makenomenclature

\allowdisplaybreaks
\date{}

\begin{document}
\title{Theory of approximate quantum error correction\\and the error-set model}
 \author{Dor~Elimelech$^1$\hspace*{.4in} Victor~V.~Albert$^2$\hspace*{.4in}  Alexander~Barg$^{1,2,3}$}\thanks{$^1$Institute for Systems Research, University of Maryland, College Park, MD 20742. $^2$Joint Institute for Quantum Information and Computer Science, University of Maryland/NIST, College Park, MD 20742. $^3$Department of ECE, University of Maryland, College Park, MD 20742. Emails: 
  \{dor,vva,abarg\}@umd.edu. 
 }

\maketitle

\begin{abstract}
    We develop a theory of approximate quantum error correction (QEC) based on the error-set model, complemented by general methods for code construction. Exact QEC has a powerful error-set structure: by the Knill-Laflamme conditions, a code correcting a given error set automatically protects against every channel whose Kraus operators lie in their linear span. This linearity gives rise to code distance, the equivalence between erasures and general errors, and a theory of asymptotically good codes. A longstanding view has been that these features do not extend to AQEC, leaving the theory essentially channel-by-channel. We show instead that, although full Knill--Laflamme linearity fails, a restricted form survives and suffices to extend all three structural features to the approximate setting. Specifically, a common error-set criterion governs families of channels whose Kraus operators are linear combinations of a given error set and whose coefficient matrices satisfy a spectral constraint. Using the B\'eny-Oreshkov worst-case and Petz average-case frameworks, we derive uniform fidelity guarantees for these families in terms of two new code parameters--the \emph{environment-leakage distance}, controlling worst-case performance, and the \emph{Knill-Laflamme Hellinger distance}, characterizing the average-case performance of Petz recovery.

    To demonstrate the scope of this model, we develop partition-based constructions across diverse quantum systems and geometries, placing exact and approximate correction on equal footing. These constructions lead to a metric--error alignment hierarchy for Hilbert spaces, metrics, and error families, which in turn characterizes the resulting recovery guarantees. They yield the first known asymptotically good code families for fermionic systems, one-dimensional Rydberg-blockaded systems, and deletion errors, and extend to other physical platforms.

\end{abstract}

\clearpage
{\small
\tableofcontents}


\clearpage
\rightline{\begin{minipage}{.5\textwidth}{\small \hspace*{.2in}\textit{The connection between correcting general errors and erasure errors breaks down for approximate QECCs. This suggests there is no sensible notion of
distance for an approximate quantum error-correcting code.}\\[.1in]
\rightline {C. Cr{\'e}peau, D. Gottesman, and A. Smith, 2005, \cite{crepeau2005approximate}}\par} \end{minipage}}

\section{Introduction and overview}

Quantum information is carried by coherence and entanglement, but precisely these features are highly sensitive to uncontrolled interactions and decoherence \cite{zurek2003decoherence}. The possibility of protecting quantum states against such noise is therefore one of the central structural questions of quantum information theory. Quantum error correction provides such a protection mechanism \cite{shor1995scheme,knill1997theory}. It underlies the modern theoretical picture of scalable fault-tolerant quantum computation \cite{shor1996fault,aharonov2008fault}, gives a framework for reliable transmission through noisy quantum channels \cite{knill1997theory,bennett1996mixed}, and it provides the natural language for 
robust features of many-body systems and topological phases \cite{kitaev2003fault,dennis2002topological}. From the theoretical point of view, quantum error correction is therefore not only a way of combating noise, but also a language for describing highly organized sectors of the Hilbert space and operator structures that preserve them \cite{knill1997theory,kitaev2003fault,terhal2015quantum}.

The basic idea of quantum error correction is subtle. Because unknown quantum states cannot be copied, protection cannot rely on classical-style repetition \cite{wootters1982single}. Instead, the logical information is encoded into a larger Hilbert space in such a way that the relevant error operators act reversibly on the encoded subspace. In this way, decoherence and imperfect control are turned from an irreversible loss of information into a structured problem about subspaces, error operators, and recovery maps \cite{knill1997theory}.

The success of exact QEC lies not only in the possibility of recovery, but also in the structural framework it provides. Through the Knill--Laflamme conditions \cite{knill1997theory}, exact correctability becomes a linear condition on the action of errors on the code, and this linearity has powerful consequences: once a code corrects a prescribed set of errors, it automatically corrects every channel whose Kraus error operators lie in their linear span \cite{knill1997theory,knill2000theory}. This is a major advantage of the exact theory: one need not tailor the code to a specific noise channel, but can protect uniformly against an entire adversarial family of channels, constrained only by the type and amount of noise they are allowed to introduce \cite{knill2000theory,gottesman2010introduction}. In this way, exact QEC acquires a genuine error-set model, rather than a theory tied to one fixed channel. In particular, it leads to the notion of \textit{code distance}, which quantifies the greatest noise severity --- for example, the largest number of affected subsystems --- that the code can correct exactly \cite{knill1997theory,gottesman2010introduction}.

Exact correction is often too rigid for physically relevant noise models: there are always some uncorrectable errors. In approximate quantum error correction, one relaxes the requirement of perfect recovery and asks only that, after noise and decoding, the recovered state remain close to the original encoded state according to a suitable fidelity or distance measure. This relaxation is not merely technical. Already in early work, it was observed that approximate codes can outperform exact ones for important noise models such as amplitude damping \cite{leung1997approximate}, and more recent constructions show that AQEC can approach asymptotic coding limits that are inaccessible to exact QEC, including the quantum Singleton and Hamming bounds \cite{bergamaschi2024approaching,ma2025haar}. In this sense, AQEC is needed not only because exact correction may fail to exist, but because allowing a vanishingly small recovery error can fundamentally enlarge the achievable coding regime.

This point has led to a substantial line of research on AQEC. Foundational works developed quantitative criteria for approximate recoverability in terms of entanglement fidelity, coherent information, and near-optimal recovery maps \cite{schumacher1996sending,schumacher2001approximate,barnum2002reversing,beny2010general,klesse2007approximate}. More recent work has sharpened channel-level performance metrics and the role of the Petz/transpose map \cite{ng2010simple,noh2018quantum,zheng2024near,li2025optimality, kim2026optimal}, produced explicit code constructions and asymptotic existence results \cite{bergamaschi2024approaching,ma2025haar,xu2025letting}, and uncovered new connections between AQEC, many-body order, circuit complexity, and information masking \cite{yi2024complexity,yi2025lov,li2025random}. Together, these developments show that AQEC is not simply a perturbative variant of exact QEC, but a broad and active framework in its own right.

Yet for all this progress, existing AQEC theory remains overwhelmingly channel-based. The central formulations ask whether a code approximately corrects a fixed noise channel or quantify the performance of a prescribed decoder for a given channel model. What is missing is the structural framework that makes exact QEC so robust: in the approximate setting, correctability is no longer linear in the Knill--Laflamme sense, and one cannot simply pass from approximate correction of a prescribed error set to approximate correction of every channel whose Kraus operators lie in its linear span. As a result, while exact QEC has enjoyed a highly useful adversarial error-set formulation for nearly three decades, no analogous general theory has emerged for AQEC. This severely limits the practical reach of the existing channel-based theory, since one is forced to analyze channels individually rather than work uniformly with entire families of noise constrained only by the type and amount of errors they may introduce.

This absence has long shaped the prevailing intuition about AQEC. In particular, as evidenced by the epigraph, it was argued early on that even basic exact-QEC notions such as linearity and distance may fail to admit a sensible approximate analog, and more generally, that strong structural features of the exact theory --- such as the relation between correcting erasures and correcting general errors, from which the usual notion of distance arises --- may simply not survive in the approximate setting.

The goal of this work is to overturn this conclusion. We show that AQEC does admit a natural and useful adversarial error-set model. The key point is that, although the full linearity of the exact Knill--Laflamme theory fails in the approximate setting, a restricted form of linearity survives: one can pass from an error set to a family of channels generated by its linear span, provided the mixing coefficients satisfy a spectral constraint. This leads to a well-defined class of channels controlled by an error set of interest, and hence to an approximate analog of the adversarial error-set model of exact QEC. Within this model, we develop approximate counterparts of the main structural notions of the exact theory, including a notion of distance, general conditions for AQEC, and a quantitative relation between correcting erasures and general errors, which in turn reveals that the known connection between AQEC and quantum circuit complexity runs deeper than previously understood. We complement this structural theory with explicit code constructions across a range of quantum platforms and noise models. Taken together, these results extend the existing channel-based theory of AQEC to an error-set-based one. More broadly, they open a new direction for adversarial AQEC, in which approximate correction can be studied uniformly over structured families of noise rather than one channel at a time.

\subsection{The missing error-set model for AQEC }
Our starting point is the standard channel-based formulation of approximate quantum error correction. Let $\calH$ be a finite-dimensional Hilbert space. A quantum code $Q\subseteq \calH$ is a $K$-dimensional subspace, and an encoded quantum state is a density operator $\rho\in D(\calH)$ whose support lies in $Q$. Noise is modeled by a quantum channel, namely a completely positive trace-preserving (CPTP) map $\calN:L(\calH)\to L(\calH')$. Any such channel admits a Kraus representation $\calN(\rho)=\sum_k E_k\rho E_k^\dag$,
where the operators $\ppp{E_k}_k$ satisfy the completeness relation $\sum_k E_k^\dag E_k=I$. The operators $E_k$ are called Kraus operators of the channel, and this representation is, in general, not unique.

A code $Q$ is said to be an $\varepsilon$-approximate quantum error-correcting code (or simply $\varepsilon$\textbf{-AQEC}) for a channel $\calN$ if there exists a CPTP recovery map $\calD$ such that the recovered channel $\calD\circ \calN$ is $\varepsilon$-close to the identity under a suitable metric function (see Definition~\ref{def: AQEC codes}). Throughout most of this work, we measure this closeness by the Bures distance between channels, defined from the worst-case entanglement fidelity,
\begin{align*}
d(\calN,\calM)& := \sqrt{1-\calF_e(\calN,\calM)}\\
\calF_e(\calN,\calM)& := \inf_{\rho\in D(\calH)}\calF\p{\calN(\ket{\psi_\rho}\bra{\psi_\rho}),\calM(\ket{\psi_\rho}\bra{\psi_\rho})},
\end{align*}
where $\calF(\rho,\tau)=\norm{\sqrt{\rho}\sqrt{\tau}}_1$ is the state fidelity and $\ket{\psi_\rho}$ is any purification of $\rho$; see Definition~\ref{def:Bures}. This criterion is well established in AQEC. It is equivalent to the diamond-norm distance for nearby channels (see Definition~\ref{def:CbDiamondNorm}) and is the distance measure used in the B\'eny--Oreshkov framework \cite{beny2010general}, which will serve as our main tool below. Later in the paper, we also consider a weaker, average-case notion of channel closeness based on channel fidelity and show that the error-set model developed here extends naturally to this setting.

In exact quantum error correction, the Knill--Laflamme conditions \cite{knill1997theory} give more than a criterion for reversing a fixed noise channel. If a code corrects an error set $\calE$, then every channel whose Kraus operators lie in $\Span(\calE)$ is exactly reversible on the code space.  This is the adversarial error-set model of exact QEC: rather than tailoring the code to a specific channel, one seeks to protect uniformly against all channels whose noise is generated by a prescribed family of errors. Building on this adversarial error-set perspective, we develop a corresponding formalism for the approximate setting.

For AQEC, a useful general framework
(for the worst-case criterion) is the channel-level theory of B\'eny and Oreshkov \cite{beny2010general}. It gives necessary and sufficient conditions for approximate correction of a fixed channel by characterizing when the corresponding complementary channel is close to a constant channel. Below, we restate this criterion in the operator form of Theorem~\ref{th:BO}. In hindsight, however, it is not clear from this formulation how to pass from one fixed channel to a genuine error-set model, since the approximate conditions do not enjoy the same linearity properties as the exact Knill--Laflamme theory.

This leads to the natural question that guides this work:
\begin{center}
    \textit{Given an error set of interest $\calE$, for which channels with Kraus operators in $\Span(\calE)$ can one guarantee AQEC properties of a given code?}
\end{center}
The B\'eny--Oreshkov conditions suggest the answer. An error set $\calE$ defines a completely positive map (generally not trace-preserving) and hence also an associated complementary map.
The key question is then which channels with Kraus operators in $\Span(\calE)$ have the property that closeness of the complementary map associated with $\calE$ to a constant map implies closeness of their complementary channel to a constant channel.
 This question gives rise to the notion of $\calE$-controlled channels, which leads to the approximate adversarial error-set model developed next.

Our controlled families of channels are defined by an operator set $\calE=\ppp{E_k}_{k}$ and a condition on the expansion coefficients of the channel's Kraus operators as linear combinations of the operator set elements.
A channel $\calN$ will be called 
$\calE$-\emph{controlled} if it admits a Kraus representation $\ppp{A_m}_{m}$ of the form
\[A_m=\sum_{k\in[M]}c_{m,k}E_k,
\]
where the coefficient matrix $C=(c_{m,k})$ satisfies $\norm{C}_\infty\leq 1$. Thus, the {\bf approximate error-set model}  consists of the set of all channels whose Kraus operators lie in $\Span(\calE)$ with a uniform spectral bound on the mixing coefficients. A code $Q$ is then said to be an $\varepsilon$-AQEC code for the error set $\calE$ if it is an $\varepsilon$-AQEC code for every $\calE$-controlled channel in the sense of Definition~\ref{def: AQEC codes}.

The B\'eny--Oreshkov theorem indicates how to control this family uniformly. Given a code $Q\subseteq\calH$ and an error set $\calE=\ppp{E_k}_{k}$, one associates to every matrix $\lambda=(\lambda_{k,l})$ the B\'eny--Oreshkov superoperator (see Definition~\ref{def:DiamondNormProxConst})
\[\calB^{\calE}_{\lambda,Q}(\rho)=\sum_{k,l}\tr\p{(P{E_l^\dag E_k}P-\lambda_{k,l}P)\rho}\ket{k}\bra{l},
\]
where $P$ is the orthogonal projector on the code space $Q$.
When $\calE$ is the Kraus set of a quantum channel, the B\'eny--Oreshkov theorem states that approximate correctability of $Q$ for that channel is equivalent to the corresponding complementary channel being close to a constant channel. When this complementary-channel criterion is transferred from a fixed channel to the CP map associated with an error set, it becomes the requirement that the superoperator $\calB^{\calE}_{\lambda,Q}$ be close to the zero map for some choice of $\lambda$. We measure this closeness using the diamond norm $\norm{\cdot}_{\diamond}$, see Definition~\ref{def:CbDiamondNorm}. Passing from the Bures-distance formulation of B\'eny and Oreshkov to the diamond norm introduces the usual square-root loss: an $O(\varepsilon)$-AQEC guarantee requires $O(\varepsilon^2)$ proximity of $\calB^{\calE}_{\lambda,Q}$ to the zero map for some choice of $\lambda$. To address this requirement, we define the {\bf environment-leakage distance} by optimizing this deviation over all possible choices of $\lambda$: 
   \[
   \zeta(\calE,Q) := \inf_{\lambda}\norm{\calB^{\calE}_{\lambda,Q}}_{\diamond}
   \]
(see Definition~\ref{def:DiamondNormProxConst}). This is the main metric we use to control the AQEC capabilities of a family of $\calE$-controlled channels. The following theorem (which is a restatement of Theorem~\ref{th:BOerrorset}) shows that this quantity indeed yields a uniform sufficient condition for AQEC in the error-set model.

\begin{introtheorem}[Sufficient AQEC conditions for error sets]\label{th:restatementSuffworst}
Let $\calE=\ppp{E_k}_{k}$ be a finite set of error operators, and let $Q\subseteq\calH$ be a quantum code. If $\zeta(\calE,Q)\leq {\varepsilon^2}/2$,
then $Q$ is an $\varepsilon$-AQEC code for the error set $\calE$. Consequently, $Q$ is an $\varepsilon$-AQEC code for any $\calE$-controlled channel.
\end{introtheorem}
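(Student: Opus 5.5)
Fix an $\calE$-controlled channel $\calN$ with Kraus operators $A_m=\sum_k c_{m,k}E_k$ and $\norm{C}_\infty\le 1$. The plan is to show that the B\'eny--Oreshkov superoperator of $\calN$ is itself close to zero, and then apply the B\'eny--Oreshkov theorem (Theorem~\ref{th:BO}) for the fixed channel $\calN$. The transfer from $\calE$ to $\calN$ should be a conjugation by $C$. Expanding
\[
PA_n^\dag A_mP=\sum_{k,l}c_{m,k}\overline{c_{n,l}}\,PE_l^\dag E_kP,
\]
and choosing $\lambda$ attaining (or nearly attaining) the infimum in $\zeta(\calE,Q)$, set $\mu_{m,n}:=\sum_{k,l}c_{m,k}\lambda_{k,l}\overline{c_{n,l}}$, i.e.\ $\mu=C\lambda C^\dag$. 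We then get the identity
\[
\calB^{\calN}_{\mu,Q}(\rho)=V\,\calB^{\calE}_{\lambda,Q}(\rho)\,V^\dag,\qquad V=\sum_{m,k}c_{m,k}\ket{m}\bra{k},
\]
that is, $\calB^{\calN}_{\mu,Q}=\mathcal{V}\circ\calB^{\calE}_{\lambda,Q}$ with $\mathcal{V}(X)=VXV^\dag$.

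Next comes the norm bound. Since $\norm{V}_\infty=\norm{C}_\infty\le1$, the map $\mathcal{V}$ (tensored with any identity) is contractive in trace norm, because $\norm{(V\otimes I)X(V\otimes I)^\dag}_1\le\norm{V}_\infty^2\norm{X}_1$. Hence
\[
\norm{\calB^{\calN}_{\mu,Q}}_\diamond\le\norm{\calB^{\calE}_{\lambda,Q}}_\diamond\le\varepsilon^2/2,
\]
with an arbitrarily small slack if the infimum is not attained. Closedness of the feasible set in $\lambda$ should let us drop that slack; otherwise we argue with $\varepsilon'\downarrow\varepsilon$ and use continuity of the conclusion.

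Finally we convert to AQEC. Because $\calN$ is trace preserving, $\sum_m PA_m^\dag A_mP=P$. The operator form of Theorem~\ref{th:BO}, as restated in the paper, says that diamond-norm proximity $\delta$ of $\calB^{\calN}_{\mu,Q}$ to zero for some matrix $\mu$ (not necessarily a density matrix) gives a recovery with Bures error at most $\sqrt{2\delta}$. The main obstacle is exactly here. The B\'eny--Oreshkov criterion is phrased as closeness of the complementary channel $\hat\calN(\rho)=\sum_{m,n}\tr(A_n^\dag A_m\rho)\ket{m}\bra{n}$ to a constant channel $\rho\mapsto\tr(\rho)\,\sigma$ with $\sigma$ a state. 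Our $\mu$ is only guaranteed to be a matrix, since $\lambda$ was optimized over all matrices. To fix this, I would first use that $\hat\calN$ is trace preserving on code states: taking the trace of $\hat\calN(\rho)-\tr(\rho)\mu$ shows $|1-\tr\mu|\le\delta$. Hermiticity and positivity of $\mu$ can be nearly restored by replacing $\mu$ with its projection onto density matrices, which costs at most a constant multiple of $\delta$. Alternatively, I would invoke the paper's operator restatement directly if it already allows arbitrary $\lambda$. Then the Fuchs--van de Graaf inequality converts the diamond-distance bound $\varepsilon^2/2$ into a Bures bound $\varepsilon$ for the complementary channels. The Bures-distance B\'eny--Oreshkov duality then gives a recovery $\calD$ with $d(\calD\circ\calN,\mathrm{id}_Q)\le\varepsilon$. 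Since $\calN$ was an arbitrary $\calE$-controlled channel, $Q$ is an $\varepsilon$-AQEC code for the error set $\calE$.
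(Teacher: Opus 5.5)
Your proof works and uses the same mechanism as the paper. The B\'eny--Oreshkov superoperator of the controlled channel factors as $\calT_C\circ\calB^{\calE}_{\lambda,Q}$, and conjugation by the contraction $C$ cannot increase the diamond norm. The real difference is \emph{where} you satisfy the density-matrix requirement of Theorem~\ref{th:BO}.

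The paper does it before conjugating. Lemma~\ref{lem:replaceOpt} trades the optimal $\lambda$ for $\lambda_{k,l}=\bra{c_0}E_l^\dag E_k\ket{c_0}$, at a cost of a factor $2$. After that, $C\lambda C^\dag=(\bra{c_0}A_h^\dag A_m\ket{c_0})_{m,h}$ is automatically positive semidefinite with unit trace. You instead conjugate an arbitrary optimizer and repair $\mu=C\lambda C^\dag$ afterwards, at the channel level. Your ordering has one advantage: your direct H\"older estimate $\norm{(I\otimes V)X(I\otimes V)^\dag}_1\le\norm{V}_\infty^2\norm{X}_1$ works for any $\lambda$. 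The paper's passage through cb-norm submultiplicativity needs $\calB^{\calE}_{\lambda,Q}$ to be Hermitian preserving, which holds for Hermitian $\lambda$ (such as the codeword-induced one) but need not hold for an arbitrary optimizer.

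What you still owe is the constant in the repair step, because the statement has the exact $\varepsilon$. Saying the repair costs ``a constant multiple of $\delta$'' is not enough, and your last sentence applies Fuchs--van de Graaf to the unrepaired bound $\varepsilon^2/2$. The fix takes one line:
\begin{itemize}
\item Take $\rho_0=\ket{c_0}\bra{c_0}$ with $\ket{c_0}\in Q$, and set $\mu'_{m,n}=\bra{c_0}A_n^\dag A_m\ket{c_0}$. This is the complementary output on $\rho_0$, and it is a density matrix because $\calN$ is trace preserving.
\item Evaluating on $\rho_0\otimes\rho_0$ gives $\norm{\mu'-\mu}_1\le\norm{\calB^{\calN}_{\mu,Q}}_\diamond\le\delta$.
\item Hence $\norm{\calB^{\calN}_{\mu',Q}}_\diamond\le 2\delta\le\varepsilon^2$, and \eqref{eq:diamondBures} yields Bures error at most $\varepsilon/\sqrt2$.
\end{itemize}
This $\mu'$ is exactly the paper's $\lambda^{(A)}$, so the two routes converge and each pays the same factor $2$.

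Drop the ``alternative'' of applying Theorem~\ref{th:BO} to an arbitrary matrix. The quantity $d(\Lambda,\Lambda+\calB)$ is only defined when $\Lambda$ is a channel, i.e.\ when the matrix is a density matrix. Attainment of the infimum is not a concern: the objective is continuous and coercive in $\lambda$, and the $\sqrt2$ slack would absorb any small excess anyway.
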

In the special case where $\calE$ is a Kraus set of a single channel $\calN$, the theorem reduces to the sufficiency part of the B\'eny--Oreshkov characterization for $\calN$, up to the unavoidable $\frac{1}{\sqrt{2}}\sqrt{\cdot }$ \,loss that arises as a result of replacing the Bures distance with the diamond-norm criterion used here. This theorem can be complemented by necessary AQEC conditions for several important families of error sets, including unitary Hilbert--Schmidt orthogonal errors and amplitude damping errors; see Propositions~\ref{prop:orthUni} and \ref{prop:NecFock}. These necessary bounds generally involve constants depending on the size or structure of the error set, and may therefore become large. Nevertheless, Proposition~\ref{prop:exampleOptimality} shows that this loss is not merely an artifact of the proof technique: there exist examples for which the sufficient condition above is also necessary, up to a dimension-independent constant and the square-root loss mentioned above. In particular, this shows that, within the present framework, any substantial improvement of the sufficient condition in Theorem~\ref{th:restatementSuffworst} requires additional assumptions on the error model or the code.

The sufficient condition above already shows that the quantity $\zeta(\calE,Q)$ gives uniform control over all $\calE$-controlled channels and therefore defines a genuine error-set model for AQEC. For such a model to be useful, however, two further points must be checked: first, that the family of $\calE$-controlled channels has the structural properties associated with the adversarial noise model, and second, that it gives rise to natural and nontrivial examples, extending the familiar error-set viewpoint of exact QEC. Both issues are studied in detail in Section~\ref{sec:StructrualProp}. Here, we briefly summarize the main conclusions.

On the structural side, the definition of $\calE$-controlled channels is phrased in terms of a specific Kraus representation, so it is not a priori clear that it defines an intrinsic property of the channel. Observation~\ref{obs: independent} shows that this ambiguity is resolved as soon as the error set $\calE$ is linearly independent: in this case, the condition does not depend on the choice of the Kraus representation. Another basic requirement for an adversarial error model is closure under probabilistic mixtures. Observation~\ref{obs:Convexity} shows that if $\calN_1,\dots,\calN_\ell$ are $\calE$-controlled, then any convex combination $\sum_{i=1}^{\ell}p_i\calN_i$ is again $\calE$-controlled. Physically, this corresponds to a channel that applies one of the $\calE$-controlled channels at random, with probabilities $p_1,\dots,p_\ell$. Thus, the model is stable under both changes of representation and probabilistic mixing. 

\subsubsection*{Approximate distance, erasures/general errors equivalence, and quantum circuit complexity}
One of the most important examples arises from orthogonal unitary error sets. As shown in Proposition~\ref{prop:orthUni}, if $\calE$ consists of Hilbert--Schmidt-orthogonal unitary operators, then every channel whose Kraus operators lie in $\Span(\calE)$ is automatically $\calE$-controlled. In this case, the restricted linearity built into our AQEC model becomes the full linearity commonly associated with exact QEC. In particular, for the set $\calE_t$ of Pauli errors of weight at most $t$ on $\calH_q^{\otimes N}$ (where $\calH_q$ is a $q$-dimensional Hilbert space; see Example~\ref{ex:Paulis}), the set of $\calE_t$-controlled channels contains exactly the channels introducing at most $t$ local errors. This recovers the usual adversarial error-set picture of exact QEC in a natural approximate form.

 This makes it possible to extend the usual notion of code distance to the approximate setting, and gives rise to the \textbf{approximate code distance}. 
 For Pauli errors, a code has $\varepsilon$-distance at least $t$ if it is an $\varepsilon$-AQEC code for the error set $\calE_t$, that is, if it is $\varepsilon$-AQEC for all channels introducing at most $t$ local errors. More generally, Definition~\ref{def:dist} formulates approximate distance for an {\em arbitrary indexed family} of error sets $\mathscr{E}=(\calE_t)_t$, where the index $t$ captures the severity of the noise according to the particular model under consideration. This point of view applies not only to limited-weight errors, but also to amplitude damping noise (Example~\ref{ex:AD}), Majorana noise in fermionic systems (Example~\ref{ex:Majoranas}), erasures, deletions, and other models considered in the main text of the paper.

Another structural feature of exact QEC that extends to the approximate setting is the relation between erasures and general errors. In the exact theory, correcting $2t$ erasures is equivalent to correcting $t$ general errors, and this equivalence is one of the main sources of the usual notion of code distance. Using the error-set model developed in this paper, we show that an analogous equivalence persists in AQEC: approximate correction of $2t$ erasures is equivalent to approximate correction of $t$ general errors, with an explicit relation between the associated approximation parameters. This is formalized in the following theorem, which is an informal version of Theorem~\ref{th:ErasureGeneralEr}.

\begin{introtheorem}[Connection between approximately correcting erasures and general errors]\label{th:restatementErasuresErrors}
If a code $Q$ $\varepsilon$-corrects $2t$ erasures, then it $\varepsilon'$-corrects $t$ general errors. Conversely, if $Q$ $\varepsilon$-corrects $t$ general errors, then it $\varepsilon''$-corrects $2t$ erasures. 
\end{introtheorem}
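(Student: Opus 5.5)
The plan is to reduce both directions to statements about the environment-leakage distance $\zeta$ and then pass between $\zeta$ and the AQEC parameter. For the passage $\zeta \Rightarrow$ AQEC we use Theorem~\ref{th:restatementSuffworst}. For the passage AQEC $\Rightarrow \zeta$ we use the necessary conditions for Hilbert--Schmidt orthogonal unitary error sets, Proposition~\ref{prop:orthUni}. Take $\calE_t$ to be the Pauli errors of weight at most $t$ on $\calH_q^{\otimes N}$. An erasure of a set $S$ with $|S|\le 2t$ is modeled as a channel whose Kraus operators, after twirling the erased subsystems (fully depolarizing them), are proportional to Paulis supported on $S$. The basic algebraic fact is exact-QEC-like: for $E_k,E_l\in\calE_t$ the product $E_l^\dag E_k$ lies, up to a phase, in $\calE_{2t}$. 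Moreover, $\calE_{2t}$ is a union over sets $S$ of size $2t$ of Pauli groups supported on $S$.

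\textbf{Erasures $\Rightarrow$ errors.} Suppose $Q$ $\varepsilon$-corrects every erasure of at most $2t$ sites. By the Bény--Oreshkov necessity direction (Theorem~\ref{th:BO}), the complementary channel of the completely depolarizing channel on each $S$ is $O(\varepsilon^2)$-close to a constant channel. Equivalently, the erased marginal $\tr_{\bar S}$ of any encoded state (purified with a reference) is nearly independent of the logical state. This gives an operator bound of the form
\[
\sup_{\rho}\Bigl|\tr\bigl(P F P\rho\bigr)-\mu_F\,\tr(P\rho)\bigr)\Bigr|\lesssim \varepsilon^2
\]
for each $F\in\calE_{2t}$ supported on $S$, with constants $\mu_F$. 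In fact it gives a stronger, diamond-norm version of this bound, uniform over the matrix of all such $F$ on $S$. We then assemble $\calB^{\calE_t}_{\lambda,Q}$ with $\lambda_{k,l}=\mu_{E_l^\dag E_k}$. Its $(k,l)$ entries are exactly these leakage terms, but for differing supports $S$, so the diamond norm must be bounded by combining them. The crude route is to bound the diamond norm of a $|\calE_t|\times|\calE_t|$ block matrix by $|\calE_t|$ times the maximal entry. To keep this dimension-independent I would instead note that $\calB$ is the difference between the complementary map of $\calE_t$ (restricted to $Q$) and a constant map. That complementary map factors through the reduced state on the union of supports, which has at most $2t$ sites for each pair $(k,l)$. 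The resulting bound is $\zeta(\calE_t,Q)\le c\,\varepsilon$ (or $c\,\varepsilon^2$), with $c$ depending only on $q^{t}\binom{N}{t}$ in the worst case. Theorem~\ref{th:restatementSuffworst} then yields $\varepsilon'=\sqrt{2\zeta}$.

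\textbf{Errors $\Rightarrow$ erasures.} Conversely, assume $Q$ is $\varepsilon$-AQEC for all $\calE_t$-controlled channels. By Proposition~\ref{prop:orthUni}, every channel with Kraus operators in $\Span(\calE_t)$ is controlled, and the necessary condition there bounds $\zeta(\calE_t,Q)$ by $O(\varepsilon)$ times a size constant. Each erasure pattern of size $2t$ splits as $S=S_1\cup S_2$ with $|S_i|\le t$. The Kraus operators of the depolarizing channel on $S$ are (normalized) Paulis $F=F_1F_2$ with $F_i$ supported on $S_i$, so that $F^\dag F'$ has the form $E_l^\dag E_k$ with $E_k,E_l\in\calE_t$. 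Hence each entry of $\calB^{\text{erase}}_{\lambda,Q}$ is an entry of $\calB^{\calE_t}_{\lambda,Q}$. The erasure channel's leakage is therefore controlled by a compression (a sum of local sub-blocks) of the error-set leakage. Applying Theorem~\ref{th:restatementSuffworst} to the erasure Kraus set gives $\varepsilon''$.

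\textbf{Main obstacle.} The hard part is comparing diamond norms under re-indexing of the Kraus/error matrices. In both directions, entries $\tr(PE_l^\dag E_kP\rho)$ are shared, but the block structures differ. A diamond-norm bound on one superoperator does not immediately bound a superoperator built from a rearrangement of its entries. I would try to realize the rearrangement as a CP (in fact unital, trace-nonincreasing) map acting on the environment register, namely Pauli twirling and partial trace over $S$, so that contractivity of $\norm{\cdot}_\diamond$ applies. Failing that, I would accept a polynomial loss in $|\calE_t|$, which matches the form of the ``explicit relation'' announced in the statement.
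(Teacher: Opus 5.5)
\textbf{Verdict: there is a genuine gap.} Your overall plan matches the paper's: bound $\zeta$ from the AQEC hypothesis using a necessary condition, compare $\zeta(\calE_t,Q)$ with $\zeta(\calE^{\s{Er}}_{2t},Q)$, and finish with the sufficient condition. Your algebra is also right: products of weight-$t$ Paulis have weight at most $2t$, and every Pauli on $I=I_1\cup I_2$ splits as $W_{\ux}^\dagger W_{\uy}$ with the two factors on $I_1$ and $I_2$.

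\textbf{The middle step is not proved.} The comparison of the two environment-leakage distances is the actual content of the result (Theorem~\ref{th:equivErGen}). You name it as the main obstacle and leave it open, and the route you sketch does not work. There is no single $2t$-site marginal through which all of $\calB^{\calE_t}_{\lambda,Q}$ factors, because $\supp(\ux)\cup\supp(\uy)$ changes from pair to pair. The final relation is also not dimension independent in the paper.

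What the paper does:
\begin{itemize}
\item It assigns each pair $(\ux,\uy)$ to one set $\sigma(\ux,\uy)\in\binom{[N]}{2t}$ containing both supports.
\item It writes $\calB^{\calE_t}_{\lambda_t,Q}=\calF\circ\calB_{\s{Er}}$ with $\calF=\bigoplus_I\calF^{\s{g}}_I\circ\calF^{\s{b}}_I$. Here $\calF^{\s{b}}_I(Z)=\tr_{\calH_I}(U_IZU_I^\dagger)$ for the scaled isometry $U_I=\sum_{\ux}\ket{\ux,I}\otimes W_{\ux,I}$, and $\calF^{\s{g}}_I$ is split into matchings by K\"onig's edge-colouring theorem. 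Lemma~\ref{lem:directSumLemma} then gives a loss of $B_{q^2}(2t,t)^2$.
\item Conversely, it writes $\calB_{\s{Er}}=\calF_\sigma\circ\calB^{\calE_t}$. This comes from expanding $(PXP)_I$ in the Hilbert--Schmidt orthogonal HW basis on $I$ and lifting each $V$ through the splitting above, with loss $\binom{N}{2t}B_{q^2}(2t,t)$.
\end{itemize}

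Your fallback of accepting a polynomial loss can be made rigorous, but only with two inequalities you do not state. The first is $\norm{\calB}_\diamond\le\sum_{k,l}\norm{B_{k,l}}_\infty$ for any B\'eny--Oreshkov superoperator. The second is $\norm{PVP-\bra{c_0}V\ket{c_0}P}_\infty\le\norm{\calB_{\s{Er},I}}_{\s{cb}}$, obtained by pairing the $I$-block $(PXP)_I-\tr(PX)(\ket{c_0}\bra{c_0})_I$ with the unitary $V$. As written, ``$|\calE_t|$ times the maximal entry'' is not a valid bound on a trace norm.

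\textbf{Two further errors.}

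First, Theorem~\ref{th:BO} together with \eqref{eq:diamondBures} turns $\varepsilon$-AQEC into diamond-norm closeness $2\sqrt2\,\varepsilon$, not $O(\varepsilon^2)$. So you only get $\zeta\le c\,\varepsilon$, and $\varepsilon'=\sqrt{2\zeta}$ necessarily carries the square-root loss. Your alternative ``$c\,\varepsilon^2$'' is unjustified.

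Second, in the converse direction, ``correcting $2t$ erasures'' means correcting every $\calE^{\s{Er}}_{2t}$-controlled channel, including random-location erasures.
\begin{itemize}
\item For that error set, $E_{\uj,I}^\dagger E_{\ui,I}=\ket{\uj}\bra{\ui}_I\otimes I$ are matrix units, not Paulis. The cross terms for $I\neq I'$ vanish only because of the flags.
\item Your claim that each erasure entry is an entry of $\calB^{\calE_t}_{\lambda,Q}$ holds only for the flagless depolarizing Kraus set at one fixed $S$. That covers only fixed-location erasures.
\item If you pool those depolarizing Kraus operators over all $S$, the cross entries $F'^\dagger F$ have weight up to $4t$ and are not entries of $\calB^{\calE_t}$. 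Indeed, the pooled controlled family then contains hidden-location weight-$2t$ Pauli noise, which a $t$-error-correcting code need not correct.
\end{itemize}
You have to work with $\calE^{\s{Er}}_{2t}$ itself. Use its block-diagonal structure over $I\in\binom{[N]}{2t}$, at a cost of order $\binom{N}{2t}$, and convert matrix units to Paulis; this conversion is exactly what $\calF_\sigma$ does.
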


The parameters $\varepsilon'$ and $\varepsilon''$ above are related by an explicit dimension-dependent constant together with a square-root loss; see Theorem~\ref{th:ErasureGeneralEr}.
Although $\varepsilon$ and $\varepsilon'$ are not equal, and the constants relating them can grow with the system size, this result is still meaningful: it shows that the exact-QEC bridge between erasures and general errors is preserved in the approximate setting and can be expressed in quantitative form. In particular, it provides an approximate analog of one of the structural mechanisms underlying code distance, and shows that the error-set model captures part of the exact-QEC structure that is absent from a purely channel-based AQEC viewpoint.

This perspective also plays a role in recent connections between AQEC and quantum circuit complexity. A central quantity in this line of work is the subsystem variance introduced in \cite{yi2024complexity}, which measures how much information about the encoded state can be read from the reduced density operator on a prescribed subsystem, and thereby controls AQEC against replacement channels on that subsystem. In Proposition~\ref{prop:SSVzeta}, we show that subsystem variance also bounds the environment-leakage distance for erasure errors; combined with Theorem~\ref{th:restatementErasuresErrors}, this implies that the same quantity controls AQEC performance against general limited-weight errors. Thus, subsystem variance becomes a common parameter linking local indistinguishability, approximate correction of erasures and general errors, and results concerning circuit complexity, derived in \cite{yi2024complexity, yi2025lov}.

\subsubsection*{The error-set model for the average-case AQEC criterion}
Alongside worst-case entanglement fidelity, a standard, weaker notion of performance is the channel fidelity (denoted by $\calF_{\mathrm{ch}}$),  \cite{audenaert2002optimizing,kosut2009quantum,noh2018quantum,zheng2024near}. Instead of minimizing over all input states, it evaluates the entanglement fidelity only on a maximally entangled state; see \eqref{eq:ChaFidel} for the exact definition. This quantity is closely related to the Haar-average input-output fidelity, with the precise relation given in \eqref{eq:averagcase}, and is therefore naturally interpreted as an average-case criterion. To stay consistent with the Bures-distance viewpoint used throughout the paper, we measure the corresponding error by
$d_{\mathrm{ch}}\p{\calN,\calM} := \sqrt{1-\calF_{\mathrm{ch}}^2\p{\calN,\calM}}$,
so that the average-case and worst-case formulations are expressed on the same square-root scale.

Because channel fidelity is itself a natural and widely used notion of approximate correction, it provides a natural test for the robustness of our error-set model: if the model is truly structural, it should not be tied to the worst-case criterion alone. What is far from obvious, however, is that the same family of $\calE$-controlled channels should continue to govern approximate correction in this average-case setting. The point of this subsection is that it does. Remarkably, the same passage from a fixed channel to a uniform family of $\calE$-controlled channels survives in the channel-fidelity criterion, so the error-set viewpoint is not an artifact of the worst-case B\'eny--Oreshkov framework. In this way, the existing average-case AQEC theory can also be lifted from a channel-based formulation to an error-set-based one, with conditions again stated directly in terms of the error set $\calE$ rather than derived separately for each channel.

As in the worst-case setting, we now pass from a fixed channel to an error-set model. Given an error set $\calE$, we use the family of $\calE$-controlled channels from Definition~\ref{def:AQECC}, and say that a code is average-case ($\s{Av}$) $\varepsilon$-AQEC for $\calE$ if it $\varepsilon$-corrects every channel in this family with respect to the 
channel-fidelity-based metric $d_{\mathrm{ch}}$; see Definition~\ref{def:AvAQECErrorSet} for the precise statement. This definition goes beyond a formal analogy: not only does uniform controllability extend to the average-case criterion, but the channels generated by the linear span of the error set that admit uniform average-case control are again precisely the $\calE$-controlled channels.

A standard approach to AQEC (and in particular, under the channel fidelity criterion) is via the Petz recovery map, also known in the QEC literature as the transpose channel. It is exact whenever the Knill--Laflamme conditions hold, and remains near-optimal for approximate correction in channel fidelity \cite{barnum2002reversing,ng2010simple,zheng2024near}. More precisely, for a code $Q$ and a channel $\calN$, the optimal average-case AQEC error $\varepsilon_{\mathrm{ch}}^{\mathrm{opt}}$ is characterized, up to a factor of $1/\sqrt{2}$, by the performance of the transpose-channel decoder. Recently, \cite{zheng2024near} gave an explicit expression for this performance in terms of the corresponding QEC matrix. Namely, if $Q$ is a $K$-dimensional code with basis $\ppp{\ket{c_i}}_{i}$ and $\calN$ has Kraus set $\calE=\ppp{E_k}_{k}$ of size $M$, then the associated QEC matrix is the positive semidefinite operator $A_{\s{QEC}}\in L(\C^K\otimes\C^M)$ with entries
\[\p{A_{\s{QEC}}}_{ik,jl} :=  \bra{c_i}E_k^\dag E_l\ket{c_j},\]
and the optimal average-case AQEC error is given, up to the same universal factor, by the expression 
\begin{equation}
    \sqrt{\frac{1}{2}\Big({1-\frac{1}{K^2}\|{\tr_{K}({\sqrt{A_{\s{QEC}}})}}\|_2^2}\Big)}
\leq \varepsilon_{\mathrm{ch}}^{\mathrm{opt}}
\leq \sqrt{1-\frac{1}{K^2}\|{\tr_{K}({\sqrt{A_{\s{QEC}}})}}\|_2^2}.\label{eq:reeq}
\end{equation}

Our new observation is that this expression admits a geometric interpretation which makes sense well beyond the case where $\calE$ is the Kraus set of a fixed channel. Recall that the exact Knill--Laflamme conditions for an error set $\calE=\ppp{E_k}_{k\in[M]}$ are equivalent to the requirement that the corresponding QEC matrix has the form $A_{\s{QEC}}=I_K\otimes \lambda$, or equivalently, lies in the Knill--Laflamme space
\[\calH_{\s{KL}} := \ppp{I_K\otimes\lambda~:~\lambda\in L(\C^M)}\subseteq L(\C^K\otimes\C^M).\]

In Lemma~\ref{lem:orthogonalProjectionKLspace}, we show that the quantity appearing in \eqref{eq:reeq} is precisely the normalized Hellinger distance from $A_{\s{QEC}}$ to this Knill--Laflamme space. We call this quantity the {\bf Knill--Laflamme Hellinger distance} of the code and denote it by $\zeta_{\s{H}}(\calE,Q)$. This geometric interpretation is more than 
a mere rephrasing of the result of \cite{zheng2024near}. The key point is that the QEC matrix is defined for an arbitrary finite error set $\calE$ rather than only when $\calE$ happens to be a Kraus set of a quantum channel. We then prove that the Hellinger distance to the Knill--Laflamme space decreases under conjugation by contractions. Since passing from an error set $\calE$ to an $\calE$-controlled channel corresponds precisely to such a conjugation on the associated QEC matrix, this yields a sufficient condition that uniformly controls the average-case AQEC performance over the full family of $\calE$-controlled channels. This is formulated in the following theorem, which is an informal restatement of Theorem~\ref{th:AvAQECErrorSetSuff}.

\begin{introtheorem}[Sufficient $\s{Av}$-AQEC condition for error sets]
Let $Q$ be a $K$-dimensional quantum code and let $\calE$ be an error set. If
\[\zeta_{\s{H}}(\calE,Q)=\sqrt{\frac{1}{K}\tr(A_{\s{QEC}})-\frac{1}{K^2}\norm{\tr_K\p{\sqrt{A_{\s{QEC}}}}}_2^2}\leq \varepsilon,
\] 
then $Q$ is an $\s{Av}$-$\varepsilon$-AQEC code for the error set $\calE$. Consequently, $Q$ is an $\s{Av}$-$\varepsilon$-AQEC code for any $\calE$-controlled channel.
\end{introtheorem}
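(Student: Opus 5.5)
The argument has three steps: identify $\zeta_{\s{H}}$ geometrically, show it does not increase when passing to an $\calE$-controlled channel, and then apply the Petz bound \eqref{eq:reeq}.

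\textbf{Step 1 (geometric identification).} For a PSD matrix $A$ with $\tr A = K$ (the channel case), write $\tr_K(\sqrt{A}) = \sum_i (\bra{i}\otimes I)\sqrt{A}(\ket{i}\otimes I)$. The orthogonal (Hilbert--Schmidt) projection of $\sqrt{A}$ onto $\calH_{\s{KL}}$ is $I_K\otimes \frac1K\tr_K(\sqrt{A})$. Hence
\[
\min_{\lambda}\bigl\|\sqrt{A}-I_K\otimes\lambda\bigr\|_2^2=\tr A-\tfrac1K\bigl\|\tr_K(\sqrt{A})\bigr\|_2^2 ,
\]
and $\zeta_{\s{H}}^2$ is this quantity normalized by $1/K$. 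One may restrict the minimum to $\lambda\succeq 0$: the projected point $I\otimes \frac1K\tr_K\sqrt A$ is already PSD, so it is the Hellinger-type distance $\min_{B\in\calH_{\s{KL}},B\succeq0}\|\sqrt A-\sqrt B\|_2^2$. For the Kraus set of a channel, $\tr A_{\s{QEC}}=K$, so $\zeta_{\s{H}}$ coincides with the right-hand side of \eqref{eq:reeq}. It therefore upper-bounds the Petz error, and so also $\varepsilon^{\mathrm{opt}}_{\mathrm{ch}}$, for that channel.

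\textbf{Step 2 (monotonicity under contractions).} Let $\calN$ be $\calE$-controlled with Kraus operators $A_m=\sum_k c_{m,k}E_k$ and $\|C\|_\infty\le 1$. A direct computation gives
\[
A'_{\s{QEC}}=(I_K\otimes \bar C)\,A_{\s{QEC}}\,(I_K\otimes \bar C)^\dagger=:X A X^\dagger ,
\]
where $X$ is a contraction of the form $I_K\otimes W$. Fix the optimal PSD $B=I\otimes\mu$ for $A$. Then $XBX^\dagger=I\otimes W\mu W^\dagger$ still lies in $\calH_{\s{KL}}$, and we want
\[
\bigl\|\sqrt{XAX^\dagger}-\sqrt{XBX^\dagger}\bigr\|_2\le\bigl\|\sqrt A-\sqrt B\bigr\|_2 .
\]
I would prove this through fidelity rather than through operator monotonicity of the square root. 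Write
\[
\|\sqrt A-\sqrt B\|_2^2=\tr A+\tr B-2\tr\sqrt A\sqrt B .
\]
Since $\tr(XAX^\dagger)\le\tr A$ holds only as an inequality, the $\tr A$ term also needs care. The cleaner route is to dilate $X$ to a unitary $U=\begin{pmatrix}X&\ast\\ \ast&\ast\end{pmatrix}$ and view $A\mapsto XAX^\dagger$ as conjugation by $U$ followed by a compression, i.e. a CP trace-nonincreasing map. For such maps the quantity
\[
\tr A+\tr B-2\,\|\sqrt A\sqrt B\|_1
\]
does not increase, by data processing for fidelity extended to subnormalized states: add the discarded trace as an extra orthogonal block. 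The remaining subtlety is that $\tr\sqrt A\sqrt B\le\|\sqrt A\sqrt B\|_1$, so I would replace $\sqrt{B}$ by the optimal element of $\calH_{\s{KL}}$ in the fidelity form. Equivalently, show directly that
\[
\tfrac1K\tr A-\max_{B\in\calH_{\s{KL}},\,B\succeq0}\bigl(\ldots\bigr)
\]
can be written as
\[
\tfrac1K\tr A-\tfrac{1}{K^2}\max_{\mu}\,\bigl(\text{fidelity of }A\text{ with }I\otimes\mu\bigr)^2\Big/\tr\mu ,
\]
and that the maximized fidelity only gains from compression. Handling this is the main obstacle, and I would attack it first with the block-dilation argument.

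\textbf{Step 3 (conclusion).} Combining the two steps gives $\zeta_{\s{H}}(\{A_m\},Q)\le\zeta_{\s{H}}(\calE,Q)\le\varepsilon$ for every $\calE$-controlled channel. By Step 1 applied to the channel's Kraus set, the Petz recovery for that channel achieves $d_{\mathrm{ch}}\le\varepsilon$, so $Q$ is an $\s{Av}$-$\varepsilon$-AQEC code for $\calE$.
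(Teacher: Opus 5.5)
You have Steps 1 and 3 right, and so is the reduction in Step 2 to proving $\|\sqrt{XAX^\dagger}-\sqrt{XBX^\dagger}\|_2\le\|\sqrt A-\sqrt B\|_2$ for PSD $B\in\calH_{\s{KL}}$ (using $XBX^\dagger\in\calH_{\s{KL}}$). All of this matches the paper. The gap is that this contraction inequality, the one new ingredient beyond the Zheng et al.\ formula, is never proved.

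Data processing for the fidelity $\|\sqrt A\sqrt B\|_1$ only gives monotonicity of the Bures-type quantity $\tr A+\tr B-2\|\sqrt A\sqrt B\|_1$. Because $\tr\sqrt A\sqrt B\le\|\sqrt A\sqrt B\|_1$, the inequality points the wrong way: it bounds a quantity that is \emph{smaller} than the Hellinger distance of $(XAX^\dagger,XBX^\dagger)$, not that distance itself. Your proposed reformulation does not rescue this. The Hellinger distance to the Knill--Laflamme cone is not $\tr A-\frac1K\max_\mu F(A,I\otimes\mu)^2/\tr\mu$ with $F$ the fidelity. Take $K=M=2$ and $A=\ket v\bra v$ with $\ket v=\sqrt p\ket{00}+\sqrt{1-p}\ket{11}$. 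Then
\[
D_{\s H}^2(A,\calH_{\s{KL}})=1-\tfrac12\bigl(p^2+(1-p)^2\bigr),
\]
whereas the fidelity expression equals $1-\frac12\max(p,1-p)$, which is strictly smaller for $p\in(0,1)\setminus\{\frac12\}$. So the fidelity route controls the wrong quantity, and it cannot yield $\zeta_{\s H}(\calE_\calN,Q)\le\zeta_{\s H}(\calE,Q)$.

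The missing idea is that the affinity itself obeys data processing. For every CPTP map $\calM$ and $A,B\succeq0$,
\[
\tr\sqrt{\calM(A)}\sqrt{\calM(B)}\ge\tr\sqrt A\sqrt B .
\]
This is the $\alpha=\frac12$ Petz--R\'enyi inequality, which follows from Lieb concavity. The paper's Lemma~\ref{lem:HellingerContaction} combines it with the trace-preserving dilation $\calM(A)=XAX^\dagger\oplus RAR^\dagger$, where $R=\sqrt{I-X^\dagger X}$. Since $\calM$ preserves trace, $D_{\s H}^2(\calM(A),\calM(B))=\tr A+\tr B-2\tr\sqrt{\calM(A)}\sqrt{\calM(B)}\le D_{\s H}^2(A,B)$. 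Since the square root acts blockwise on block-diagonal operators and the squared Frobenius norm is additive over blocks, discarding the nonnegative second-block term gives $D_{\s H}^2(XAX^\dagger,XBX^\dagger)\le D_{\s H}^2(\calM(A),\calM(B))$.

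This also settles your worry that $\tr(XAX^\dagger)$ may be smaller than $\tr A$. The trace terms cancel exactly at the level of $\calM$, and the compression loss is absorbed by dropping a block. Your unitary-dilation-plus-compression picture would work just as well once the affinity inequality is available, since compression is a pinching followed by discarding a block. Without that inequality, however, Step 2 is not established.
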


When $\calE$ is the Kraus set of a single channel $\calN$, the corresponding QEC matrix satisfies $\tr(A_{\s{QEC}})=K$, and the displayed condition reduces to the near-optimal Petz-map criterion of \cite{zheng2024near}. As in the channel-based setting of that work, this condition is optimization-free and relatively easy to evaluate numerically since it is expressed directly in terms of the QEC matrix. In addition, Proposition~\ref{prop:AvAQECNecessary} proves complementary necessary conditions, parallel to those obtained earlier for the worst-case AQEC criterion.


\subsection{AQEC in Hilbert spaces indexed by metric spaces}
The strength of a new theory lies in its ability both to give rise to natural and useful examples and to extend, rather than replace, established results. The theory developed here has both features. Because our error-set model and the corresponding AQEC conditions are built on well-established channel-level frameworks—namely, the B\'eny--Oreshkov theory for the worst-case criterion and the transpose/Petz-map approach for the average-case criterion—existing analyses of individual codes and channels extend naturally to statements about entire error sets.

 However, examples inherited from the existing AQEC literature remain confined to particular codes and particular noise models.  To reveal the full scope of the error-set model, we introduce a broader construction that is not tied to a particular platform or channel, but applies uniformly across a wide range of quantum systems and geometries. Our approach is based on selecting a subset of basis states, partitioning it into disjoint blocks, and defining one basis codeword as a superposition over each block,
 an idea that goes back to early work on quantum error correction \cite{knill2000theory}.  Rather than merely revisiting that construction, we develop it into a general scheme for producing families of codes across different Hilbert spaces and error models, in a way that interacts naturally with the error-set formalism. This provides a broad class of examples that exposes the structural content of the theory and places exact and approximate correction on the same footing. More generally, it offers a unified perspective on how approximate error correction extends the principles of quantum coding beyond the exact setting, including in adversarial models where both regimes can be treated within a common framework.

This discussion sets the stage for the second part of the paper.  We consider Hilbert spaces whose distinguished basis is indexed by a discrete metric space $(X,d)$, so that basis vectors $\ket{x}$  inherit its natural geometric properties such as separation.  The noise model is given by an indexed family of error sets $\mathscr{E}=(\calE_t)_t$, where the parameter $t$ measures, in a model-dependent way, the severity of the noise.  This construction leverages the links between the natural noise
processes in quantum systems and the geometry of the discrete space, and this correspondence can be exploited to construct both exact and approximate quantum codes in a unified way.

To formalize this approach, in Definition~\ref{def:MEAhirarchy}, we introduce the \textit{metric--error alignment hierarchy}. The hierarchy consists of three levels 
 $$
 \mathscr{L}_2\subset \mathscr{L}_1\subset \mathscr{L}_0,
 $$
quantifying the extent to which the noise can align orthogonal basis states $\ket{x},\ket{x'}$ indexed by 
distant points in the underlying metric space, thereby measuring the deviation from the exact Knill--Laflamme conditions.
Level $\mathscr{L}_0$ corresponds to the absence of any such structure. At level $\mathscr{L}_1$, whenever $x$ and $x'$ are farther apart than the allowed noise level, no pair of errors from $\calE_t$ can create overlap between $\ket{x}$ and $\ket{x'}$. At level $\mathscr{L}_2$, this alignment is even more rigid: different errors remain orthogonal even when acting on the same basis state. This hierarchy provides a way to quantify the structure underlying our code constructions: the stronger the alignment between the metric and the noise, the stronger the exact and approximate correction guarantees that our partition-code scheme can achieve. 
    As it turns out, a very broad range of quantum systems and noise models studied in the literature fall into a nontrivial level of this hierarchy; we refer to Table~\ref{tab:metric_spaces_examples} for the examples analyzed in this work.

\begin{table}[t]
\centering
\small
\renewcommand{\arraystretch}{1.15}
\begin{tabular}{p{3.6cm}p{3.0cm}>{\centering\arraybackslash}p{2.5cm}>{\centering\arraybackslash}p{2.2cm}>{\centering\arraybackslash}p{1.5cm}}
\toprule
Hilbert space & Noise type & Indexing space & Underlying metric & Level in hierarchy \\
\midrule
Qudit & HW / Pauli & $[q]^N$ & Hamming & $\mathscr{L}_1 $ \\
Qudit & amplitude damping & $[q]^N$ & $\ell_1$ & $\mathscr{L}_2$ \\
Qudit & deletions & $[q]^N$ & deletion dist. & $\mathscr{L}_1$ \\
\midrule
Bosonic Fock state & amplitude damping & $\mathbb{Z}_{0}^{q}$ & $\ell_1$ & $\mathscr{L}_2$ \\
Bosonic Fock state & shift-rotation & $\mathbb{Z}_{0}^{q}$ & $\ell_1$ & $\mathscr{L}_1$ \\
Bosonic Fock state, constant excitation & amplitude damping & $\mathcal{S}_{q,N}$ & $\ell_1$ & $\mathscr{L}_2$ \\
Bosonic Fock state, constant excitation & shift-rotation & $\mathcal{S}_{q,N}$ & $\ell_1$ & $\mathscr{L}_1$ \\
\midrule
Fermionic & Majorana & $\{0,1\}^N$ & Hamming & $\mathscr{L}_1$ \\
\midrule
Permutation symmetric & deletions & $\mathcal{S}_{q,N}$ & $\ell_1$ & $\mathscr{L}_2$ \\
Permutation symmetric &HW / Pauli  & $\mathcal{S}_{q,N}$ & $\ell_1$ & $\mathscr{L}_0$\\
\bottomrule
\end{tabular}
\caption{Metric spaces and metric--error alignment levels for the main examples considered in this work.}
\label{tab:metric_spaces_examples}
\end{table}

\subsubsection*{Quantum codes from partitions}
The construction of quantum codes from partitions has a substantial history, with early appearances in \cite{knill2000theory} and several later developments across different settings \cite{calderbank1996good,movassagh2024constructing,aydin2026quantum,elimelech2026asymptotically}. The first three of these works focused on exact quantum error correction, while \cite{elimelech2026asymptotically} was the first to study approximate QEC for such codes, motivating the AQEC analysis in this work.  In Section~\ref{sec:PartitionCodes}, we formulate these ideas in the setting of Hilbert spaces indexed by discrete metric spaces and analyze them through the metric--error alignment hierarchy introduced above.

The construction itself is very simple. One begins with a subset $C\subseteq X$ (typically referred to as a \textit{classical code}) of basis labels and partitions it into disjoint blocks $C_0,\dots,C_{K-1}$. To construct a quantum code, we form 
superpositions of the basis vectors in each of the blocks:
\begin{equation}
    \ket{c_i}=\sum_{x\in C_i}\alpha_x\ket{x},\qquad \sum_{x\in C_i}\abs{\alpha_x}^2=1.\label{eq:partitionDefInt}
\end{equation}
The resulting quantum code is the $\C$-span of these states. We refer to Construction~\ref{const:partitionCodes} for the formal
definition. 

The role of the metric--error alignment hierarchy is already visible in the exact-QEC setting. The starting point is a classical code $C\subseteq X$ whose elements are well separated in the sense that the minimum distance between two distinct
elements of $C$ is bounded below by some number $t$. If $(\calH_X,d,\mathscr{E})$ is in $\mathscr{L}_1$ or $\mathscr{L}_2$, then errors from the set $\calE_t$ cannot result in an overlap between basis states indexed by different blocks of the partition: the corresponding off-diagonal (orthogonality) Knill--Laflamme conditions are automatically satisfied. Thus, once the minimum distance of $C$ is sufficiently large, we only need to take care of the non-deformation conditions inside each block. The next theorem (informal restatement of Theorem~\ref{th:exactQECpatition})
 shows when these conditions can be fulfilled. 

\begin{introtheorem}[Existence of exact QEC partition codes]
Let $C\subseteq X$ be a classical code with $d(C)>t$. 
Then $C$ gives rise to a $K$-dimensional exact QEC partition code for $\calE_t$ whenever
\[(\calH_X,d,\mathscr{E})\in\mathscr{L}_i
\qquad\text{and}\qquad
|C|\geq (K-1)\bigl(|\calE_t|^{3-i}+1\bigr)+1,
\qquad i=1,2.\] 
\end{introtheorem}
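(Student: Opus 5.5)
The plan is to reduce the Knill--Laflamme conditions for the partition code to a finite system of linear equations on the block coefficients. Then we use a dimension-counting argument to find, block by block, nonnegative weight vectors that satisfy the non-deformation conditions.

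\textbf{Off-diagonal conditions.} We must show $\bra{c_i}E_k^\dag E_l\ket{c_j}=0$ for $i\neq j$ and all $E_k,E_l\in\calE_t$. Writing $\ket{c_i}=\sum_{x\in C_i}\alpha_x\ket{x}$, each such term expands into $\bra{x}E_k^\dag E_l\ket{x'}$ with $x\in C_i$ and $x'\in C_j$. Since $d(x,x')\geq d(C)>t$, the defining property of $\mathscr{L}_1$ (hence also of $\mathscr{L}_2$) kills every such term. This holds for any choice of coefficients $\alpha$.

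\textbf{Diagonal conditions.} It remains to arrange
\[
\bra{c_i}E_k^\dag E_l\ket{c_i}=\lambda_{k,l}
\]
independently of $i$. By the same alignment property, the cross terms $x\neq x'$ inside a block also vanish, because distinct codewords are at distance $>t$. We therefore take $\alpha_x=\sqrt{p_x}$ with $p$ a probability vector on $C_i$, and the condition becomes
\[
\sum_{x\in C_i}p_x\,v(x)=\lambda,\qquad v(x):=\bigl(\bra{x}E_k^\dag E_l\ket{x}\bigr)_{k,l}.
\]
The matrix $v(x)$ is Hermitian, so it lives in a real vector space of dimension $D$.
\begin{itemize}
\item At level $\mathscr{L}_1$ we can only bound $D\leq|\calE_t|^2$.
\item At level $\mathscr{L}_2$ the off-diagonal entries with $k\neq l$ vanish, leaving at most the $|\calE_t|$ diagonal entries, so $D\leq|\calE_t|$.
\end{itemize}
The task is thus to partition $C$ into $K$ blocks whose convex hulls $\mathrm{conv}\{v(x):x\in C_i\}$ share a common point. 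That common point is $\lambda$.

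\textbf{Key step: Tverberg's theorem.} Any $(K-1)(D+1)+1$ points in $\R^D$ admit a partition into $K$ parts whose convex hulls intersect. Apply it to the multiset $\{v(x)\}_{x\in C}$ in $\R^D$, using the bounds $D\leq|\calE_t|^2=|\calE_t|^{3-1}$ in case $i=1$ and $D\leq|\calE_t|=|\calE_t|^{3-2}$ in case $i=2$. This gives exactly the threshold $(K-1)(|\calE_t|^{3-i}+1)+1$. Points of $C$ beyond the threshold can simply be discarded or added to any block with zero weight. The resulting Tverberg partition, with the convex weights it provides, defines the codewords, and we take $\lambda$ to be the common point.

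\textbf{Main obstacle.} The delicate part is the dimension bookkeeping. We must check that $v(x)$ lies in a real space of dimension at most $|\calE_t|^2$ even though it is complex: $v(x)$ is Hermitian, so its real dimension is $|\calE_t|^2$. We must also check that trace normalization does not need to be imposed separately; it may even save a dimension, which is harmless. Finally, in $\mathscr{L}_2$ we must verify that the vanishing of $\bra{x}E_k^\dag E_l\ket{x}$ for $k\neq l$ is exactly what the definition supplies, so that only the $|\calE_t|$ real diagonal entries remain.
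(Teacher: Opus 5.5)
Your proposal is correct and follows the paper's route. The $\mathscr{L}_1$ (resp.\ $\mathscr{L}_2$) alignment removes all off-diagonal and intra-block cross terms, so the Knill--Laflamme conditions reduce to a Tverberg partition of the points $v(x)$ in a real space of dimension $|\calE_t|^2$ (resp.\ $|\calE_t|$); the paper packages that step through the real-coefficient lemma of \cite{aydin2026quantum}. Your observation that $v(x)$ is Hermitian, so the $\mathscr{L}_1$ constraints use only $|\calE_t|^2$ real dimensions, is a detail the paper leaves implicit when it applies that lemma to the generally complex numbers $\bra{c}E_m^\dag E_k\ket{c}$.
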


The exact-QEC existence result above is based on a convex-geometric argument: once the off-diagonal error-correction conditions are guaranteed by the metric separation of the underlying classical code, the remaining conditions are reduced to the existence of a suitable partition. A common tool for finding it is given by the Tverberg theorem, see, e.g., \cite{aydin2026quantum}. As explained in Remark~\ref{rem:Tver}, this argument can in principle be made constructive, but the resulting procedure for finding the partition and the corresponding coefficients has prohibitively large complexity even for modest code parameters. This highlights the need for a simple and genuinely usable construction. Such a construction becomes available in the approximate setting: one may choose the underlying classical code at random, partition it into equal-size blocks in an arbitrary way, and take uniform superpositions over each block. The averaging inherently
present in this random partition construction is precisely what makes AQEC accessible, and, through the error-set model developed above, the resulting codes still come with uniform adversarial guarantees for all channels controlled by the relevant error set.

The following theorem is an informal asymptotic restatement of Theorem~\ref{th:AQECrandomPar}, in which the approximation error is required to vanish. It gives the error-set AQEC guarantees for random partition codes relevant to our asymptotic constructions, whereas the full theorem applies more generally.
\begin{introtheorem}[AQEC guarantees for random partition codes]  
Let $\calH_X$ be a Hilbert space indexed by $(X,d)$ and let $\mathscr{E}=(\calE_t)_t$ be an error-set family such that $(\calH_X,d,\mathscr{E})\in\mathscr{L}_i$, for $i=1,2$. Suppose that $C\subseteq X$ is a random classical code obtained by drawing $\abs{C}$ i.i.d. points from a probability measure $\mu$ on $X$ such that $d(C)>t$ with high probability, and let $\{C_1,\dots,C_K\}$ be an arbitrary partition of $C$ into equal-size blocks.

Let $Q$ be the $K$-dimensional quantum code with basis vectors \eqref{eq:partitionDefInt} formed as uniform superpositions. With high probability, the code $Q$ is an $\varepsilon$-AQEC code for the error set $\calE_t$ with $\varepsilon\to 0$ as long as
\[
 K=o(|C|^{\frac13}/{|\calE_t|^{2-\frac{2i}3}}).
\]
\end{introtheorem}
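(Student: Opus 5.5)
The plan is to apply the sufficient condition of Theorem~\ref{th:restatementSuffworst} (Theorem~\ref{th:BOerrorset}). We bound $\zeta(\calE_t,Q)$ with high probability by choosing a specific $\lambda$, namely the population average $\lambda_{k,l}:=\E_{x\sim\mu}\bra{x}E_l^\dag E_k\ket{x}$. Throughout, $n:=|C|$, $m:=n/K$ is the block size and $M:=|\calE_t|$. We normalize so that $\norm{E_k}_\infty\le 1$; the general case only changes constants. We work on the event $\{d(C)>t\}$, which has high probability by assumption. To avoid conditioning on that event, which would destroy independence, we simply add the failure probabilities of the two bad events at the end.

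\emph{Step 1: structure of the B\'eny--Oreshkov map.} On $\{d(C)>t\}$, membership in $\mathscr{L}_1$ gives $\bra{x}E_l^\dag E_k\ket{x'}=0$ for distinct $x,x'\in C$. This has two consequences:
\begin{itemize}
\item all inter-block Knill--Laflamme entries $\bra{c_i}E_l^\dag E_k\ket{c_j}$, $i\ne j$, vanish;
\item inside a block only the diagonal terms survive, so that
\[
\bra{c_i}E_l^\dag E_k\ket{c_i}=\frac1m\sum_{x\in C_i}f_{kl}(x),\qquad f_{kl}(x):=\bra{x}E_l^\dag E_k\ket{x}.
\]
\end{itemize}
Hence $\calB^{\calE_t}_{\lambda,Q}(\rho)=\sum_i \rho_{ii}\,D_i$, where $D_i\in L(\C^M)$ has entries $(D_i)_{kl}=\frac1m\sum_{x\in C_i}f_{kl}(x)-\lambda_{k,l}$.

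For a map of this form, only the diagonal of the input in the code basis matters, even after tensoring with an ancilla. The output on a joint state is $\sum_i D_i\otimes \sigma_i$ with $\sum_i\tr\sigma_i\le 1$ and $\sigma_i\ge 0$. This yields
\[
\zeta(\calE_t,Q)\le \max_i\norm{D_i}_1 .
\]
In $\mathscr{L}_2$, moreover, $f_{kl}\equiv 0$ for $k\neq l$, so each $D_i$ is diagonal.

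\emph{Step 2: concentration.} Each $f_{kl}(x)$ is bounded by $1$, and the points of $C_i$ are i.i.d.\ $\sim\mu$. Hoeffding's inequality together with a union bound over the $K$ blocks and the at most $M^2$ entries (only $M$ entries in case $\mathscr{L}_2$) gives, with probability $1-o(1)$,
\[
\max_{i,k,l}|(D_i)_{kl}|\le \eta:=c\sqrt{\log(KM^2)/m}.
\]
We then convert entry bounds to trace norm:
\begin{itemize}
\item in $\mathscr{L}_2$, $\norm{D_i}_1=\sum_k|(D_i)_{kk}|\le M\eta$;
\item in $\mathscr{L}_1$, $\norm{D_i}_1\le\sqrt M\norm{D_i}_2\le M^{3/2}\eta$.
\end{itemize}
In both cases $\zeta\le M^{(3-i)/2}\,\eta$, so
\[
\zeta^2\lesssim \frac{K\,M^{3-i}\log(KM^2)}{n}.
\]

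\emph{Step 3: conclusion.} Set $\varepsilon:=\sqrt{2\zeta}$; Theorem~\ref{th:restatementSuffworst} then makes $Q$ an $\varepsilon$-AQEC code for $\calE_t$. It remains to show $\varepsilon\to0$, i.e.\ $K M^{3-i}\log(KM^2)=o(n)$. The stated hypothesis $K=o(n^{1/3}/M^{2-2i/3})$ is equivalent to $x^3=o(n)$ with $x:=KM^{3-i}$. Since $\log(KM^2)\le 2\log x\le 2x^2$ for $x\ge1$ (for $i=1$, $KM^2=x$; for $i=2$, $KM^2\le x^2$), we get $x\log(KM^2)\le 2x^3=o(n)$.

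Finally, the total failure probability is at most $P(d(C)\le t)+o(1)=o(1)$. The blocks only need to be disjoint of equal size $m$, so the partition may be fixed arbitrarily after drawing $C$, and the i.i.d.\ structure within each block is all Hoeffding requires.

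\emph{Main obstacle.} The delicate point is Step 1's diamond-norm bound. One must check that the ancilla cannot make the coherences $\rho_{ij}$ contribute. This holds because the map factors through the pinching $\rho\mapsto\sum_i\rho_{ii}\ket{i}\bra{i}$ in the code basis followed by the map $\ket{i}\bra{j}\mapsto\delta_{ij}D_i$. Writing it this way makes the bound $\max_i\norm{D_i}_1$ immediate: pinching is a channel, and the second map's diamond norm is controlled by evaluating it on classical-quantum states. A secondary issue is that the population mean $\lambda$ is independent of $C$, which is exactly what makes it a valid fixed target in Hoeffding's inequality.
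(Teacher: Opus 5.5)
The core of your argument (Step~1, and the entrywise and trace-norm bounds in Step~2) is correct. It follows a genuinely different and sharper route than the paper. However, the exponent bookkeeping at the end of Step~2 and in Step~3 is wrong, and as written Step~3 does not follow from the hypothesis.

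\textbf{Comparison with the paper.} The paper proves Theorem~\ref{th:AQECrandomPar} through Lemma~\ref{lem:SufCondHira}, i.e.\ the entrywise criteria of Proposition~\ref{prop:AQECcond}. The proof of that proposition bounds $\norm{\calB^{\calE}_{\lambda,Q}}_\diamond\le K\sum_{k,l}\norm{B_{k,l}}_\infty$, where the factor $K$ comes from Cauchy--Schwarz over the Schmidt coefficients of the input.
\begin{itemize}
\item It therefore requires block-average deviations below $\varepsilon^2/(2KM)$ at level $\mathscr{L}_2$ and below $\varepsilon^2/(2KM^2)$ at level $\mathscr{L}_1$.
\item It then applies Hoeffding with the population-mean $\lambda$ and a union bound, as you do.
\item That factor $K$, squared in the Hoeffding exponent and combined with the block size $n/K$, is where the $K^3$ in the statement comes from.
\end{itemize}
Your pinching factorization instead computes the diamond norm exactly. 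On $\{d(C)>t\}$ one has $\norm{\calB^{\calE_t}_{\lambda,Q}}_\diamond=\max_i\norm{D_i}_1$, with equality attained at the input $\ket{c_i}\bra{c_i}\otimes\ket{c_i}\bra{c_i}$. This removes the factor $K$, and a further $\sqrt{M}$ at level $\mathscr{L}_1$. So, up to logarithms, you only need $KM^{4-i}=o(n)$ instead of $K^3M^{6-2i}=o(n)$. Carried through the asymptotic analysis, this would for instance improve \eqref{eq:AsympAQECcondL2} to $\calK<R(\delta)/2-2M_\delta$.

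\textbf{Errors to fix.}
\begin{itemize}
\item \emph{Unified exponent.} Your own bullets give $\norm{D_i}_1\le M\eta$ for $\mathscr{L}_2$ and $\norm{D_i}_1\le M^{3/2}\eta$ for $\mathscr{L}_1$. That is $\zeta\le M^{(4-i)/2}\eta$, not $M^{(3-i)/2}\eta$.
\item \emph{Reading of the hypothesis.} The hypothesis $K=o(n^{1/3}/M^{2-2i/3})$ says $K^3M^{6-2i}=o(n)$. By contrast, $(KM^{3-i})^3=K^3M^{9-3i}$. The claimed ``equivalence'' is false: your version is strictly stronger when $M\to\infty$.
\end{itemize}
The repair is short. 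Write $K^3M^{6-2i}=KM^{4-i}\cdot K^2M^{2-i}$.
\begin{itemize}
\item For $i=1$: $\log(KM^2)\le 3K^2M$, hence $KM^3\log(KM^2)\le 3K^3M^4=o(n)$.
\item For $i=2$: you need $\log(KM)\lesssim K^2$. So the claim holds only up to the logarithmic union-bound factor that the informal statement suppresses. The paper's own failure term $2MK\exp(-L\varepsilon^4/(2K^3M^2\kappa_t^2))$ needs the same logarithmic slack.
\end{itemize}

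\textbf{Two smaller corrections.}
\begin{itemize}
\item \emph{Choice of partition.} The partition must be independent of the sampled values; in Construction~\ref{const:randomParition} the blocks are consecutive draws. A partition chosen after seeing $C$, e.g.\ by sorting the points according to some $f_{kk}$, destroys the within-block i.i.d.\ structure. The block averages can then differ by a constant, so $\zeta$ stays bounded away from $0$ and the conclusion can fail.
\item \emph{Normalization.} Rescaling the $E_k$ is not without loss of generality, since it changes $\mathscr{N}(\calE_t)$ (Remark~\ref{rem:scaling}). What you actually need is $|f_{kl}(x)|\le\kappa_t(\mathscr{E},X)$. This follows from Cauchy--Schwarz and is implicitly treated as $O(1)$ in the informal statement. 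Apply Hoeffding to the real and imaginary parts separately.
\end{itemize}
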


 \subsubsection*{Asymptotically good families of partition codes}

Since AQEC requires the approximation error $\varepsilon$ to vanish asymptotically, the natural setting for our constructions is the asymptotic regime. Accordingly, we study families of partition codes for several central quantum systems and noise models, and ask when they are asymptotically good in the sense of Definition~\ref{def:AsymGood}, that is, when both the code rate and the relative approximate distance 
remain bounded away from zero. This is the relevant regime for robust quantum coding because, in most models of interest, the severity of the noise grows linearly with the system size. The quantum systems and noise models treated below, together with their positions in the metric--error alignment hierarchy, are summarized in Table~\ref{tab:metric_spaces_examples}.

For each family, we establish existential claims for exact-QEC partition codes and also analyze explicit random AQEC partition codes. The existential exact-QEC guarantees typically give better rate--distance tradeoffs because they amount to conditions for the existence of very delicate and potentially rare combinatorial objects, such as Tverberg partitions, selecting the best partition among many possibilities. Their drawback is that this construction is generally computationally infeasible. The random AQEC constructions are weaker quantitatively since they rely on averaging and concentration, but they are simple, explicit, and practically deployable. Taken together, the two analyses show that the partition-code framework is broad enough to produce asymptotically good codes in many settings and provide complementary perspectives on the tradeoffs between performance and constructibility.

\begin{enumerate}
\item \textbf{$q$-ary codes on tensor-product Hilbert spaces (Section~\ref{sec:QuditsConst}).} We study bounded-weight errors, deletions, and amplitude damping noise.
For bounded-weight Pauli errors, the exact partition construction matches the asymptotic CSS rate, while the random construction gives AQEC with positive rate. For deletion errors, our framework yields the first asymptotically good quantum deletion codes. For amplitude damping, it gives the first evidence that correcting linearly many amplitude damping errors can be asymptotically easier than correcting the same number of general bounded-weight errors. 

\item \textbf{Rydberg atom chains (Section~\ref{sec:Rydberg}).} We give the first construction of codes for Fibonacci Rydberg chains, whose Hilbert space is constrained by the blockade condition that no two consecutive atoms can be simultaneously in the excited (up-spin) state. Our partition-code framework naturally respects this constraint by working directly within the allowed subspace, yielding both exact and approximate codes that are intrinsically compatible with the Fibonacci structure of the system.

\item \textbf{Majorana fermionic codes (Section~\ref{sec:majo}).} The application of the partition-code
framework to fermionic Fock space with Majorana noise yields asymptotically good codes in the setting where existing constructions are only stabilizer/subsystem-based. The resulting codes provide non-stabilizer examples and show that the general metric-indexed viewpoint extends naturally to fermionic systems.

\item \textbf{Constant-excitation Fock-state codes (Section~\ref{sec:CEFCex}).} We consider both amplitude damping noise and number-shift/phase-rotation noise. The analysis yields asymptotically good exact and approximate codes in the high-excitation regime, and also shows how different sampling distributions for the underlying simplex codes trade coding rate against physically relevant properties such as bounded per-mode occupancy.

\item \textbf{Permutation-invariant codes (Section~\ref{sec:PIex}).} Here, the discrete simplex again provides the underlying metric space, now for symmetric subspaces. As a result, we obtain asymptotically good exact and approximate partition codes against deletions and erasures.
\end{enumerate}

\subsection*{Main definitions}
For the reader's convenience, we list the main concepts defined or used in this study.
\begin{itemize}[itemsep=-1pt,label=$\triangleright$]
    \item $\calE$-controlled channels, Definition \ref{def:AQECC}, p.~\pageref{def:AQECC};
    \item Approximate quantum error correction for the error-set model, Definition \ref{def:AQECC}, p.~\pageref{def:AQECC};
    \item Approximate code distance, Definition \ref{def:dist}, p.~\pageref{def:dist};
    \item Metric--error alignment hierarchy, Definition \ref{def:MEAhirarchy}, p.~\pageref{def:MEAhirarchy};
    \item Codes from partitions, Construction \ref{const:partitionCodes}, p.~\pageref{const:partitionCodes}.
\end{itemize}

\subsection{Conclusion and outlook}\label{sec: conclusion}

In this work, we initiate the study of an error-set model for approximate quantum error correction, replacing channel-by-channel guarantees with a uniform theory for families of channels, in direct analogy with exact QEC. The resulting theory shows that, although the mathematical structure of AQEC is more delicate than that of the exact Knill--Laflamme framework, the basic organizing ideas of exact adversarial QEC carry over to the approximate regime. In particular, we identify a restricted form of linearity, introduce a corresponding family of $\calE$-controlled channels, derive uniform AQEC conditions for this model under both the worst-case and average-case criteria, define the notion of approximate distance, and establish a quantitative relation between erasures and general errors. Together with the partition-code framework developed in the second part of the paper, this gives a broad collection of examples showing that the error-set viewpoint is not only conceptually natural, but also practically productive across a range of quantum systems and noise models.

The present work is only a first step toward an adversarial theory of AQEC. Although we believe it lays the foundations, much remains to be understood before such a theory becomes as mature and useful as its exact-QEC counterpart. We conclude by highlighting several directions that seem especially important for its further development.

\begin{enumerate}
\item \textbf{Universal decoding maps.}
A central open problem is the construction of recovery maps that work simultaneously for every channel in a specific $\calE$-controlled family, under the sufficient conditions proved here. Such universal decoders would play the same practical role in AQEC that channel-independent decoders play in adversarial exact QEC. There is good evidence that this should be possible in important special cases: \cite{ma2025haar} constructs universal decoders for certain Pauli-type error sets under approximate nondegeneracy assumptions, while \cite{bergamaschi2024approaching} gives approximate universal decoders for specific codes approaching the quantum Singleton bound. Developing a general theory of universal decoders for $\calE$-controlled channels would increase the operational significance of the developed framework.

\item \textbf{An AQEC theory for stabilizer codes.}  
Stabilizer codes are by far the most prominent family in quantum error correction, owing to their algebraic structure, their many explicit constructions, and their central role in fault-tolerant quantum computation. Yet, in contrast to exact QEC, there is still no comparably satisfactory AQEC theory for stabilizer codes: in particular, one lacks a convenient characterization of their performance against general approximate noise families. A plausible reason is that, without an error-set model, AQEC for stabilizer codes has largely been studied on a channel-by-channel basis. The approach developed here suggests a new route. It is therefore natural to ask whether one can build a genuine theory of stabilizer AQEC for error sets, with structural criteria and distance-like notions that parallel the exact setting.

\item \textbf{Other mechanisms for error-set AQEC.}  
The main limitations of our framework are inherited from the current channel-based theory of AQEC. Even after organizing the theory around error sets, the resulting conditions remain difficult to analyze in general, especially under the worst-case criterion, where the B\'eny--Oreshkov superoperators are hard to control, and in the average-case criterion, where sharp estimates for the Petz-map expressions are still missing.  Although these tools are sufficient to extract a coherent and useful theory, they do not yet yield a fully satisfactory picture. For example, we do not know how to formulate general necessary and sufficient conditions for 
error-set AQEC, even though Proposition~\ref{prop:exampleOptimality} shows that our sufficient condition is, in some cases, already essentially optimal. This raises a broader question: is there another mechanism, beyond the present B\'eny-Oreshkov/Petz-based approach, that would lead to genuinely sharp error-set AQEC criteria in full generality?
\end{enumerate}

These questions suggest that the theory introduced here is only the first step toward a broader understanding of approximate quantum error correction. If developed further, the error-set perspective on AQEC could provide a common language for robust approximate coding across many quantum platforms, much as the adversarial error-set model does for exact QEC.
 
\section{Preliminaries}

Throughout this work, we consider finite-dimensional Hilbert spaces, often denoted by $\calH$. The spaces of linear operators and density operators on $\calH$ are denoted by $L(\calH)$ and $D(\calH)$, respectively. We denote the state fidelity by $\calF(\rho,\tau) := \norm{\sqrt{\rho}\sqrt{\tau}}_1
$. 
Let $\calH$ and $\calH'$ be Hilbert spaces. A quantum channel from $\calH$ to $\calH'$ is a completely positive trace-preserving (CPTP) map $\calN:L(\calH)\to L(\calH')$. It is well known that any CPTP map can be represented as $\calN(\rho)=\sum_k E_k\rho E_k^\dag$,
where $\calE=\ppp{E_k}_k$ is a set of Kraus operators for $\calN$, satisfying the completeness relation $\sum_k E_k^\dag E_k=I$. We often use the trace norm $\norm{ }_1$, Frobenius norm $\norm{ }_2$, and spectral norm $\norm{ }_\infty$, defined as
\begin{align}
    \norm{\rho}_1 := \tr\p{\sqrt{\rho^\dag \rho}}=\sum_{\lambda\in \s{S}(\rho)}\abs{\lambda}, \quad
    \norm{\rho}_2 :=  \sqrt{\tr\p{\rho^\dag \rho}}=\sqrt{\sum_{\lambda\in \s{S}(\rho)}\abs{\lambda}^2},\quad\norm{\rho}_\infty := \max_{\lambda\in \s{S}(\rho)}\abs{\lambda},\label{eq:normsDef}
\end{align}
where $\s{S}(\rho)$ denotes the set of singular values of $\rho$. The Hilbert--Schmidt inner product on $L(\calH)$ is defined as $\innerP{T,P}_{\s{HS}}=\tr(T^\dag P)$. Chapter 1 in Watrous \cite{watrous2018theory} is a good general reference for the norms; certain specific properties that we use are collected in 
Appendix~\ref{app: Norms}.

We use the standard notation $[n]=\ppp{0,1,\dots,n-1}$ for an integer $n\in\N$ and write $\Z_0$ for the set of nonnegative integers. Lowercase underlined letters, such as $\ux,\uy,\un$, denote words or vectors over a finite alphabet. Sans-serif letters, such as $\s{X}$, typically denote random variables, and their underlined counterparts, such as $\su{X}$, denote random vectors.

For a positive integer $N$ we use the standard notation $\binom{N}{t}=\frac{N!}{(N-t)!t!}$ for $t\leq N$, and set $\binom{N}{t}=0$ otherwise. By abuse of notation, we write $\binom{[N]}{t}$ for the collection of all subsets of $[N]$ of size exactly $t$. For $\un=(n_0,\dots,n_{q-1})\in\Z_0^q$ with $\sum_{i=0}^{q-1}n_i=N$, we use the multinomial notation
\[
\binom{N}{\un} := \frac{N!}{n_0! \cdots n_{q-1}!}.
\]
Given positive integers $N$ and $q$, we define the {\em discrete simplex} by
\begin{equation}
    \calS_{q,N} :=  \ppp{(n_0,\dots,n_{q-1})\in \Z_0^q ~:~ \sum_{i=0}^{q-1}n_i=N}.
    \label{eq:simplex}
\end{equation}
The $q$-ary entropy function is defined as
\begin{equation}
    \calH_q(\delta) :=  \delta\log_q(q-1)-\delta\log_q\delta-(1-\delta)\log_q(1-\delta),\label{eq:qaryEntropy}
\end{equation} 
with the usual convention $0\log 0=0$.  We use standard asymptotic notation such as $o( )$, $O( )$, and $\Theta( )$ in the limit where the relevant blocklength parameter tends to infinity.

\subsection{Qudit codes, erasures, deletions, and bounded-weight errors}\label{sec:qudits}

Let $\calH_q$ be a $q$-dimensional Hilbert space spanned by a computational basis $\ket{0},\dots,\ket{q-1}$. In analogy with classical $q$-ary codes, we refer to subspaces of $\calH_q^{\otimes N}$ as $q$-ary quantum codes. Thus, an $N$-qudit system over $\calH_q$ is given by the tensor product $\calH_q^{\otimes N}$, and a code on this system is a subspace $Q\subseteq \calH_q^{\otimes N}$. In this section, we consider bounded-weight errors and deletion errors on such $N$-qudit systems. Throughout the paper, {\em weight} refers to the Hamming weight of a vector in $[q]^N$ or, when applied to an $n$-fold tensor product of HW operators, the number of non-identity factors.
\subsubsection{Generalized Pauli errors}
 The Heisenberg--Weyl (HW) operators provide a natural generalization of Pauli operators to qudit systems. Let $\omega=e^{\frac{2\pi i}{q}}$ and let $X$ and $Z$ on the $\calH_q$ be defined as
\begin{align*}
    X\ket{j}=\ket{j+1~\mathrm{mod}~q}, \quad Z\ket{j}=\omega^{j}\ket{j}. 
\end{align*}
The HW operators are the unitaries given by
\begin{align}
    W_{a,b}=X^aZ^b, \quad a,b\in [q].\label{eq: Wab}
\end{align}
the set $\ppp{W_{a,b}/\sqrt{q}}_{a,b\in [q]}$ is an orthonormal basis of $L(\calH_q)$ with respect to the Hilbert-Schmidt inner product. In the case of $N$ qudit systems, 
tensor products $\otimes_{i=0}^{N-1} W_{a_i,b_i}$ play the role of generalized Pauli errors. Below we write such operators as $W_\ux$, where $\ux\in [q^2]^N$ is the indicator vector of nonzero pairs $(a_i,b_i)$. In particular, error operators of support size at most $t$ are spanned by HW operators $W_\ux$ such that the Hamming weight $\s{wt}(\ux)\le t$.

\subsubsection{Erasure errors}
Erasure errors model noise processes in which the information in a subsystem is erased, where \textit{the identity of this subsystem is known}. Such noise processes are often modeled by channels in which certain qudits are replaced by a distinguished orthogonal flag state $\ket{\perp}$, orthogonal to $ \calH_q$, thereby marking the locations of the errors while destroying the corresponding local quantum information. When an erasure error is introduced, the affected coordinates are revealed to the receiver through the appearance of the erasure flag, and the resulting state remains in an \(N\)-fold tensor product, but over the enlarged local space \(\calH_q^{\perp} :=  \calH_q \oplus \Span\{\ket{\perp}\}\). A \(t\)-erasure error on a \(q\)-qudit system \(\calH_q^{\otimes N}\) can thus be defined by Kraus operators that replace the qudits in positions \(I\subseteq [N]\) by \(\ket{\perp}\). Formally, a \(t\)-erasure channel is any channel which admits a Kraus representation with operators from the span of the set
\begin{equation}\label{eq:erasure_operator}
\calE_{t}^{\s{Er}} := \ppp{
  E_{I,\ux} : I\in \binom{[N]}{t},\ \ux\in [q]^t},
\qquad
E_{I,\ux} := 
\ket{\perp^{t}}_{I}\bra{\ux}_{I}\otimes I_{[N]\setminus I},
\end{equation}
where \(\ket{\perp^{t}}_{I}\) denotes the tensor product of \(t\) copies of \(\ket{\perp}\) placed in the coordinates indexed by \(I\).

\subsubsection{Deletion errors}\label{sec:DelErrors}
Quantum deletion errors model noise processes in which an unknown subsystem of a composite quantum system is removed, resulting in a state on a smaller Hilbert space. Operationally, this corresponds to applying a partial trace over an unknown subsystem. In contrast to erasure errors, the identity of the removed subsystem is not revealed to the receiver (by classical or quantum side information), and the resulting state therefore lies in a tensor product space of smaller dimension. Quantum error correction for deletion errors has been studied in several works, including explicit constructions of deletion-correcting codes for qubit systems \cite{nakayama2020first,hagiwara2020four,shibayama2021construction,hagiwara2022quantum,shibayama2021equivalence, nakamura2026insertion}, and permutation-invariant quantum codes \cite{ouyang2021permutation, aydin2023family, aydin2026quantum, bulled2026equivalence}. A $t$-deletion error on a $q$-qudit system $\calH_q^{\otimes N}$ can be defined by the set of Kraus operators of the form $E_{I,\ux}$, which acts on a computational basis state by removing qudits in positions $I\subseteq [N]$ if they are given by $\ket{\ux}$ (and by $0$ otherwise). Formally, a $t$-deletion channel is any channel that admits a Kraus representation with operators from the linear span of the set
\begin{equation}
    \calE_{t}^{\s{Del}} := \ppp{     D_{I,\ux} ~:~ I\in  \binom{[N]}{t}, \ux \in [q]^{t}}, \quad D_{I,\ux} :=  \bra{\ux}_I\otimes I_{[N]\setminus I}.\label{eq:DelErrors}
\end{equation}

\subsection{Permutation-invariant codes}\label{sec:PIcodes}

Permutation-invariant (PI) codes encode quantum information into the symmetric subspace of $N$ qudits. In other words, a PI code $Q^{\mathrm {PI}}\subseteq \calH_q^{\otimes N}$ is stabilized by the action of the symmetric group: for every $\ket{\psi}\in Q^{\mathrm {PI}}$ and every $g\in {\mathfrak S}_N$, 
$
    g\ket{\psi}=\ket{\psi},
$
where a permutation $g$ acts on the $\calH_q^{\otimes N}$ by permuting the qudits of the quantum state.
PI codes were introduced by Ruskai \cite{ruskaiExchange,ruskai-polatsek} to protect against particle exchange errors, which arise from the indistinguishability of identical particles. 
Beyond their original motivation, PI codes have natural physical realizations as ground states of the ferromagnetic Heisenberg model \cite{ouyangPI} and as collective excitations of atom-cavity systems (see, e.g., Refs.~\cite{chaudhury2007quantum,haas2014entangled,strobel2014fisher,lucke2014detecting,mcconnell2015entanglement,pezze2018quantum}).

A convenient way to describe symmetric codewords is through \emph{Dicke states}. To define them, we recall that the composition of a string $\ux$ of length $N$ over a $q$-ary alphabet is a $q$-tuple
\begin{equation}
    \mathrm{C}(\ux) =(n_0(\ux),n_1(\ux),\ldots,n_{q-1}(\ux)),\label{eq:Copositionof}
\end{equation}
where $n_i(\ux)$ is the number of occurrences of the character $i$ in the string $\ux$. Note that $\sum_x\mathrm{C}(\ux)=N$. A qudit {\em Dicke state} is the linear combination of all qudit states with the same composition, i.e., 
\begin{equation}
    \ket{D_{\un}} = \frac{1}{\sqrt{\binom{N}{\un}}}\sum_{\mathrm{C}(\ux)=\un}\ket{\ux}.\label{eq:Dickestates}
\end{equation}
Note that a Dicke state is completely determined by the composition $\un$, which is an element on the discrete simplex $\calS_{q,N}$ defined in \eqref{eq:simplex}.  A code $Q^{\mathrm{PI}}$ is a PI code if and only if it is contained in the \textit{permutation-symmetric spaces} given by the linear span of all Dicke states:
\begin{equation}
    \label{eq:symmetricSpaceX}
    \mathrm{Sym}({q,N})=\Span\ppp{\ket{D_{\un}}~:~ \un\in \calS_{q,N}}\subseteq \calH_q^{\otimes N}.
\end{equation}
\subsection{Constant-excitation bosonic Fock state codes}\label{sec:FockStateCodes}
In this subsection, we briefly introduce constant-excitation bosonic Fock-state codes, a family of quantum codes defined within the Hilbert space of bosonic modes with fixed total excitation number. These codes are naturally suited to optical and superconducting architectures, where information is encoded in photon-number states. For  fixed positive integers $N,q\in N$, we consider the $q$-mode Fock state space with a total excitation $N$ as 
\begin{equation}
    \calH_{q,N}=\Span\ppp{\ket{\un}=\ket{n_1}\otimes  s \otimes \ket{n_q} ~:~ \un\in \calS_{q,N}},\label{eq:COnstExSpace}
\end{equation} where $\ppp{\ket{\un}}_{\un}$ is an orthonormal basis indexed by the vertices of the simplex.

\subsubsection{ Amplitude damping noise}
A dominant noise mechanism in such systems is photon loss, which is accurately modeled by the amplitude damping channel. This channel captures the stochastic loss of excitations to the environment and thus provides the standard physical noise model for analyzing the error-correcting properties of constant-excitation bosonic codes. The amplitude damping channel with transmissivity $\gamma\in(0,1)$, denoted by $\calN_{\gamma}$, acting on $\calH_{q,N}$, is defined by the Kraus operators $\ppp{A_{\ur} ~:~ \ur \in \Z_0^{q}}$:
\begin{align}
    A_{\ur}=A_{r_1}\otimes A_{r_2}\otimes  \cdots \otimes A_{r_q}, \quad A_{r_i}\ket{n}=\begin{cases}
    \sqrt{\binom{n}{r_i}}\gamma^{\frac{r_i}{2}}(1-\gamma)^{\frac{n-r_i}{2}}\ket{n-r_i} & r_i\leq n,\\
    0 & r_i > n.
\end{cases}\label{eq:photonLossErr}
\end{align}

\subsubsection{Number-shift and phase-rotation noise}

Beyond amplitude damping, one may also consider bosonic noise models generated by bounded photon-number shifts together with bounded phase rotations. Models of this type appear in the literature on number-phase and rotation-symmetric bosonic codes. In the single-mode setting, \cite{albert2025bosonic} introduces unilateral number-shift operators as the bosonic analog of rotor momentum kicks. Discrete $N$-fold phase rotations together with number-shift operators are studied in the rotation-symmetric bosonic setting in \cite{marinoff2024explicit}, and the associated tradeoff between number-shift and phase-shift resilience is analyzed in \cite{ouyang2021trade}. 

 For integers $r,\theta\in \Z$ and $\beta\in [0,1]$ let $E_{r,\theta,\beta}$ denote the single-mode shift-rotation operator
\begin{equation*}
    E_{r,\theta,\beta}\ket{n}=
    \begin{cases}
        \exp\!\p{i\frac{2\pi\beta}{N}\theta n}\ket{n+r} & n+r\in\Z_0,\\
        0 & \text{otherwise}.
    \end{cases}
\end{equation*}
For parameters $t_{\mathrm{s}},t_{\mathrm{r}}\in\Z_0$ and $\beta\in[0,1]$, define the multimode shift-rotation error set
\begin{equation}
 \calE^{\s{SR}}_{t_{\mathrm{s}},t_{\mathrm{r}},\beta}=\ppp{E_{\ur,\uTheta,\beta}~:~\norm{\ur}_1\leq t_{\mathrm{s}},~\norm{\uTheta}_1\leq t_{\mathrm{r}}},\label{eq:BosonicSR}
\end{equation}
where $\ur,\uTheta\in\Z^q$ and
\begin{equation*}
    E_{\ur,\uTheta,\beta}:=\bigotimes_{i=1}^q E_{r_i,\theta_i,\beta}.
\end{equation*}
Equivalently, the action on a $q$-mode state $\ket{\un}=\ket{n_1}\otimes \cdots\otimes\ket{n_q}$ is defined as
\begin{equation}
    E_{\ur,\uTheta,\beta }\ket{\un}=
    \begin{cases}
        \exp\p{i\frac{2\pi\beta}{N}\sum_{i=1}^q \theta_i n_i}   \ket{\un+\ur} & \un+\ur\in\Z_0^q,\\
        0 & \text{otherwise}
    \end{cases}\label{eq:ShiftRotationErrorSet}
\end{equation}
where the second case refers to tuples with one or more negative entries.
 Here $\norm{  }_1$ denotes the vector $\ell_1$ norm on $\Z^q$. The shift vector $\ur$ models bounded photon-number gain/loss, while the diagonal factor models bounded relative phase rotations across the modes. The parameter $\beta>0$ truncates the per-mode rotation scale: the basic discrete phase increment in mode $i$ is $2\pi\beta\theta_i/N$, so reducing $\beta$ restricts the maximal rotation that can occur in each mode, and therefore also the total accumulated rotation across all modes. At the same time, parameter $t_{\mathrm{r}}$ controls the total rotation budget through the constraint $\norm{\uTheta}_1\le t_{\mathrm{r}}$ and may itself scale beyond $N$ if desired. In this sense, $\beta$ generalizes the limited-rotation viewpoint considered in \cite{albert2025bosonic} from the single-mode number-phase picture to the multimode setting.

\subsection{Fermionic Fock space codes}\label{sec:Fermionic}
We consider fermionic Fock-space codes, namely quantum codes defined within the Hilbert space of a finite number of fermionic modes. A fermionic mode is a two-level degree of freedom whose occupation number is either $0$ or $1$. Thus, for a fixed positive integer $N\in\N$, the fermionic Fock space of $N$ modes is
\begin{equation}
    \calF_N=\Span\ppp{\ket{\ux}~:~\ux=(x_1,\ldots,x_N)\in\ppp{0,1}^N},\label{eq:FermionicFockSpace}
\end{equation}
where $\ppp{\ket{\ux}}_{\ux}$ is the occupation-number basis. Equivalently, after fixing this basis, $\calF_N$ is identified with the computational basis Hilbert space of $N$ qubits. The coordinate $x_j$ records whether the $j$th fermionic mode is empty or occupied, and the vector $\ket{0,\ldots,0}$ is the vacuum state.
\subsubsection{Majorana noise}
Majorana fermion codes were introduced in \cite{bravyi2010majorana} to protect quantum information against low-weight fermionic errors. Majorana surface-code architectures for fault-tolerant quantum computation were developed in \cite{vijay2015majorana}, and a broader framework for quantum computation with Majorana surface codes and Majorana color codes was developed in \cite{litinski2018quantum}. We follow the support-size Majorana error model of \cite{bravyi2010majorana}.

Let $a_j$ and $a_j^\dag$ denote the annihilation and creation operators of the $j$th fermionic mode. They satisfy the canonical anticommutation relations $a_i a_j+a_j a_i=0$, and $ a_i a_j^\dag+a_j^\dag a_i=\delta_{i,j}I$. The associated $2N$ Majorana operators are defined by
\begin{equation*}
    \gamma_{2j}:=a_j+a_j^\dag,\quad \gamma_{2j+1}:=i(a_j^\dag-a_j),\quad j\in[N].\label{eq:MajoranaFromCreation}
\end{equation*}
These operators are Hermitian and satisfy
\begin{equation*}
    \gamma_u^\dag=\gamma_u,\quad \gamma_u\gamma_v+\gamma_v\gamma_u=2\delta_{u,v}I.\label{eq:MajoranaCAR}
\end{equation*}
Equivalently, under the Jordan-Wigner representation used in \cite{bravyi2010majorana}, one has
\begin{equation}
    \gamma_{2j}=Z_0 \cdots Z_{j-1}X_j,\quad \gamma_{2j+1}=Z_0\cdots Z_{j-1}Y_j,\label{eq:JordanWignerMajorana}
\end{equation}
where the Pauli operators act on the computational-basis representation of $\calF_N$. For a subset $I\subseteq[2N]$, we write $I=\ppp{u_1,\dots, u_m}$, $u_1< u_2< \cdots<u_m$ and define the Majorana monomial 
\begin{align}
\gamma_I := \gamma_{u_1}\gamma_{u_2} \cdots\gamma_{u_m},\label{eq:MajoranaI}
\end{align} with the convention that $\gamma_{\varnothing}=I$. The support of $\gamma_I$ is the subset $I$ of Majorana modes, and its Majorana weight is $\abs{I}$.

\section{Approximate quantum error correction}
We begin by introducing the notion of approximate quantum error correction (AQEC), in which the requirement of perfect recovery is relaxed and the decoded state is allowed to approximate the original encoded state up to a controlled error. Concretely, instead of demanding exact satisfaction of the Knill–Laflamme conditions, one permits a small deviation in the recovery map so that, after the action of the noise and decoding, each codeword is recovered only approximately.

Several rigorous frameworks for AQEC have been developed in the literature. We adopt the operator-algebraic framework of B\'eny and Oreshkov, as it provides necessary and sufficient conditions for the approximate correctability of a quantum code at the channel level, thereby giving a conceptually complete formulation of AQEC for channels. Building on this viewpoint, we formulate an approximate error-set model and show that it is both meaningful and workable for our purposes.

Prior work on Pauli-type error models already suggests that an error-set viewpoint should be possible, but only in rather specialized settings. In \cite{bergamaschi2024approaching}, the authors constructed a specific family of AQEC codes approaching the quantum Singleton bound and correcting adversarial errors on a linear number of registers; in the language used here, this corresponds to channels whose Kraus operators lie in the span of bounded-weight Pauli errors for that particular construction. More recently, \cite{ma2025haar} studied Haar-random codes for large structured unitary error sets, but their conditions rely on strong Hilbert-Schmidt orthogonality assumptions on the error operators together with approximate nondegeneracy properties of the corrupted basis codewords. These results provide important evidence that AQEC can support an error-set interpretation, but they do not develop such an interpretation as a general model. Our goal is to make this structure explicit: we introduce $\calE$-controlled channels for an arbitrary finite error set $\calE$ and derive AQEC conditions directly from the action of $\calE$ on the code.

Let us begin by briefly introducing distinguishability measures for quantum channels, which we use to quantify the performance of approximate quantum error correction. 
\begin{definition}[Entanglement fidelity, Bures distance, diamond norm]\label{def:Bures}
    Let $\calN,\calM :L(\calH)\to L(\calH')$ be quantum channels. The worst-case entanglement fidelity of $\calN$ and $\calM$ is  defined as
     \[\calF_{e}(\calN,\calM) := \inf_{\rho \in D(\calH)} \calF\p{\calN(\ket{\psi_\rho}\bra{\psi_\rho}),\calM(\ket{\psi_\rho}\bra{\psi_\rho})},\]
     where $\ket{\psi_{\rho}}$ is any purification of $\rho$. The Bures distance is defined as 
     \[d\p{\calM,\calN} :=  \sqrt{1-\calF_{e}(\calN,\calM)}. \]
  \nomenclature{$d( , )$}{Bures distance}   
     The diamond norm distance between $\calN$ and $\calM$ is defined as: 
     \[\norm{\calN-\calM}_{\diamond} :=  \sup_{\rho\in D(\calH^{\otimes 2})}\norm{(I_{L(\calH)}\otimes\calN)(\rho)-(I_{L(\calH)}\otimes\calM)(\rho)}_1.\]

\end{definition}
From the Fuchs–van de Graaf inequalities (see \cite[Section 6.2]{khatri2020principles}) it follows that 
\begin{align}
    d(\calN,\calM)^2\leq \frac{1}{2}\norm{\calN-\calM}_{\diamond}\leq d(\calN,\calM)\sqrt{2-d(\calN,\calM)^2}\leq \sqrt{2}\,d(\calN,\calM).\label{eq:diamondBures}
\end{align}
Since $\sqrt{1+x}=1+\frac{1}{2}x+O(x^2)$, whenever two superoperators are close with respect to the Bures distance, they are also close with respect to the diamond norm distance. In particular, proximity in the sense of the Bures distance and the sense of diamond norm distance are equivalent. Let us now define approximate quantum error-correcting codes.
\graybox{
\begin{definition}[Approximate quantum error-correction codes for quantum channels]\label{def: AQEC codes}
    A quantum code $Q\subseteq \calH$ is said to be an $\varepsilon$-approximate quantum error-correcting (AQEC) code for a channel $\calN:L(\calH)\to L(\calH')$ if there exists a CPTP decoding operation $\calD:L(\calH')\to L(\calH)$ such that 
    \[d\p{\calD\circ \calN|_{L(Q)},I_{L(Q)}}\leq \varepsilon.\]
    A sequence of codes $Q_N\subseteq \calH_N$ is said to approximately correct the sequence of channels $\calN_N$ if $Q_N$ is $\varepsilon_N$-AQEC code  for $\calN_N$, where $\varepsilon_N=o(1)$.  
\end{definition}}
    
    \subsection{Error-set model and code distance for AQEC via the B\'eny-Oreshkov framework}\label{sec:ErrorSetModel}

    Among the general formulations of approximate quantum error correction, the framework of B\'eny and Oreshkov \cite{beny2010general} is especially appealing because it gives necessary and sufficient conditions for approximate correctability of a \emph{fixed noise channel} in terms of the Bures distance. Their result characterizes when a code approximately corrects a channel by analyzing how the operators in a Kraus representation act on the code space, and shows that approximate correction is equivalent to these operators acting nearly as scalars on the code. Our goal in this section is to move beyond this channel-level viewpoint and formulate an \emph{error-set model} for AQEC: rather than studying one channel at a time, we fix a set of error operators $\calE$ and analyze approximate correction implemented uniformly over the family of channels generated by $\calE$ (below we formally define the family of $\calE$-controlled channels). We show that the B\'eny-Oreshkov framework naturally leads to such a model, and that it retains the main structural features one would expect from an error-set formulation, including a corresponding notion of approximate distance.
    \begin{theorem}[B\'eny-Oreshkov \cite{beny2010general}]\label{th:BO}
        Let $\calN:\calH\to \calH'$ be a quantum channel with Kraus representation $\ppp{E_k}_{k=1}^M$, and let $Q\subseteq \calH$ be a quantum code. Then $Q$ is $\varepsilon$-AQEC code  for $\calN$ if and only if there exists a density matrix $\lambda=(\lambda_{k,l})_{k,l=1}^{M}$ such that  $d(\Lambda,\Lambda+\calB)\leq \varepsilon$, where\footnote{This corrects an unfortunate misprint in \cite[Corollary~2]{beny2010general} where the first term in $B_{k,l}$ is incorrectly recorded as $P E_k^\dag E_l P$.}
        \begin{align}
            \Lambda(\rho)=\sum_{k,l=1}^M \lambda_{k,l}\tr(P\rho)  \ket{k}\bra{l}, \quad \calB(\rho)=\sum_{k,l=1}^M \tr(B_{k,l}\rho)  \ket{k}\bra{l}, \quad B_{k,l}=P {E_l^\dag E_k} P - \lambda_{k,l} P.\label{eq:BOoperators}
        \end{align}
        where $\ppp{\ket{k}}_k$ is an orthonormal basis for $\C^{M}$, and $P$ is the projector on the code space $Q$.
    \end{theorem}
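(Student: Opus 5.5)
The plan is to prove the statement via the complementary-channel characterization of approximate recoverability: a code is approximately correctable for $\calN$ exactly when the complementary channel, restricted to the code, is close to a constant channel. Fix a Stinespring isometry for $\calN$ restricted to $L(Q)$,
\[
V:Q\to \calH'\otimes\C^M,\qquad V\ket{\psi}=\sum_k E_k\ket{\psi}\otimes\ket{k}.
\]
The complementary channel is then
\[
\hat\calN(\rho)=\tr_{\calH'}(V\rho V^\dag)=\sum_{k,l}\tr(E_l^\dag E_k\rho)\ket{k}\bra{l}.
\]
For $\rho$ supported on $Q$ this equals $(\Lambda+\calB)(\rho)$ for every choice of $\lambda$. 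Moreover, $\Lambda(\rho)=\tr(P\rho)\lambda$ is precisely a constant channel on $L(Q)$ with output state $\lambda$. So the theorem reduces to the statement
\[
\inf_{\calD} d(\calD\circ\calN|_{L(Q)},I_{L(Q)}) \;=\; \inf_{\lambda} d(\hat\calN|_{L(Q)},\Lambda_\lambda),
\]
where $\lambda$ ranges over density matrices. Both directions will be handled through Uhlmann's theorem applied to purifications of the form $(I\otimes V)\ket{\psi_\rho}$.

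\textbf{Sufficiency ($\Leftarrow$).} Suppose $\calF_e(\hat\calN,\Lambda_\lambda)\ge 1-\varepsilon^2$. For each input purification $\ket{\psi_\rho}\in R\otimes Q$, the state $(I\otimes V)\ket{\psi_\rho}$ on $R\otimes\calH'\otimes\C^M$ purifies $(I\otimes\hat\calN)(\psi_\rho)$. The state $\ket{\psi_\rho}\otimes\ket{\phi_\lambda}$ purifies $\rho_R\otimes\lambda=(I\otimes\Lambda)(\psi_\rho)$, where $\ket{\phi_\lambda}\in Q'\otimes\C^M$ is a purification of $\lambda$ on an auxiliary code-sized register.
\begin{itemize}
\item By Uhlmann, for each $\rho$ there is an isometry/partial isometry $W_\rho:\calH'\to Q\otimes Q'$ aligning these two purifications with overlap $\calF$.
\item To obtain a \emph{single} decoder, I will use the minimax (operator) version of Uhlmann's theorem from B\'eny--Oreshkov. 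This version states that the fidelity between channels (worst-case over inputs) equals the maximum over isometries $W$ of the worst-case overlap, and it follows from a Sion/Ky Fan minimax argument over the convex set of inputs and the compact convex set of contractions $W$.
\item Define $\calD(X)=\tr_{Q'}(WXW^\dag)+(\text{completion on } I-W^\dag W)$. Then
\[
\calF_e(\calD\circ\calN,I)\ge \calF_e(\hat\calN,\Lambda_\lambda),
\]
since fidelity with the pure target $\psi_\rho$ can only increase when the extra register is traced out.
\end{itemize}

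\textbf{Necessity ($\Rightarrow$).} Given a decoder $\calD$ with Stinespring isometry $U:\calH'\to Q\otimes F$, the composite $\calD\circ\calN$ has environment $\C^M\otimes F$. If $\calD\circ\calN$ is $\varepsilon$-close to the identity, then by Uhlmann there is a state $\omega$ on $\C^M\otimes F$ such that the overall pure output is close to $\ket{\psi_\rho}\otimes\ket{\omega}$. Take $\lambda=\tr_F\omega$. Since the complementary channel of $\calD\circ\calN$, traced over $F$, is $\hat\calN$, monotonicity of fidelity under partial trace gives
\[
\calF(\hat\calN(\psi_\rho),\rho_R\otimes\lambda)\ge \calF_e(\calD\circ\calN,I).
\]
Choosing $\omega$ independently of $\rho$ requires the same minimax argument as in the sufficiency direction.

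\textbf{Main obstacle.} The crux is the exchange of $\inf_\rho$ and $\sup$ (over decoders, equivalently over $W$ or $\omega$). This requires showing that the fidelity expression $\abs{\bra{\psi_\rho}\cdots\ket{\cdot}}$ is concave in $\rho$ (working with $\rho\mapsto\sqrt\rho$ parametrization, as in B\'eny--Oreshkov) and convex/linear in the optimization variable. Once this minimax equality is established, both directions follow immediately from Uhlmann's theorem and monotonicity of fidelity.
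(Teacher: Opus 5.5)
The paper does not prove this theorem; it imports it from \cite{beny2010general}. Your outline is the standard proof behind that citation. You identify $\Lambda+\calB$ with the complementary channel $\hat{\calN}$ on $L(Q)$, recognize $\Lambda$ as a constant channel, and compare the two sides by Uhlmann's theorem plus a channel-level minimax (the Kretschmann--Schlingemann--Werner argument that B\'eny and Oreshkov rely on). The complementary-channel computation (including the $E_l^\dag E_k$ ordering), the decoder built from $W$ plus a completion term, and the monotonicity steps in both directions are all fine.

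\emph{The minimax step.} The one step your sketch gets wrong is the crux you flag yourself. To exchange $\sup_W\inf_\rho$, Sion's theorem needs quasi-concavity in the maximizing variable $W$ and quasi-convexity in the minimizing variable $\rho$. You propose the reverse pairing. With the absolute value, the objective $\abs{\langle\cdots\rangle}$ genuinely is convex in $W$, and that blocks the exchange: for example, on $[0,1]^2$ one has $\inf_x\sup_y\abs{x-y}=\tfrac12$ but $\sup_y\inf_x\abs{x-y}=0$. A $\sqrt{\rho}$-parametrization does not rescue this, since $\{\sigma\succeq 0:\tr\sigma^2=1\}$ is not convex.

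\emph{The fix.} Use the canonical purification $\ket{\psi_\rho}=\sum_i\ket{i}\otimes\sqrt{\rho}\ket{i}$. With it, the Uhlmann overlap equals $\tr(\rho\,T_W)$, where $T_W=(I_Q\otimes\bra{\phi_\lambda})(W\otimes I_{\C^M})V$, so it is \emph{linear} in $\rho$. You may also replace $\abs{\cdot}$ by $\mathrm{Re}$ without changing the maximum over $W$, because the contraction ball is invariant under multiplication by phases. Then $(\rho,W)\mapsto\mathrm{Re}\,\tr(\rho\,T_W)$ is bilinear on a product of compact convex sets. Von Neumann's minimax theorem gives a single $W$ with $\mathrm{Re}\,\tr(\rho\,T_W)\ge\calF_e(\hat{\calN},\Lambda_\lambda)$ for all $\rho\in D(Q)$, and your sufficiency argument goes through.

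\emph{Necessity.} The argument is identical, with the variable $\ket{\omega}$ ranging over the unit ball of $F\otimes\C^M$. The overlap is $\mathrm{Re}\,\tr\bigl(\rho\,(I_Q\otimes\bra{\omega})(U\otimes I_{\C^M})V\bigr)$. The optimizer may be subnormalized, so take $\lambda=\tr_F\ket{\omega}\bra{\omega}/\norm{\omega}^2$; this normalization can only increase the fidelity.

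\emph{A minor point.} $\lambda$ may have rank up to $M>\dim Q$, so its purifying register should be a copy of $\C^M$ rather than a code-sized $Q'$.
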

    
One of the central tenets of exact QEC is the universality of the Knill--Laflamme (KL) conditions as a criterion for error correction. If a code corrects an error set $\calE$, then the conclusion is not tied to a particular channel or to a particular Kraus representation: the code protects against noise on any channel whose Kraus operators lie in $\Span(\calE)$. This principle is the defining feature that distinguishes an error set as an adversarial noise model, rather than simply a description of a fixed channel.

For AQEC, this universal error-set interpretation is far less obvious. The B\'eny-Oreshkov conditions provide a powerful criterion for approximate correction of a fixed channel, but they do not by themselves allow one to pass from the Kraus representation to arbitrary linear combinations of the same errors. The obstruction is that approximate correction conditions do not enjoy the same linearity properties as the exact KL conditions. Therefore, to obtain a meaningful error-set formulation, one must restrict the channel family so that it remains rich enough to model adversarial noise, yet sufficiently structured for the approximate analysis to apply.

To this end, we fix a finite set of error operators $\calE$ and associate with it a family of channels whose Kraus operators lie in $\Span(\calE)$ and whose coefficient matrix satisfies a spectral constraint. This constraint replaces the missing linearity of the exact theory. It allows us to formulate approximate Knill--Laflamme-type conditions directly for the operators in $\calE$, and to deduce AQEC uniformly for every channel in the resulting $\calE$-controlled family.

\graybox{
    \begin{definition}[$\calE$-controlled channels]\label{def:AQECC}
    Let $\calE=\ppp{E_k}_k$ \nomenclature{$\calE$}{set of error operators} be a finite set of error operators $E_k:\calH\to \calH'$. A channel  is said to be an $\calE$-\textit{controlled} if it admits a Kraus representation with operators $\ppp{A_m}_{m}$ of the form $A_m=\sum c_{m,k} E_k$, where the matrix $C=(c_{m,k})$ is an $\ell_2$  contraction, i.e., 
    \begin{equation}
        \norm{C}_\infty =\sup_{\norm{\ux}_2=1}\norm{Cx}_{2}\leq 1.\label{eq:controlled}
    \end{equation}
    We denote the set of all $\calE$-controlled channels by $\mathscr{N}(\calE)$. 
 
    A code is said to be an \textbf{$\varepsilon$-AQEC} code for the error set $\calE$ if it is an $\varepsilon$-AQEC code
in the sense of Definition~\ref{def: AQEC codes} for any \(\cal E\)-controlled noise channel.

\end{definition}}

We remark that the condition of Definition~\ref{def:AQECC} is independent of the Kraus representation as long as the error set $\calE$ is linearly independent. We elaborate on this in Observation~\ref{obs: independent} below.
In what follows, we provide an error-set version of the B\'eny-Oreshkov conditions that implies AQEC uniformly for 
$\calE$-controlled channels. 

\begin{definition}[B\'eny-Oreshkov superoperators, environment-leakage distance]\label{def:DiamondNormProxConst}
    Let $\calE=\ppp{E_k}_{k=0}^{M-1}$ be a finite set of error operators $E_k:\calH\to \calH'$ and $Q\subseteq \calH$ be a code space with projector $P$. For any $M\times M$ matrix $\lambda$, define a \textit{B\'eny-Oreshkov superoperator} $\calB^{\calE}_{\lambda,Q}:L(\calH)\to L(\C^M)$ for $\calE$ 
    \nomenclature{$\calB^{\calE}_{\lambda,Q}$}{B\'eny-Oreshkov superoperator} as follows:
    \begin{equation}
        \calB^{\calE}_{\lambda,Q}(\rho)=\sum_{k,l}\tr(B_{k,l}\rho) \ket{k}\bra{l},\quad B_{k,l}=P{E_l^\dag E_k}P-\lambda_{k,l}P \label{eq:defBElambd}
    \end{equation}
 Further, define the \textit{environment-leakage distance} of $\calE$ as follows: \nomenclature{$\zeta(\calE,Q)$}{environment-leakage distance}
    \[\zeta(\calE,Q)=\inf_{\lambda} \norm{\calB_{\lambda,Q}^{\calE}}_{\diamond}.\]
    This is the distance from the complementary map to the space of constant superoperators defined by the code and the error set.
\end{definition}

We start our investigation of the environment-leakage distance with a simple but important observation: $\zeta(\calE,Q)$ can be approximated up to a factor of $2$ by restricting to a specific well-behaved B\'eny-Oreshkov superoperator. The proof of the following lemma appears in Appendix~\ref{app:lemREPLACE}.
\begin{lemma}\label{lem:replaceOpt}
     Let $\calE=\ppp{E_k}_{k=0}^{M-1}$ be a finite set of error operators $E_k:\calH\to \calH'$ and $Q\subseteq \calH$ be a code space. For a  (normalized) codeword $\ket{c_0}\in Q$ consider the matrix $\lambda$ given by $\lambda_{k,l}=\bra{c_0}E_l^\dag E_k \ket{c_0}$. Then 
     \[\norm{\calB_{\lambda,Q}^{\calE}}_{\diamond}\leq 2 \zeta(\calE,Q).\]
\end{lemma}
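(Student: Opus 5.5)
Fix an arbitrary $M\times M$ matrix $\mu$ and write $\lambda^{(0)}$ for the matrix of the statement, $\lambda^{(0)}_{k,l}=\bra{c_0}E_l^\dag E_k\ket{c_0}$. The plan is to show that
\[
\norm{\calB^{\calE}_{\lambda^{(0)},Q}}_\diamond\le 2\norm{\calB^{\calE}_{\mu,Q}}_\diamond ,
\]
and then take the infimum over $\mu$. Two facts drive the argument: the B\'eny--Oreshkov superoperator is affine in the matrix parameter, and $\lambda^{(0)}-\mu$ can be read off by evaluating $\calB^{\calE}_{\mu,Q}$ at the code state $\ket{c_0}\bra{c_0}$.

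The first step is the decomposition. By \eqref{eq:defBElambd}, the operators satisfy $B^{(0)}_{k,l}=B^{\mu}_{k,l}-(\lambda^{(0)}_{k,l}-\mu_{k,l})P$. Hence
\[
\calB^{\calE}_{\lambda^{(0)},Q}=\calB^{\calE}_{\mu,Q}-\Delta,\qquad \Delta(\rho):=\tr(P\rho)\sum_{k,l}(\lambda^{(0)}_{k,l}-\mu_{k,l})\ket{k}\bra{l}.
\]
By the triangle inequality it suffices to prove $\norm{\Delta}_\diamond\le\norm{\calB^{\calE}_{\mu,Q}}_\diamond$.

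The second step identifies the coefficient matrix of $\Delta$. Set $\rho_0=\ket{c_0}\bra{c_0}$. Since $P\ket{c_0}=\ket{c_0}$, we get $\tr(B^{\mu}_{k,l}\rho_0)=\lambda^{(0)}_{k,l}-\mu_{k,l}$. Therefore $D:=\sum_{k,l}(\lambda^{(0)}_{k,l}-\mu_{k,l})\ket{k}\bra{l}=\calB^{\calE}_{\mu,Q}(\rho_0)$, and $\Delta(\rho)=\tr(P\rho)\,D$.

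The third step bounds $\norm{\Delta}_\diamond$, and this is the only place needing care, because the map involves a trace and the reference system. For any state $\sigma$ on $\calH\otimes\calH$, write $\tau=\tr_2\big((I\otimes P)\sigma\big)$. Then
\[
(I\otimes\Delta)(\sigma)=\tau\otimes D,
\]
and $\tau$ is a positive operator with $\tr\tau\le1$, because $\tau=\tr_2\big((I\otimes P)\sigma(I\otimes P)\big)$. So the trace norm satisfies $\norm{\tau\otimes D}_1=\tr(\tau)\norm{D}_1\le\norm{D}_1$. Finally,
\[
\norm{D}_1=\norm{\calB^{\calE}_{\mu,Q}(\rho_0)}_1\le\norm{\calB^{\calE}_{\mu,Q}}_\diamond,
\]
since the diamond norm dominates the action on a single input density operator (take a product input with an arbitrary reference state). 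This gives $\norm{\Delta}_\diamond\le\norm{\calB^{\calE}_{\mu,Q}}_\diamond$, so $\norm{\calB^{\calE}_{\lambda^{(0)},Q}}_\diamond\le 2\norm{\calB^{\calE}_{\mu,Q}}_\diamond$. Because $\mu$ was arbitrary, taking the infimum yields $\norm{\calB^{\calE}_{\lambda^{(0)},Q}}_\diamond\le 2\,\zeta(\calE,Q)$.

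The main obstacle is confirming the diamond-norm convention in Definition~\ref{def:Bures}, namely the supremum over density operators on $\calH^{\otimes2}$, for these non-trace-preserving maps. Once that convention is fixed, both inequalities in the third step are immediate.
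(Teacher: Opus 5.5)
Your proposal is correct and follows the paper's proof essentially step for step. The paper also writes $\calB^{\calE}_{\lambda',Q}=\calB^{\calE}_{\lambda,Q}-\calL$ with $\calL(\rho)=\tr(P\rho)\,L$ and bounds $\norm{\calL}_{\diamond}\le\norm{L}_1$ via $(I\otimes\calL)(\rho)=\tr_{\calH}((I\otimes P)\rho)\otimes L$. It then obtains $\norm{L}_1\le\norm{\calB^{\calE}_{\lambda,Q}}_{\diamond}$ by evaluating on $\ket{c_0}\bra{c_0}\otimes\ket{c_0}\bra{c_0}$. The only cosmetic difference is how the reduced operator is controlled: you note it is positive with trace at most one, whereas the paper uses contractivity of the partial trace and H\"older's inequality. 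The paper's argument works equally well under the completely bounded norm convention, so the ``obstacle'' you flag is not a real one.
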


Using Lemma~\ref{lem:replaceOpt}, we can prove a sufficient condition for AQEC for an error set, involving the environment-leakage distance: 

\begin{theorem}[Sufficient AQEC conditions for error sets]\label{th:BOerrorset}
     Let $\calE=\ppp{E_k}_{k=0}^{M-1}$ be a finite set of error operators, where $E_k:\calH\to \calH'$, and let $Q\subseteq \calH$ be a code. If $\zeta(\calE,Q)\leq \varepsilon^2/2$, then $Q$ is an $\varepsilon$-AQEC code  for $\calE$. 
\end{theorem}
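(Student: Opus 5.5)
Fix an arbitrary $\calE$-controlled channel $\calN$ with Kraus operators $A_m=\sum_k c_{m,k}E_k$ and $\norm{C}_\infty\le 1$. By Theorem~\ref{th:BO}, it suffices to exhibit a density matrix $\mu$ (indexed by $m$) with $d(\Lambda_\mu,\Lambda_\mu+\calB^{\calA}_{\mu,Q})\le\varepsilon$, where $\calA=\ppp{A_m}_m$. By the Fuchs--van de Graaf inequality \eqref{eq:diamondBures} this follows from $\norm{\calB^{\calA}_{\mu,Q}}_\diamond\le 2\varepsilon^2\cdot\frac12=\varepsilon^2$. The plan is therefore to transfer a good $\lambda$ for $\calE$ to a good $\mu$ for $\calA$, losing at most a factor of $2$.

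\emph{Choice of $\lambda$.} A general optimizer of $\zeta(\calE,Q)$ need not be a density matrix, and Theorem~\ref{th:BO} requires one. I would therefore use Lemma~\ref{lem:replaceOpt}. Take $\lambda_{k,l}=\bra{c_0}E_l^\dag E_k\ket{c_0}$ for a normalized codeword $\ket{c_0}$. Then $\norm{\calB^{\calE}_{\lambda,Q}}_\diamond\le 2\zeta(\calE,Q)\le\varepsilon^2$.

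\emph{Transfer to $\calA$.} Set $\mu=C\lambda C^\dag$ after fixing the transpose/conjugation convention. With that choice, $\mu_{m,n}=\bra{c_0}A_n^\dag A_m\ket{c_0}$. This $\mu$ is positive semidefinite as a Gram matrix, and its trace is $\sum_m\norm{A_m\ket{c_0}}^2=1$ by completeness of the Kraus set $\calA$. So $\mu$ is a density matrix. By bilinearity, $PA_n^\dag A_mP-\mu_{m,n}P=\sum_{k,l}c_{m,k}\bar c_{n,l}B_{k,l}$. Hence
\[\calB^{\calA}_{\mu,Q}(\rho)=V\,\calB^{\calE}_{\lambda,Q}(\rho)\,V^\dag,\qquad V=\sum_{m,k}c_{m,k}\ket{m}\bra{k},\]
possibly up to replacing $C$ by $\bar C$. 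Since $\norm{V}_\infty=\norm{C}_\infty\le1$, and the trace norm satisfies $\norm{VXV^\dag}_1\le\norm{V}_\infty^2\norm{X}_1$ (also after tensoring with the identity on a reference system), we get $\norm{\calB^{\calA}_{\mu,Q}}_\diamond\le\norm{\calB^{\calE}_{\lambda,Q}}_\diamond\le\varepsilon^2$.

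\emph{Conclusion and main obstacle.} It remains to pass from the diamond-norm bound to the Bures distance. Note that $\Lambda_\mu+\calB^{\calA}_{\mu,Q}$ is the complementary channel restricted to $L(Q)$ and $\Lambda_\mu$ is a constant channel on $L(Q)$; both are CPTP there. Applying \eqref{eq:diamondBures} gives $d^2\le\frac12\norm{\calB^{\calA}_{\mu,Q}}_\diamond\le\varepsilon^2/2\le\varepsilon^2$, and Theorem~\ref{th:BO} then yields that $Q$ is an $\varepsilon$-AQEC code for $\calN$. Because $\calN$ was arbitrary, this holds for all of $\mathscr N(\calE)$.

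The delicate points are bookkeeping rather than depth. First, the index conventions in $B_{k,l}$ (note the footnote swap $E_l^\dag E_k$) must be tracked so that the congruence is exactly $V(\cdot)V^\dag$ with a contraction $V$. Second, $\mu$ must be a genuine density matrix, which is exactly why Lemma~\ref{lem:replaceOpt} is invoked instead of an arbitrary infimizer. The slack above, since $\varepsilon^2/2\le\varepsilon^2$, leaves room if a different Fuchs--van de Graaf constant is needed.
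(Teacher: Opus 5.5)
Your proposal is correct and follows the paper's proof essentially step for step. It uses the same choice $\lambda_{k,l}=\bra{c_0}E_l^\dag E_k\ket{c_0}$ via Lemma~\ref{lem:replaceOpt}, the same transferred density matrix $C\lambda C^\dag=\bigl(\bra{c_0}A_h^\dag A_m\ket{c_0}\bigr)_{m,h}$ (this identity is exact, so your $\bar C$ hedge is unnecessary), the same factorization $\calB^{\mathcal A}_{\mu,Q}=\calT_C\circ\calB^{\calE}_{\lambda,Q}$, and the same appeal to \eqref{eq:diamondBures} and Theorem~\ref{th:BO}.

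There are only two minor differences. Your pointwise H\"older bound with $I\otimes C$ controls the diamond norm directly, which bypasses the paper's detour through Hermitian preservation and cb-norm submultiplicativity. You also correctly read off $d\le\varepsilon/\sqrt{2}$ from \eqref{eq:diamondBures}, whereas the paper settles for $d\le\sqrt{\norm{\calB^{\mathcal A}_{\mu,Q}}_{\diamond}}\le\varepsilon$.
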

\begin{proof}
    Let $\lambda$ be a matrix given by $\lambda_{k,l}=\bra{c_0}{E_l^\dag E_k} \ket{c_0}$, where  $\ket{c_0}\in Q$ is a normalized codeword. By Lemma~\ref{lem:replaceOpt} and the assumption of the theorem, 
    \[\norm{\calB^{\calE}_{\lambda,Q}}_{\diamond}\leq 2\zeta(\calE,Q)\leq  \varepsilon^2.\]
     Let $\calN\in \mathscr{N}(\calE)$ be an $\calE$-controlled channel with Kraus operators $\ppp{A_m}_{m=0}^{M'-1}$, where $A_m=\sum_{k}c_{m,k} E_k$ such that $\norm{C}_{\infty}\leq 1$. 
      For $m,h\in [M']$ define
    \[B^{(A)}_{m,h}=P{A_h^\dag A_m} P - \lambda^{(A)}_{m,h}{P},\]
    with $\lambda^{(A)}=\p{\lambda_{m,h}^{(A)}}_{m,h=0}^{M'-1}$ given by
    \begin{equation}
        \lambda_{m,h}^{(A)}=\sum_{k,l=0}^{M-1}c_{m,k} c_{h,l}^* \lambda_{k,l}=\sum_{k,l=0}^{M-1}c_{m,k} c_{h,l}^* \bra{c_0}E_l^\dag E_k \ket{c_0}=\bra{c_0}A_h^\dag A_m \ket{c_0}.\label{eq:newLambda}
    \end{equation} 
    Let $\Lambda_A,\calB_A$ be the corresponding B\'eny-Oreshkov superoperators as defined in \eqref{eq:BOoperators}, with respect to $\ppp{A_m}$ and the matrix $\lambda^{(A)}$.  We begin by observing that $\lambda^{(A)}$ is a density matrix. Indeed, let $\ket{v}\in \C^{M'}$ be any vector. By \eqref{eq:newLambda} we have
    \begin{align*}
        \bra{v}\lambda^{(A)} \ket{v}={\sum_{m,h}v_m^* \lambda^{(A)}_{m,h} v_h=\sum_{m,h}v_m^* \bra{c_0}A_h^\dag A_m\ket{c_0} v_h=\norm{\sum_m v_m^* A_m\ket{c_0}}_2^2}\geq 0,
        \end{align*}
 so $\lambda^{(A)}$ is positive semidefinite. It also has unit trace:
        \[\tr(\lambda^{(A)})=\sum_m \bra{c_0}A_m^\dag A_m \ket{c_0}=\braket{c_0|c_0}=1,\]
  where we used the fact that $(A_m)_m$ satisfies the completeness condition. 
    
     By \eqref{eq:diamondBures}, we have $d(\Lambda_A,\Lambda_A+\calB_A)\leq \sqrt{\norm{\calB_A}_{\diamond}}$ and therefore, by Theorem~\ref{th:BO}, to show that $Q$ is an $\varepsilon$-AQEC code  for $\calN$ it is sufficient to show that $\norm{\calB_A}_{\diamond}\leq \varepsilon^2$. 
    For any $m,h\in [M']$ we have 
    \[ B^{(A)}_{m,h}={PA_h^\dag A_m P - \lambda^{(A)}_{m,h}P=\sum_{k,l} c_{m,k}c_{h,l}^*\p{PE_l^\dag E_k P -\lambda_{k,l}P}=\sum_{k,l=0}^{M-1} c_{m,k}c_{h,l}^* B_{k,l}.} \]
    Here, the $B_{k,l}=P {E_l^\dag E_k} P -\lambda_{k,l}P$, $k,l\in [M]$ are the deviation-from-constant operators associated with the error set B\'eny-Oreshkov superoperator $\calB_{\lambda,Q}^{\calE}$.    Thus, for any $\rho\in L(\calH)$ we have
    \begin{align}
        \calB_A (\rho)&={\sum_{m,h=0}^{M'-1} \tr(\rho B^{(A)}_{m,h})  \ket{m}\bra{h}=\sum_{m,h=0}^{M'-1}\sum_{k,l=0}^{M-1}c_{m,k}c_{h,l}^* \tr(\rho B_{k,l})  \ket{m}\bra{h}.} \label{eq:repCalBa}
    \end{align}
    Consider the superoperator $\calT_C:L(\C^{M})\to L(\C^{M'})$ given by $\calT_C(\rho)= {C} \rho {C}^\dag$ where  
    \[{C}=\sum_{m=0}^{M'-1}\sum_{k=0}^{M-1} {c_{m,k}} \ket{m}\bra{k}.\]
    We observe that $\calB_{A}=\calT_C \circ \calB_{\lambda,Q}^{\calE}$. Indeed, for any $\rho\in L(\calH)$ 
    \begin{align*}
        \calT_C(\calB_{\lambda,Q}^{\calE}(\rho))&={C \p{\sum_{k,\ell=0}^{M-1} \tr(B_{k,l}\rho)\ket{k}\bra{l}} C^{\dag}}\\
        &={\sum_{m,h=0}^{M'-1} \sum_{k,l=0}^{M-1}\sum_{k',l'=0}^{M-1}\tr(B_{k,l}\rho)c_{m,k'} c_{h,l'}^*   \braket{k' | k}\braket{l | l'}   \ket{m}\bra{h}}\\
        &={\sum_{m,h=0}^{M'-1} \sum_{k,l=0}^{M-1}\tr(B_{k,l}\rho)c_{m,k} c_{h,l}^*   \ket{m}\bra{h}=\calB_A(\rho),}
    \end{align*}
    where in the last line we used \eqref{eq:repCalBa} and the fact that $(\ket{k})_{k\in[M]}$ is an orthonormal basis.
    Note that $\lambda=(\lambda_{k,l})_{k,l}=(\bra{c_0}{E_l^\dag E_k} \ket{c_0})_{k,l}$ is a Hermitian matrix, and therefore the superoperator $\calB_{\lambda,Q}^{\calE}$ is Hermitian preserving. Indeed, for any  Hermitian  $\rho$:
    \begin{align*}
        \calB_{\lambda,Q}^{\calE}(\rho)^\dag&= \Big(\sum_{k,l} \tr(B_{k,l}\rho)  \ket{k}\bra{l}\Big)^\dag =\sum_{k,l} \tr(B_{k,l}\rho)^*  \ket{l}\bra{k}\\
        &=\sum_{k,l} \tr(\rho^\dag B_{k,l}^\dag)  \ket{l}\bra{k}=\sum_{k,l} \tr(\rho B_{l,k})  \ket{l}\bra{k}=\calB_{\lambda,Q}^{\calE}(\rho),
    \end{align*}
    where the next-to-last inequality uses the relation
    \[B_{k,l}^\dag ={\p{P E_l^\dag E_k P -\lambda_{k,l}P}^\dag=P E_k^\dag E_l P -\lambda_{k,l}^*P=P E_k^\dag E_l P -\lambda_{l,k}P=B_{l,k}.}\]
Further, observe that $\calT_C$ is also Hermitian preserving since for any Hermitian $\rho$,
    \[\calT_{C}(\rho)^\dag= (C \rho C^\dag )^\dag=C\rho^\dag C^\dag =C\rho C^\dag =\calT_C(\rho).\]
We now use the equivalence of the diamond norm and the completely bounded norm for Hermitian-preserving maps, together with the submultiplicativity of the completely bounded norm with respect to composition (see Lemma~\ref{lem:cbDiamondProp})  to obtain
    \begin{align*}
        \norm{\calB_A}_{\diamond}=\norm{\calT_C\circ \calB}_{\diamond}\leq   \norm{\calT_C}_{\diamond}  \norm{\calB}_{\diamond}.
    \end{align*}
    By Lemmas~\ref{lem:cbDiamondProp} and \ref{lem:compoDiamond}, since $\calT_C$ is Hermitian preserving, we have  $\norm{\calT_C}_\diamond={\norm{C}_\infty^2}$. Combining this with the assumption $\norm{C}_{\infty}\leq 1$, we obtain the inequality 
    \[d(\Lambda_A,\Lambda_A+\calB_A)\leq \sqrt{\norm{\calB_A}_{\diamond}}\leq \sqrt{\norm{\calB}_{\diamond}}\leq \varepsilon.\]
    This completes the proof.
\end{proof}

{The sufficient AQEC conditions of Theorem~\ref{th:BOerrorset} allow us to define an approximate analog of the code distance. Although code distance is most commonly used in the context of limited-weight errors, the same idea naturally extends, both in the exact and approximate settings, to any indexed family of error sets. In this general formulation, the distance of a code measures the largest index $t$ for which the code corrects, exactly or approximately, all errors in $\calE_t$.}  The parameter $t$ should be understood as a model-dependent measure of error magnitude: for bounded-weight errors, it is the number of affected subsystems, while in other settings it may quantify, for example, the number of deletions or the photon loss count, or the total shift budget.

\graybox{
\begin{definition}[Approximate code distance]\label{def:dist}
Let $\calH$ be a Hilbert space, and let $\mathscr{E}=(\calE_t)_{t>0}$ be a family of error sets on $\calH$. For $\varepsilon\geq 0$ and a quantum code $Q\subseteq \calH$, the $\varepsilon$-distance of $Q$ with respect to $\mathscr{E}$, denoted by
$d_{\varepsilon}(Q,\mathscr{E}),$ \nomenclature{$d_{\varepsilon}(Q,\mathscr{E})$}{distance of the code}
is the maximum $t$ such that $Q$ is $\varepsilon$-AQEC for $\calE_t$. If no such $t$ exists we say that $d_\varepsilon(Q,\mathscr{E})=0$. 
\end{definition}}
Several prior works \cite{bergamaschi2024approaching,ma2025haar} have discussed the concept of code distance for AQEC. However, their results are confined to the specific settings and constructions considered there, stopping short of providing a general, model-independent definition.

The concept of approximate distance extends naturally to the asymptotic setting. Namely, a sequence of codes is considered asymptotically good if its rate stays bounded away from zero while it approximately corrects an error set whose magnitude grows linearly with the system size {\em and} the approximation error vanishes asymptotically.
    \begin{definition}[Asymptotically good AQEC codes]\label{def:AsymGood}
    Assume that $(Q_N)_N$ is a sequence of quantum codes such that $Q_N\subset\calH_N$ and $\mathscr{E}_N$ is an error-set family acting on $\calH$. We say that $(Q_N)_N$ is an $\mathscr{E}_N$-asymptotically good AQEC code sequence if  there exists a sequence $\varepsilon_N\to 0$ such that both the code rate and relative approximate distance are separated from zero, i.e.,
    \[\liminf_{N\to \infty} \min\p{\frac{\log \dim(Q_N)}{\log \dim(\calH_N)}, \frac{d_{\varepsilon_{_N}}(Q_N,\mathscr{E}_N)}{N}}>0.\]
\end{definition}

\subsubsection{Structural properties and examples of the error-set model}\label{sec:StructrualProp}
Having established sufficient conditions for AQEC in the error-set model through the environment-leakage distance, we now further examine the meaning and scope of this model. In particular, we show that this formulation is not merely technically convenient, but also structurally well behaved: it enjoys natural properties and gives rise to a meaningful adversarial interpretation of several standard noise models in quantum error correction. We illustrate this by showing how the well-known QEC noise models fit naturally into the present framework. We start with a simple observation:

\begin{obs}\label{obs: independent}
    If the set $\calE$ is linearly independent, condition \eqref{eq:controlled} is independent of the Kraus representation; in other words, for any channel $\calN\in \mathscr{N}(\calE)$, any Kraus operator set for $\calN$ lies in $\Span(\calE)$ and its coefficient matrix $C$ satisfies \eqref{eq:controlled}. 
\end{obs}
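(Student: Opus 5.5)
The plan is to use the unitary freedom in Kraus representations, together with the fact that coefficient matrices with respect to a linearly independent set are unique. Let $\calN\in\mathscr{N}(\calE)$ have a Kraus representation $\ppp{A_m}_{m=0}^{M'-1}$ with $A_m=\sum_k c_{m,k}E_k$ and $\norm{C}_\infty\le 1$. Let $\ppp{A'_j}_{j=0}^{M''-1}$ be any other Kraus representation of $\calN$. By padding the shorter family with zero operators, I may assume both families have the same length $L=\max(M',M'')$. Padding $C$ with zero rows leaves $\norm{C}_\infty$ unchanged, since zero rows do not change $C^\dag C$.

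The standard theorem on Kraus representations then gives an $L\times L$ unitary $U=(u_{j,m})$ with $A'_j=\sum_m u_{j,m}A_m$. Substituting the expansion of $A_m$ gives
\[
A'_j=\sum_k (UC)_{j,k}E_k .
\]
This shows at once that every Kraus operator of the new representation lies in $\Span(\calE)$, and that $C'=UC$ is a valid coefficient matrix for it.

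Linear independence of $\calE$ is used in the next step. The expansion of each $A'_j$ in $\calE$ is unique, so $C'=UC$ is \emph{the} coefficient matrix of $\ppp{A'_j}$, with no other choice possible. Unitary invariance of the spectral norm then gives $\norm{C'}_\infty=\norm{UC}_\infty=\norm{C}_\infty\le1$. Removing any padded zero Kraus operators deletes zero rows of $C'$, which again leaves the spectral norm unchanged.

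There is no real obstacle here. The only care needed is to quote the unitary-equivalence theorem for Kraus representations of different lengths correctly, which the padding step handles. It is also worth noting that, without linear independence, the coefficient matrix is not unique, and the statement would have to read ``some coefficient matrix satisfies \eqref{eq:controlled}'' rather than ``its'' coefficient matrix.
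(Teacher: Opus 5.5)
Your proposal is correct and follows essentially the same argument as the paper. Both pad to equal length, use the unitary freedom of Kraus representations to get $C'=UC$, invoke linear independence of $\calE$ for uniqueness of the coefficient matrix, and conclude by unitary invariance of the spectral norm; your bookkeeping about zero rows under padding and unpadding is correct and slightly more careful than the paper's write-up.
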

\begin{proof}
    Assume that there exists a Kraus representation for $\calN$ formed of operators $\ppp{A_m}_m$ such that the corresponding coefficient matrix satisfies $\norm{C}_{\infty}\leq 1$. Let $\ppp{B_l}_l$ be another Kraus representation. Without loss of generality, assume that these two representations have the same size 
    (otherwise, pad the shorter one with zeros). As is well known, any two Kraus representations are related by a unitary transformation \cite[Theorem~8.2]{nielsen2010quantum}, i.e.,  there exists a unitary matrix $U$ such that for all $l$ 
    \begin{equation}
        B_l=\sum_m u_{l,m} A_m=\sum_m \sum_k u_{l,m} c_{m,k} E_k=\sum_k \p{\sum_m u_{l,m} c_{m,k} }   E_k.\label{eq:unique_rep}
    \end{equation}
Together with the linear independence of $\calE$, this implies that the unique representation of $\ppp{B_l}_l$ in the basis $\ppp{E_k}$ is given by the matrix $C'=UC$. Since multiplying by a unitary matrix does not change the spectral norm, we have 
    \[\norm{C'}_\infty=\norm{UC}_{\infty}=\norm{C}_{\infty}\leq 1.\]
    In particular, $\ppp{B_{l}}\subseteq\Span(\calE)$, and since $\calE$ is an independent set, representation \eqref{eq:unique_rep} is unique. Note that \eqref{eq:unique_rep} also shows that the coefficient matrix of $\ppp{B_l}_l$ is given by $UC$. Since multiplication by a unitary matrix preserves singular values, we have $\norm{UC}_\infty=\norm{C}_\infty \leq 1$.
\end{proof}

\begin{obs}\label{obs:Convexity} $\mathscr{N}(\calE)$ is a convex set for any $\calE$. 
In particular, if $\calN_1,\dots, \calN_\ell \in \mathscr{N}(\calE)$, then also any channel acting by randomly applying any of the channels $\calN_1,\dots, \calN_\ell $ with probabilities $p_1,\dots,p_\ell$ is in $\mathscr{N}(\calE)$.

\end{obs}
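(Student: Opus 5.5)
The plan is to show convexity directly by stacking Kraus representations. Let $\calN_1,\dots,\calN_\ell\in\mathscr{N}(\calE)$ and let $p_1,\dots,p_\ell\ge 0$ with $\sum_i p_i=1$. For each $i$, Definition~\ref{def:AQECC} gives a Kraus representation $\ppp{A^{(i)}_m}_{m}$ of $\calN_i$ with $A^{(i)}_m=\sum_k c^{(i)}_{m,k}E_k$ and $\norm{C^{(i)}}_\infty\le 1$. Then the channel $\calN=\sum_i p_i\calN_i$ admits the Kraus representation $\ppp{\sqrt{p_i}A^{(i)}_m}_{i,m}$. This is indeed a valid Kraus set, since $\sum_{i,m}p_i A^{(i)\dag}_m A^{(i)}_m=\sum_i p_i I=I$ and $\calN(\rho)=\sum_{i,m}\sqrt{p_i}A^{(i)}_m\rho\,\sqrt{p_i}A^{(i)\dag}_m$. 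Each of these operators lies in $\Span(\calE)$, with coefficient row $\sqrt{p_i}\,c^{(i)}_{m,\cdot}$.

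The coefficient matrix of this representation is therefore the vertical stack
\[
C=\begin{pmatrix}\sqrt{p_1}C^{(1)}\\ \vdots\\ \sqrt{p_\ell}C^{(\ell)}\end{pmatrix}.
\]
The only real step is bounding its spectral norm, and I will do this via $C^\dag C$. Block multiplication gives $C^\dag C=\sum_i p_i C^{(i)\dag}C^{(i)}$. Each $C^{(i)\dag}C^{(i)}\preceq \norm{C^{(i)}}_\infty^2 I\preceq I$, so the convex combination also satisfies $C^\dag C\preceq I$. Equivalently, for any unit vector $\ux$,
\[
\norm{C\ux}_2^2=\sum_i p_i\norm{C^{(i)}\ux}_2^2\le \sum_i p_i=1.
\]
Hence $\norm{C}_\infty\le 1$, and $\calN\in\mathscr{N}(\calE)$.

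The "in particular" clause needs no extra work: the channel that applies $\calN_i$ with probability $p_i$ is exactly $\sum_i p_i\calN_i$. (If one wants the flag of which channel was applied, one can also track it classically, but the statement concerns the averaged channel.) I expect no genuine obstacle here. The only subtlety is that the stacked matrix has more rows than any single $C^{(i)}$; this is harmless, because Definition~\ref{def:AQECC} allows Kraus representations of any size. Convexity for two channels already implies the general case, but the argument above handles all $\ell$ directly.
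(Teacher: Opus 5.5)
Your proof is correct and matches the paper's argument: you stack the weighted Kraus sets $\{\sqrt{p_i}A^{(i)}_m\}$, take the coefficient matrix to be the vertical stack of the blocks $\sqrt{p_i}C^{(i)}$, and bound it by $\norm{C\ux}_2^2=\sum_i p_i\norm{C^{(i)}\ux}_2^2\le 1$. Your version also checks the completeness relation explicitly, and it keeps the square on $\norm{C\ux}_2$ that the paper's displayed chain omits.
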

\begin{proof} Let $\calN_1\dots,\calN_{\ell}\in \mathscr{N}(\cal E)$, and let $p_1,\dots,p_{\ell}$ such that $\sum_ip_i=1$. Let  $\calN_i$ be given by Kraus operators $\{A^{(i)}_l\}_{l=1}^{L_i}$ with a coefficient matrix $Q_i$ satisfying $\norm{Q_i}_\infty\leq1$ (cf. Def.~\ref{def:AQECC}). Note that $\bigcup_{i=1}^m \{\sqrt{p_i} A^{(i)}_{l_i}\}_{l_i=1}^{L_i} $ is a Kraus  representation of $\calN$ with coefficient matrix
\[C=\begin{bmatrix}
\sqrt{p_1} Q_1\\
\vdots\\
\sqrt{p_m}  Q_{\ell}
\end{bmatrix}. \]
The spectral norm of $C$ satisfies
\begin{align*}
    \norm{C}_\infty^2&=\sup_{\norm{ x}_2= 1} \norm{Cx}_2=\sup_{\norm{x}_2=1}\sum_{i=1}^\ell p_i \norm{Q_ix }_2^2\\
& \leq \sum_{i=1}^\ell p_i\norm{Q_i }_\infty^2 \leq \sum_{i=1}^\ell p_i=1.  \qedhere
\end{align*}
    
\end{proof}

    \begin{prop}[Necessary condition and controlled channels for orthogonal unitary errors]\label{prop:orthUni}
    Let $\calE =\ppp{U_0,\dots ,U_{M-1}}$ be operators on a $d$-dimensional Hilbert space $\calH$,  orthogonal with respect to the Hilbert-Schmidt inner product, and satisfying $\tr(U_k^\dag U_k)=d$ for all $k$. 
    \begin{enumerate}
        \item The set of $\calE$-controlled channels, $\mathscr{N}(\calE)$, is the set of all channels which admit a Kraus representation contained in $\Span(\calE)$.
        \item If in addition $U_{k}$ are unitary for all $k$, any code  $Q\subseteq \calH$ that is an $\varepsilon$-AQEC code  for $\calE$, admits a positive matrix $\lambda$ such that $\|\calB_{\lambda,Q}^{\calE}\|_{\diamond}\leq {2\sqrt{2}M\varepsilon}$. In particular, $\zeta(\calE,Q)\leq {2\sqrt{2}M\varepsilon}$.
    \end{enumerate}
    \end{prop}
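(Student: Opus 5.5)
The plan is to treat the two parts separately. Part~1 reduces to a trace computation using Hilbert--Schmidt orthogonality. Part~2 follows from the necessity direction of Theorem~\ref{th:BO}, applied to one specific $\calE$-controlled channel: the uniform mixture of the unitaries.

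\textbf{Part 1.} By Definition~\ref{def:AQECC}, every $\calE$-controlled channel has a Kraus representation in $\Span(\calE)$, so only the converse needs proof. Let $\calN$ have Kraus operators $A_m=\sum_k c_{m,k}U_k$.
\begin{itemize}
\item Take the trace of the completeness relation $\sum_m A_m^\dag A_m=I$, which gives $\sum_m\tr(A_m^\dag A_m)=d$.
\item Expand using $\tr(U_l^\dag U_k)=d\,\delta_{k,l}$. This gives $\tr(A_m^\dag A_m)=d\sum_k|c_{m,k}|^2$.
\item Hence $\norm{C}_2^2=\sum_{m,k}|c_{m,k}|^2=1$.
\item Since the spectral norm is at most the Frobenius norm, $\norm{C}_\infty\le\norm{C}_2=1$. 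So \eqref{eq:controlled} holds and $\calN\in\mathscr{N}(\calE)$.
\end{itemize}

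\textbf{Part 2.} Now assume the $U_k$ are unitary, and consider the channel $\calN(\rho)=\frac1M\sum_k U_k\rho U_k^\dag$.
\begin{itemize}
\item Its Kraus operators $A_k=U_k/\sqrt M$ satisfy completeness because each $U_k^\dag U_k=I$.
\item Its coefficient matrix is $C=I/\sqrt M$, with $\norm{C}_\infty=1/\sqrt M\le1$. Hence $\calN$ is $\calE$-controlled.
\item By hypothesis, $Q$ is therefore an $\varepsilon$-AQEC code for $\calN$.
\end{itemize}
Apply the ``only if'' direction of Theorem~\ref{th:BO} to the Kraus set $\{A_k\}$. This yields a density matrix $\lambda'$ with $d(\Lambda,\Lambda+\calB)\le\varepsilon$, where $\calB$ has entries
\[
B_{k,l}=\tfrac1M PU_l^\dag U_kP-\lambda'_{k,l}P .
\]
Both $\Lambda$ and $\Lambda+\calB$ are channels: $\Lambda+\calB$ is the complementary channel on the code and $\Lambda$ is a constant channel. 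The Fuchs--van de Graaf bound \eqref{eq:diamondBures} then gives
\[
\norm{\calB}_\diamond=\norm{(\Lambda+\calB)-\Lambda}_\diamond\le 2\sqrt2\,d(\Lambda,\Lambda+\calB)\le 2\sqrt2\,\varepsilon .
\]

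Next set $\lambda:=M\lambda'$. This matrix is positive semidefinite, with trace $M$. Comparing entries with \eqref{eq:defBElambd}, we get exactly $\calB^{\calE}_{\lambda,Q}=M\,\calB$. Homogeneity of the diamond norm then gives
\[
\|\calB^{\calE}_{\lambda,Q}\|_\diamond\le 2\sqrt2\,M\varepsilon .
\]
Taking the infimum over $\lambda$ yields the stated bound on $\zeta(\calE,Q)$.

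\textbf{Main obstacle.} There is little difficulty here. The step needing care is checking that the superoperators compared in Theorem~\ref{th:BO} are genuine channels, restricted to the code, so that \eqref{eq:diamondBures} applies in the direction ``diamond norm $\le 2\sqrt2\times$ Bures''. A second check is that the rescaling by $M$ matches the normalization of the Kraus operators $U_k/\sqrt M$. This rescaling is the source of the factor $M$ in the bound.
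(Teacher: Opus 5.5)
Your proposal is correct and follows the paper's proof essentially step for step. Part~1 uses the traced completeness relation together with $\norm{C}_\infty\le\norm{C}_2=1$. Part~2 uses the uniform channel with Kraus operators $U_k/\sqrt{M}$, the necessity direction of the B\'eny--Oreshkov theorem, the Fuchs--van de Graaf bound \eqref{eq:diamondBures}, and the rescaling $\lambda=M\lambda'$. The only cosmetic difference is that you verify $\calE$-controlledness of the uniform channel directly from $C=I/\sqrt{M}$, whereas the paper cites Part~1 for this.
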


    \begin{proof}
         \underline{Proof of (1):} One direction of the inclusion is obvious, and it therefore remains to show that if $\calN$ is a channel with Kraus operators  $\ppp{A_m}_{m}\subseteq \Span(\calE)$ there exists a representation whose coefficient matrix $C=(c_{mk})$ satisfies $\norm{C}_{\infty}\leq 1$. 
        Indeed, note that since $U_1,\dots ,U_M$ are Hilbert-Schmidt orthogonal and have squared Hilbert--Schmidt norm $d$, the collection $(U_1/\sqrt{d},\dots ,U_M/\sqrt{d})$ 
        forms an orthonormal basis of $\Span\p{\calE}$ with respect to the Hilbert-Schmidt inner product.
        By the completeness condition, we have 
        \begin{align*}
            d&=\sum_m \tr(A_m^\dag A_m)=\sum_m\innerP{A_m,A_m}_{\s{HS}}=\sum_m\sum_{k,l}c_{mk}^*c_{ml} \overset{d \delta_{k,l}}{\overbrace{\innerP{U_k^\dag U_l}}}_{\s{HS}}=d \sum_{m,k}|c_{m,k}|^2.
        \end{align*}
This implies that 
        \begin{align*}
            \norm{C}_{\infty}\leq \norm{C}_{2}=\sqrt{\sum_{m,k}|c_{mk}|^2}= 1,
        \end{align*}
since the spectral norm is dominated by the Frobenius norm.

        \underline{Proof of (2):} Assume that $U_k$ is unitary for all $k$, and consider the quantum channel $\calN$ defined by the Kraus operators $\ppp{A_k:=U_k/{\sqrt{M}}}_{k=1}^M$. Since $U_k$ are unitary,
        \[
        \sum_{k=1}^{M} A_k^\dag A_k =\frac{1}{M}\sum_{k=1}^{M}  U_k^\dag U_k= I,
        \]
 implying that the completeness condition is satisfied and $\calN$ is indeed a quantum channel. 
 By Part (1), $\calN$ is also $\calE$-controlled. 
 We have assumed that $Q$ is an  $\varepsilon$-AQEC code for the channel $\calN$ and so by Theorem~\ref{th:BO}, there exists a density matrix $\lambda$ such that 
$d(\Lambda,\Lambda+\calB_{\lambda})\leq \varepsilon$ for $\calB_\lambda$ defined as in \eqref{eq:BOoperators}. Now using
 \eqref{eq:diamondBures}, we obtain that $\norm{\calB_{\lambda}}_{\diamond}\leq {2\sqrt{2}\varepsilon}$. Consider the matrix $\lambda_M=M
 \lambda$, and let $P$ be the projection on $Q$. Observe that
        \begin{align*}
            \calB_{\lambda_M,Q}^{\calE}(\rho)&=\sum_{k,l}\tr((P{U_l^\dag U_k}P-M\lambda_{k,l}P)\rho ) \ket{k}\bra{l}\\
            &= M  \sum_{k,l}\tr((P{A_l^\dag A_k}P-\lambda_{k,l}P)\rho )  \ket{k}\bra{l}=M \calB_{\lambda}(\rho),
        \end{align*}
so
        \[\norm{  \calB_{\lambda_M,Q}^{\calE}}_{\diamond}=\norm{  M\calB_{\lambda}}_{\diamond}\leq {2\sqrt{2} M\varepsilon}. \qedhere
        \]
    \end{proof}
   \begin{remark}\label{rem:CantImprove}
        An important question that immediately arises from Proposition~\ref{prop:orthUni} is whether the necessary condition, or some linear scaling of it, is also sufficient for error sets with this {special} structure. In other
        words, there might exist a constant $\alpha$ such that the condition
        \begin{equation}
            \zeta(\calE,Q)\leq \alpha M \varepsilon \label{eq:falseCond}
        \end{equation}
        is sufficient to guarantee that $Q$ is an $\varepsilon$-AQEC code for $\calE$. This may also seem intuitive, since this necessary condition is obtained by considering the channel which is, in some sense, the most chaotic among all controlled channels generated by this error set, the uniform-error channel. However, this line of reasoning turns out to be false. We show this by constructing a sequence of codes and unitary Hilbert-Schmidt-orthogonal error sets for which the condition
        \[
            \zeta(\calE,Q)\leq \alpha \varepsilon
        \]
        is necessary, which implies that for large $M$, condition \eqref{eq:falseCond} cannot be sufficient for $\varepsilon$-AQEC. Furthermore, it shows that there exist examples for which the sufficient AQEC condition of Theorem~\ref{th:BOerrorset} is also necessary, up to a linear constant and a square-root factor, the latter being inevitable due to the transition from the Bures distance to the diamond norm. On the practical side, this shows that, without assuming additional structure, the condition of Theorem~\ref{th:BOerrorset} cannot be improved at least as long as we rely on the environment-leakage distance. 
   \hfill $\triangleleft$ \end{remark}
 This argument is formalized in the following proposition, whose proof appears in Appendix~\ref{app:optimalCodeExamle}.
    \begin{prop}\label{prop:exampleOptimality}
        There exists a sequence of codes on $Q_N\subseteq \calH_2^{\otimes N}$ and a sequence of Pauli error sets $\calE_N$  with $|\calE_N|\to \infty$ such that if $Q_N$ is an $\varepsilon$-AQEC for $\calE_N$ then it must hold that
        \[\zeta(\calE_N,Q_N)\leq 4 \sqrt{2}\varepsilon.\]
    \end{prop}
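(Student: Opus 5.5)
The plan is to build an example in which all but two of the errors are stabilizers of the code. Stabilizers act trivially on the code, so they contribute nothing to the B\'eny--Oreshkov superoperator once $\lambda$ is chosen correctly. The environment-leakage distance then reduces to that of a two-element error set $\{E_a,E_b\}$. That pair is itself, up to rescaling, the Kraus set of a single $\calE_N$-controlled channel, so Theorem~\ref{th:BO} applies to it directly.

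\emph{Construction.} Let $N\ge 3$ and let $Q_N=\Span\{\ket{0}\ket{0^{N-1}},\ket{1}\ket{0^{N-1}}\}\subseteq\calH_2^{\otimes N}$, so qubit $1$ is free and qubits $2,\dots,N$ are fixed to $\ket{0}$. Let $P$ be the projector onto $Q_N$. Let $\calG_N$ be the group generated by $Z_2,\dots,Z_N$; it has $2^{N-1}$ elements, and every $g\in\calG_N$ satisfies $gP=Pg=P$. Set $E_a=X_2$ and $E_b=Z_1X_2$, and define
\[
\calE_N=\calG_N\cup\{E_a,E_b\},
\]
so that $|\calE_N|\to\infty$. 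Since $E_a$ and $E_b$ flip qubit $2$, we have $PE_aP=PE_bP=0$. Moreover $PgE_aP=PE_aP=0$ for every $g\in\calG_N$, and likewise for $E_b$. Finally $PE_b^\dag E_aP=\pm PZ_1P$, which is not a scalar on the code, so the example is nontrivial.

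\emph{Step 1: block reduction of $\zeta$.} Choose $\lambda$ block-diagonally. On the $\calG_N\times\calG_N$ block set $\lambda_{k,l}=1$, so that each entry $Pg_l^\dag g_kP-P$ vanishes. On all mixed stabilizer/$\{E_a,E_b\}$ entries set $\lambda=0$; these entries vanish by the previous paragraph. On the remaining $2\times 2$ block take $\lambda=\mu$, where $\mu$ is arbitrary. Then
\[
\calB^{\calE_N}_{\lambda,Q_N}=0\oplus\calB^{\{E_a,E_b\}}_{\mu,Q_N}.
\]
The output is supported on a $2\times2$ diagonal block, and embedding into a larger matrix space does not change trace norms. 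Hence $\zeta(\calE_N,Q_N)\le\|\calB^{\{E_a,E_b\}}_{\mu,Q_N}\|_\diamond$ for every $\mu$.

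\emph{Step 2: invoking AQEC for one controlled channel.} Consider the channel $\calN$ with Kraus operators $\{E_a/\sqrt2,\,E_b/\sqrt2\}$. It is trace preserving because $E_a,E_b$ are unitary. Its coefficient matrix with respect to $\calE_N$ has two nonzero entries, each equal to $1/\sqrt2$, in distinct rows and columns, so its spectral norm is $1/\sqrt2\le 1$ and $\calN$ is $\calE_N$-controlled. If $Q_N$ is $\varepsilon$-AQEC for $\calE_N$, it is in particular $\varepsilon$-AQEC for $\calN$. Theorem~\ref{th:BO} then gives a density matrix $\lambda'$ with $d(\Lambda,\Lambda+\calB)\le\varepsilon$. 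Converting via \eqref{eq:diamondBures}, as in the proof of Proposition~\ref{prop:orthUni}(2), gives $\|\calB\|_\diamond\le 2\sqrt2\,\varepsilon$.

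\emph{Step 3: rescaling.} Taking $\mu=2\lambda'$, the operators defining the two-element superoperator are exactly twice those of $\calB$:
\[
PE_l^\dag E_kP-2\lambda'_{k,l}P=2\bigl(PA_l^\dag A_kP-\lambda'_{k,l}P\bigr),\qquad A_k=E_k/\sqrt2 .
\]
Therefore $\calB^{\{E_a,E_b\}}_{2\lambda',Q_N}=2\calB$. Combining with Step 1 yields $\zeta(\calE_N,Q_N)\le 2\cdot 2\sqrt2\,\varepsilon=4\sqrt2\,\varepsilon$, as claimed.

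The main point of care is Step 1. One must check that all mixed entries vanish exactly, which uses $gP=P$ together with $PE_{a}P=PE_bP=0$. One must also confirm that $\lambda$ is not required to be positive in the definition of $\zeta$, since the all-ones stabilizer block is allowed there. Everything else is direct bookkeeping using Theorem~\ref{th:BO} and \eqref{eq:diamondBures}.
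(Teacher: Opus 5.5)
Your argument is correct, but it reaches the bound by a different mechanism than the paper.

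\textbf{The paper's route.} It uses the $\alpha$-dependent code $\ket{c_1}=\sqrt{\alpha}\ket{0}^{\otimes N-2}\ket{01}+\sqrt{1-\alpha}\ket{0}^{\otimes N-2}\ket{10}$ with $\calE_N=\{X_0,\dots,X_{N-3},I,Z_{N-1}\}$. It computes $\norm{\calB^{\calE_N}_{\lambda^N,Q_N}}_\diamond=2\alpha$ for a hand-picked $\lambda^N$. It then computes the B\'eny--Oreshkov superoperator of the Kraus set $\{(Z_{N-1}\pm I)/2\}$ at the non-optimal $\lambda$ read off from $\ket{c_0}$, which again gives $2\alpha$. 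Finally it invokes Lemma~\ref{lem:replaceOpt} to compare with the optimum, obtaining $\alpha\le 2\sqrt2\varepsilon$.

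\textbf{Your route.} You choose $\lambda$ block-diagonally so that the scalar-acting block and all mixed entries vanish. You then import the optimal $\lambda'$ of the two-element channel directly through $\mu=2\lambda'$, which is the rescaling of Proposition~\ref{prop:orthUni}(2) applied to a two-element subset. No norm has to be evaluated and Lemma~\ref{lem:replaceOpt} is not needed; the factor $2$ the paper loses there, you lose in the $1/\sqrt2$ normalization. Your mechanism is also more general: it applies verbatim to the paper's code, since $PX_jX_iP=\delta_{ij}P$ for $i,j\le N-3$ and $PX_iEP=PEX_iP=0$ for $E\in\{I,Z_{N-1}\}$.

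\textbf{What each approach buys.} The paper's explicit route provides the tunable parameter $\alpha$. Since $\zeta(\calE_N,Q_N)\le 2\alpha$, Theorem~\ref{th:BOerrorset} makes $Q_N$ a $2\sqrt{\alpha}$-AQEC code. So the necessary condition is exhibited for nearly exact codes, which is the regime Remark~\ref{rem:CantImprove} is about.

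Your specific example is degenerate in that respect. Because $E_b^\dag E_a=Z_1$ is the logical $Z$, the controlled channel $\{E_a/\sqrt2,E_b/\sqrt2\}$ is $X_2$ followed by complete logical dephasing, which is entanglement breaking. Hence $Q_N$ is $\varepsilon$-AQEC for $\calE_N$ only when $\varepsilon\ge\sqrt{1-1/\sqrt2}\approx 0.54$. Meanwhile $\mu=I_2$ already gives $\zeta(\calE_N,Q_N)\le 2<4\sqrt2\cdot 0.54$, so the conclusion holds essentially for free.

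This does not invalidate your proof of the stated proposition. To make the example carry the intended content, replace your pair by one whose product is only slightly non-scalar on the code, for instance the paper's $\alpha$-code with $\{I,Z_{N-1}\}$; your argument then goes through unchanged.
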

    
\begin{example}[Limited weight errors]\label{ex:Paulis}
    Consider the set of HW operators of weight at most $t$ on $H_q^{\otimes N}$:
    \begin{equation}
        \calE_t=\ppp{W_{\ua,\ub}=\bigotimes_{i\in [N]}W_{a_i,b_i} ~:~ (\ua,\ub)\in [q^2]^{N}, \wt(\ua,\ub)\leq t},\label{eq:HWoperatorsT} 
    \end{equation}
    where $\wt(\ua,\ub)$ is the number of indices $i$ such that $(a_i,b_i)\neq(0,0)$. The resulting error-set family is $\mathscr{E}=(\calE_{t})_{t=0}^N$. It is well known that $\ppp{W_{\ua,\ub}}_{\ua,\ub}$ are Hilbert-Schmidt orthogonal and unitary. In particular, they satisfy the conditions of Proposition~\ref{prop:orthUni}, where $\mathscr{N}(\calE_t)$ includes all quantum channels introducing at most $t$ errors. In the context of approximate code distance (see Definition~\ref{def:dist}), we conclude that a code with distance $d_{\varepsilon}(Q,\mathscr{E})\geq t$ is $\varepsilon$-AQEC for any channel introducing at most $t$ qudit errors.
\hfill$\triangleleft$ \end{example} 
Using Theorem~\ref{th:BOerrorset} and Proposition~\ref{prop:orthUni} we derive both necessary and sufficient conditions for AQEC for $t$-limited error channels:
\begin{cor}\label{cor:generalErrorSuff}
    A code $Q\subseteq \calH_q^{\otimes N}$ is an $\varepsilon$-AQEC code  for any channel introducing $t$-limited errors if 
    \[\zeta(\calE_t, Q)\leq \frac{1}{2}\varepsilon^2.\]
    Conversely, if $Q\subseteq \calH_q^{\otimes N}$ is an $\varepsilon$-AQEC code  for any channel introducing $t$-limited errors then 
    \[\zeta(\calE_t, Q)\leq {2\sqrt{2}|\calE_t|\varepsilon=2\sqrt{2}\varepsilon  \sum_{i=0}^t \binom{N}{i}(q^2-1)^i} \]
\end{cor}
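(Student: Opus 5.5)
The plan is to obtain both directions directly from the two general results already established, specialized to the error set $\calE_t$ of Example~\ref{ex:Paulis}. Throughout, ``a channel introducing $t$-limited errors'' will mean a channel admitting a Kraus representation contained in $\Span(\calE_t)$.

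For sufficiency, I will first use part (1) of Proposition~\ref{prop:orthUni} to identify this family with $\mathscr{N}(\calE_t)$. To apply it, I check the hypotheses for $\calE_t$. The HW operators $W_{\ua,\ub}$ are pairwise Hilbert--Schmidt orthogonal, since $\tr(W_{a,b}^\dag W_{a',b'})=q\,\delta_{a,a'}\delta_{b,b'}$ factorwise and this multiplies across tensor factors. They are also unitary, so $\tr(W^\dag W)=q^N=\dim\calH_q^{\otimes N}$. Part (1) then says that every channel with Kraus operators in $\Span(\calE_t)$ is $\calE_t$-controlled. Theorem~\ref{th:BOerrorset}, applied with the hypothesis $\zeta(\calE_t,Q)\le\varepsilon^2/2$, gives that $Q$ is $\varepsilon$-AQEC for every $\calE_t$-controlled channel, and hence for every channel introducing $t$-limited errors.

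For the converse, I will invoke part (2) of Proposition~\ref{prop:orthUni}, whose extra hypothesis is unitarity of the $W_{\ua,\ub}$. The assumption that $Q$ corrects every $t$-limited channel covers all of $\mathscr{N}(\calE_t)$ by part (1), so $Q$ is $\varepsilon$-AQEC for the error set $\calE_t$. Part (2) then gives $\zeta(\calE_t,Q)\le 2\sqrt2\,|\calE_t|\,\varepsilon$.

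It remains to compute $|\calE_t|$. An operator of weight exactly $i$ is determined by its support, one of $\binom{N}{i}$ choices, together with a non-identity pair $(a_j,b_j)\neq(0,0)$ on each of the $i$ supported sites, giving $(q^2-1)^i$ choices. Summing over $i\le t$ yields $\sum_{i=0}^t\binom{N}{i}(q^2-1)^i$.

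No real obstacle is expected; the only care needed is verifying the orthogonality and normalization hypotheses on the tensor-product operators, which reduce to single-qudit facts.
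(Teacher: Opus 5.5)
Your proposal is correct and follows the paper's own route: the paper derives this corollary by combining Theorem~\ref{th:BOerrorset} with both parts of Proposition~\ref{prop:orthUni}, applied to the HW set $\calE_t$ of Example~\ref{ex:Paulis}, which is unitary and Hilbert--Schmidt orthogonal. Your check of those hypotheses and your count $|\calE_t|=\sum_{i=0}^t\binom{N}{i}(q^2-1)^i$ supply exactly the details the paper leaves implicit.
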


\begin{example}[Limited weight Majorana errors]\label{ex:Majoranas}
    Consider the fermionic Fock space $\calF_N$ of $N$ fermionic modes. For $t\in \Z_0$ define
     the limited-weight Majorana error set
    \begin{equation}
        \calE^{\s{Maj}}_t=\ppp{\gamma_I~:~I\subseteq[2N],~\abs{I}\leq t},\label{eq:MajoranaErrorsT}
    \end{equation}
    where $\gamma_I$ is the Majorana operator associated with the set $I$ defined in \eqref{eq:MajoranaI}. Under the Jordan-Wigner representation (see \eqref{eq:JordanWignerMajorana}), each $\gamma_I$ is mapped to a Pauli string up to a phase. Hence the operators $\ppp{\gamma_I}_{I\subseteq[2N]}$ are Hilbert-Schmidt orthogonal and unitary. In particular, the set $\calE^{\s{Maj}}_t$ satisfies the conditions of Proposition~\ref{prop:orthUni}, with $d=2^N$ and
    \begin{equation}
        \abs{\calE^{\s{Maj}}_t}=\sum_{\ell=0}^{t}\binom{2N}{\ell}.\label{eq:MajoranaErrorsCount}
    \end{equation}
    Therefore, $\mathscr{N}(\calE^{\s{Maj}}_t)$ is exactly the set of all quantum channels admitting a Kraus representation in $\Span(\calE^{\s{Maj}}_t)$, namely the channels generated by Majorana errors of support size at most $t$. In the context of approximate code distance, we conclude that a code with distance $d_{\varepsilon}(Q,\mathscr{E}^{\s{Maj}})\geq t$ is an $\varepsilon$-AQEC code for any channel introducing at most $t$ Majorana errors. Moreover, using Theorem~\ref{th:BOerrorset} and Proposition~\ref{prop:orthUni}, we obtain the following sufficient and necessary conditions in terms of $\zeta$. A code $Q\subseteq \calF_N$ is an $\varepsilon$-AQEC code for any channel introducing at most $t$ Majorana errors if
    \[ \zeta(\calE^{\s{Maj}}_t,Q)\leq \frac{1}{2}\varepsilon^2.\]
    Conversely, if $Q\subseteq \calF_N$ is an $\varepsilon$-AQEC code for any channel introducing at most $t$ Majorana errors, then
    \[\zeta(\calE^{\s{Maj}}_t,Q)\leq {2\sqrt{2}\abs{\calE^{\s{Maj}}_t}\varepsilon
        =2\sqrt{2}\varepsilon\sum_{\ell=0}^{t}\binom{2N}{\ell}}.\]
\hfill$\triangleleft$ \end{example}

\begin{remark}[Normalization of the error-set operators]\label{rem:scaling}
{The definition of $\calE$-controlled channels depends on the normalization of the operators in $\calE$. Indeed, rescaling the elements of $\calE$ changes the coefficient matrix associated with a Kraus representation and may therefore change the resulting family $\mathscr{N}(\calE)$. Thus, the choice of scaling forms a part of the error-set model, and the appropriate normalization may depend on the family of channels one wishes to control.}

{Example~\ref{ex:Paulis} shows that, for HW errors, the natural unitary normalization is canonical in this sense. By Proposition~\ref{prop:orthUni}, with this normalization, the $\calE$-controlled channels are exactly the channels admitting a Kraus representation in $\Span(\calE)$. On the other hand, if one rescales even one HW operator by a factor smaller than $1$, then the isometric channel $\calN(\rho)=W\rho W^\dag$
associated with that operator will have a compensating coefficient larger than one and will thus no longer be included in the corresponding controlled family. Thus, for this error model, the unitary scaling is essentially tight.}

{In general, however, the correct scaling should be chosen according to the intended channel family. In Example~\ref{ex:AD}, we choose a normalization that captures all truncated amplitude damping channels simultaneously. We then justify this choice by proving a necessary condition showing that the resulting sufficient condition is tight up to a factor of order $1/(t+1)$, where $t$ is the number of amplitude damping errors.}
\hfill$\triangleleft$ \end{remark}
\begin{example}[Photon loss errors] \label{ex:AD} Consider the constant excitation Fock state space $\calH_{q,N}$ and the set of at most $t$-photon-loss errors, defined as: 
\begin{equation}
    \calE_{\leq t}^{\s{AD}}= \ppp{\Tilde{A}_{\ur} ~:~ \ur \in \calS_{q,r}, r\leq t}.\label{eq:ADerrorSet}
\end{equation}
where $\Tilde{A}_{\ur}$ is a normalized version of $A_{\ur}$ defined in \eqref{eq:photonLossErr}:
\begin{equation}
    \Tilde{A}_{\ur}=\frac{1}{\sqrt{\binom{N}{r}}}   \Tilde{A}_{r_1}\otimes \Tilde{A}_{r_2}\otimes  \cdots \otimes \Tilde{A}_{r_q}, \quad \Tilde{A}_{r_i}\ket{n}=\begin{cases}
  \sqrt{\binom{n}{r_i}}   \ket{n-r_i} & r_i\leq n,\\
    0 & r_i > n.\end{cases}\label{eq:ADnormalizedOP}
\end{equation}
    Consider the truncated bosonic amplitude damping channel  introduced in \cite{elimelech2026asymptotically}, given by 
    \[
    \calN_{\gamma,\leq t}(\rho)=\sum_{\substack{\ur\in \calS_{q,r}\\ r\leq t}} \frac{1}{p_{N,t,\gamma}}  A_{\ur} \rho A_{\ur}^{\dag}, 
    \]
    where $ p_{N,t,\gamma}=\P\pp{\xi_\gamma\leq t}$ and $\xi_\gamma\sim\s{Binom}(N,\gamma)$ is a binomial random variable. The channel $\calN_{\gamma,\leq t}$ is given by the Kraus representation 
    \(\ppp{({p_{N,t,\gamma}})^{-\frac12}{A_{\ur}} ~:~ \ur \in \calS_{q,r}, r\leq t}.\)
    For every $\gamma\in(0,1)$, a simple calculation shows that $\calN_{\gamma,\leq t}\in \mathscr{N}(\calE_{\leq t}^{\s{AD}})$ as for any $\ur\in \calS_{q,r}$, $r<t$ we have 
    \begin{align*}
        \frac{1}{\sqrt{p_{N,t,\gamma}}}A_{\ur}&= \sqrt{\frac{\binom{N}{r}\gamma^r(1-\gamma)^{N-r}}{p_{N,t,\gamma}}} \Tilde{A}_{\ur}
        =\P(\xi=r\mid\xi\le t)^\frac12 \Tilde{A}_{\ur}
    \end{align*}
Since the multipliers are at most 1, the corresponding (diagonal) coefficient matrix $Q_\gamma$ satisfies $\norm{Q_{\gamma}}_\infty \leq 1$. This shows that $\calN_{\gamma,\leq t}\in \mathscr{N}(\calE_{\leq t}^{\s{AD}})$. 
\hfill$\triangleleft$ \end{example}
\begin{remark}[Asymptotic performance of asymptotically good Fock state codes]
    In a previous work, we established a connection between the performance of a constant excitation Fock state code used on the AD channel $\calN_\gamma$ and the same code used on the truncated channel $\calN_{\gamma,t}$; see Theorem 5 in \cite{elimelech2026asymptotically}. 
This theorem implies that if a Fock state code $Q_N$ with constant excitation $N$ is $\varepsilon$-AQEC for $\calN_{\gamma,\leq t}$, then it is an $\varepsilon'$-AQEC code for $\calN_{\gamma}$ with $\varepsilon'=\sqrt{1-p_{N,t,\gamma}(1-\varepsilon^2)}$.  In particular, if a sequence of codes $Q_{N}\subseteq \calH_{q_N,N}$ is $\mathscr{E}^{\s{AD}}_N$-asymptotically good (see Definition~\ref{def:AsymGood}), with $\mathscr{E}_{N}^{\s{AD}}=(\calE_{\leq t}^{\s{AD}})_t$ formed of photon loss errors on $q_N$ modes, and
    \[
    \liminf_{N\to \infty}\frac{d_{\varepsilon_N}(Q_N,\mathscr{E}_N^{\s{AD}})}{N}=\delta >0, \quad \varepsilon_N\to 0,
    \]
then for any $\gamma<\delta$, $Q_N$ is an $\varepsilon_N'$-AQEC for $\calN_\gamma$ with $\varepsilon'_N\to 0$. 
Here we use the fact that if a code state with total excitation $N$ is submitted to the amplitude damping channel, then
with high probability, the channel incurs $\approx \gamma N$ photon losses,  which means that $p_{N,t,\gamma}=1+o(1)$ for $t>N(\gamma+\alpha)$, for any $\alpha>0$. This conclusion can be informally summarized in a very intuitive way: 
        \textit{A Fock state code that can approximately correct $\delta N$ photon losses protects against the noise on the AD channel as long as the loss parameter $\gamma$ is bounded above by $\delta$}.
\hfill$\triangleleft$ \end{remark}
\begin{prop}[Necessary condition for AD noise]\label{prop:NecFock}
    A code $Q\in \calH_{q,N}$ with $t\le d^{\s{AD}}_\varepsilon(Q)\leq N$ must satisfy: 
    \[\zeta(\calE_{\leq t}^{\s{AD}},Q)\leq {2\sqrt{2}(t+1)\varepsilon}.\]
\end{prop}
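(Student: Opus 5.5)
The plan is to mimic the proof of part (2) of Proposition~\ref{prop:orthUni}. We exhibit one specific $\calE^{\s{AD}}_{\leq t}$-controlled channel whose Kraus operators are all a common multiple of the error operators themselves. The common factor will be $1/\sqrt{t+1}$ instead of $1/\sqrt{M}$, and this is what produces the factor $(t+1)$ in place of $M$.

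\textbf{Step 1 (a completeness identity).} Fix $r\le t$. I would show that $\sum_{\ur\in\calS_{q,r}}\Tilde{A}_{\ur}^\dag\Tilde{A}_{\ur}=I$ on $\calH_{q,N}$.
\begin{itemize}
\item Each $\Tilde{A}_{\ur}$ maps a basis state $\ket{\un}$ to a multiple of $\ket{\un-\ur}$, or to $0$. Distinct $\ur$ give orthogonal images, so the sum is diagonal in the basis $\ket{\un}$.
\item By \eqref{eq:ADnormalizedOP}, its diagonal entry at $\ket{\un}$ is $\binom{N}{r}^{-1}\sum_{\ur\in\calS_{q,r}}\prod_i\binom{n_i}{r_i}$.
\item By the multivariate Vandermonde identity, $\sum_{\ur\in\calS_{q,r}}\prod_i\binom{n_i}{r_i}=\binom{N}{r}$, since $\sum_i n_i=N$. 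Hence the entry is $1$.
\end{itemize}
Summing over the $t+1$ values $r=0,\dots,t$ then gives $\sum_{r\le t}\sum_{\ur\in\calS_{q,r}}\Tilde{A}_{\ur}^\dag\Tilde{A}_{\ur}=(t+1)I$.

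\textbf{Step 2 (a controlled channel).} Define $\calN$ by the Kraus operators $\{\Tilde{A}_{\ur}/\sqrt{t+1}\}$, indexed by all $\ur\in\calS_{q,r}$ with $r\le t$.
\begin{itemize}
\item By Step~1, these operators satisfy the completeness relation, so $\calN$ is a channel. Its output lies in the direct sum of the lower-excitation spaces $\calH_{q,N-r}$, exactly as for the truncated AD channel.
\item Its coefficient matrix with respect to $\calE^{\s{AD}}_{\leq t}$ is $(t+1)^{-1/2}I$, which has spectral norm at most $1$. Hence $\calN\in\mathscr{N}(\calE^{\s{AD}}_{\leq t})$.
\end{itemize}

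\textbf{Step 3 (transfer to the error set).} The hypothesis $d^{\s{AD}}_\varepsilon(Q)\ge t$ means that $Q$ is $\varepsilon$-AQEC for $\calE^{\s{AD}}_{\leq t}$.
\begin{itemize}
\item In particular $Q$ is $\varepsilon$-AQEC for $\calN$. Theorem~\ref{th:BO} then gives a density matrix $\lambda$ with $d(\Lambda,\Lambda+\calB_\lambda)\le\varepsilon$.
\item By \eqref{eq:diamondBures}, $\|\calB_\lambda\|_\diamond\le 2\sqrt2\,\varepsilon$.
\item Set $\lambda'=(t+1)\lambda$. Directly from the definitions, $P\Tilde{A}_{\ul}^\dag\Tilde{A}_{\uk}P-\lambda'_{k,l}P=(t+1)\bigl(PA_l^\dag A_kP-\lambda_{k,l}P\bigr)$, where $A_k=\Tilde{A}_{\uk}/\sqrt{t+1}$ denote the Kraus operators of $\calN$. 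Hence $\calB^{\calE^{\s{AD}}_{\leq t}}_{\lambda',Q}=(t+1)\calB_\lambda$.
\item Therefore $\zeta(\calE^{\s{AD}}_{\leq t},Q)\le\|\calB^{\calE}_{\lambda',Q}\|_\diamond\le 2\sqrt2(t+1)\varepsilon$, and $\lambda'$ is positive, as claimed.
\end{itemize}

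The only nontrivial step is the Vandermonde identity in Step~1. Step~1 is also where the chosen normalization $\binom{N}{r}^{-1/2}$ in \eqref{eq:ADnormalizedOP} is used essentially: it makes each excitation level $r$ contribute exactly the identity. That is why the loss factor here is $t+1$ rather than $|\calE^{\s{AD}}_{\leq t}|$.
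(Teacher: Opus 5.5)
Your proposal is correct and follows the paper's proof essentially step for step: it uses the same uniform channel with Kraus operators $\Tilde{A}_{\ur}/\sqrt{t+1}$, the same completeness identity via $\sum_{\ur\in\calS_{q,r}}\prod_i\binom{n_i}{r_i}=\binom{N}{r}$ (which the paper uses without naming it, and which you correctly identify as Vandermonde), and the same rescaling $\lambda'=(t+1)\lambda$. One small correction to Step~1: the sum $\sum_{\ur}\Tilde{A}_{\ur}^\dag\Tilde{A}_{\ur}$ is diagonal because, for each fixed $\ur$, distinct $\un$ are sent to distinct $\un-\ur$, so each $\Tilde{A}_{\ur}^\dag\Tilde{A}_{\ur}$ is already diagonal; orthogonality of the images for distinct $\ur$ plays no role there.
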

\begin{proof}
We begin with a simple observation: for any $\ket{\psi}\in \calH_{q,N}$ we have 
\begin{equation}
    \sum_{r\leq t}\sum_{\ur\in\calS_{q,r}} \bra{\psi}\Tilde{A}_{\ur}^\dag \Tilde{A}_{\ur}\ket{\psi}=t+1\label{eq:tracePresFock}
\end{equation}
Indeed, for basis states $\ket{\un},\ket{\un'}\in \calH_{q,N}$ and for $\ur\in \calS_{q,r}$ we have 
\[\bra{\un'}\Tilde{A}_{\ur}^\dag \Tilde{A}_{\ur}\ket{\un}=\frac{1}{\binom{N}{r}}\sqrt{\prod_{i=0}^{q-1}\binom{n_i}{r_i}\prod_{i=0}^{q-1}\binom{n'_i}{r_i}}\braket{\un'-\ur|\un -\ur}=\frac{1}{\binom{N}{r}}\prod_{i=0}^{q-1}\binom{n_i}{r_i}  \delta_{\un,\un'}.\]
Using this identity for a general quantum state $\ket{\psi}=\sum_{\un \in \calS_{q,N}}\alpha_{\un}   \ket{\un }$, we compute 
\begin{align*}
    \sum_{r\leq t}\sum_{\ur\in\calS_{q,r}} \bra{\psi}\Tilde{A}_{\ur}^\dag \Tilde{A}_{\ur}\ket{\psi}&=\sum_{r\leq t}\sum_{\ur\in\calS_{q,r}} \sum_{\un,\un'\in \calS_{q,N}}\alpha_{\un}\alpha_{\un'}^*\bra{\un'}\Tilde{A}_{\ur}^\dag \Tilde{A}_{\ur}\ket{\un}&\\
    &=\sum_{r\leq t}\sum_{\ur\in\calS_{q,r}} \sum_{\un,\un'\in \calS_{q,N}}\alpha_{\un}\alpha_{\un'}^*\frac{1}{\binom{N}{r}}\prod_{i=0}^{q-1}\binom{n_i}{r_i}  \delta_{\un,\un'}\\
&= \sum_{r\leq t}\sum_{\un\in \calS_{q,N}}|\alpha_{\un}|^2\frac{1}{\binom{N}{r}}\sum_{\ur\in\calS_{q,r}} \prod_{i=0}^{q-1}\binom{n_i}{r_i}\\
&= \sum_{r\leq t}\sum_{\un\in \calS_{q,N}}|\alpha_{\un}|^2\\
&=t+1.
\end{align*}
Thus, operation $\calN_{\leq t}^{\s{U}}$ defined by the error operators 
\[ \calE_{\leq t}^{\s{U}}=\ppp{\Tilde{A}_{\ur}/\sqrt{t+1} ~:~\ur\in \calS_{q,r}, r\leq t}=\frac{1}{\sqrt{t+1}}  \calE_{\leq t}^{\s{AD}}.\]is CPTP (complete positivity is immediate, and the trace-preserving property follows from \eqref{eq:tracePresFock}). Also note that $\calN_{\leq t}^{\s{U}}\in \mathscr{N}(\calE^{\s{AD}})$ as the coefficient matrix for the operators in $\calE_{\leq t}^{\s{U}}$ (with respect to the operators in $\calE_{\leq t}^{\s{AD}}$) is a diagonal matrix with entries $1/\sqrt{t+1}\leq 1$. In particular, if a code $Q$ is $\varepsilon$-AQEC for $\calE_{\leq t}^{\s{AD}}$, then it is an $\varepsilon$-AQEC code  for the channel $\calN_{\leq t}^{\s{U}}$. By Theorem~\ref{th:BO} and \eqref{eq:diamondBures}, there exists a matrix $\lambda$ such that the  operator $\calB_{\lambda }$ defined in Theorem~\ref{th:BO} satisfies $\norm{\calB_{\lambda}}_{\diamond}\leq {2\sqrt{2}\varepsilon}$.
 Consider the matrix $\lambda'=\lambda  (t+1)$ and observe that the corresponding error-set B\'eny-Oreshkov operator satisfies $\calB_{\lambda',Q}^{\calE_{\leq t}^{\s{AD}}}=(t+1)  \calB_{\lambda}$. In particular, 
 \[\zeta(\calE_{\leq t}^{\s{AD}},Q)\leq \|\calB_{\lambda',Q}^{\calE_{\leq t}^{\s{AD}}}\|_{\diamond}=(t+1)\norm{\calB_{\lambda}}_{\diamond}\leq {2\sqrt{2}(t+1)\varepsilon}.  \qedhere
 \]
\end{proof}

  \subsection{The AQEC error-set model extends to the channel fidelity criterion}\label{subsec:ChannelFidelityErrorSet}

In Section~\ref{sec:ErrorSetModel}, we proved that the error-set formalism gives a universal model for approximate correction: once an error set $\calE$ is fixed, the performance of a code against all $\calE$-controlled channels, in the sense of Definition~\ref{def:AQECC}, can be controlled by conditions depending only on the action of $\calE$ on the code. The performance criterion used there was the worst-case entanglement fidelity, or equivalently, the corresponding Bures distance. In this section, we show that the same controlled-channel viewpoint is not tied to this particular worst-case criterion. For the channel-fidelity criterion studied in \cite{zheng2024near}, and closely related to earlier optimization-based approaches to AQEC \cite{audenaert2002optimizing,noh2018quantum}, the family of $\calE$-controlled channels again admits conditions that depend only on the error-set QEC matrix. This provides further evidence that $\calE$-controlled channels are the natural channel family generated by an error set rather than an artifact of the worst-case formulation.

Alongside worst-case entanglement fidelity, a widely used weaker criterion is channel fidelity, also called process fidelity \cite{audenaert2002optimizing,kosut2009quantum,noh2018quantum,zheng2024near}. For quantum channels $\calN,\calM:L(\calH)\to L(\calH')$, let
$\ket{\Phi_{\calH}}$ be a maximally entangled state on $\calH\otimes \calH$. 
\begin{equation}
    \calF_{\mathrm{ch}}\p{\calN,\calM} :=  \calF\p{(I_{L(\calH)}\otimes\calN)(\ket{\Phi_{\calH}}\bra{\Phi_{\calH}}),(I_{L(\calH)}\otimes\calM)(\ket{\Phi_{\calH}}\bra{\Phi_{\calH}})}^2.\label{eq:ChaFidel}
\end{equation}
Thus, $\calF_{\mathrm{ch}}$ tracks the entanglement fidelity only for the maximally mixed logical input rather than minimizing over all input states. It is therefore weaker than the worst-case entanglement-fidelity criterion, although it is a widely accepted and meaningful measure of channel performance. In particular, when $\calN:L(\calH)\to L(\calH)$ is compared with the identity channel $\calI$, the channel fidelity is equivalent, up to a standard normalization factor, to the Haar-average input-output fidelity (see \cite[Sec. 6.4]{khatri2020principles}):
\begin{equation}
    \calF_{\s{avg}}\p{\calN,\calI}
 :=  \intop_{\ket{\phi}}{\rm d\phi}\bra{\phi}\calN(\ket{\phi}\bra{\phi})\ket{\phi}
=\frac{d \calF_{\mathrm{ch}}\p{\calN,\calI}+1}{d+1},\label{eq:averagcase}
\end{equation}
where ${\rm d\phi}$ is the Haar measure on $\calH$ and $d=\dim(\calH)$.
Equivalently,
\[ 1-\calF_{\mathrm{ch}}\p{\calN,\calI}
=\frac{d+1}{d}\p{1-\calF_{\s{avg}}\p{\calN,\calI}}. \]
For this reason, we refer to the resulting correction criterion as the average-case, or $\s{Av}$, AQEC criterion. Following the Bures-distance normalization used for worst-case AQEC in \cite{beny2010general}, we measure the channel-fidelity error by the square root of the channel infidelity,
\[d_{\mathrm{ch}}\p{\calN,\calM} :=  \sqrt{1-\calF_{\mathrm{ch}}\p{\calN,\calM}}.\]
The square root in this expression is introduced for compatibility with the Bures-distance viewpoint, and we will use it
for comparing the resulting conditions with those of Section~\ref{sec:ErrorSetModel}.

\begin{definition}[AQEC for quantum channels via channel fidelity]\label{def:AQECchennelFidelity}
A quantum code $Q\subseteq\calH$ is said to be an $\s{Av}$-$\varepsilon$-AQEC code for a channel $\calN:L(\calH)\to L(\calH')$ if there exists a CPTP decoding operation $\calD:L(\calH')\to L(\calH)$ such that
\begin{equation}
d_{\mathrm{ch}}\p{\calD\circ\calN|_{L(Q)},I_{L(Q)}}\leq\varepsilon.
\label{eq:ChannelBures}
\end{equation}

\end{definition}

We now lift Definition~\ref{def:AQECchennelFidelity} from a fixed channel to the error-set model, in parallel to the worst-case entanglement-fidelity formulation of Section~\ref{sec:ErrorSetModel}. Given an error set $\calE$, we use the family $\mathscr{N}(\calE)$ of $\calE$-controlled channels from Definition~\ref{def:AQECC} as the corresponding channel family. Thus, average-case AQEC for an error set means average-case AQEC uniformly over all channels controlled by that error set. Our main aim in this subsection is to show that this error-set definition is more than a mere formality: the fixed-channel channel-fidelity conditions can be converted into conditions stated directly on the error set $\calE$, just as in the worst-case formulation.

\begin{definition}[Average-case AQEC for error sets]\label{def:AvAQECErrorSet}
Let $\calE=\ppp{E_k}_k$ be a finite set of error operators $E_k:\calH\to\calH'$, and let $Q\subseteq\calH$ be a quantum code. For $\varepsilon\geq0$, we say that $Q$ is an $\s{Av}$-$\varepsilon$-AQEC code for the error set $\calE$ if $Q$ is an $\s{Av}$-$\varepsilon$-AQEC code, in the sense of Definition~\ref{def:AQECchennelFidelity}, for every $\calE$-controlled channel $\calN\in\mathscr{N}(\calE)$.
\end{definition}

A standard approach to AQEC is guided by the Petz recovery map, also known in the QEC literature as the transpose channel, which is exact whenever the KL conditions hold, and remains near-optimal for approximate correction \cite{barnum2002reversing,ng2010simple,zheng2024near}. Under the channel-fidelity criterion above, for a code $Q$ and a channel $\calN$, the corresponding transpose-channel decoder $\calD_{\s{TC}}$ satisfies
\begin{equation}
\frac{1}{\sqrt{2}}d_{\mathrm{ch}}\p{\calD_{\s{TC}}\circ \calN|_{L(Q)},I_{L(Q)}}\leq \varepsilon_{\mathrm{ch}}^{\mathrm{opt}}(Q,\calN)\leq d_{\mathrm{ch}}\p{\calD_{\s{TC}}\circ \calN|_{L(Q)},I_{L(Q)}}.
\label{eq:PetzEps}
\end{equation}
Here $\varepsilon_{\mathrm{ch}}^{\mathrm{opt}}(Q,\calN)$ denotes the value obtained by minimizing the left-hand side of \eqref{eq:ChannelBures} over all CPTP decoders $\calD$   (we will omit the channel $\calN$ and the code $Q$ from the notation whenever they are understood from the context, and simply write $\varepsilon_{\mathrm{ch}}^{\mathrm{opt}}$). Thus, up to the factor $\sqrt{2}$, the optimal channel-fidelity error is captured by the transpose channel.

Zheng et al.~\cite{zheng2024near} analyzed the channel fidelity obtained from the transpose-channel decoder and gave an explicit expression in terms of the QEC matrix. Let $Q$ be a $K$-dimensional code with orthonormal basis $\ppp{\ket{c_i}}_{i\in[K]}$ and let $\calN$ be given by Kraus operators $\ppp{E_k}_{k\in[M]}$. The associated QEC matrix is a $KM\times KM$ positive semidefinite matrix $A_{\s{QEC}}\in L\p{\C^K\otimes\C^M}$
whose entries are \[\p{A_{\s{QEC}}}_{ik,jl} :=  \bra{c_i}E_k^\dag E_l\ket{c_j},\quad i,j\in[K],\quad k,l\in[M].\]
It is shown in \cite{zheng2024near} that
\begin{equation}
\calF_{\mathrm{ch}}\p{\calD_{\s{TC}}\circ\calN|_{L(Q)},I_{L(Q)}}=\frac{1}{K^2}\norm{\tr_{K}\p{\sqrt{A_{\s{QEC}}}}}_2^2,
\label{eq:ZhengCalc}
\end{equation}
where $\norm{ }_2$ is the Frobenius norm (see Definition~\ref{def:statenorms}) and $\tr_{K}$ denotes the partial trace over the code-basis tensor factor $\C^{K}$ when $A_{\s{QEC}}$ is considered as an operator on ${\C^K\otimes\C^M}$.

Combining \eqref{eq:ZhengCalc} with the near-optimality of the transpose channel in \eqref{eq:PetzEps}, the authors of \cite{zheng2024near} obtained the following explicit two-sided estimate for the optimal channel-fidelity error:
\begin{prop}{\rm (\cite[Theorem~1]{zheng2024near})} For a code $Q$ with basis $(\ket{c_i})_i$ and a channel characterized by a Kraus set $\calE=\ppp{E_k}_k$:
\begin{equation}
\sqrt{\frac{1}{2}\Big({1-\frac{1}{K^2}\|{\tr_{K}({\sqrt{A_{\s{QEC}}})}}\|_2^2}\Big)}
\leq \varepsilon_{\mathrm{ch}}^{\mathrm{opt}}
\leq \sqrt{1-\frac{1}{K^2}\|{\tr_{K}({\sqrt{A_{\s{QEC}}})}}\|_2^2}.
\label{eq:ZhengBoundBures}
\end{equation}
\end{prop}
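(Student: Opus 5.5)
The two-sided bound follows by combining the exact formula \eqref{eq:ZhengCalc} for the transpose-channel fidelity with the near-optimality sandwich \eqref{eq:PetzEps}.

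\emph{Step 1 (transpose-channel error).} Set $d_{\s{TC}}:=d_{\mathrm{ch}}\p{\calD_{\s{TC}}\circ\calN|_{L(Q)},I_{L(Q)}}$. By definition of $d_{\mathrm{ch}}$ and by \eqref{eq:ZhengCalc},
\[d_{\s{TC}}^2=1-\calF_{\mathrm{ch}}\p{\calD_{\s{TC}}\circ\calN|_{L(Q)},I_{L(Q)}}=1-\frac{1}{K^2}\norm{\tr_K\p{\sqrt{A_{\s{QEC}}}}}_2^2 .\]
Substituting this into \eqref{eq:PetzEps} gives both inequalities of \eqref{eq:ZhengBoundBures}. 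The upper bound is immediate because $\calD_{\s{TC}}$ is one admissible CPTP decoder. The lower bound is $d_{\s{TC}}/\sqrt2\le\varepsilon_{\mathrm{ch}}^{\mathrm{opt}}$.

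\emph{Step 2 (formula \eqref{eq:ZhengCalc}).} If this is to be verified rather than cited, I would proceed as follows. The transpose channel for the code is $\calD_{\s{TC}}(\sigma)=\sum_k P E_k^\dag\,\calN(P/K)^{-1/2}\sigma\,\calN(P/K)^{-1/2}E_kP$. The Choi state of $\calD_{\s{TC}}\circ\calN$ on the maximally entangled code state has overlap with $\ket{\Phi}$ equal to $\frac{1}{K^2}\sum_{k,l}\abs{\tr\p{PE_k^\dag\calN(P)^{-1/2}E_lP}}^2$, up to normalization. Let $V=\sum_{i,k}E_k\ket{c_i}\bra{i,k}$, so that $V^\dag V=A_{\s{QEC}}$ and $VV^\dag=\calN(P)$. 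Using the polar identity $V^\dag(VV^\dag)^{-1/2}V=\sqrt{A_{\s{QEC}}}$, each trace becomes the $(k,l)$ entry of $\tr_K\sqrt{A_{\s{QEC}}}$. The sum is then the squared Frobenius norm, which is exactly \eqref{eq:ZhengCalc}.

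\emph{Step 3 (the $1/\sqrt2$ optimality, the main obstacle).} For any decoder $\calR$, the fidelity satisfies $\calF_{\mathrm{ch}}(\calR\circ\calN)\le\calF_{\mathrm{ch}}(\calD_{\s{TC}}\circ\calN)^{1/2}$, i.e.\ the infidelity of the optimal decoder is at least half that of the Petz map. This follows from the Barnum--Knill square-root-measurement argument via Cauchy--Schwarz on the Choi representation. Taking square roots of the infidelities then yields the factor $1/\sqrt2$. This near-optimality inequality is the only substantive ingredient beyond the algebra of Step 2, and I would import it from \cite{barnum2002reversing,zheng2024near}.
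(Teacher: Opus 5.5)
Your proposal is correct and takes the paper's approach: the paper gets the bound exactly as in your Step~1, by substituting \eqref{eq:ZhengCalc} into \eqref{eq:PetzEps}, while Steps~2--3 unpack the two cited ingredients and check out (Barnum--Knill's $F_{\mathrm{opt}}\le\sqrt{F_{\s{TC}}}$, combined with $1-x\le 2(1-\sqrt{x})$, gives the factor $1/\sqrt2$). The one slip is cosmetic: your displayed $\calD_{\s{TC}}$ is $K$ times the transpose channel, because $\calN(P/K)^{-1/2}=\sqrt{K}\,\calN(P)^{-1/2}$; the correct map uses $\calN(P)^{-1/2}$, which is the normalization your Step~2 trace formula actually uses.
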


Expanding on this result, we next present a simple yet powerful geometric interpretation of the expression in \eqref{eq:ZhengBoundBures}. First, note that for a code $Q$ with basis $\ppp{\ket{c_i}}_{i\in[K]}$ and a general error set $\calE=\ppp{E_k}_{k\in[M]}$ (not necessarily associated with a channel), the KL conditions imply that $Q$ is an exact QEC code for $\calE$ if and only if $\bra{c_i}E_k^\dag E_l\ket{c_j}=\delta_{ij}\lambda_{kl}
$ for some $M\times M$ matrix $\lambda$. Equivalently, the QEC matrix $A_{\s{QEC}}$ has the form $A_{\s{QEC}}=I_K\otimes\lambda$ or simply lies in the Knill--Laflamme space:
\begin{equation}
    \calH_{\s{KL}} :=  \ppp{I_K\otimes \lambda~:~\lambda\in L(\C^M)}\subseteq L(\C^K\otimes \C^M).
\label{eq:KLspaceS}
\end{equation}
Our observation is that the expression in \eqref{eq:ZhengBoundBures} is the normalized Hellinger distance of the QEC matrix to this KL space. 
Before stating this formally, let us give some definitions. For $A\succeq0$, let 
\[ D_{\s{H}}\p{A,\calH_{\s{KL}}}
 := 
\min_{\substack{B \in \calH_{\s{KL}} \\ B\succeq0}}D_{\s{H}}\p{A,B} \]
where for  $A,B\succeq0$, the quantum Hellinger distance is
\[ D_{\s{H}}\p{A,B} :=  \|{\sqrt A-\sqrt B}\|_2. \]

\begin{definition}\label{def: KL Hellinger distance}
The {\em Knill--Laflamme Hellinger distance} between a $K$-dimensional quantum code $Q$ and an error set $\calE$ is defined as
  $$
   \zeta_{\s{H}}(\calE,Q):= \frac 1{\sqrt{K}} D_{\s{H}}\p{A_{\s{QEC}},\calH_{\s{KL}}},
   $$
where $A_{\s{QEC}}$ is an error-correction matrix for $Q$.
\end{definition}
First note that $\zeta_{\s{H}}$ is well defined in the sense that it is independent of the basis chosen to represent the matrix $A_{\s{QEC}}$ which is defined up to a conjugation by a unitary matrix, and the Hellinger distance to the KL space is invariant under such conjugations; see Lemma~\ref{lem:HellingerContaction}.

In the next lemma, proved in Appendix~\ref{app:ProofLemGeometric}, we express the quantity in Eq.~\eqref{eq:ZhengBoundBures} in terms of $\zeta_{\s{H}}$. 

\begin{lemma}\label{lem:orthogonalProjectionKLspace}
Let $Q$ be a quantum code with basis $\ppp{\ket{c_i}}_{i\in[K]}$, let $\calE=\ppp{E_k}_{k\in[M]}$ be an error set, and let $A_{\s{QEC}}$ be a corresponding QEC matrix. Then 
\[ \frac{1}{K}\tr(A_{\s{QEC}})-\frac{1}{K^2}\norm{\tr_K\p{\sqrt{A_{\s{QEC}}}}}_2^2 =
\zeta_{\s{H}}^2(\calE,Q) \]
\end{lemma}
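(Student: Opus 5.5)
We need to show that $\min_{\lambda\succeq 0}\|\sqrt{A}-\sqrt{I_K\otimes\lambda}\|_2^2$ equals $\tr(A)-\frac1K\|\tr_K\sqrt{A}\|_2^2$, where $A=A_{\s{QEC}}$; dividing by $K$ then gives $\zeta_{\s{H}}^2(\calE,Q)$.

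The first step is a reparametrization. For $\lambda\succeq0$ we have $\sqrt{I_K\otimes\lambda}=I_K\otimes\sqrt{\lambda}$, since $I_K\otimes\sqrt\lambda$ is positive semidefinite and squares to $I_K\otimes\lambda$. As $\lambda$ ranges over positive semidefinite $M\times M$ matrices, $\mu=\sqrt\lambda$ ranges over all positive semidefinite matrices. The problem therefore becomes
\[
\min_{\mu\succeq0}\bigl\|\sqrt A-I_K\otimes\mu\bigr\|_2^2 ,
\]
which is the squared Frobenius distance from $\sqrt A$ to the convex cone $\{I_K\otimes\mu:\mu\succeq0\}$.

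Next, drop the positivity constraint and project $\sqrt A$ orthogonally, in the Hilbert--Schmidt inner product, onto the linear space $\calH_{\s{KL}}$. For any $X\in L(\C^K\otimes\C^M)$ we have $\innerP{I_K\otimes\mu,X}_{\s{HS}}=\tr(\mu^\dag\tr_K X)$ and $\|I_K\otimes\mu\|_2^2=K\|\mu\|_2^2$. The orthogonal projection of $X$ onto $\calH_{\s{KL}}$ is therefore $I_K\otimes\frac1K\tr_K X$. To check this, note that $X-I_K\otimes\frac1K\tr_KX$ has zero partial trace, so it is orthogonal to every $I_K\otimes\mu$. By Pythagoras,
\[
\|\sqrt A-I_K\otimes\mu\|_2^2=\|\sqrt A-I_K\otimes\tfrac1K\tr_K\sqrt A\|_2^2+K\|\tfrac1K\tr_K\sqrt A-\mu\|_2^2 .
\]
The unconstrained minimum is attained at $\mu^\star=\frac1K\tr_K\sqrt A$. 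Its value is $\|\sqrt A\|_2^2-\|I_K\otimes\mu^\star\|_2^2=\tr(A)-K\|\mu^\star\|_2^2=\tr(A)-\frac1K\|\tr_K\sqrt A\|_2^2$.

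The one point needing care is to confirm that the unconstrained minimizer is admissible. Since $\sqrt A\succeq0$ and the partial trace is a positive map, $\mu^\star\succeq0$. So $\lambda^\star=(\mu^\star)^2$ is feasible, and the constrained minimum equals the unconstrained one. Dividing by $K$ (recall $\zeta_{\s{H}}=K^{-1/2}D_{\s{H}}$) gives $\frac1K\tr(A)-\frac1{K^2}\|\tr_K\sqrt A\|_2^2=\zeta_{\s{H}}^2(\calE,Q)$, which is the claim of Lemma~\ref{lem:orthogonalProjectionKLspace}.

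Finally, note that the identity is basis-independent. A change of code basis conjugates $A$ by $U\otimes I_M$, which conjugates $\sqrt A$ the same way. This leaves $\tr_K\sqrt A$ and $\tr(A)$ unchanged, consistent with the invariance remark made after Definition~\ref{def: KL Hellinger distance}.
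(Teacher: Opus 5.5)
Your proof is correct and follows the same route as the paper. It identifies $\frac1K I_K\otimes\tr_K(\cdot)$ as the Hilbert--Schmidt projection onto $\calH_{\s{KL}}$, applies Pythagoras to $\sqrt{A_{\s{QEC}}}$, notes that the minimizer $\frac1K\tr_K\sqrt{A_{\s{QEC}}}$ is positive, and passes between $B$ and $\sqrt{B}$ on the positive part of $\calH_{\s{KL}}$. The only difference is order: you reparametrize via $\sqrt{I_K\otimes\lambda}=I_K\otimes\sqrt{\lambda}$ before projecting, whereas the paper projects first and invokes the square-root bijection at the end.
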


Combining the geometric observation of Lemma~\ref{lem:orthogonalProjectionKLspace} with the channel-fidelity characterization of the transpose-channel decoder in \eqref{eq:ZhengBoundBures}, we obtain a near-optimal geometric estimate for average-case AQEC. Namely, when $\calE$ is a Kraus set of a channel $\calN$, the optimal channel-fidelity error is controlled, up to the same universal factor $\sqrt{2}$, by the Hellinger distance of the QEC matrix $A_{\s{QEC}}$ from the KL space $\calH_{\s{KL}}$.

\begin{cor}[Geometric channel-fidelity AQEC bound]\label{cor:HellingerChannelFidelity}
Let $Q\subseteq\calH$ be a $K$-dimensional quantum code with orthonormal basis $\ppp{\ket{c_i}}_{i\in[K]}$, and let $\calN:L(\calH)\to L(\calH')$ be a quantum channel with Kraus set $\calE_\calN$. Then
   \begin{equation}\label{eq: two sided}
\frac 1{\sqrt 2}\zeta_{\s{H}}(\calE_{\calN},Q)
\leq \varepsilon_{\mathrm{ch}}^{\mathrm{opt}}(Q,\calN)\leq \zeta_{\s{H}}(\calE_{\calN},Q)
   \end{equation}
\end{cor}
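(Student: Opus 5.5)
This is a direct combination of \eqref{eq:ZhengBoundBures} with Lemma~\ref{lem:orthogonalProjectionKLspace}. The only new ingredient is the trace normalization that holds when $\calE_\calN$ is a Kraus set. We will not need the geometric content of the Hellinger distance itself, only the algebraic identity stated in the lemma.

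\emph{Step 1: the trace normalization.} Let $\calE_\calN=\ppp{E_k}_{k\in[M]}$ be a Kraus set of $\calN$, and let $\ppp{\ket{c_i}}_{i\in[K]}$ be an orthonormal basis of $Q$. The trace of the QEC matrix is
\[
\tr(A_{\s{QEC}})=\sum_{i\in[K]}\sum_{k\in[M]}\bra{c_i}E_k^\dag E_k\ket{c_i}=\sum_{i\in[K]}\braket{c_i|c_i}=K,
\]
where we used the completeness relation $\sum_k E_k^\dag E_k=I$. Hence $\frac1K\tr(A_{\s{QEC}})=1$.

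\emph{Step 2: rewrite the bound.} Substituting $\frac1K\tr(A_{\s{QEC}})=1$ into Lemma~\ref{lem:orthogonalProjectionKLspace} gives
\[
\zeta_{\s{H}}^2(\calE_\calN,Q)=1-\frac{1}{K^2}\norm{\tr_K\p{\sqrt{A_{\s{QEC}}}}}_2^2 .
\]
This is exactly the quantity under the square roots in \eqref{eq:ZhengBoundBures}. Taking square roots and inserting into both sides of \eqref{eq:ZhengBoundBures} yields
\[
\tfrac{1}{\sqrt2}\,\zeta_{\s{H}}(\calE_\calN,Q)\le\varepsilon_{\mathrm{ch}}^{\mathrm{opt}}(Q,\calN)\le\zeta_{\s{H}}(\calE_\calN,Q),
\]
which is \eqref{eq: two sided}.

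\emph{Step 3: well-definedness.} There is no real obstacle in this argument, but one point needs checking. Both $A_{\s{QEC}}$ and $\zeta_{\s{H}}$ depend a priori on the chosen orthonormal basis of $Q$ and on the chosen Kraus representation of $\calN$. A change of either one conjugates $A_{\s{QEC}}$ by a unitary of the form $U\otimes V$, after padding the Kraus sets with zeros so they have equal length. The Hellinger distance to $\calH_{\s{KL}}$ is invariant under such conjugations by Lemma~\ref{lem:HellingerContaction}. On the other side, $\varepsilon_{\mathrm{ch}}^{\mathrm{opt}}$ depends only on the channel $\calN$ and the code $Q$. So the inequality is independent of these choices.
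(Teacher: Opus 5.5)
Your proof is correct and is exactly the paper's argument: the completeness relation gives $\frac{1}{K}\tr(A_{\s{QEC}})=1$, which turns Lemma~\ref{lem:orthogonalProjectionKLspace} into $\zeta_{\s{H}}^2(\calE_\calN,Q)=1-\frac{1}{K^2}\norm{\tr_K\p{\sqrt{A_{\s{QEC}}}}}_2^2$, and substituting this into \eqref{eq:ZhengBoundBures} yields \eqref{eq: two sided}. Your Step~3 on independence from the code basis and the Kraus representation is a sound extra check; the paper records it separately, right after Definition~\ref{def: KL Hellinger distance}, rather than inside this proof.
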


Indeed, since $\calN$ is trace-preserving,
\[
\frac{1}{K}\tr(A_{\s{QEC}})=\frac{1}{K}\sum_{i\in[K]}\sum_{k\in[M]}\bra{c_i}E_k^\dag E_k\ket{c_i}=\frac{1}{K}\sum_{i\in[K]}\tr\p{\calN(\ket{c_i}\bra{c_i})}=1.
\]
Substituting this identity into Lemma~\ref{lem:orthogonalProjectionKLspace} gives
\[
1-\frac{1}{K^2}\norm{\tr_K\p{\sqrt{A_{\s{QEC}}}}}_2^2=\zeta_{\s{H}}^2(\calE_{\calN},Q).
\]
Together with \eqref{eq:ZhengBoundBures}, this gives the two-sided estimate in \eqref{eq: two sided}.

We now give the average-case analog of the sufficient error-set conditions of Theorem~\ref{th:BOerrorset}.  When the error set $\calE$ itself is a Kraus set of a channel, the statement reduces to the near-optimal transpose-channel sufficient condition of \cite{zheng2024near}, written in the geometric form of Corollary~\ref{cor:HellingerChannelFidelity}.

\begin{theorem}[Sufficient $\s{Av}$-AQEC condition for error sets]\label{th:AvAQECErrorSetSuff}
Let $Q\subseteq\calH$ be a $K$-dimensional quantum code with orthonormal basis $\ppp{\ket{c_i}}_{i\in[K]}$, and let $\calE=\ppp{E_k}_{k\in[M]}$ be a finite error set. If
\[ \sqrt{\frac{1}{K}\tr(A_{\s{QEC}})-\frac{1}{K^2}\norm{\tr_K\p{\sqrt{A_{\s{QEC}}}}}_2^2}
=\zeta_{\s{H}}(\calE,Q)\leq\varepsilon, \]
then $Q$ is an $\s{Av}$-$\varepsilon$-AQEC code for the error set $\calE$, in the sense of Definition~\ref{def:AvAQECErrorSet}.
\end{theorem}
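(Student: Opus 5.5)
Fix an $\calE$-controlled channel $\calN$ with Kraus operators $A_m=\sum_k c_{m,k}E_k$, $m\in[M']$, and $\norm{C}_\infty\le 1$. By Corollary~\ref{cor:HellingerChannelFidelity}, it suffices to show $\zeta_{\s{H}}(\{A_m\},Q)\le\zeta_{\s{H}}(\calE,Q)$. The plan is to express the QEC matrix of $\calN$ as a contraction-conjugate of that of $\calE$, and then prove that the Hellinger distance to the KL space does not increase under such conjugation.

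\textbf{Step 1 (conjugation identity).} Write the QEC matrix of $\calE$ as a Gram matrix. Set $V:=\sum_{i,k}E_k\ket{c_i}\bra{i}\otimes\bra{k}$, so that $A_{\s{QEC}}^{\calE}=V^\dag V$, up to the transpose convention for the $k$-index. Since $A_m\ket{c_i}=\sum_k c_{m,k}E_k\ket{c_i}$, the corresponding map for $\{A_m\}$ is $V(I_K\otimes \bar C^{\mathsf T})$ for an appropriate transpose/conjugate of $C$. Writing $W:=I_K\otimes \tilde C$, where $\tilde C$ has the same singular values as $C$, we get
\[
A_{\s{QEC}}^{\calN}=W^\dag A_{\s{QEC}}^{\calE}W,\qquad \norm{W}_\infty\le1 .
\]

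\textbf{Step 2 (monotonicity, the main obstacle).} We must show $D_{\s{H}}(W^\dag AW,\calH_{\s{KL}})\le D_{\s{H}}(A,\calH_{\s{KL}})$ for $W=I_K\otimes\tilde C$ with $\tilde C$ a contraction. Conjugation by $I_K\otimes\tilde C$ maps $\calH_{\s{KL}}\cap\{\succeq0\}$ into itself, but the square root does not commute with conjugation, so we use the polar decomposition instead. Let $B=I_K\otimes\lambda\succeq0$ attain the minimum for $A$. Write $A^{1/2}W=U(W^\dag AW)^{1/2}$ with $U$ a partial isometry, and similarly $B^{1/2}W=U'(W^\dag BW)^{1/2}$.

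We then avoid computing square roots altogether by using the variational characterization behind Lemma~\ref{lem:orthogonalProjectionKLspace}: for $X\succeq0$,
\[
\norm{\sqrt X-\sqrt Y}_2^2=\tr X+\tr Y-2\tr\sqrt X\sqrt Y,
\]
and hence
\[
D_{\s{H}}^2(X,\calH_{\s{KL}})=\tr X-\max\Bigl\{\abs{\tr (G^\dag F)}^2\text{-type terms}\Bigr\},
\]
where one optimizes over factorizations $X=F^\dag F$ with $F$ an operator into $\C^K\otimes\mathcal{R}$. Concretely, $K\zeta_{\s{H}}^2=\min_{F:\,F^\dag F=X}\norm{F-\frac1K\,\cdot}$... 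The cleaner route: by Lemma~\ref{lem:orthogonalProjectionKLspace}, $K\zeta_{\s{H}}^2=\tr X-\frac1K\norm{\tr_K\sqrt X}_2^2$, and
\[
\norm{\tr_K\sqrt X}_2=\max_{F^\dag F=X}\norm{\tr_K F}_2,
\]
since $\tr(\sqrt X\,(I\otimes\sigma))$ is maximized at $F=\sqrt X$ among all $F=U\sqrt X$.

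For $X'=W^\dag XW$, the candidate factor $F'=\sqrt X\,W$ satisfies $F'^\dag F'=X'$ and $\tr_K F'=(\tr_K\sqrt X)\tilde C$. This gives
\[
\norm{\tr_K\sqrt{X'}}_2\ge\norm{(\tr_K\sqrt X)\tilde C}_2 .
\]
We then need
\[
\tr X-\tr X'\ \ge\ \tfrac1K\Bigl(\norm{\tr_K\sqrt X}_2^2-\norm{(\tr_K\sqrt X)\tilde C}_2^2\Bigr),
\]
i.e.\ $\tr\bigl(\sqrt X(I-WW^\dag)\sqrt X\bigr)\ge\frac1K\tr\bigl(T(I-\tilde C\tilde C^\dag)T^\dag\bigr)$ with $T=\tr_K\sqrt X$. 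This follows by applying Cauchy--Schwarz (convexity of $Y\mapsto YY^\dag$ under the averaging $\tr_K/K$) with $D=I-\tilde C\tilde C^\dag\succeq0$:
\[
\tfrac1K T D T^\dag\preceq \tr_K\bigl(\sqrt X(I\otimes D)\sqrt X\bigr).
\]
Making this operator Jensen/Kadison-type inequality precise is where the effort lies; I would prove it via the positivity of $\sum_{i}(Y_i-\bar Y)D(Y_i-\bar Y)^\dag$, where $Y_i$ are the $K$ diagonal blocks of $\sqrt X$, and then bound the off-diagonal contributions by dropping positive terms.

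\textbf{Step 3.} Combining the steps gives $\zeta_{\s{H}}(\{A_m\},Q)\le\zeta_{\s{H}}(\calE,Q)\le\varepsilon$. Since $\{A_m\}$ is a Kraus set, Corollary~\ref{cor:HellingerChannelFidelity} yields $\varepsilon^{\mathrm{opt}}_{\mathrm{ch}}(Q,\calN)\le\varepsilon$, and this holds for every $\calN\in\mathscr{N}(\calE)$.
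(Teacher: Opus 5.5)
Your Step~1 and Step~3 are fine, and so is the second inequality in Step~2. Indeed $\tr X-\tr X'=\tr\bigl(\sqrt X(I\otimes D)\sqrt X\bigr)$ with $D=I-\tilde C\tilde C^\dag$, and $\frac1K TDT^\dag\preceq\tr_K\bigl(\sqrt X(I\otimes D)\sqrt X\bigr)$ follows as you sketch: drop the off-diagonal terms $Y_{ij}DY_{ij}^\dag\succeq0$, then use $\sum_i(Y_{ii}-\bar Y)D(Y_{ii}-\bar Y)^\dag\succeq0$.

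The gap is the other inequality, $\norm{\tr_K\sqrt{X'}}_2\ge\norm{T\tilde C}_2$ with $T=\tr_K\sqrt X$. It is false, and so is the principle you derive it from, $\norm{\tr_K\sqrt X}_2=\max_{F^\dag F=X}\norm{\tr_K F}_2$. For fixed $\sigma$, $\max_U\abs{\tr(U\sqrt X(I\otimes\sigma))}=\norm{\sqrt X(I\otimes\sigma)}_1$. This exceeds $\abs{\tr(\sqrt X(I\otimes\sigma))}$ whenever $\sqrt X(I\otimes\sigma)$ is not positive semidefinite, which is the generic case because $\sqrt X$ and $I\otimes\sigma$ need not commute. ``$U=I$ is optimal'' holds only when the operator paired with $U$ is positive.

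Concretely, let $K=M=2$, $X=\ket{\psi}\bra{\psi}$ with $\ket{\psi}=\sqrt a\ket{00}+\sqrt b\ket{11}$, $b=1-a$, and $\tilde C=\mathrm{diag}(1,c)$. Then $\sqrt X=X$, $T=\mathrm{diag}(a,b)$, and $X'=\ket{\psi'}\bra{\psi'}$ with $\ket{\psi'}=(I\otimes\tilde C)\ket{\psi}$. One gets
\[
\norm{\tr_K\sqrt{X'}}_2^2-\norm{T\tilde C}_2^2=\frac{a^2+c^4b^2}{a+c^2b}-\p{a^2+c^2b^2}=\frac{ab(1-c^2)(a-c^2b)}{a+c^2b},
\]
which is negative whenever $a<c^2b$. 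For $a=1/5$, $c^2=1/2$ the two sides are $1/3<9/25$; your factor $F'=\sqrt XW=\ket{\psi}\bra{\psi'}$ strictly beats $\sqrt{X'}$. Adding a decoupled extra error index, on which $\tilde C$ acts as $1$ and $X$ has block $\mathrm{diag}(4/5,3/5)$, gives $\tr_{M'}X'=I_K$ and leaves the discrepancy unchanged. So imposing trace preservation does not rescue the step.

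The missing idea is to compare $X'$ with a conjugated Knill--Laflamme point rather than with the factor $\sqrt XW$. If $B=I_K\otimes\lambda\succeq0$ is optimal for $X$, then $W^\dag BW=I_K\otimes\tilde C^\dag\lambda\tilde C$ is again a positive element of $\calH_{\s{KL}}$. Hence $D_{\s H}(X',\calH_{\s{KL}})\le D_{\s H}(W^\dag XW,W^\dag BW)\le D_{\s H}(X,B)$, where the last step is the contraction of the Hellinger distance under conjugation by a contraction (Lemma~\ref{lem:HellingerContaction}). That lemma is where the square roots are actually controlled. The map $A\mapsto W^\dag AW\oplus SAS$ with $S=(I-WW^\dag)^{1/2}$ is CPTP, the affinity $\tr\sqrt A\sqrt B$ can only increase under it, and the complementary term $D_{\s H}^2(SXS,SBS)\ge0$ is dropped. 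This is the paper's argument, and with it your Cauchy--Schwarz step on $\tr X-\tr X'$ is no longer needed.
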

\begin{proof}
    The proof follows by combining Lemma~\ref{lem:orthogonalProjectionKLspace} and Corollary~\ref{cor:HellingerChannelFidelity} with a contraction argument for the Hellinger distance under conjugation by a contraction. Let $\calE=\ppp{E_k}_{k\in [M]}$ and let $\calN\in \mathscr{N}(\calE)$ be an $\calE$-controlled channel with Kraus operators $\calE_{\calN}=\ppp{F_l}_{l\in [M']}$ from $\Span(\calE)$ and linear coefficient matrix $C$ with $\norm{C}_\infty\leq 1$. Denote by $A_{\s{QEC}}(\calE)$ and $A_{\s{QEC}}(\calE_{\calN})$ the QEC matrices of the basis $\ppp{\ket{c_i}}_{i\in [K]}$ with respect to $\calE$ and $\calE_{\calN}$, respectively. From the definition of the QEC matrix, we immediately have
    \begin{equation} A_{\s{QEC}}(\calE_{\calN})=(I_K\otimes \Bar{C})A_{\s{QEC}}(\calE)(I_K\otimes \Bar{C}^\dag),\label{eq:chatCorrected} \end{equation}
  where $\Bar{C}$ is the matrix obtained by taking the entrywise complex adjoint on $C$. Denote $R=I_K\otimes \Bar{C}$ and note that 
    \begin{equation}
        \norm{R}_{\infty}=\norm{I_K}_\infty \norm{\Bar{C}}_{\infty}=\norm{C}_{\infty}\leq 1,\label{eq:conjugation}
    \end{equation}
    as the spectral norm is multiplicative with respect to tensor products. 
    
    Assume that $\s{D}_{\s{H}}(A_{\s{QEC}}(\calE),\calH_{\s{KL}})/\sqrt{K}\leq \varepsilon$. By Corollary~\ref{cor:HellingerChannelFidelity} it is sufficient to show that 
    $$
    \s{D}_{\s{H}}(A_{\s{QEC}}(\calE_{\calN}),\calH_{\s{KL}})/\sqrt{K}\leq \varepsilon,
    $$
    or that
    \[\s{D}_{\s{H}}(A_{\s{QEC}}(\calE_{\calN}),\calH_{\s{KL}})\leq \s{D}_{\s{H}}(A_{\s{QEC}}(\calE),\calH_{\s{KL}}).\] 
    Indeed, we have
    \begin{align}
        \s{D}_{\s{H}}(A_{\s{QEC}}(\calE_{\calN}),\calH_{\s{KL}})&\nonumber=\min_{\substack{B\in\calH_{\s{KL}}\\ B\succeq 0}} \s{D}_{\s{H}}(A_{\s{QEC}}(\calE_{\calN}),B)\\
        &=\min_{\substack{B\in\calH_{\s{KL}}\\ B\succeq 0}} \s{D}_{\s{H}}(RA_{\s{QEC}}(\calE)R^{\dag},B)\label{eq:justRef}\\
        &\leq\min_{\substack{B\in\calH_{\s{KL}}\\ B\succeq 0}} \s{D}_{\s{H}}(RA_{\s{QEC}}(\calE)R^{\dag},R B R^\dag)\label{eq:subsetOpt}\\
        & \leq \min_{\substack{B\in\calH_{\s{KL}}\\ B\succeq 0}} \s{D}_{\s{H}}(A_{\s{QEC}}(\calE), B )\label{eq:contractuse}\\
        &\nonumber=\s{D}_{\s{H}}(A_{\s{QEC}}(\calE),\calH_{\s{KL}}),
    \end{align}
 where \eqref{eq:justRef} follows from \eqref{eq:chatCorrected}. The inequality in \eqref{eq:subsetOpt} follows since conjugation by $R=I_{K}\otimes \Bar{C}$ sends positive operators in $\calH_{\s{KL}}$ to  positive operators in $\calH_{\s{KL}}$, and therefore expression \eqref{eq:subsetOpt} equals the minimum on the subset $\ppp{RBR^{\dag}~:~ B\in \calH_{\s{KL}},~ B\succeq 0}\subseteq \ppp{B\in \calH_{\s{KL}}   ~:~ B\succeq 0}$. Finally, \eqref{eq:contractuse} follows since the Hellinger distance contracts under the map $A\to RA R^\dag$ for a contraction $R$ (see Lemma~\ref{lem:HellingerContaction}).
\end{proof}

We next turn to the converse direction. The sufficient condition above shows that small Hellinger distance from the KL space guarantees $\s{Av}$-AQEC for the whole $\calE$-controlled family. We now show that, for unitary error sets, this distance is also controlled by the optimal average-case AQEC error. The argument follows the same general strategy as in the worst-case setting, with the channel-fidelity estimate replacing the worst-case Bures estimate. Thus, the following proposition is an $\s{Av}$-AQEC analog of Proposition~\ref{prop:orthUni} and Corollary~\ref{cor:generalErrorSuff}.

\begin{prop}[Necessary conditions for $\s{Av}$-AQEC]\label{prop:AvAQECNecessary}
Let $\calE=\ppp{U_0,\dots,U_{M-1}}$ be a finite set of unitary operators on a Hilbert space $\calH$, and let $Q\subseteq\calH$ be a $K$-dimensional code with orthonormal basis $\ppp{\ket{c_i}}_{i\in[K]}$. Suppose that $Q$ is an $\s{Av}$-$\varepsilon$-AQEC code for the error set $\calE$ in the sense of Definition~\ref{def:AvAQECErrorSet}. Then the corresponding QEC matrix satisfies
\[
\frac{1}{\sqrt{K}}D_{\s{H}}\p{A_{\s{QEC}},\calH_{\s{KL}}}\leq \sqrt{2M}\varepsilon.
\]
In particular, if $Q\subseteq\calH_q^{\otimes N}$ is an $\s{Av}$-$\varepsilon$-AQEC code for the family of channels generated by $t$-limited errors, then
\[
\frac{1}{\sqrt{K}}D_{\s{H}}\p{A_{\s{QEC}},\calH_{\s{KL}}}\leq \varepsilon  \sqrt{2\sum_{i=0}^{t}\binom{N}{i}(q^2-1)^i}.
\]
\end{prop}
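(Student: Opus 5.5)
The plan mirrors the proof of Proposition~\ref{prop:orthUni}(2): pick one explicit channel in $\mathscr{N}(\calE)$, apply the lower bound of Corollary~\ref{cor:HellingerChannelFidelity} to it, and rescale back to $\calE$. The natural choice is the uniform mixture $\calN^{\s{U}}$ with Kraus operators $\ppp{U_k/\sqrt{M}}_{k\in[M]}$. Since every $U_k$ is unitary, $\sum_k U_k^\dag U_k/M=I$, so $\calN^{\s{U}}$ is a channel. Its coefficient matrix with respect to $\calE$ is $C=I_M/\sqrt{M}$, and $\norm{C}_\infty=1/\sqrt{M}\le 1$. Hence $\calN^{\s{U}}$ is $\calE$-controlled directly from Definition~\ref{def:AQECC}; no Hilbert--Schmidt orthogonality is needed here, unlike Proposition~\ref{prop:orthUni}(1).

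By assumption $Q$ is $\s{Av}$-$\varepsilon$-AQEC for $\calN^{\s{U}}$, so $\varepsilon_{\mathrm{ch}}^{\mathrm{opt}}(Q,\calN^{\s{U}})\le\varepsilon$. The left inequality of Corollary~\ref{cor:HellingerChannelFidelity} then gives
\[\zeta_{\s{H}}(\calE_{\calN^{\s{U}}},Q)\le \sqrt{2}\,\varepsilon.\]
Next, relate the two QEC matrices. By \eqref{eq:chatCorrected} with $\bar C=I_M/\sqrt M$,
\[A_{\s{QEC}}(\calE_{\calN^{\s{U}}})=\tfrac{1}{M}A_{\s{QEC}}(\calE).\]
The key step is the homogeneity of the Hellinger distance to the KL space. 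For $c>0$ we have $\sqrt{cA}=\sqrt c\sqrt A$. Also, $B\mapsto cB$ is a bijection of the positive part of $\calH_{\s{KL}}$ onto itself. Therefore
\[D_{\s{H}}(cA,\calH_{\s{KL}})=\sqrt{c}\,D_{\s{H}}(A,\calH_{\s{KL}}).\]
Taking $c=1/M$ gives
\[\frac{1}{\sqrt K}D_{\s{H}}(A_{\s{QEC}}(\calE),\calH_{\s{KL}})=\sqrt{M}\,\zeta_{\s{H}}(\calE_{\calN^{\s{U}}},Q)\le\sqrt{2M}\,\varepsilon.\]

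For the $t$-limited case, specialize to $\calE=\calE_t$ from Example~\ref{ex:Paulis}. These Heisenberg--Weyl operators are unitary, with $M=|\calE_t|=\sum_{i=0}^t\binom{N}{i}(q^2-1)^i$. The "family of channels generated by $t$-limited errors" is exactly $\mathscr{N}(\calE_t)$ by Proposition~\ref{prop:orthUni}(1), so the hypotheses match and the displayed bound follows by substituting this $M$.

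There is essentially no hard step. The only point needing care is checking that Corollary~\ref{cor:HellingerChannelFidelity} applies to a genuine Kraus set. Here $\frac1K\tr A_{\s{QEC}}=1$ holds for $\calN^{\s{U}}$, so the corollary is used exactly as stated. The rescaling is then done purely at the level of $D_{\s{H}}$, which avoids the non-unit trace of $A_{\s{QEC}}(\calE)$.
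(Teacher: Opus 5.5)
Your proposal is correct and takes essentially the same route as the paper. The paper's proof also uses the uniform channel with Kraus operators $U_k/\sqrt{M}$, the identity $A_{\s{QEC}}(\calN_{\s{U}})=\frac{1}{M}A_{\s{QEC}}(\calE)$, the $\sqrt{c}$-homogeneity of $D_{\s{H}}(\cdot,\calH_{\s{KL}})$, and the lower bound of Corollary~\ref{cor:HellingerChannelFidelity}. Your remark that the uniform channel is $\calE$-controlled directly from its diagonal coefficient matrix $I_M/\sqrt{M}$, with no need for Hilbert--Schmidt orthogonality, is also correct and makes the justification slightly more explicit than the paper's.
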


\begin{proof}
    We follow the strategy of Proposition~\ref{prop:orthUni}: the uniform superoperator $\calN_{\s{U}}$  defined by the Kraus set $\frac
    {1}{\sqrt{M}}\calE=\ppp{\frac{1}{\sqrt{M}} U_k}_{k\in [M]}$ is an $\calE$-controlled CPTP map. Let $A_{\s{QEC}}(\calN)$ be the corresponding QEC matrix.  Note that $A_{\s{QEC}}(\calN)=\frac{1}{M}A_{\s{QEC}}(\calE)$. In particular,
    \begin{align}
        D_{\s{H}}(A_{\s{QEC}}(\calE),\calH_{\s{KL}})&\nonumber=D_{\s{H}}(M   A_{\s{QEC}}(\calN),\calH_{\s{KL}})\\
        &\nonumber=\min_{\substack{B \in \calH_{\s{KL}}\\ B\succeq 0}}D_{\s{H}}\p{M   A_{\s{QEC}}(\calN),B }\\
        &\nonumber=\min_{\substack{B \in \calH_{\s{KL}}\\ B\succeq 0}}\norm{\sqrt{M   A_{\s{QEC}}(\calN)}-\sqrt{B} }_2\\
        &\nonumber=\min_{\substack{B \in \calH_{\s{KL}}\\ B\succeq 0}}\sqrt{M}\norm{\sqrt{  A_{\s{QEC}}(\calN)}-\sqrt{  B/M} }_2\\
        &\nonumber=\sqrt{M} \min_{\substack{B \in \calH_{\s{KL}}\\ B\succeq 0}}\norm{\sqrt{  A_{\s{QEC}}(\calN)}-\sqrt{  B} }_2\\
        &=\sqrt{M}      D_{\s{H}}(A_{\s{QEC}}(\calN),\calH_{\s{KL}}).\label{eq:ScaleHellinger}
    \end{align}
    On the other hand, by Corollary~\ref{cor:HellingerChannelFidelity} and the AQEC assumption, we have
    \begin{equation}
        \frac{1}{\sqrt{2K}}D_{\s{H}}\p{A_{\s{QEC}}(\calN),\calH_{\s{KL}}}\leq \varepsilon_{\mathrm{ch}}^{\mathrm{opt}}(Q,\calN)\leq \varepsilon\label{eq:OPtunitaryepsilon}
    \end{equation}
  The proof is completed by combining \eqref{eq:ScaleHellinger} and \eqref{eq:OPtunitaryepsilon}.
\end{proof}

In contrast to the worst-case fidelity formulation, the $|\calE|$-dependent loss in Proposition~\ref{prop:AvAQECNecessary} is only of order $\sqrt{M}$, rather than $M$. The reason is that the channel-fidelity/Petz characterization is governed by the Hellinger distance of the QEC matrix to the KL subspace. This quantity enters linearly in the final AQEC error, while the contribution of several blocks combines at the level of squared Hellinger distances. Consequently, multiplying the number of relevant components by $M$ produces only a $\sqrt{M}$ loss after taking the square root.

The same proof technique also extends the necessary condition of Proposition~\ref{prop:NecFock} to the $\s{Av}$-AQEC setting. The resulting statement has the same form, except that the linear factor $t+1$ is replaced by $\sqrt{t+1}$. Thus, for the average-case channel-fidelity criterion, the corresponding necessary conditions retain the same structural content but with a milder dependence on the number of relevant errors.


\subsection{Equivalence of AQEC for erasures and general limited-weight errors}

 A fundamental structural feature of exact quantum error correction is the equivalence between correcting $t$ arbitrary qudit errors and correcting $2t$ erasures. In particular, this equivalence underlies the usual notion of distance for exact quantum codes, since a code of distance $2t+1$ is precisely one that corrects $t$ general errors\footnote{Here {\em general limited-weight errors} refers to the set of error operators in the span of the HW set $\calE_t$ that act on at most $t$ qudits.}, or equivalently $2t$ erasures. By contrast, the approximate setting lacks a comparable principle, as highlighted by the epigraph to this paper. In this section, we show that, at least in the setting considered here, this pessimistic outlook is not warranted: we prove that approximate correction of erasures \textit{does} imply approximate correction of general errors, and vice versa.

We start by proving a necessary condition for correcting $2t$ erasures: 
\begin{prop}\label{prop:necSerasure}
    If a code $Q\subseteq \calH_{q}^{\otimes N}$ is an $\varepsilon$-AQEC code  for $\calE_{t}^{\s{Er}}$, then 
    \[\zeta(\calE_{t}^{\s{Er}},Q)\leq {2\sqrt{2}\binom{N}{t}\varepsilon}.\]
\end{prop}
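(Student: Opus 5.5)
We follow the template of Proposition~\ref{prop:orthUni}(2) and Proposition~\ref{prop:NecFock}. The idea is to exhibit a single $\calE_t^{\s{Er}}$-controlled channel whose Kraus set is a uniformly rescaled copy of $\calE_t^{\s{Er}}$. We then apply the B\'eny--Oreshkov theorem to that channel and rescale the resulting $\lambda$.

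The natural candidate is the uniformly random $t$-erasure channel. It picks a set $I\in\binom{[N]}{t}$ uniformly at random and replaces the qudits in $I$ by $\ket{\perp}$. Its Kraus operators are
\[
\Bigl\{\tbinom{N}{t}^{-1/2}E_{I,\ux}~:~I\in\tbinom{[N]}{t},\ \ux\in[q]^t\Bigr\}.
\]
The first step is to check completeness. For fixed $I$, we have $\sum_{\ux\in[q]^t}E_{I,\ux}^\dag E_{I,\ux}=\sum_{\ux}\ket{\ux}\bra{\ux}_I\otimes I_{[N]\setminus I}=I$, because $\braket{\perp^t|\perp^t}=1$. Summing over the $\binom{N}{t}$ choices of $I$ gives $\binom{N}{t}I$, so after the rescaling we get the identity and the map is CPTP.

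The second step is to observe that its coefficient matrix with respect to $\calE_t^{\s{Er}}$ is diagonal with entries $\binom{N}{t}^{-1/2}\le 1$. Hence the channel lies in $\mathscr{N}(\calE_t^{\s{Er}})$. By the hypothesis and Definition~\ref{def:AQECC}, $Q$ is therefore an $\varepsilon$-AQEC code for this channel.

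Next we invoke Theorem~\ref{th:BO}. It yields a density matrix $\lambda$ with $d(\Lambda,\Lambda+\calB_\lambda)\le\varepsilon$, where $\calB_\lambda$ is built from the rescaled Kraus operators. Exactly as in the proof of Proposition~\ref{prop:orthUni}, the inequalities \eqref{eq:diamondBures} then give $\norm{\calB_\lambda}_\diamond\le 2\sqrt{2}\varepsilon$.

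Finally, we set $\lambda'=\binom{N}{t}\lambda$. Since each product $A_l^\dag A_k$ of rescaled operators equals $\binom{N}{t}^{-1}E_l^\dag E_k$, we get $\calB^{\calE_t^{\s{Er}}}_{\lambda',Q}=\binom{N}{t}\calB_\lambda$. Therefore
\[
\zeta(\calE_t^{\s{Er}},Q)\le\norm{\calB^{\calE_t^{\s{Er}}}_{\lambda',Q}}_\diamond\le 2\sqrt{2}\tbinom{N}{t}\varepsilon .
\]

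The only point requiring care is the completeness computation. It works because erasure operators with the same $I$ but different $\ux$ have orthogonal supports, so the normalization factor is $\binom{N}{t}$ rather than $|\calE_t^{\s{Er}}|=\binom{N}{t}q^t$. This is what gives the stated constant instead of a worse one. The remaining steps are routine repetitions of the earlier arguments.
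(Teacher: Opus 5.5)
Your proposal is correct and follows the paper's proof essentially step by step. Both use the uniform $t$-erasure channel with diagonal coefficient matrix $\binom{N}{t}^{-1/2}I$, apply Theorem~\ref{th:BO} together with \eqref{eq:diamondBures} to get $\norm{\calB_\lambda}_\diamond\le 2\sqrt{2}\varepsilon$, and then rescale $\lambda$ by $\binom{N}{t}$. Your explicit completeness computation, and your explanation of why the normalization is $\binom{N}{t}$ rather than $\binom{N}{t}q^t$, fill in a detail the paper leaves implicit.
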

\begin{proof}
    The proof follows the same idea as in the proofs of Proposition~\ref{prop:orthUni} and Proposition~\ref{prop:NecFock}: we show that by scaling the operators in $\calE_{ t}^{\s{Er}}$ we obtain a Kraus error set satisfying the completeness relation, and then using B\'eny-Oreshkov conditions together with \eqref{eq:diamondBures} to obtain an upper bound on $\zeta(\calE_{t}^{\s{Er}},Q)$. Formally, consider the uniform $t$-erasure channel $\calN_{t,\s{U}}^{\s{Er}}$, in which a subset of $t$ qudits is chosen uniformly at random and erased, given by the Kraus operators
\[
\calE_{t,\s{U}}^{\s{Er}} := \biggl\{
\frac{1}{\sqrt{\binom{N}{t}}}   E_{I,\ux} : I\in \binom{[N]}{t},\ \ux\in [q]^{t},\ \alpha\in\C
\biggr\},
\qquad
E_{I,\ux} := 
\ket{\perp^{t}}_{I}\bra{\ux}_{I}\otimes I_{[N]\setminus I}
\]

A code $Q$ is $\varepsilon$-AQEC code  for $\calE_{ t}^{\s{Er}}$ then it is $\varepsilon$-AQEC code  for the channel $\calN_{t,\s{U}}^{\s{Er}}$. Indeed, the errors of the Kraus set $\calE_{t,\s{U}}^{\s{Er}}$ are spanned by the elements $\calE_{t}^{\s{Er}}$ and the corresponding connection matrix $C=\binom{N}{t}^{-\frac{1}{2}}I,$ implying that $\norm{C}_\infty<1$ and $\calN_{t,\s{U}}^{\s{Er}}\in \mathscr{N}(\calE_{t}^{\s{Er}})$. By Theorem~\ref{th:BO} and \eqref{eq:diamondBures}, there exists a matrix $\lambda$ such that the  operator $\calB_{\lambda }$ (defined in Theorem~\ref{th:BO}) satisfies $\norm{\calB_{\lambda}}_{\diamond}\leq {2\sqrt{2}\varepsilon}$.
 Consider the matrix $\lambda'=\lambda \binom{N}{t}$ and observe that the corresponding error-set B\'eny-Oreshkov operator satisfies $\calB_{\lambda',Q}^{\calE_{ t}^{\s{Er}}}=\binom{N}{t}  \calB_{\lambda}$. In particular, 
     \[
     \zeta(\calE_{ t}^{\s{Er}},Q)\leq \|\calB_{\lambda',Q}^{\calE_{ t}^{\s{Er}}}\|_{\diamond }=\binom{N}{t}\norm{\calB_{\lambda}}_{\diamond}\leq {2\sqrt{2}\binom{N}{t}\varepsilon}. \qedhere
     \]
\end{proof}
Below we denote by $B_q(n,s)$ the volume of the Hamming ball of radius $s$ in the $n$-dimensional Hamming space $[q]^n$:
 \nomenclature{$B_q(n,s)$}{volume of the Hamming ball}
    $$
    B_q(n,s)=\sum_{i=0}^s \binom ni (q-1)^i.
    $$
Our main technical result in this section is stated in the next theorem.
\begin{tcolorbox}[width=1\linewidth, left=1mm, right=1mm, boxsep=0pt, boxrule=0.5pt, sharp corners=all, colback=white!95!black] 
\begin{theorem}\label{th:equivErGen}
    Let $Q\subseteq \calH_{q}^{\otimes N}$ be a code, and let $t\leq N/2$ be an integer. Then:
    \[
    \Big(2\binom{N}{2t}B_{q^2}(2t,t)\Big)^{-1} \zeta(\calE_{2t}^{\s{Er}},Q) \leq \zeta(\calE_t,Q)\leq 2B_{q^2}(2t,t)^2 \zeta(\calE_{2t}^{\s{Er}},Q), 
    \]
where $\calE_t$ is the set of HW operators of  weight at most $t$ defined in \eqref{eq:HWoperatorsT}.
\end{theorem}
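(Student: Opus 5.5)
The plan is to compare the two B\'eny--Oreshkov superoperators through a common family of local functionals on the code. For both error sets I will fix $\lambda$ using a single normalized codeword $\ket{c_0}$, that is $\lambda_{k,l}=\bra{c_0}E_l^\dag E_k\ket{c_0}$. By Lemma~\ref{lem:replaceOpt}, each of $\zeta(\calE_t,Q)$ and $\zeta(\calE_{2t}^{\s{Er}},Q)$ is then within a factor $2$ of the diamond norm of this particular superoperator. This is where the factor $2$ in both inequalities comes from. With this choice every entry $B_{k,l}$ has the form $P O P-\bra{c_0}O\ket{c_0}P$, where $O=E_l^\dag E_k$.

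For the erasure side I first compute the products. For $I\neq J$ we have $E_{I,\ux}^\dag E_{J,\uy}=0$, since a $\bra{\perp}$ meets a computational state on $I\setminus J$. For equal sets, $E_{I,\ux}^\dag E_{I,\uy}=\ket{\ux}\bra{\uy}_I\otimes I$. Hence $\calB^{\s{Er}}(\rho)=\bigoplus_{S\in\binom{[N]}{2t}}X_S(\rho)$ is block diagonal. Here $X_S(\rho)$ is the $q^{2t}\times q^{2t}$ matrix with entries $\tr\big((P(\ket{\ux}\bra{\uy}_S\otimes I)P-\bra{c_0}\cdot\ket{c_0}P)\rho\big)$, i.e.\ ``reduced state on $S$ minus its value on $c_0$''. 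The same holds after tensoring with an identity reference system. Consequently $\norm{(I\otimes\calB^{\s{Er}})(\rho)}_1=\sum_S\norm{(I\otimes X_S)(\rho)}_1$.

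For the upper bound on $\zeta(\calE_t,Q)$, note that for HW operators $W_k,W_l$ of weight $\le t$, the product $W_l^\dag W_k$ is, up to phase, an HW operator supported on a set of size $\le 2t$. Since $t\le N/2$, this set lies inside some $S\in\binom{[N]}{2t}$. I will fix an assignment $(k,l)\mapsto S(k,l)$ with $\mathrm{supp}\,W_k\cup\mathrm{supp}\,W_l\subseteq S(k,l)$. I then write the HW output matrix as $M=\sum_S M_S$, where $M_S$ keeps the entries with $S(k,l)=S$. Each nonzero entry of $M_S$ equals $\tr(W\,X_S(\rho))$ for a unitary $W$ on $S$, so its modulus is at most $\norm{X_S(\rho)}_1$. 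Moreover, $M_S$ has at most $B_{q^2}(2t,t)^2$ nonzero entries, because only HW operators of weight $\le t$ supported in $S$ occur. Bounding trace norm by the entrywise $\ell_1$ norm gives $\norm{M_S}_1\le B_{q^2}(2t,t)^2\norm{X_S}_1$. Summing over $S$ and using the block decomposition above yields $\norm{\calB^{\calE_t}}_\diamond\le B_{q^2}(2t,t)^2\norm{\calB^{\s{Er}}}_\diamond$, and Lemma~\ref{lem:replaceOpt} supplies the factor $2$. The ampliation $I\otimes\cdot$ must be carried along throughout, with entries becoming operator blocks; I expect this to go through verbatim.

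For the lower bound I reverse the roles. I expand $\ket{\ux}\bra{\uy}_S=q^{-2t}\sum_m \tr(W_m^\dag\ket{\ux}\bra{\uy})W_m$ over HW operators on $S$. I then split each $W_m$ of weight $\le 2t$ as $W_l^\dag W_k$ (up to phase) with $\mathrm{wt}(W_k),\mathrm{wt}(W_l)\le t$. This writes $X_S(\rho)$ as a sum over $m$ of the coefficient matrix $(\tr(W_m^\dag\ket{\ux}\bra{\uy}))_{\ux,\uy}$ times the HW entry $M_{k(m),l(m)}$. Each coefficient matrix is a transposed unitary of trace norm $q^{2t}$, so $\norm{X_S}_1\le\sum_m|M_{k(m),l(m)}|$, and each entry is bounded by $\norm{M}_1$. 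Summing over the $\binom{N}{2t}$ sets $S$ produces the binomial factor.

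The main obstacle will be the counting in this last step. Naively there are $q^{4t}$ operators $m$ per set, while the statement has only $B_{q^2}(2t,t)$. I would try to recover the sharper count by grouping terms: fix the factorization $W_m=W_l^\dag W_k$ through a splitting of $S$ into two halves, and control each group by one row of $M$ via Cauchy--Schwarz rather than entrywise. If that fails, I would accept a weaker constant there.
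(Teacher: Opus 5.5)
Your upper bound is correct, and it takes a different, more elementary route than the paper. The paper factors $\calB^{\calE_t}_{\lambda_t,Q}=\calF\circ\calB_{\s{Er}}$ with $\calF=\bigoplus_I\calF^{\s g}_I\circ\calF^{\s b}_I$, and bounds $\|\calF\|_{\s{cb}}$ using the scaled isometry $U_I$ and a K\"onig edge colouring. You instead bound each HW entry (operator-valued after ampliation) by the trace norm of the single erasure block $X_{S(k,l)}$, using unitarity and the contractivity of the partial trace, and then count at most $B_{q^2}(2t,t)^2$ entries per $2t$-set. This reaches the same constant with less machinery.

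The lower bound, however, is not established with the stated constant, so the proposal has a genuine gap there.
\begin{itemize}
\item Your entrywise estimate gives $\|X_S\|_1\le\sum_m\|M_{k(m),l(m)}\|_1\le q^{4t}\|M\|_1$. Hence $\zeta(\calE^{\s{Er}}_{2t},Q)\le 2\binom{N}{2t}q^{4t}\zeta(\calE_t,Q)$, and $q^{4t}=B_{q^2}(2t,2t)>B_{q^2}(2t,t)$ for $t\ge1$.
\item The repair you sketch is not enough as stated. Applying Cauchy--Schwarz to each row segment and then using $\|M_{k,\cdot}\|_2\le\|M\|_1$ gives a factor $q^{3t}$ per set, even for scalar entries. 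This still exceeds $B_{q^2}(2t,t)$ in general; already for $t=1$ one has $q^3>2q^2-1=B_{q^2}(2,1)$.
\end{itemize}

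The missing idea is to bound the entries lying on a \emph{matching} jointly. Suppose the pairs $(k_i,l_i)$ use distinct rows and distinct columns. Then $\sum_i\|M_{k_i,l_i}\|_1\le\|M\|_1$, because compressing $M$ to those rows and columns by partial isometries, and then pinching to the block diagonal, are both trace-norm contractions. This works verbatim for the operator-valued entries produced by $I\otimes\cdot$.

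With your half-splitting $S=S_1\sqcup S_2$, the map $m\mapsto(k(m),l(m))$ is a bijection onto the full product of the $q^{2t}$ HW operators on $S_2$ with the $q^{2t}$ HW operators on $S_1$. That product splits into $q^{2t}$ perfect matchings. Hence $\|X_S\|_1\le q^{2t}\|M\|_1$. Since $q^{2t}=\sum_{i\le t}\binom{t}{i}(q^2-1)^i\le B_{q^2}(2t,t)$, summing over $S$ gives the claimed constant.

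This is exactly the role of the edge colouring of $G_\sigma$ and the diagonal-extraction maps $\calJ_k$ in the paper. Alternatively, your Cauchy--Schwarz idea can be rescued by applying it once over the whole $q^{2t}\times q^{2t}$ block. You then need $(\sum_{k,l}\|M_{kl}\|_1^2)^{1/2}\le\|M\|_1$, which follows from the singular value decomposition of $M$ and Minkowski's inequality.
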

\end{tcolorbox}
The proof of Theorem~\ref{th:equivErGen} appears in Appendix~\ref{app:ThErGen}, starting on p.~\pageref{app:ThErGen}. At a high level, the Bény--Oreshkov superoperators quantify the violation of the approximate KL conditions for a prescribed matrix of constants $\lambda$. The key observation is that, for both erasures and HW errors, this violation is controlled by the degree to which information deletion acts as a constant operator on an appropriate subsystem of size t in the erasure setting and 2t in the HW setting. The proof in both directions is therefore based on showing that these two kinds of B\'eny-Oreshkov data encode the same underlying ``deletion-to-constant'' information, and can be converted into one another through explicit factor maps, namely $\calB_{\s{Er}}=\calF\circ  \calB_{\s{HW}}$ in one direction and a reverse factorization in the other. The resulting bounds then follow by controlling the norms of these factor maps. Combining this theorem with the sufficient AQEC conditions of Theorem~\ref{th:BOerrorset} and the necessary conditions of Propositions~\ref{prop:orthUni}  and \ref{prop:necSerasure}, we deduce the relation between AQEC for $2t$ erasures and $t$ general limited errors.

\begin{theorem}
\label{th:ErasureGeneralEr}
    Let $Q\subseteq \calH_q^{\otimes N}$ be a quantum code. Then, if $Q$ is an $\varepsilon$-AQEC code  for $2t$ erasures, then it is  an $\varepsilon'$-AQEC code  for $t$ general errors, with 
    \begin{equation}
        \varepsilon'={2^{7/4}\sqrt{\binom{N}{2t}}B_{q^2}(2t,t)  \sqrt{\varepsilon}}.
    \end{equation}
    Conversely,  if $Q$ is an $\varepsilon$-AQEC code  for $t$ general errors, then it is  an $\varepsilon'$-AQEC code  for $2t$ erasures, with 
    \begin{equation}
        \varepsilon'={2^{7/4}\sqrt{\binom{N}{2t}B_{q^2}(2t,t)B_{q^2}(N,t)}  \sqrt{\varepsilon}}.
    \end{equation}
\end{theorem}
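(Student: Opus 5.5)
The plan is to chain the necessary conditions (Propositions~\ref{prop:orthUni} and \ref{prop:necSerasure}), the comparison of environment-leakage distances in Theorem~\ref{th:equivErGen}, and the sufficient condition of Theorem~\ref{th:BOerrorset}. In each direction, the AQEC hypothesis gives an upper bound on one environment-leakage distance. Theorem~\ref{th:equivErGen} transfers this to a bound on the other one. We then solve $\zeta\le \varepsilon'^2/2$ for $\varepsilon'$.

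\emph{Erasures $\Rightarrow$ general errors.} Suppose $Q$ is $\varepsilon$-AQEC for $\calE_{2t}^{\s{Er}}$. Proposition~\ref{prop:necSerasure}, applied with $2t$ in place of $t$, gives
\[
\zeta(\calE_{2t}^{\s{Er}},Q)\le 2\sqrt2\binom{N}{2t}\varepsilon .
\]
The right-hand inequality of Theorem~\ref{th:equivErGen} then yields
\[
\zeta(\calE_t,Q)\le 2B_{q^2}(2t,t)^2\cdot 2\sqrt2\binom{N}{2t}\varepsilon = 2^{5/2}B_{q^2}(2t,t)^2\binom{N}{2t}\varepsilon .
\]
We choose $\varepsilon'$ with $\varepsilon'^2/2$ equal to this bound, namely $\varepsilon'^2=2^{7/2}B_{q^2}(2t,t)^2\binom N{2t}\varepsilon$. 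This gives $\varepsilon'=2^{7/4}\sqrt{\binom N{2t}}\,B_{q^2}(2t,t)\sqrt\varepsilon$, which is the stated constant. Theorem~\ref{th:BOerrorset} then shows that $Q$ is $\varepsilon'$-AQEC for $\calE_t$. By Example~\ref{ex:Paulis} and Proposition~\ref{prop:orthUni}(1), the $\calE_t$-controlled channels are exactly the channels introducing at most $t$ general errors.

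\emph{General errors $\Rightarrow$ erasures.} Suppose $Q$ is $\varepsilon$-AQEC for all $t$-limited error channels. Corollary~\ref{cor:generalErrorSuff} gives $\zeta(\calE_t,Q)\le 2\sqrt2|\calE_t|\varepsilon$, where $|\calE_t|=B_{q^2}(N,t)$. The left-hand inequality of Theorem~\ref{th:equivErGen} gives
\[
\zeta(\calE_{2t}^{\s{Er}},Q)\le 2\binom N{2t}B_{q^2}(2t,t)\,\zeta(\calE_t,Q)\le 2^{5/2}\binom N{2t}B_{q^2}(2t,t)B_{q^2}(N,t)\varepsilon .
\]
Setting this bound equal to $\varepsilon'^2/2$ and applying Theorem~\ref{th:BOerrorset} to $\calE_{2t}^{\s{Er}}$ produces the second stated $\varepsilon'$. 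Here ``$2t$ erasures'' means $\calE_{2t}^{\s{Er}}$-controlled channels, so no further identification is needed.

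\emph{Main obstacle.} All the real content sits in Theorem~\ref{th:equivErGen}, which we take as given. The only point to check carefully is bookkeeping. The first concern is that Proposition~\ref{prop:necSerasure} is stated for $t$ erasures, so it must be applied with $2t$; its proof does not depend on the specific value $t$. The second concern is the count $|\calE_t|$: every HW operator of weight at most $t$ is determined by a point of $[q^2]^N$ of Hamming weight at most $t$, so $|\calE_t|=B_{q^2}(N,t)$. The exponent $2^{7/4}$ arises as the square root of $2\cdot 2^{5/2}=2^{7/2}$.
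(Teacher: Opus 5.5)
Your proposal is correct and matches the paper's proof step for step. In each direction the initial bound on $\zeta$ comes from Proposition~\ref{prop:necSerasure} (applied with $2t$) or from Proposition~\ref{prop:orthUni}/Corollary~\ref{cor:generalErrorSuff}, the appropriate side of Theorem~\ref{th:equivErGen} transfers it, and Theorem~\ref{th:BOerrorset} with $\varepsilon'=\sqrt{2\zeta}$ yields the constant $2^{7/4}=\sqrt{2\cdot 2^{5/2}}$. The only implicit hypothesis worth stating is $2t\le N$, which Theorem~\ref{th:equivErGen} requires.
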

\begin{proof}
    Assume that $Q$ is an $\varepsilon$-AQEC code  for $2t$ erasures. By Proposition~\ref{prop:necSerasure} we have 
    \[\zeta(\calE^{\s{Er}}_{2t},Q)\leq {2\sqrt{2}\binom{N}{2t}\varepsilon}. \]
    By Theorem~\ref{th:equivErGen} we have 
    \[\zeta(\calE_{t},Q)\leq 2B_{q^2}(2t,t)^2    \zeta(\calE^{\s{Er}}_{2t},Q)\leq {4\sqrt{2}\binom{N}{2t}B_{q^2}(2t,t)^2\varepsilon}.\]
    Using Theorem~\ref{th:BOerrorset} we have that $Q$ is an $\varepsilon'$ AQEC for $\calE_t$ with 
    \[\varepsilon'=\sqrt{2\zeta(\calE_{t},Q)}\leq {2^{7/4}\sqrt{\binom{N}{2t}}B_{q^2}(2t,t)\sqrt{\varepsilon}}.\]
    
    The proof of the second part of the claim follows the same steps. By Proposition~\ref{prop:orthUni}, assuming  that $Q$ is an $\varepsilon$-AQEC code  for $t$ general errors, we have
    \[\zeta(\calE_{t},Q)\leq {2\sqrt{2} B_{q^2}(N,t)\varepsilon},\]
    where here $B_{q^2}(N,t)$ corresponds to the number of HW operators on $N$ qudits with weight at most $t$. Using the lower bound of Theorem~\ref{th:equivErGen} we have 
    \[\zeta(\calE^{\s{Er}}_{2t},Q)\leq 2\binom{N}{2t}B_{q^2}(2t,t)  \zeta(\calE_{t},Q) \leq   {4\sqrt{2}\binom{N}{2t}B_{q^2}(2t,t) B_{q^2}(N,t)\varepsilon}, \]
    which gives that $Q$ is an $\varepsilon'$ AQEC for $\calE^{\s{Er}}_{2t}$ with 
    \[\varepsilon'=\sqrt{2\zeta(\calE^{\s{Er}}_{2t},Q)}\leq {2^{7/4}\sqrt{\binom{N}{2t}B_{q^2}(2t,t) B_{q^2}(N,t)\varepsilon}}.
   \qedhere 
    \]
\end{proof}

Theorem~\ref{th:ErasureGeneralEr} converts an erasure-AQEC guarantee into a general-error AQEC guarantee, at the price of a multiplicative factor in the error parameter depending on the number of erased locations and local error patterns. Since this factor grows with the system size, it is important to understand how this reduction behaves for code families of increasing length. For fixed $t$, or more generally, when the relevant error sets grow only polynomially with $N$, the loss is only logarithmic in the dimension of a positive-rate code. Even in the linear-distance regime $t=\delta N$, where the factors 
may grow exponentially in $N$, they remain polynomial in the code dimension for positive-rate families. Thus, the theorem gives a quantitative route that links approximate erasure correction to approximate correction of general errors, provided the erasure-AQEC guarantee is strong enough to absorb the explicit loss.

This point can be seen concretely through the random partition codes for deletion errors analyzed below in Section~\ref{sec:ExDelEr}. Fix $q$ and write the quantum rate as $\calK=\log_q K/N$, where $K$ is the dimension of the $N$-qudit code. Since deletion correction implies erasure correction, Theorem~\ref{th:ErasureGeneralEr} can be used to obtain AQEC guarantees for general errors from sufficiently strong deletion-AQEC guarantees. To correct $\delta N$ general errors, it suffices to correct $2\delta N$ erasures with error parameter $\varepsilon_N$ satisfying
\begin{equation}
2\sqrt{\binom{N}{2\delta N}}B_{q^2}(2\delta N,\delta N)\sqrt{\varepsilon_N}\to0.\label{eq:ErasurestoG}
\end{equation}
Equivalently, if $\varepsilon_N=q^{-aN}$, then it is sufficient that $a>2\calG_q(\delta)$, where
    \[ 
    \calG_q(\delta) := \lim_{N\to\infty}\frac{1}{N}\log_q\bigg({\sqrt{\binom{N}{2\delta N}}B_{q^2}(2\delta N,\delta N)}\bigg)=\frac12\frac{H_2(2\delta)}{\log_2 q}+4\delta H_{q^2}\p{\frac{1}{2}},
    \]
where $H_q(\cdot)$ is defined in \eqref{eq:qaryEntropy}. Thus, the blow-up in Theorem~\ref{th:ErasureGeneralEr} gives a concrete target: the erasure-AQEC error must decay faster than $q^{-2\calG_q(\delta)}$. 

This stronger decay requirement translates into a slightly stronger rate constraint for the random partition construction. Without imposing a prescribed exponential decay rate on $\varepsilon_N$, the deletion-AQEC analysis of Section~\ref{sec:ExDelEr} gives $\varepsilon_N\to0$ for correction of $2\delta N$ deletions whenever
\begin{equation}
     \calK<\frac{1}{3}\p{\frac{R^{\s{Del}}(2\delta)}{2}-4M^{\s{Del}}(2\delta)},\label{eq:rateDelEx}
\end{equation}
where $R^{\s{Del}}( \cdot)$ and $M^{\s{Del}}(\cdot )$ are defined in \eqref{eq:Rdel} and \eqref{eq:Mdel}, respectively. Repeating the concentration analysis used to derive \eqref{eq:rateDelEx}, while requiring $\varepsilon_N=q^{-aN}$ with $a>2\calG_q(\delta)$, gives the condition
\[ \calK<\frac{1}{3}\p{\frac{R^{\s{Del}}(2\delta)}{2}-4M^{\s{Del}}(2\delta)-8\calG_q(\delta)}.\label{eq:ErasurestoGRate} \]
Since $R^{\s{Del}}(0)=1$, $M^{\s{Del}}(0)=0$, and $\calG_q(0)=0$, the right-hand side remains positive for all sufficiently small $\delta>0$. Therefore, the constants in Theorem~\ref{th:ErasureGeneralEr} reduce the achievable rate, but they do not make the erasure-to-general reduction vacuous.

The size of this loss also indicates where sharper results could improve the tradeoff. The constants come from the diamond-norm comparison in Theorem~\ref{th:equivErGen} and from the gap between the necessary and sufficient AQEC conditions in Proposition~\ref{prop:necSerasure} and Corollary~\ref{cor:generalErrorSuff}; tightening either of these steps would strengthen the link between erasure AQEC and general-error AQEC.

\subsection{AQEC, subsystem variance, and quantum circuit complexity}

Recent works have revealed a close connection between approximate quantum error correction and quantum circuit complexity \cite{yi2024complexity,yi2025lov,li2025random}. A central quantity in this connection is the subsystem variance introduced in \cite{yi2024complexity}. Informally, the subsystem variance of a code on a subsystem $I$ measures how much information about the encoded state can be seen from the reduced density matrix on $I$. If all code states have the same marginal on $I$, then the subsystem carries no logical information, and replacement noise on $I$ is exactly correctable. If the subsystem variance is small but nonzero, then the subsystem carries only a small amount of logical information, and the code is approximately correctable against replacement channels acting on that subsystem.
The significance of subsystem variance stretches beyond AQEC: namely, \cite{yi2024complexity} relates small subsystem variance 
to lower bounds for quantum circuit complexity of code states. Thus, subsystem variance also serves as a link between the indistinguishability of code states and circuit complexity. 

Here we connect this quantity to the error-set framework developed above. Specifically, we compare the environment-leakage distance $\zeta(\calE_{t}^{\s{Er}},Q)$ for erasure errors with the subsystem variance of a quantum code $Q$ on subsets of size $t$. Combining these bounds with Proposition~\ref{prop:necSerasure} and Theorem~\ref{th:ErasureGeneralEr} shows that subsystem variance not only controls correction against replacement channels on a fixed subsystem; through the erasure-to-general-error reduction, it also controls the AQEC capabilities of the code against general limited-weight errors. This strengthens the link between circuit complexity and AQEC: the same local-indistinguishability parameter that implies circuit complexity lower bounds also governs, quantitatively, the ability to correct physically natural families of errors. For simplicity, we follow \cite{yi2024complexity} and focus on the case of $q=2$.

\begin{definition}[Subsystem variance]\label{def:SubsystemVariance}
Let $Q\subseteq\calH_2^{\otimes N}$ be a $K$-dimensional quantum code with orthonormal basis $\ppp{\ket{c_i}}_{i\in[K]}$, and let $P_Q$ be the orthogonal projection on $Q$. Define the maximally mixed code state
\[ \Gamma_Q := \frac{1}{K}P_Q=\frac{1}{K}\sum_{i\in[K]}\ket{c_i}\bra{c_i}. \]
For a subsystem $I\subseteq[N]$, the subsystem variance of $Q$ on $I$ is
\[ V_{\s{sub}}(Q,I) := \max_{\rho\in\calD(Q)}\norm{\rho_I-(\Gamma_Q)_I}_1, \]
where $\calD(Q)$ denotes the set of density operators supported on $Q$, and $\rho_I=\tr_{[N]\setminus I}(\rho)$. Since the trace norm is convex, the maximum can equivalently be taken over pure code states $\rho=\ket{\psi}\bra{\psi}$ with $\ket{\psi}\in Q$. For $t\in[N]$, we define the size-$t$ subsystem variance by
\[ V_{\s{sub}}(Q,t) := \max_{I\in \binom{[N]}{t}}V_{\s{sub}}(Q,I). \]
\end{definition}

The quantity $V_{\s{sub}}(Q,t)$ measures the maximal distinguishability of code states from the maximally mixed code state 
based on the observation of only $t$ physical subsystems. In \cite{yi2024complexity}, this local distinguishability parameter is used to relate AQEC-type properties to circuit complexity lower bounds. We now show that, in our error-set language, the same quantity is directly related to the erasure performance of $Q$ as measured by $\zeta(\calE_t^{\s{Er}},Q)$. Consequently, through the erasure/general-error equivalence proved above, subsystem variance also controls the AQEC performance of the code against general limited-weight errors.
The next proposition, proved in Appendix~\ref{app:Complexity}, p.~\pageref{app:Complexity}, quantifies this connection by relating
the subsystem variance on subsystems of size $t$ to the environment-leakage distance for $t$-erasure errors. 

\begin{prop}\label{prop:SSVzeta}
    Let $Q\subseteq \calH_2^{\otimes N}$ be a $K$-dimensional quantum code, and let $t\in \N$ be fixed. Then 
    \begin{align}
        \frac{1}{2}V_{\s{sub}}(Q,t)\leq \zeta(\calE_t^{\s{Er}},Q)\leq  2K\binom{N}{t}V_{\s{sub}}(Q,t).
    \end{align}
\end{prop}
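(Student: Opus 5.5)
The plan is to first compute the B\'eny--Oreshkov superoperator for the erasure set $\calE_t^{\s{Er}}$ explicitly, and show that it is a direct sum of maps indexed by the subsets $I\in\binom{[N]}{t}$, each of which outputs the reduced code state on $I$. Index the errors by $k=(I,\ux)$ and compute $E_{I',\uy}^\dag E_{I,\ux}$ on $\calH_2^{\otimes N}$:
\begin{itemize}
\item If $I\neq I'$, pick a coordinate $j\in I\setminus I'$. After $E_{I,\ux}$ acts, coordinate $j$ holds $\ket{\perp}$, while $E_{I',\uy}^\dag$ acts there as the identity on $\calH_2$, which annihilates $\ket{\perp}$. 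Hence the product is $0$.
\item If $I=I'$, then $E_{I,\uy}^\dag E_{I,\ux}=\ket{\uy}\bra{\ux}_I\otimes I_{[N]\setminus I}$.
\end{itemize}
Consequently $\tr\p{P E_{I,\uy}^\dag E_{I,\ux}P\rho}=\bra{\ux}(P\rho P)_I\ket{\uy}$, and this entry sits at position $\ket{I,\ux}\bra{I,\uy}$. So, up to the $\lambda$-term, $\calB^{\calE_t^{\s{Er}}}_{\lambda,Q}(\rho)$ is the block-diagonal operator $\bigoplus_I (P\rho P)_I$, with vanishing off-diagonal blocks between different $I$. I would record this as a short lemma.

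For the lower bound $\tfrac12 V_{\s{sub}}(Q,t)\le\zeta(\calE_t^{\s{Er}},Q)$, fix any $\lambda$ and write $\lambda_I$ for its diagonal $(I,I)$ block. Feed in a code state $\rho\in\calD(Q)$, without an ancilla; this is allowed since $\|\cdot\|_\diamond$ dominates the trace-induced norm. The output is a matrix whose $(I,I)$ block is $\rho_I-\lambda_I$. Because pinching onto diagonal blocks does not increase the trace norm, we get $\|\rho_I-\lambda_I\|_1\le\|\calB_{\lambda,Q}\|_\diamond$. The same bound holds with $\rho=\Gamma_Q$. The triangle inequality then gives
\[
\|\rho_I-(\Gamma_Q)_I\|_1\le 2\|\calB_{\lambda,Q}\|_\diamond .
\]
Maximizing over $I$ and $\rho$, and then taking the infimum over $\lambda$, yields the claim.

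For the upper bound, choose $\lambda=\bigoplus_I (\Gamma_Q)_I$, which is zero off the diagonal blocks. Then
\[
\calB_{\lambda,Q}=\bigoplus_I\Phi_I,\qquad \Phi_I(\rho)=(P\rho P)_I-\tr(P\rho)(\Gamma_Q)_I .
\]
By the triangle inequality for the diamond norm of a direct sum, $\|\calB_{\lambda,Q}\|_\diamond\le\sum_I\|\Phi_I\|_\diamond$, which produces the factor $\binom{N}{t}$. Each $\Phi_I$ factors through compression to $Q$, since $\Phi_I(\rho)=\Phi_I(P\rho P)$. So it is effectively a map on $L(Q)$ with input dimension $K$, and I would invoke the standard bound $\|\Phi\|_\diamond\le K\|\Phi\|_{1\to1}$ for maps with $K$-dimensional input.

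It remains to show $\|\Phi_I\|_{1\to1}\le 2V_{\s{sub}}(Q,I)$ on $L(Q)$.
\begin{itemize}
\item For Hermitian $X$, split $X=X_+-X_-$ and normalize each part to a code state. Each part contributes at most $V_{\s{sub}}$ times its trace, so $\|\Phi_I(X)\|_1\le V_{\s{sub}}\|X\|_1$.
\item A general $X$ is a sum of Hermitian and anti-Hermitian parts, which costs at most a factor $2$.
\end{itemize}
Combining these gives $2K\binom{N}{t}V_{\s{sub}}(Q,t)$.

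The main obstacle is this last step, the passage from trace-induced norm to diamond norm. I expect to need care about whether the dimension factor can be taken as $K$ (the code dimension) rather than the full $2^N$; the compression argument above should justify $K$. I would also check the exact constant in the cited bound $\|\Phi\|_\diamond\le d\,\|\Phi\|_{1\to1}$ against Lemma~\ref{lem:cbDiamondProp}. If the Hermitian splitting already makes the factor $2$ unnecessary, I would simply keep the stated looser constant.
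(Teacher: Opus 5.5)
Your proposal is correct, and its upper bound follows the paper's argument step for step: the same block-diagonal choice $\lambda=\bigoplus_I(\Gamma_Q)_I$, the factor $\binom{N}{t}$ from the direct sum over $I$, the factor $K$ from the Schmidt/Cauchy--Schwarz comparison in \eqref{eq:DiamBound} after compressing to $Q$ (so the constant you wanted to double-check is indeed $K$), and the factor $2$ from the Hermitian/anti-Hermitian split. For the lower bound, the paper first reduces to $\lambda^{\Gamma_Q}$ via Remark~\ref{rem:generlizedLemma2} and then evaluates at a maximizing code state, whereas you treat an arbitrary $\lambda$ directly by pinching to the $(I,I)$ block and applying the triangle inequality between $\rho$ and $\Gamma_Q$; this is the mechanism of Lemma~\ref{lem:replaceOpt} inlined, it feeds in only density operators (which is all the paper's sup-over-states diamond norm controls when $\lambda$ is not Hermitian), and it yields the same constant $\tfrac12$.
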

Thus, up to explicit dimension and combinatorial factors, small subsystem variance is equivalent to a small erasure proximity parameter in our error-set formulation. 

\subsection{Sufficient inner product conditions for AQEC}
As shown above, the environment-leakage distance provides meaningful information on the AQEC capabilities of a code, and the sufficient condition of Theorem~\ref{th:BOerrorset} may even be tight, up to dimension-independent factors (see Proposition~\ref{prop:exampleOptimality}). At the same time, for ease of use, it is preferable to replace this criterion with simpler, more directly verifiable conditions. In this subsection, we therefore derive a relaxed version of Theorem~\ref{th:BOerrorset}, formulated in terms of inner product conditions. We tailor these conditions to three scenarios, corresponding to three levels of structural assumptions on the action of the noise operators on basis elements. These scenarios are precisely the ones that 
are satisfied by the partition codes of Section~\ref{sec:PartitionCodes}, which we introduce later. These codes are indexed
by partitions of finite metric spaces with varying geometric assumptions. Once those are defined, it will be evident that
the three cases in Proposition~\ref{prop:AQECcond} are designed to match their geometric counterparts. We term the suite of geometric
constraints the {\em metric--error alignment hierarchy}, and moving lower along it, we can afford to make the assumptions on error
correction progressively less restrictive. 

\begin{prop}[Sufficient AQEC conditions for error sets]\label{prop:AQECcond}
    Let $\calE=\ppp{E_k}_{k=0}^{M-1}$ be a finite error set and let $Q$ be a code space with an orthonormal basis $\ppp{\ket{c_i}}_{i=0}^{K-1}$. The code $Q$ is $\varepsilon$-AQEC code  for the error set $\calE$ if there exists a set of numbers $\p{\lambda_{k,l}}_{k,l}$ such that any one of the following conditions holds:
    \begin{enumerate}
        \item[{\rm 1.}] \begin{equation}
            \max_{\substack{i,j\in [K]\\k,l\in [M]} } \abs{\bra{c_i} E_k^\dag E_l \ket{c_j}-\lambda_{kl} \delta_{ij}}\leq \frac{\varepsilon^2}{2K^2M^2};\label{eq:BOcond1}
        \end{equation}
        \item[{\rm 2.}]  $\bra{c_i} E_k^\dag E_l \ket{c_j}=0$ whenever $i\neq j$ and 
        \begin{equation}
            \max_{\substack{i\in [K]\\k,l\in [M]} } \abs{\bra{c_i} E_k^\dag E_l \ket{c_i}-\lambda_{k,l}}\leq \frac{\varepsilon^2}{2K M^2};\label{eq:BOcond2}
        \end{equation}
        \item[{\rm 3.}] $\bra{c_i} E_k^\dag E_l \ket{c_j}=0$ whenever $(i,k)\neq (j,l)$ and
        \begin{equation}
            \max_{\substack{i\in [K]\\k\in [M]} } \abs{\bra{c_i} E_k^\dag E_k \ket{c_i}-\lambda_{kk}}\leq \frac{\varepsilon^2}{2K M}.\label{eq:BOcond3}
        \end{equation}
    \end{enumerate}
\end{prop}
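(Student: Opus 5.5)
The plan is to reduce all three cases to Theorem~\ref{th:BOerrorset}. It suffices to show that, for the given numbers $\lambda=(\lambda_{k,l})$, the B\'eny--Oreshkov superoperator satisfies $\norm{\calB^{\calE}_{\lambda,Q}}_{\diamond}\le \varepsilon^2/2$, since then $\zeta(\calE,Q)\le\varepsilon^2/2$.

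To bound the diamond norm, I write $\calB^{\calE}_{\lambda,Q}$ in the code basis. With $B_{k,l}=PE_l^\dag E_kP-\lambda_{k,l}P$, we have $\bra{c_i}B_{k,l}\ket{c_j}=\bra{c_i}E_l^\dag E_k\ket{c_j}-\lambda_{k,l}\delta_{ij}=:\beta_{ij}^{kl}$. The map then acts as
\[
\calB^{\calE}_{\lambda,Q}(\rho)=\sum_{k,l}\sum_{i,j}\beta^{kl}_{ij}\bra{c_j}\rho\ket{c_i}\,\ket{k}\bra{l}.
\]
I decompose it into elementary maps $\rho\mapsto \bra{c_j}\rho\ket{c_i}\ket{k}\bra{l}$. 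Each is of the form $\rho\mapsto X\rho Y$ with rank-one $X=\ket{k}\bra{c_j}$ and $Y=\ket{c_i}\bra{l}$, so its diamond norm is at most $\norm{X}_\infty\norm{Y}_\infty=1$. The triangle inequality then gives $\norm{\calB}_\diamond\le\sum_{i,j,k,l}\abs{\beta_{ij}^{kl}}$. (The index swap $k\leftrightarrow l$ relative to the statement is harmless, because the maxima range over all pairs $k,l$.)

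For Case 1 there are $K^2M^2$ terms, each bounded by $\varepsilon^2/(2K^2M^2)$, so the sum is at most $\varepsilon^2/2$. For Case 2 only the $i=j$ terms survive, giving $KM^2$ terms each at most $\varepsilon^2/(2KM^2)$. For Case 3, orthogonality whenever $(i,k)\ne(j,l)$ kills everything except the $i=j$, $k=l$ terms. This leaves $KM$ terms each at most $\varepsilon^2/(2KM)$, provided $\lambda_{k,l}$ is taken as $0$ for $k\neq l$. The hypothesis constrains only the diagonal $\lambda_{kk}$, so I am free to set the off-diagonal entries to zero. In Case 2, if $\lambda$ is specified only up to the stated bound, I use it as given.

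The main point to verify is the elementary diamond-norm bound $\norm{\rho\mapsto X\rho Y}_\diamond\le\norm{X}_\infty\norm{Y}_\infty$. This follows from Hölder: $\norm{(I\otimes X)\sigma(I\otimes Y)}_1\le\norm{X}_\infty\norm{\sigma}_1\norm{Y}_\infty$ for every $\sigma$ on the doubled space. It also requires that the diamond norm as defined in the paper, possibly through Lemma~\ref{lem:cbDiamondProp}, is a norm, so that the triangle inequality applies to these non-Hermitian-preserving pieces. If the definition is restricted to density inputs, the supremum over density operators is still dominated by the same Hölder estimate, so the bound holds in either reading. No other obstacles are expected.
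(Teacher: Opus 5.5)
Your argument is correct and reaches the same constants as the paper, by a more elementary route.

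The paper first replaces the given $\lambda$ by the Hermitian matrix $\lambda_{k,l}=\bra{c_0}E_l^\dag E_k\ket{c_0}$, losing a factor $2$ in the hypotheses. This makes $\calB^{\calE}_{\lambda,Q}$ Hermitian preserving, so its diamond norm is attained on a pure state in $Q^{\otimes 2}$. A Schmidt decomposition, Cauchy--Schwarz and duality then give $\norm{\calB^{\calE}_{\lambda,Q}}_\diamond\le K\sum_{k,l}\norm{B_{k,l}}_\infty$. The paper bounds $\norm{B_{k,l}}_\infty$ by the Frobenius norm in Case 1 and by the largest diagonal deviation in Cases 2 and 3. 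This gives $\norm{\calB^{\calE}_{\lambda,Q}}_\diamond\le\varepsilon^2$, and it concludes as in the proof of Theorem~\ref{th:BOerrorset}.

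You instead split the superoperator into $K^2M^2$ rank-one pieces $\rho\mapsto X\rho Y$, each of diamond norm at most $1$, and use only the triangle inequality. This needs neither a Hermitian $\lambda$ nor the pure-state attainment lemma. You then bound $\zeta(\calE,Q)$ directly and let Theorem~\ref{th:BOerrorset}, via Lemma~\ref{lem:replaceOpt}, absorb the factor $2$. The paper's intermediate spectral-norm bound can beat your entrywise sum when the $B_{k,l}$ are far from diagonal. Both collapse to $K^2M^2\max\abs{\beta^{kl}_{ij}}$ in Case 1, and in Cases 2 and 3 your sum is never larger.

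One bookkeeping fix is needed. Your $\beta^{kl}_{ij}$ pairs $E_l^\dag E_k$ with $\lambda_{k,l}$, whereas the hypothesis pairs it with $\lambda_{l,k}$. So in Cases 1 and 2 you should evaluate the B\'eny--Oreshkov superoperator at $\lambda^{T}$ rather than at $\lambda$. This is allowed because $\zeta(\calE,Q)$ is an infimum over all matrices, and that, not the maxima ranging over all $k,l$, is why the swap is harmless. In Case 3 your diagonal choice of $\lambda$ makes the point moot.
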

The proof is given in Appendix~\ref{app:InnerProdCond}, p.~\pageref{app:InnerProdCond}.


\section{QEC in Hilbert spaces indexed by discrete metric spaces }\label{sec:Hierarchy}
In this section, we study quantum coding in Hilbert spaces whose distinguished basis is indexed by a discrete metric space. The noise model is given by a family of error sets $\mathscr{E}=(\calE_t)_t$, where the parameter $t$ quantifies, in a model-dependent way, the amount of noise. We identify three levels of alignment between the action of the error operators and the metric structure of the underlying indexing space, and formulate them as a hierarchy. We then introduce a general quantum code construction in this setting based on an underlying classical code in the indexing space, and show that its exact and approximate error-correction guarantees are determined by the level of this hierarchy.

Let $(X,d)$ be a discrete space equipped with a metric function $d:X\times X \to \R_+$. Let $\calH_X$ be a Hilbert space indexed by the elements of $X$, namely, a Hilbert space admitting an orthonormal basis $\ppp{\ket{x}}_{x\in X}$. In many cases, the noise model of interest is compatible, to some extent, with the geometry of $X$, and this feature can be exploited to construct both exact and approximate quantum codes. We formalize this connection by distinguishing between three levels of alignment:

\graybox{\begin{definition}[{\bf Metric–error alignment hierarchy}]\label{def:MEAhirarchy} We define a three-level metric--error alignment hierarchy $\mathscr{L}_0\supset \mathscr{L}_1\supset\mathscr{L}_2$. 
        Let $\calH_X$ be indexed by a discrete space $(X,d)$, and let $\mathscr{E}=(\calE_t)_{t\in\N}$  be a collection of error sets on $\calH_X$, with $t$ representing the noise level. 
        \begin{enumerate}
            \item If nothing can be assumed about the relation between $\mathscr{E}$ and $d$, we say that $(\calH_X,d,\mathscr{E})$ is in  \textit{level zero} of the hierarchy, denoted as $(\calH_X,d,\mathscr{E})\in\mathscr{L}_0$.
            \item $(\calH_X,d,\mathscr{E})$ is in the \textit{first level} of the hierarchy, $(\calH_X,d,\mathscr{E})\in \mathscr{L}_1$, if   for any $t>0$ and $x,x'\in X$ 
            \begin{equation}
                d(x,x')> t \implies \bra{x}E^\dag F \ket{x'}=0 \text{ for all $E,F\in \calE_t$}.\label{eq:firstlevel}
            \end{equation}
            \item $(\calH_X,d,\mathscr{E})$ is in the \textit{second level} of the hierarchy, $(\calH_X,d,\mathscr{E})\in \mathscr{L}_2$, if $(\calH_X,d,\mathscr{E})\in \mathscr{L}_1$, and for all $x\in X$, 
            \begin{equation}
                \bra{x}E^\dag F \ket{x}=0 \text{ for all $E,F\in \bigcup_{t}\calE_t$ such that $E\neq F$}.\label{eq:secondlavel}
            \end{equation}
        \end{enumerate}
\end{definition}
}

In the remainder of this subsection, we present several examples of Hilbert spaces arising in coding-theoretic settings, together with their natural indexing metric spaces, and determine their level in the metric--error alignment hierarchy. These examples illustrate how different noise models interact with the underlying metric structure, and show that the same physical Hilbert space may belong to different levels depending on the chosen indexing and error family. For the reader's convenience, the main examples considered below are summarized in Table~\ref{tab:metric_spaces_examples}.

        \begin{example}[Qudit Pauli errors are in \(\mathscr{L}_1\)]\label{ex:HWsecondlevel}
        Let $(X=[q]^N,d)$ be the Hamming space equipped with the scaled Hamming metric defined as 
              \begin{equation}
                d(\ux,\uy)=\frac{1}{2}\wt(x-y)=\frac{1}{2}\abs{\ppp{i\in [N]~:~ x_i\neq y_i}},\label{eq:HammingScaled}
            \end{equation}
        and let $\calH_X=\calH_q^{\otimes N}$ with the standard computational basis.
Consider $\calE_t$, the set of HW operators of weight up to $t\in \N$ defined in \eqref{eq:HWoperatorsT}. 
Note that $(\calH_X,d,\mathscr{E})\in \mathscr{L}_1\backslash \mathscr{L}_2$. Indeed, assume $d(\ux,\uy)>t $ and $E,F\in \calE_t$. Note that $E^\dag F$ acts on at most $2t$ distinct qudits, and therefore $E^\dag F \ket{y}$ is spanned by basis elements of the form $\ppp{\ket{y+e} ~:~ \wt(e)\leq 2t}$ which by the distance assumption, does not include $\ket{x}$. Since $\braket{x|y+e}=0$ for any such elements, \eqref{eq:firstlevel} holds. On the other hand, let $\ux$ be the all $1$'s vector and let $Z_1\in \calE$ be the $Z$ operator on the first qudit. Clearly, $\bra{x}IZ_1\ket{x}=\omega\neq 0$, and \eqref{eq:secondlavel} does not hold.
        \hfill$\triangleleft$ \end{example}
        
\begin{example}[Bosonic loss errors are in \(\mathscr{L}_2\)]\label{ex:BosonicL1L2} Consider $X_{\s{F}}=\Z_0^{q}$ be the space of $q$-tuples with nonnegative integer entries, equipped with the $\ell_1$ distance:
\begin{equation}
    \label{eq:ell1Dist}d_{1}(\ux,\uy)=\frac{1}{2}\sum_{i=0}^{q-1} |x_i-y_i|=\frac{1}{2}\norm{\ux-\uy}_1.
\end{equation}
Note that $X$ naturally indexesthe $q$-mode Fock state space $\calH_{X_{\s{F}}}=\bigcup_{N\in \N}\calH_{q,N}$, where $\calH_{q,N}$ denotes the space of Fock states with excitation $N$, defined in \eqref{eq:COnstExSpace}. Let $\mathscr{E}^{\s{AD}}=(\calE_{\leq t}^{\s{AD}})_t$ be the set of (normalized) amplitude damping noise operators as defined in \eqref{eq:ADerrorSet} and \eqref{eq:ADnormalizedOP}, where $\calE_{\leq t}^{\s{AD}}$ represents all errors corresponding to a total loss of at most $t$ photons. A straightforward calculation reveals that for any $\un,\un'\in X$ with $\sum_i n_i=N$, $\sum n_i'=N'$, and for all $\ur,\us$ with $\sum_i r_i=r$ and $\sum_i s_i=s $:
\[
\bra{\un'}\Tilde{A}_{\ur}^\dag \Tilde{A}_{\us}\ket{\un}=\begin{cases}
    \frac{1}{\sqrt{\binom{N'}{r}\binom{N}{s}}}\sqrt{\prod_{i=0}^{q-1}\binom{n_i}{r_i}\prod_{i=0}^{q-1}\binom{n'_i}{s_i}}\braket{\un'-\ur|\un -\us} & N\geq s, N'\geq r\\
    0 & \text{otherwise.}
\end{cases}\]
 Note that if $d_1(\un,\un')>t$ and $r,s\leq t$, then $\un'-\ur\neq \un-\us$, so $\bra{\un'}\Tilde{A}_{\ur}^\dag \Tilde{A}_{\us}\ket{\un}=0$. Furthermore, if $\un=\un'$ and $\ur\neq \us$ then $\un-\ur\neq \un-\us$ and again $\bra{\un}\Tilde{A}_{\ur}^\dag \Tilde{A}_{\us}\ket{\un}=0$. This shows that $(X_{\s{F}},d_1,\mathscr{E}^{\s{AD}})\in \mathscr{L}_2$.\hfill$\triangleleft$ \end{example} 

\begin{example}[Bosonic shift-rotation errors are in \(\mathscr{L}_1\)]\label{ex:BosonicL1L21}

An argument similar to that of Example~\ref{ex:BosonicL1L2} is used to show that the bosonic Fock space is in $\mathscr{L}_1$ with respect to the number-shift and phase rotation errors. Let $t\in N$ and $f:\N\to \N$ be any function and $\beta\in[0,1]$ be fixed. Consider the error family $\mathscr{E}^{\s{SR}}=(\calE_{t,f(t),\beta}^{\s{SR}})_t$, as defined in \eqref{eq:ShiftRotationErrorSet}.  As in the case of AD errors, for any operators $E_{\ur,\uTheta,\beta},E_{\ur',\uTheta',\beta}\in \calE_{t,f(t),\beta}^{\s{SR}}$ and number states $\ket{\un},\ket{\un'}$ we have
\[\bra{\un} E_{\ur,\uTheta,\beta}^\dag E_{\ur',\uTheta',\beta}\ket{\un'}= \begin{cases}
    \exp\p{i\frac{2\pi\beta}{N}\sum_{i=1}^q (\theta_i' n_i'-\theta_i n_i)}  \braket{\un+\ur|\un'+\ur'} & N\geq r, N'\geq r',\\
    0& \text{otherwise.}
\end{cases}\]
The above expression vanishes if $d_1(\un,\un')>t$ as $\un+\ur\neq \un'+\ur'$, which implies $(X_{\s{F}},d_1,\mathscr{E}^{\s{SR}})\in \mathscr{L}_1$.
\hfill$\triangleleft$ \end{example}

\begin{example}[Qudit amplitude damping is in \(\mathscr{L}_2\)]\label{ex:QuditAD}
A similar example is obtained from qudit amplitude damping noise. Let $X=[q]^N$, equipped with the $\ell_1$ metric $d_1(\ux,\uy)$, so that $X$ indices the computational basis of $\calH_X=\calH_q^{\otimes N}$. Mathematically, this space can be embedded in the $N$-mode Fock space by identifying a basis vector $\ket{\ux}$, $\ux=(x_1,\dots,x_N)\in[q]^N$, with the occupation vector having $x_i$ excitations in mode $i$. Under this identification, $\calH_q^{\otimes N}\subseteq \Span\p{\bigcup_{i=0}^{(q-1)N}\calH_{N,i}}$, where $\calH_{N,i}$ denotes the $N$-mode subspace of total excitation number $i$, as in \eqref{eq:COnstExSpace}. Hence, the usual bosonic amplitude damping lowering operators restrict naturally to the qudit space, giving the standard qudit amplitude damping model; see, e.g., \cite{grassl2018quantum}. Let $\mathscr{E}^{\s{AD}}=(\calE_{\leq t}^{\s{AD}})_t$, where $\calE_{\leq t}^{\s{AD}}$ consists of all qudit amplitude damping errors corresponding to a total loss of at most $t$ excitations. Since these errors are obtained by restricting the corresponding Fock-state amplitude damping errors to the bounded-occupancy subset $[q]^N$, the same $\ell_1$-metric argument applies: an error that loses at most $t$ excitations can only move a basis state by $\ell_1$-distance at most $t$. Therefore, exactly as in the Fock-state setting, the qudit amplitude damping model satisfies the second-level alignment condition with the scaled $\ell_1$ distance, and hence $(X,d_1,\mathscr{E}^{\s{AD}})\in\mathscr{L}_2$. Thus, the Fock-state amplitude damping example extends directly to qudit amplitude damping noise.
\hfill$\triangleleft$ \end{example}

        \begin{example}[Deletions are in \(\mathscr{L}_1\)]\label{ex:DelL1} Let $X=[q]^N$, and let $d_{\s{Del}}$ denote the deletion distance on $X$. For two sequences $\ux,\uy\in X$, this distance is the minimum number of deletions that must be applied to each sequence to make the resulting sequences identical. Equivalently,
        \begin{equation}
            d_{\s{Del}}(\ux,\uy)= N-\lcs(\ux,\uy),\label{eq:dDel}
        \end{equation}
        where $\lcs(\ux,\uy)$ denotes the longest common subsequence of $\ux$ and $\uy$. 
        Let $\mathscr{E}^{\s{Del}}=(\calE_t^{\s{Del}})$. Arguing similarly to Example~\ref{ex:HWsecondlevel}, we have $(\calH_X,d_{\s{Del}},\mathscr{E}^{\s{Del}})\in \mathscr{L}_1$ but $(\calH_X,d_{\s{Del}},\mathscr{E}^{\s{Del}})\notin \mathscr{L}_2$. 

        We now observe an interesting phenomenon:  reducing to a subsystem and changing the indexing metric space may result in a higher metric--error alignment level. Recall the permutation-symmetric space $\mathrm{Sym}(q,N)\subseteq \calH_X$ defined in \eqref{eq:symmetricSpaceX}. As mentioned in Section~\ref{sec:PIcodes}, $\mathrm{Sym}(q,N)$ is naturally indexed by the space $\hat{X}=\calS_{q,N}$ (with the basis of Dicke states \eqref{eq:Dickestates}), equipped with the $\ell_1$ metric \eqref{eq:ell1Dist}. Recall that $t$-deletion errors are generally indexed by pairs $(I,\ux)$ where $I$ are the deleted coordinates and $\ux\in [q]^t$. It was shown in \cite{aydin2026quantum} that the action of a deletion error on a Dicke state is determined by the composition of $\ux$ alone, and that in particular:
        \begin{align}
            \bra{D_{\un}}E_{I,\ux}^\dag E_{I',\uy} \ket{D_{\un'}}=\sqrt{\frac{\binom{N-t}{\un -\ue_x}\binom{N-t}{\un' -\ue_y}}{\binom{N}{\un}\binom{N}{\un'}}}  \braket{D_{\un-\ue_x}|D_{\un'-\ue_y}},\label{eq:levelPI}
        \end{align}
        where $\ue_x=C(\ux)\in \calS_{q,t}$ and $\ue_y=C(\uy)\in \calS_{q,t}$   denote the composition of $\ux$ and $\uy$ as defined in \eqref{eq:Copositionof}. That is, deletion errors can be indexed using elements in $\calS_{q,t}$. Since $\ue_x$ and $\ue_y$ sum to $t$, if $d_1(\un,\un')>t$ the we have $\un-\ue_x\neq \un'-\ue_y$ and the inner product of \eqref{eq:levelPI} becomes $0$. Additionally, if $\un= \un'$ we clearly have $\un-\ue_x\neq \un-\ue_y$. Let $\hat{\mathscr{E}}^{\s{Del}}=(\hat{\calE}_t^{\s{Del}})_t$ be the set of simplex-indexed deletion errors:
       \begin{equation} \hat{\calE}_t^{\s{Del}}=\ppp{\sqrt{\binom{t}{\ue}}  E_{\ue} ~:~ \ue \in \calS_{q,t}}, \label{eq:DelPIset}\end{equation}
        describing the action of deletions on the permutation-symmetric space. The normalization is required since each simplex-indexed operator captures the action of all deletion operators $E_{[t],\ux}$ with the same composition $C(\ux)$. Note that we can limit ourselves to the case that $I=[t]$ since symmetry implies that the action of $E_{I,\ue}$ on the permutation-symmetric space is independent of $I$. We have shown that $(\calH_{\hat{X}},d_1,\hat{\mathscr{E}}^{\s{Del}})\in \mathscr{L}_2$.           
        \hfill$\triangleleft$ \end{example}

\begin{example}[Majorana errors are in \(\mathscr{L}_1\)]\label{ex:MajoranaL1} Let $X=\set{0,1}^{N}$ and $d$ be the scaled Hamming metric defined in \eqref{eq:HammingScaled}. Consider $\mathscr{E^{\s{Maj}}}=(\calE^{\s{Maj}}_t)_t$, the family of limited-support Majorana errors \eqref{eq:MajoranaErrorsT} on the fermionic Fock space $\calF_N$, which is naturally indexed by $X$ with the occupation-number basis $\ppp{\ket{x} ~:~x\in X}$. Recall that under the Jordan-Wigner representation, one has \eqref{eq:JordanWignerMajorana}
\begin{equation}
    \gamma_{2j}=Z_0 \cdots Z_{j-1}X_j,\quad \gamma_{2j+1}=Z_0 \cdots Z_{j-1}Y_j,
\end{equation}
where the Pauli operators act on the computational-basis representation of $\calF_N$. 
Therefore, for $\ux\in X$,
\begin{equation*}
    \gamma_{2j}\ket{\ux}=(-1)^{\sum_{i<j}x_i}\ket{\ux\oplus \ue_j},\quad
    \gamma_{2j+1}\ket{\ux}=i(-1)^{\sum_{i\leq j}x_i}\ket{\ux\oplus \ue_j},\label{eq:SingleMajoranaAction}
\end{equation*}
where $\ue_j$ is the $j$th standard basis vector and $\oplus$ denotes bitwise addition modulo $2$. Thus, each single Majorana operator flips one occupation bit, up to a phase determined by the occupations of the preceding modes. We conclude that a nontrivial monimal Majorana operator $\gamma_I$, $I\subseteq [2N]$, acts on an occupation-basis element $\ket{\ux}$ as 
\[\gamma_{I}\ket{\ux}=\psi_{I,\ux}   \ket{\ux'}\]
where $\psi_{I,\ux}$ is a phase factor in $\ppp{\pm i,\pm 1}$ and $\ux'$ is a binary vector that agrees with $\ux$ on $[N]\setminus I'$, and $I'=\ppp{\floor{j/2} ~:~ j\in I }$. In particular, the coordinates at which $\ux$ and $\ux'$ disagree must lie in $I'$, whose cardinality is bounded by $|I|$. It follows that if $\ux,\uy\in X$ are such that $d(\ux,\uy)> t$ and $I_1,I_2$ are subsets of $[2N]$ of size at most $t$,  
\[\bra{\ux}\gamma_{I_1}^{\dag} \gamma_{I_2}\ket{\uy} =\psi_{I_1,\ux}^*\psi_{I_2,\uy}\braket{\ux'|\uy'}=0,\]
where the last equality follows since $\ux'\neq \uy'$ by the distance assumption. The above proves that $(X,d,\mathscr{E}^{\s{Maj}})\in \mathscr{L}_1$.

This example highlights one advantage of the partition-code framework developed in the next section. One main difficulty with Majorana QECCs is that Majorana operators are not local as qubit operators under the Jordan-Wigner representation: even a single Majorana monomial may contain a long $Z$ string. However, this non-locality appears only in the diagonal phase action. The action on the occupation label is still local in the following sense: each single Majorana operator flips exactly one occupation bit, and each Majorana monomial changes only the occupation bits associated with its support. The $\mathscr{L}_1$ condition asks when two error operators can make two different basis states non-orthogonal after the errors are applied. This depends only on the resulting occupation labels, not on the phases produced by the long $Z$ strings. Therefore, the non-local $Z$ strings do not prevent Majorana noise from respecting the geometry of the indexing space $X=\set{0,1}^{N}$. This is the property we exploit when constructing quantum codes for Majorana noise from classical codes on the Hamming space.
\hfill$\triangleleft$ \end{example}

\subsection{{Quantum codes from partitions}} \label{sec:PartitionCodes}
The idea of constructing quantum codewords by partitioning a larger collection of basis vectors goes back to Knill, Laflamme, and Viola~\cite{knill2000theory}. In their general noise framework, after constructing a classical transmission basis adapted to a finite-dimensional error space, one partitions this basis and chooses nonnegative coefficients so that the corresponding convex hulls intersect; the resulting superpositions form a quantum code. Codes obtained by this method are often called KLV codes. This line of thought was further developed in the setting of quantum metric spaces and Lie-theoretic error models, including for instance~\cite{bumgardner2012codes}, building on the broader $W^*$-metric viewpoint of~\cite{bumgardner2012codes}. Related convex-geometric formulations also appear in the theory of higher-rank matricial ranges and hybrid quantum error correction~\cite{cao2021higher}. More recent constructions are closer in spirit to the geometric setting considered here. The authors of \cite{movassagh2024constructing} constructed qudit partition codes protecting against bounded-weight errors relying on classical codes, generalizing the CSS construction. This paradigm was later adapted in \cite{aydin2026quantum,elimelech2026asymptotically} to constant-excitation Fock-state codes, permutation-invariant codes, and spin codes, using structured classical codes as the underlying combinatorial objects.

In this section, we place these constructions in a common framework by considering Hilbert spaces indexed by discrete metric spaces and error sets whose action is aligned with the underlying geometry in the sense of Definition~\ref{def:MEAhirarchy}. While our exact-QEC existence argument uses the same convex-geometric principle as the KLV construction, the role it plays here is different. Rather than applying Tverberg's theorem directly to an unstructured transmission basis, we start from an underlying classical code constructed for the metric that is aligned with the 
quantum error model. This allows the partition argument to inherit distance and structural properties from the classical code, and leads to concrete families in several settings, including qudit systems, constrained Hilbert spaces, Majorana systems, and bosonic Fock spaces. On the approximate side, our main new ingredient is the averaging-based framework for random partition codes, first used in the Fock-state setting in \cite{elimelech2026asymptotically}, which we extend here to general metric-indexed Hilbert spaces and organize according to the type of metric--error alignment.

\begin{construction}[Codes from partitions]\label{const:partitionCodes}
    Let $(X,d)$ be a discrete metric space which indexes the Hilbert space $\calH_X=\Span\ppp{\ket{x} ~:~ x\in X}$, 
    and let  $C\subseteq X$ be a classical code. A quantum code $Q\subset \calH_X$ is called a $C$-partition code if there exists a partition of $C$ into pairwise disjoint subsets $C_0,\dots, C_{K-1}$ and an orthonormal basis of $Q$, $\ket{c_0},\dots, \ket{c_{K-1}}$ such that 
   
    \begin{equation}
        \ket{c_i}=\sum_{x\in C_i}\alpha_{x} \ket{x}, \quad \sum_{x\in C_i}|\alpha_x|^2=1.\label{eq:partitionCode}
    \end{equation}
    A basis for a partition code of the form \eqref{eq:partitionCode} is called a \textit{canonical basis}. 
\end{construction}

\begin{example}[CSS codes are partition codes]\label{ex:CSS} As observed in \cite{movassagh2024constructing}, this construction generalizes the CSS construction, which
is based on a pair of classical codes $C_1,C_2$ that satisfy the dual code containment property $C_2^\perp\subseteq C_2$. The ambient space is $X=[q]^N$, equipped with the Hamming metric. The partition elements are given by the cosets of the form $C_{\ux}=C_2^\bot+\ux$, $\ux\in C_1$, and the coefficients are $\alpha_{\ux}=1/\sqrt{|C_2^\bot|}$. 

\hfill$\triangleleft$ \end{example}
\subsubsection{Partition codes in the metric--error alignment hierarchy}
We study structural properties and error-correction performance of partition codes whose underlying classical code satisfies an appropriate distance assumption. We show that, as the metric--error alignment level increases, the resulting quantum partition code acquires increasingly strong structural properties. These properties can then be exploited to obtain improved exact and approximate QEC guarantees. 
\begin{definition}
    The distance of a classical code $C$ in a discrete metric space $(X,d)$ is defined to be
    \[d(C)=\min_{\substack{x,y\in C\\ x\neq y}} d(x,y).\]
\end{definition}

\begin{lemma}\label{lem:PartitionCodesProp}
    Let $\calH_X$ be a Hilbert space indexed by a discrete metric space $(X,d)$ with a family of error sets $\mathscr{E}=(\calE_t)_t$. Assume that  $Q$ is a quantum partition in $\calH_X$ obtained from a classical code $C\subset X$ with a canonical basis $\ppp{\ket{c_i}}_i$ as defined in \eqref{eq:partitionCode}.
    \begin{enumerate}
        \item If $(\calH_X,d,\mathscr{E})\in \mathscr{L}_1$ then for any $t<d(C)$  and $E,F\in \calE_t$ we have 
        \[\bra{c_i} E^\dag F \ket{c_j}=0 \quad \forall i\neq j, \text{ and all } E,F\in \calE_t.\]
        \item 
        If $(\calH_X,d,\mathscr{E})\in \mathscr{L}_2$ then we also have
         \[\bra{c_i} E^\dag F \ket{c_i}=0 \quad \forall i \text{ and } E,F\in \bigcup_{t< d(C)} \calE_t \text{ such that  } E\neq F.\]
    \end{enumerate}
\end{lemma}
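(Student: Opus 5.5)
The plan is to expand both codewords in the canonical basis \eqref{eq:partitionCode} and reduce every inner product to the basis-level conditions \eqref{eq:firstlevel} and \eqref{eq:secondlavel}. By sesquilinearity,
\[
\bra{c_i}E^\dag F\ket{c_j}=\sum_{x\in C_i}\sum_{y\in C_j}\alpha_x^*\alpha_y\,\bra{x}E^\dag F\ket{y},
\]
so it suffices to show that every term vanishes under the respective hypotheses.

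For part 1, fix $i\neq j$ and $E,F\in\calE_t$ with $t<d(C)$. Since $C_i$ and $C_j$ are disjoint blocks of the partition of $C$, any pair $x\in C_i$, $y\in C_j$ consists of two distinct elements of $C$. Hence $d(x,y)\geq d(C)>t$. The $\mathscr{L}_1$ condition \eqref{eq:firstlevel} then gives $\bra{x}E^\dag F\ket{y}=0$ for every such pair, and the sum vanishes.

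For part 2, fix $i$ and take $E\neq F$ in $\bigcup_{t<d(C)}\calE_t$, say $E\in\calE_{t_1}$ and $F\in\calE_{t_2}$. We split the double sum over $x,y\in C_i$ into the diagonal terms $x=y$ and the off-diagonal terms $x\neq y$.

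The diagonal terms vanish directly by \eqref{eq:secondlavel}, since $E\neq F$.

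The off-diagonal terms need $\mathscr{L}_1$ at a common noise level. Set $t=\max(t_1,t_2)<d(C)$. Then $E$ and $F$ must both lie in $\calE_t$, and this is the one point that needs care. For the families in this paper the error sets $\calE_t$ are nested in $t$ (bounded weight, at most $t$ losses or deletions, and so on), and I would state explicitly that this monotonicity is assumed. With $E,F\in\calE_t$ and $d(x,y)\geq d(C)>t$, the $\mathscr{L}_1$ condition kills every off-diagonal term, because $\mathscr{L}_2\subset\mathscr{L}_1$.

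If nestedness were not available, the fallback is to read the $\mathscr{L}_1$ condition as applying to the union $\bigcup_{s\leq t}\calE_s$. Either way, this bookkeeping about the index $t$ is the only real obstacle; the rest is immediate.
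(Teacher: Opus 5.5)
Your argument is correct and is the same as the paper's: expand $\bra{c_i}E^\dag F\ket{c_j}$ over the blocks, use disjointness of the blocks together with $d(C)>t$ and $\mathscr{L}_1$ to kill every cross-block term, and for Part 2 split the within-block sum into $x\neq x'$ terms (killed by $\mathscr{L}_1$) and $x=x'$ terms (killed by $\mathscr{L}_2$). Your concern about needing $E,F$ in a common $\calE_t$ is justified. The paper applies $\mathscr{L}_1$ to the off-diagonal terms without comment, so it tacitly makes this assumption; without it, Part 2 can fail for artificial non-nested families. The gap is harmless in practice, because the lemma is only ever invoked with $E,F$ taken from a single $\calE_t$.
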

\begin{proof}
    The proof is straightforward: Let $C_1,\dots,C_K$ be the partition that defines $Q$ and $\alpha_x$ be the coefficients corresponding to the basis elements. Assume that $t<d(C)$, and $i\neq j$. We have 
    \begin{align*}
        \bra{c_i} E^\dag F \ket{c_j}=\sum_{x\in C_i}\sum_{x'\in C_j}\alpha_x^* \alpha_{x'} \bra{x}E^\dag F \ket{x'}=0,
    \end{align*}
    where in the last equality we used that $d(x,x')\geq d(C)>t$ (as $C_i\cap C_j=\varnothing$ which implies $x\neq x'$), and the $\mathscr{L}_1$ assumption. Further assuming $\mathscr{L}_2$ we have 
    \begin{align}
        \bra{c_i} E^\dag F \ket{c_i}&=\nonumber\sum_{x,x'\in C_i}\alpha_x^* \alpha_{x'} \bra{x}E^\dag F \ket{x'}\\
        &\nonumber=\sum_{\substack{x,x'\in C_i\\x\neq x'}}\alpha_x^* \alpha_{x'} \bra{x}E^\dag F \ket{x'}+\sum_{x\in C_i} \bra{x}E^\dag F \ket{x}=0,
    \end{align}
    where in the last equality, the terms in the first sum are all zeros by the code distance and $\mathscr{L}_1$ assumption, and the remaining terms are all zeros due to the $\mathscr{L}_2$ assumption.
\end{proof}

\subsection{Exact QEC using partition codes}
An important implication of Lemma~\ref{lem:PartitionCodesProp} concerns the existence of exact-QEC partition codes derived from a prescribed underlying classical code with a given distance. In the following theorem, we generalize the existence result of \cite{aydin2026quantum} to arbitrary codes satisfying the second level of the metric--error alignment hierarchy.
\begin{theorem}[Existence of exact QEC partition codes in first and second levels]\label{th:exactQECpatition} Let $\calH_X$ be a Hilbert space indexed by a discrete metric space $(X,d)$ with a family of error sets $\mathscr{E}=(\calE_t)_t$. A classical code $C\subseteq X$ with distance $d(C)>t$ admits a quantum partition code $Q$ with dimension $K$ that is an exact QEC for $\calE_t$ under each of the following conditions:
\begin{enumerate}
    \item $(\calH_X,d,\mathscr{E})\in\mathscr{L}_1$ and \begin{equation}
    \abs{C}\geq (K-1)(|\calE_t|^2+1)+1.\label{eq:TverbergCosnt2}
\end{equation}
\item $(\calH_X,d,\mathscr{E})\in\mathscr{L}_2$ and \begin{equation}
    \abs{C}\geq (K-1)(|\calE_t|+1)+1.\label{eq:TverbergCosnt}
\end{equation}
\end{enumerate}

\end{theorem}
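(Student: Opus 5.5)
The plan is to reduce exact correctability to a convex-intersection problem and then apply Tverberg's theorem. By Lemma~\ref{lem:PartitionCodesProp}(1), since $d(C)>t$ and we are at level $\mathscr{L}_1$ or higher, every canonical basis of any $C$-partition code already satisfies the off-diagonal Knill--Laflamme conditions $\bra{c_i}E^\dag F\ket{c_j}=0$ for $i\neq j$ and $E,F\in\calE_t$. It therefore remains to choose a partition $C_0,\dots,C_{K-1}$ and coefficients $\alpha_x$ so that the diagonal conditions $\bra{c_i}E^\dag F\ket{c_i}=\lambda_{E,F}$ hold with a value independent of $i$.

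The key observation is that the $\mathscr{L}_1$ condition also kills the cross terms inside a block. For distinct $x,x'\in C_i$ we have $d(x,x')\ge d(C)>t$, so $\bra{x}E^\dag F\ket{x'}=0$. Hence
\[
\bra{c_i}E^\dag F\ket{c_i}=\sum_{x\in C_i}|\alpha_x|^2\bra{x}E^\dag F\ket{x}.
\]
Set $p_x=|\alpha_x|^2$, so that $(p_x)_{x\in C_i}$ is a probability vector on $C_i$. To each $x\in C$ we attach the vector
\[
v_x=\big(\bra{x}E^\dag F\ket{x}\big)_{E,F\in\calE_t}.
\]
The diagonal conditions then say exactly that the convex combinations $\sum_{x\in C_i}p_xv_x$ coincide for all $i$. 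Equivalently, the convex hulls $\mathrm{conv}\{v_x:x\in C_i\}$ must share a common point. Given such a point, we read off the $p_x$ and take $\alpha_x=\sqrt{p_x}$. The code has dimension $K$ because the blocks are disjoint, so the $\ket{c_i}$ are orthonormal.

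Next we count dimensions. At level $\mathscr{L}_1$ the vectors $v_x$ live in $\C^{|\calE_t|^2}$. Hermiticity, $\bra{x}F^\dag E\ket{x}=\overline{\bra{x}E^\dag F\ket{x}}$, shows they lie in a real subspace of dimension $|\calE_t|^2$. Tverberg's theorem in $\R^{D}$ says that any $(K-1)(D+1)+1$ points can be partitioned into $K$ parts with intersecting convex hulls. With $D=|\calE_t|^2$ this gives condition \eqref{eq:TverbergCosnt2}.

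At level $\mathscr{L}_2$, Lemma~\ref{lem:PartitionCodesProp}(2), or directly \eqref{eq:secondlavel}, shows that all off-diagonal entries $E\neq F$ of $v_x$ vanish. We therefore keep only the real diagonal entries $\bra{x}E^\dag E\ket{x}$, which lie in $\R^{|\calE_t|}$. Tverberg with $D=|\calE_t|$ gives condition \eqref{eq:TverbergCosnt}. If $|C|$ exceeds the threshold, we apply Tverberg to a subset of exactly the threshold size and put the remaining points into a block with zero weight. Alternatively, we simply restrict $C$, which keeps $d(C)>t$.

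The main obstacle is the careful real-dimension count, in particular confirming that the Hermitian structure gives real dimension $|\calE_t|^2$ rather than $2|\calE_t|^2$. The second point to check is that the intra-block cross terms vanish, which uses that points in the same block are distinct codewords. Once these are verified, the argument is a direct invocation of Tverberg's theorem.
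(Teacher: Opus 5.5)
Your proposal is correct and follows the paper's proof. Both use the $\mathscr{L}_1$ alignment together with $d(C)>t$ to kill the inter-block and intra-block cross terms, reduce the remaining diagonal Knill--Laflamme conditions to equality of convex combinations of the vectors $\bigl(\bra{x}E^\dag F\ket{x}\bigr)_{E,F}$, and invoke Tverberg's theorem, which the paper uses in the equivalent linear-system form of Lemma~\ref{lem:TvErequationsSys}. Your Hermiticity argument giving real dimension $|\calE_t|^2$ at level $\mathscr{L}_1$ fills in a detail the paper leaves implicit, since Lemma~\ref{lem:TvErequationsSys} is stated for real coefficients while $\bra{c}E_m^\dag E_k\ket{c}$ is complex in general.
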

The proof follows the convex-geometric approach of \cite{aydin2026quantum}. We briefly sketch the main idea below and give the complete proof in Appendix~\ref{app:tverberg}, page~\pageref{app:tverberg}.
\begin{proof}[Proof sketch:] We sketch the proof for the case $\calH_X,d,\mathscr{E})\in \mathscr{L}_2$, as the proof for the first level of the hierarchy follows the same idea. Recall that by the well-known KL conditions, a code $Q$ with basis $(\ket{c_i})_{i\in K}$ is QEC for an error set $\calE_t$ if and only if there exist constants $(\lambda_{E,F})_{E,F\in \calE_t}$ such that for any $i,j$ 
\begin{equation}
    \bra{c_i}E^\dag F \ket{c_j}=\lambda_{E,F}  \delta_{i,j}. \label{eq:KLpartition}
\end{equation}
We note that for any partition code $Q$, by Lemma~\ref{lem:PartitionCodesProp} the left-hand side of \eqref{eq:KLpartition} vanishes for any $i\neq j$ as long as $d(C)>t$ and $(\calH_X,d,\mathscr{E})\in \mathscr{L}_1$, and therefore the orthogonality conditions are fulfilled automatically. If $(\calH_X,d,\mathscr{E})\in \mathscr{L}_2$, then   \eqref{eq:KLpartition} becomes zero for $E\neq F$.  Thus, it remains to find a partition of $C$ and coefficients $(\alpha_{x})_{x\in C}$ such that for all $i,j\in [K]$ and $E\in \calE_{t}$ we have 
\[\bra{c_i}E^\dag E\ket{c_i}=\bra{c_j}E^\dag E\ket{c_j}.\]
Expanding the above expression, the problem reduces to proving the existence of a partition $C_1,\dots, C_K$ such that the system of equations 
\[\sum_{x\in C_1}a_{x}  \bra{x}E^\dag E \ket{x}=\sum_{x\in C_2}a_{x}  \bra{x}E^\dag E \ket{x}= \cdots=\sum_{x\in C_K}a_{x}  \bra{x}E^\dag E \ket{x},\quad 
E\in \calE_t \]
has a \textit{nonnegative }solution which satisfies 
\[\sum_{x\in C_1}a_{x}=\sum_{x\in C_2}a_{x}= \cdots=\sum_{x\in C_K}a_{x}.\]
We prove the existence of such a solution using the Tverberg theorem argument of \cite{aydin2026quantum}, which also implies the constraint \eqref{eq:TverbergCosnt}.
\end{proof}

\begin{remark}\label{rem:Tver}
Theorem~\ref{th:exactQECpatition} offers sufficient conditions for the existence of exact-QEC partition codes constructed from a classical code satisfying a suitable distance assumption. This existence result can be made constructive by applying a constructive version of Tverberg's theorem: its proof reduces the construction of the required partition to the problem of finding a Tverberg partition for a set of points in the ambient space whose dimension is determined by the number of relevant error operators. Using the constructive procedure discussed in \cite{agarwal2008algorithms}, this leads to an algorithm whose complexity scales as $|C|^{O(|\calE_t|^{3-i})}$ where $(\calH_X,d,\mathscr{E})\in\mathscr{L}_i$, $i=1,2$.
Consequently, although the result is constructive in principle, the resulting procedure is generally computationally infeasible. In particular, when the number of correctable errors grows with the number of physical subsystems, the quantity $|\calE_t|$ itself grows rapidly, and the above complexity becomes doubly exponential in the relevant system-size parameters.
\hfill$\triangleleft$ \end{remark}

\begin{remark}\label{rem:QuariLimited} The cardinality conditions in  \eqref{eq:TverbergCosnt} and \eqref{eq:TverbergCosnt2} are stated in a worst-case form: they count all linear constraints that may be needed in order to enforce the KL conditions. In structured settings, this count can be unnecessarily pessimistic, since some of the corresponding equations are satisfied automatically. More precisely, in the $\mathscr{L}_1$ scenario, the factor $|\calE_t|^2$ in \eqref{eq:TverbergCosnt} may be replaced by the number of operators $W=E^\dag F$, with $E,F\in\calE_t$, for which the function $x\mapsto \bra{x}W\ket{x}$ is not constant on $X$. Indeed, if this function is constant, then the corresponding KL constraint is already scalar on every partition element and imposes no additional restrictions on the Tverberg partition. Similarly, in the $\mathscr{L}_2$ scenario, the factor $|\calE_t|$ in \eqref{eq:TverbergCosnt2} may be replaced by the number of operators $W=E^\dag E$, with $E\in\calE_t$, for which $x\mapsto \bra{x}E^\dag E\ket{x}$ is not constant on $X$.

As a concrete example, consider the binary qubit case with $X=\{0,1\}^N$, and let $\calE_t$ be the set of Pauli errors of weight at most $t$. Write $W$, up to a phase, as $W=\bigotimes_{i=0}^{N-1} P_i$ with $P_i\in\{I,X,Y,Z\}$. If at least one $P_i$ is equal to $X$ or $Y$, then $W$ flips the $i$th computational-basis bit, and therefore $W\ket{x}$ is proportional to a basis vector $\ket{x'}$ with $x'\neq x$. Hence $\bra{x}W\ket{x}=0$ for every $x\in\{0,1\}^N$, so this constraint is vacuous. The only remaining case is when $P_i\in\{I,Z\}$ for every $i$, in which case, if $S=\{i:P_i=Z\}$, then $\bra{x}W\ket{x}=\pm(-1)^{\sum_{i\in S}x_i}$. This function is constant only when $S=\varnothing$, and is nonconstant otherwise. Hence, among operators of the form $W=E^\dag F$ with $E,F\in\calE_t$, the only nontrivial constraints come from Pauli operators consisting of $Z$'s on at most $2t$ coordinates and identities elsewhere. Thus, the relevant number of constraints is $B_2(N,2t)=\sum_{j=0}^{2t}\binom Nj$, up to the harmless inclusion of the identity operator. In particular, for $t=\delta N$ and $N\to\infty$, the exponential growth rate of this constraint count is $H_2(2\delta)$. Therefore, choosing the underlying classical binary code with the Gilbert--Varshamov rate $1-H_2(2\delta)$ yields a quantum code of asymptotic rate $1-2H_2(2\delta)$. This recovers the rate obtained in \cite{movassagh2024constructing}. The present construction has two advantages. First, once the required Tverberg partition is found, the resulting exact QEC conditions hold deterministically, rather than only with high probability over the choice of a random classical code. Second, the partition step is based on finding a Tverberg partition for the relevant constraint vectors. In the linear-distance regime, this remains computationally expensive, but the standard constructive Tverberg approach scales substantially better than the recursive Dines-based implementation used in \cite{movassagh2024constructing}, whose worst-case column growth is triply exponential in $N$.
\hfill$\triangleleft$ \end{remark}

\subsection{Approximate QEC by random partition codes}\label{sec:AQECrandPar}

In this section, we present a simple yet powerful construction of partition codes based on random underlying classical codes, and analyze their AQEC performance. This construction generalizes the construction of \cite{elimelech2026asymptotically} for constant-excitation Fock state codes to general Hilbert spaces indexed by metric spaces. Our performance guarantees improve as we move up the metric--error alignment hierarchy, eventually yielding better rate/error-correction tradeoffs, as demonstrated in Section~\ref{sec:Exampels}.

The random code construction has two main advantages. First, it replaces the costly and impractical procedure of finding a Tverberg partition, discussed in Remark~\ref{rem:Tver}, with a low-complexity averaging process. Second, the resulting quantum codes naturally inherit structural properties of the underlying random classical code. This allows us to tailor the classical code distribution so that the induced quantum code satisfies desired properties. This section develops these ideas in a general and abstract form; concrete examples are presented in Section~\ref{sec:Exampels}.

We start by providing sufficient AQEC conditions for partition codes on the first and second metric--error alignment levels. This is formulated in the following Lemma, which is a direct consequence of Lemma~\ref{lem:PartitionCodesProp} combined with Proposition~\ref{prop:AQECcond}: 
\begin{lemma}[Sufficient AQEC conditions for partition codes]\label{lem:SufCondHira}
Let $\calH_X$ be a Hilbert space indexed by a discrete metric space $(X,d)$ with a family of error sets $\mathscr{E}=(\calE_t)_t$. Assume that  $Q$ is a quantum partition code 
in $\calH_X$ obtained from a classical code $C\subset X$ with a canonical basis $\ppp{\ket{c_i}}_i$. Assume that $t<d(C)$.
\begin{enumerate}\label{cor:SuffcientAQECpartition}
    \item  If $(\calH_X,d,\mathscr{E})\in \mathscr{L}_1$, then $Q$ is $\varepsilon$-AQEC for the error set $\calE_t$ provided that there exist numbers $(\lambda_{E,F})_{E,F\in \calE_t}$
    \begin{equation}
        \max_{\substack{i\in [K]\\E,F\in \calE_t} } |\bra{c_i} E^\dag F \ket{c_i}-\lambda_{E,F}|\leq \frac{\varepsilon^2}{2K |\calE_t|^2}\label{eq:FirstLevAQEC}.
    \end{equation}
    \item  If $(\calH_X,d,\mathscr{E})\in \mathscr{L}_2$ then $Q$ is $\varepsilon$-AQEC for the error set $\calE_t$ provided that there exist numbers $(\lambda_{E})_{E\in \calE_t}$
    \begin{equation}
        \max_{\substack{i\in [K]\\E\in \calE_t} } |{\bra{c_i} E^\dag E \ket{c_i}-\lambda_{E}}|\leq \frac{\varepsilon^2}{2K |\calE_t|}\label{eq:SecondLevAQEC}.
    \end{equation}
\end{enumerate}
\end{lemma}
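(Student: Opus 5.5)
The plan is to obtain both parts by feeding the structural vanishing statements of Lemma~\ref{lem:PartitionCodesProp} into the matching cases of Proposition~\ref{prop:AQECcond}, with $\calE=\calE_t$ and $M=|\calE_t|$. Throughout we use the canonical basis $\ppp{\ket{c_i}}_i$ of $Q$, which is orthonormal by Construction~\ref{const:partitionCodes}, and index the errors by $E,F\in\calE_t$ instead of $k,l\in[M]$.

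For part (1), assume $(\calH_X,d,\mathscr{E})\in\mathscr{L}_1$ and $t<d(C)$. Part (1) of Lemma~\ref{lem:PartitionCodesProp} then gives $\bra{c_i}E^\dag F\ket{c_j}=0$ for all $i\neq j$ and all $E,F\in\calE_t$. This is exactly the off-diagonal hypothesis of case~2 of Proposition~\ref{prop:AQECcond}. The remaining requirement of case~2 is the diagonal bound $\abs{\bra{c_i}E^\dag F\ket{c_i}-\lambda_{E,F}}\le \varepsilon^2/(2KM^2)$, and with $M=|\calE_t|$ this is \eqref{eq:FirstLevAQEC}. Case~2 of Proposition~\ref{prop:AQECcond} therefore shows that $Q$ is $\varepsilon$-AQEC for $\calE_t$.

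For part (2), assume in addition $(\calH_X,d,\mathscr{E})\in\mathscr{L}_2$. Part (2) of Lemma~\ref{lem:PartitionCodesProp} gives $\bra{c_i}E^\dag F\ket{c_i}=0$ for all $E\neq F$ with $E,F\in\calE_t\subseteq\bigcup_{t'<d(C)}\calE_{t'}$. Combined with the off-diagonal vanishing from part (1), this gives $\bra{c_i}E^\dag F\ket{c_j}=0$ whenever $(i,E)\neq(j,F)$, which is the vanishing hypothesis of case~3 of Proposition~\ref{prop:AQECcond}. We set $\lambda_{E,E}=\lambda_E$; the off-diagonal $\lambda_{E,F}$ are not constrained by case~3. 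Condition \eqref{eq:SecondLevAQEC} is then exactly \eqref{eq:BOcond3} with $M=|\calE_t|$, so $Q$ is $\varepsilon$-AQEC for $\calE_t$.

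No step here is a real obstacle. The only point that needs care is the bookkeeping that case~3 of Proposition~\ref{prop:AQECcond} requires vanishing of all mixed terms, both $i\neq j$ and $E\neq F$. This is why part (2) needs the $\mathscr{L}_1$ off-diagonal statement, which holds because $\mathscr{L}_2\subset\mathscr{L}_1$, together with the $\mathscr{L}_2$ diagonal statement.
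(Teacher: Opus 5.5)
Your proposal is correct and is the paper's argument: the paper presents this lemma as a direct consequence of Lemma~\ref{lem:PartitionCodesProp} combined with Proposition~\ref{prop:AQECcond}. You carry out that combination by matching the $\mathscr{L}_1$ case to case~2 and the $\mathscr{L}_2$ case to case~3, with $M=|\calE_t|$ and using $\mathscr{L}_2\subset\mathscr{L}_1$ to obtain the off-diagonal vanishing.
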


\begin{construction}\label{const:randomParition}[Random partition codes] 
Let $(X,d)$ be a discrete metric space and $\calH_X$ be indexed by $X$. Let $\mu$ be a probability measure on $X$ (equipped with the sigma-algebra of all subsets $\sigma=2^X$). Consider the random classical code $\s{C}\subseteq X$ of size $L=T  K$ generated by $L$ i.i.d. random elements $\s{x}_0,\dots, \s{x}_{L-1}$ distributed according to $\mu$. The corresponding random quantum partition code $\s{Q}^{\s{P}}_{K,L}(\mu)$ is obtained from the partition $\s{C}_0,\dots,\s{C}_{K-1}$ and coefficients $\alpha_{\s{x}}$ given by 
\[\s{C}_k=\ppp{\s{x}_{T  k+j} ~:~j=0,\dots, T-1}, \quad \alpha_{\s{x_i}}=\frac{1}{\sqrt{T}}.\]
Concretely, $\s{Q}^{\s{P}}_{K,L}(\mu)$ is the code spanned by the vectors of the form
\[\ket{\s{c}_k}=\frac{1}{\sqrt{T}}\sum_{\s{x}\in \s{C}_k} \ket{\s{x}}, \quad k=0,1,\dots, K-1.\]
\end{construction}

While the construction above is based on random codes, it also extends to the case where the classical underlying code is deterministic. For example, as mentioned in Example~\ref{ex:CSS}, if the random code in Construction~\ref{const:randomParition} is replaced by a linear code, and the partition elements are chosen to be cosets induced by another linear code, then one recovers the CSS construction. From this perspective, Construction~\ref{const:randomParition} can be viewed as an i.i.d. random analog of the CSS construction: the linear/coset structure is replaced by an i.i.d. sampling procedure, making the construction available in arbitrary Hilbert spaces indexed by metric spaces even when no linear structure is present.

\begin{definition}[Packing probability, metric-basis error intensity]\label{def:parametersForApprox} Let $(X,d)$ be a discrete metric space with a probability measure $\mu$ and $\mathscr{E}=(\calE_t)_t$ be an 
error-set family on $\calH_X$. We define the following quantities 
\begin{enumerate}
    \item The $t$-\textit{packing probability} of $\mu$ is defined to be
    \[p_\mu(L,t)=\P\pp{d(\s{C}) >t}, \quad \s{C}=\set{\s{x}_0,\dots,\s{x}_{L-1}},\]
    where $\s{x}_0,\dots,\s{x}_{L-1}$ are $\mu$-i.i.d. random elements. By a slight abuse of notation, for $t=0$ we denote by $p_\mu(L,0)$ the probability that $\s{x}_0,\dots \s{x}_{L-1}$ are all distinct. 
    \item The \textit{metric-basis error intensity} of $\calE_t$ is defined as
    \[\kappa_t(\mathscr{E},X)=\sup_{\substack{E\in \calE_t\\ x\in X}}\bra{x}E^\dag E \ket{x}.\]
\end{enumerate}
    
\end{definition}

\begin{theorem}[AQEC guarantees for random partition codes]\label{th:AQECrandomPar}
Let $(X,d)$ be a discrete metric space equipped with a probability measure $\mu$, and let $\mathscr{E}=(\calE_t)_t$ be an error-set family on $\calH_X$ with $|\calE_t|=M_t$. Let $\varepsilon>0$, $K,T\in \N $ be fixed and $L=KT$. Then $\s{Q}^{\s{P}}_{K,L}(\mu)$ is a $K$-dimensional $\varepsilon$-AQEC w.r.t. $\calE_t$ with probability at least $p_t$ where: 
\begin{enumerate}
   \item If $(\calH_X,d,\mathscr{E})\in \mathscr{L}_2$: 
   
   \begin{equation}
       \label{eq:probgoodL2}
        p_t=p_{\mu}(L,t)-2M_tK  \exp\p{-\frac{L\varepsilon^4 }{2K^3M_t^2\kappa_t(\mathscr{E},X)^2}}.
   \end{equation}
   \item If $(\calH_X,d,\mathscr{E})\in \mathscr{L}_1$:

   \begin{equation}
       \label{eq:probgoodL1} p_t=p_{\mu}(L,t)-2M_t^2K  \exp\p{-\frac{L\varepsilon^4 }{8K^3M_t^4\kappa_t(\mathscr{E},X)^2}}.\end{equation}
\end{enumerate}
    
\end{theorem}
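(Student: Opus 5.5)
We will prove both parts with one scheme: a union bound over two bad events, followed by an appeal to Lemma~\ref{lem:SufCondHira}. The first bad event is $d(\s{C})\le t$, which has probability $1-p_\mu(L,t)$. On its complement the i.i.d.\ points are distinct and pairwise farther apart than $t$, so the vectors $\ket{\s{c}_k}$ are orthonormal. The random code is therefore a genuine $C$-partition code with canonical basis, and Lemma~\ref{lem:PartitionCodesProp} applies. The second bad event is that some diagonal inner product deviates from a common constant by more than the threshold allowed in Lemma~\ref{lem:SufCondHira}. The constant will be the expectation
\[
\lambda_{E,F}=\E_{\s{x}\sim\mu}\bra{\s{x}}E^\dag F\ket{\s{x}}.
\]

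For the $\mathscr{L}_2$ case we fix $k\in[K]$ and $E\in\calE_t$ and expand
\[
\bra{\s{c}_k}E^\dag E\ket{\s{c}_k}=\frac1T\sum_{x\in \s{C}_k}\bra{x}E^\dag E\ket{x}+\frac1T\sum_{x\ne x'\in\s{C}_k}\bra{x}E^\dag E\ket{x'}.
\]
On the event $d(\s{C})>t$, the $\mathscr{L}_1$ property kills the cross terms. What remains is an empirical mean of $T=L/K$ i.i.d.\ variables taking values in $[0,\kappa_t]$. Hoeffding's inequality gives a two-sided bound
\[
\P\Bigl[\bigl|\tfrac1T\textstyle\sum-\lambda_E\bigr|>s\Bigr]\le 2\exp\!\bigl(-2Ts^2/\kappa_t^2\bigr).
\]
We take $s=\varepsilon^2/(2KM_t)$. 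With $T=L/K$ the exponent becomes $-L\varepsilon^4/(2K^3M_t^2\kappa_t^2)$, which matches \eqref{eq:probgoodL2}. A union bound over the $KM_t$ pairs $(k,E)$ gives the factor $2M_tK$.

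For the $\mathscr{L}_1$ case we repeat the argument with all $M_t^2$ pairs $(E,F)$ and threshold $s=\varepsilon^2/(2KM_t^2)$. Here the summands $\bra{x}E^\dag F\ket{x}$ are complex, but Cauchy--Schwarz bounds their modulus by $\kappa_t$. We will apply Hoeffding separately to the real and imaginary parts, each lying in $[-\kappa_t,\kappa_t]$, with deviation $s/\sqrt2$ for each part, or directly bound the modulus. Either route reproduces the constant $8$ in the exponent of \eqref{eq:probgoodL1}, up to how the real and imaginary failure probabilities are folded into the prefactor $2M_t^2K$.

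The main subtlety is the conditioning. The Hoeffding bound is for the unconditioned i.i.d.\ sum of diagonal terms, whereas the off-diagonal cancellation holds only on the event $d(\s{C})>t$. We avoid conditioning altogether by intersecting the two events: with probability at least $p_\mu(L,t)-(\text{Hoeffding failure})$, both $d(\s{C})>t$ holds and every diagonal empirical mean is within $s$ of its expectation. On that intersection, Lemma~\ref{lem:SufCondHira} yields that the code is $\varepsilon$-AQEC, and $\dim=K$ follows from orthonormality.
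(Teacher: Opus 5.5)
Your overall scheme is the paper's. You intersect $\{d(\s{C})>t\}$ with the concentration events and use $\mathscr{L}_1$ on that event to remove the cross terms. You take $\lambda_{E,F}=\E_{\s{x}\sim\mu}\bra{\s{x}}E^\dag F\ket{\s{x}}$, apply Lemma~\ref{lem:Hoeffding} per block and error pair, use a union bound, and finish with Lemma~\ref{lem:SufCondHira}. The $\mathscr{L}_2$ part is correct and reproduces \eqref{eq:probgoodL2} exactly.

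The step that fails is your claim that splitting into real and imaginary parts ``reproduces the constant $8$'' in \eqref{eq:probgoodL1}. You are right that $\bra{x}E^\dag F\ket{x}$ can be complex; for example $\bra{\ux}Z_1\ket{\ux}=\omega^{x_1}$ for HW errors with $q\ge 3$. The paper's proof applies Lemma~\ref{lem:Hoeffding} with range $[-\kappa_t,\kappa_t]$ as if these entries were real, and that is where its $8$ comes from. Under your fix, the real part and the imaginary part each range over an interval of length $2\kappa_t$. So at deviation $s/\sqrt2$ with $s=\varepsilon^2/(2KM_t^2)$, Hoeffding gives
\[
2\exp\Bigl(-\frac{2T(s/\sqrt2)^2}{(2\kappa_t)^2}\Bigr)=2\exp\Bigl(-\frac{L\varepsilon^4}{16K^3M_t^4\kappa_t^2}\Bigr).
\]
The exponent is halved, and no other split of $s$ between the two parts does better. ``Directly bounding the modulus'' is not available from Lemma~\ref{lem:Hoeffding}, which concerns real variables.

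The prefactor can still be kept at $2M_t^2K$, but only by an argument you did not give. Since $\bra{x}F^\dag E\ket{x}=\overline{\bra{x}E^\dag F\ket{x}}$ and $\lambda_{F,E}=\overline{\lambda_{E,F}}$, the $(F,E)$ event coincides with the $(E,F)$ event. The $M_t$ diagonal pairs are real, so each block needs only $M_t+2\binom{M_t}{2}=M_t^2$ real Hoeffding bounds.

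Completed this way, your argument proves \eqref{eq:probgoodL1} with $16$ in place of $8$. The stated $8$ is justified when the diagonal entries are real, which is the case the paper's one-line Hoeffding step implicitly covers; in general it would need a sharper concentration argument than coordinatewise Hoeffding. This changes only a constant in the exponent, so the asymptotic condition \eqref{eq:AsympAQECcondL1} is unaffected.
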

\begin{proof}[Proof sketch] We give the complete proof in Appendix~\ref{app:RandomPartit} and sketch the main idea here. By Lemma~\ref{lem:SufCondHira}, as long as the underlying classical random code satisfies $d(\s{C})>t$, it is sufficient to show that there exist scalars $\lambda_{E,F}$ such that the $\lambda_{E,F}$ and $\bra{\s{c}_i }E^\dag F \ket{\s{c}_i}$ are sufficiently close (where for the $\mathscr{L}_2$ case only the case $E=F$ needs to be checked). We show that under the distance assumption, these inner products reduce to averages of i.i.d. random variables of the form $\bra{\s{x}}E^\dag F \ket{\s{x}}$, which naturally concentrate around the expectation $\E_{\s{x}\sim\mu}\pp{\bra{\s{x}}E^\dag F \ket{\s{x}}}$, which we define to be $\lambda_{E,F}$. In the absence of further assumptions on the distribution $\mu$, we control the concentration around these expectations using Hoeffding's inequality, which gives the error exponents in \eqref{eq:probgoodL2} and \eqref{eq:probgoodL1}.
\end{proof}

\begin{remark}[Variance-sensitive refinements]\label{rem:varianceRefine}The proof of Theorem~\ref{th:AQECrandomPar} uses only minimal assumptions on the sampling measure $\mu$. In particular, the concentration step is based on Hoeffding's inequality and therefore depends only on a uniform bound on the random variables $\bra{\s{x}}E^\dag F\ket{\s{x}}$, where $\s{x}\sim\mu$. This makes the result broadly applicable, but also somewhat conservative. If additional information on $\mu$ is available, and in particular if one can control the $\mu$-variance of the variables $\bra{\s{x}}E^\dag F\ket{\s{x}}$, then the Hoeffding step can potentially be replaced by sharper variance-sensitive concentration bounds. Such refinements may substantially improve the exponents in \eqref{eq:probgoodL2} and \eqref{eq:probgoodL1}, and consequently enlarge the range of parameters for which the random partition construction succeeds with high probability. In particular, they may lead to better rate--error-correction tradeoffs. \hfill$\triangleleft$ \end{remark}

\begin{remark}[Role of the alignment assumptions]\label{rem:alignmentNeeded}
The assumptions $(\calH_X,d,\mathscr{E})\in\mathscr{L}_1$ or $(\calH_X,d,\mathscr{E})\in\mathscr{L}_2$ are essential for the concentration argument in Theorem~\ref{th:AQECrandomPar}. Indeed, the minimum-distance assumption on the underlying classical code is not sufficient by itself. Without the $\mathscr{L}_1$ or $\mathscr{L}_2$ structure, the inner products $\bra{\s{c}_i}E^\dag F\ket{\s{c}_j}$ may contain many nonzero cross terms of the form $\bra{\s{x}}E^\dag F\ket{\s{y}}$, with $\s{x}\in C_i$ and $\s{y}\in C_j$. Since each codeword is normalized by a factor $1/\sqrt{T}$, these terms appear with an overall $1/T$ normalization, but there are $T^2$ possible cross terms. Thus, their total contribution may scale as $\Theta(T)=\Theta(L/K)$, with typical deviations of order $\Theta(\sqrt{T})=\Theta(\sqrt{L/K})$. Consequently, these quantities do not generally concentrate around the desired scalars as $K$ and $L$ grow. The alignment assumptions are precisely the conditions that prevent this uncontrolled accumulation of cross terms and allow the proof to reduce the relevant quantities to averages of i.i.d. diagonal terms. \rightline{$\triangleleft$} \end{remark}

\subsubsection*{Inherited typicality of random partition codes} 

The random partition construction has an additional feature beyond its AQEC guarantees: it can preserve structural properties of the underlying random classical labels. This is useful because, in many settings, one wants the basis states supporting the code to satisfy some additional regularity or physical constraints that extend beyond the code distance. The definitions below formalize this idea. A classical property is simply a subset of admissible labels in the indexing space $X$, and the corresponding quantum notion asks that the code be supported, exactly or approximately, on basis vectors indexed by the labels with that property. The main point is that if a property holds with high probability under the sampling distribution used to generate the random partition, then the resulting random partition code will approximately satisfy the corresponding quantum property with high probability.

\begin{definition}[Classical properties]
 Let $X$ be a discrete space. A (classical) property in $X$ is a subset of elements $\mathscr{P}\subseteq X$. If $\mu$ is a measure on $X$, we denote the $\mu$-measure of $\mathscr{P}$ by $p_{\mu}(\mathscr{P})$.
\end{definition}

\begin{definition}[Quantum properties]\label{def: Quantum properties}
    Let $\calH_X$ be a Hilbert space indexed by a space $X$ and let $\mathscr{P}\subseteq X$ be a property.  A quantum code $Q\subset \calH_X$ is said to satisfy $\mathscr{P}$ if 
\[Q\subseteq \Span\ppp{\ket{x} : x\in \mathscr{P}} :=  \calH_{\mathscr{P}}.\]
    For $\varepsilon>0$, a code $Q$ is said to $\varepsilon$-satisfy $\mathscr{P}$ if 
    \begin{equation}
        \inf_{\ket{c}\in Q }\bra{c} \Pi_{\mathscr{P}}\ket{c}\geq1- \varepsilon,\label{eq:propCondition}
    \end{equation}
  where $\Pi_{\mathscr{P}}$ is the orthogonal projection on $\calH_{\mathscr{P}}$ and minimization runs over normalized states. Equivalently, every normalized codeword has overlap at least $1-\varepsilon$ with the subspace $\calH_{\mathscr{P}}$. In operational terms, when measuring a codeword using the two-outcome measurement $\ppp{\Pi_{\mathscr{P}},I-\Pi_{\mathscr{P}}}$, the outcome corresponding to $\mathscr{P}$ occurs with probability at least $1-\varepsilon$. A sequence of codes $(Q_N)_N$ is said to asymptotically approximately satisfy the properties $(\mathscr{P}_N)_N$ if there exists $\varepsilon_N\to 0$ such that $Q_N$ $\varepsilon_N$-satisfies $\mathscr{P}_N$.
\end{definition}
\begin{example}
    A basic example to keep in mind is the balancedness of $q$-ary strings. Let $X=[q]^N$, and for $\ux\in[q]^N$ and $a\in[q]$, let $N_a(\ux)$ denote the number of coordinates of $\ux$ equal to $a$. For fixed $\eta>0$, one may consider the property
\[
\mathscr{P}_N^{\s{bal}} := \ppp{\ux\in[q]^N~:~\abs{N_a(\ux)-N/q}\leq N^{1/2+\eta}\text{ for all }a\in[q]}.
\]
Thus, a quantum code satisfies $\mathscr{P}_N^{\s{bal}}$ if it is spanned by computational basis states whose symbol frequencies are all close to uniform, and it $\varepsilon$-satisfies $\mathscr{P}_N^{\s{bal}}$ if every codeword has at least $1-\varepsilon$-fraction of its mass on such balanced basis states.
This is operationally meaningful because it prevents the code from concentrating on highly biased basis configurations, and is closely related to regularity conditions such as constant-composition or constant-excitation constraints that often play a useful role in code design. We will return to this example in Section~\ref{sec:QuditsConst}, where the random qudit partition codes are generated from the uniform distribution on $[q]^N$.
\hfill$\triangleleft$ \end{example}
Other examples involving different ambient spaces appear in Sec.~\ref{sec:Exampels} below. In particular, for constant-excitation Fock-state codes and permutation-invariant codes, an important property is bounded per-mode occupancy, which was already used in \cite{elimelech2026asymptotically} and will be discussed further in Section~\ref{sec:CEFCex}. The following theorem gives a general mechanism for transferring such high-probability classical properties into approximate quantum properties of the resulting random partition code.

\begin{theorem}[Typicality of random partition codes]\label{th:TypicallityProperties}
Let $\calH_X$ be a Hilbert space indexed by a discrete space $(X,d)$, equipped with a probability measure $\mu$ and let $0<\xi<\frac{1}{2}$ be a constant. For a property $\mathscr{P}\subseteq X$ and fixed $K,T\in \N $ such that $L=KT$, the random partition code $\s{Q}^{\s{P}}_{K,L}(\mu)$  $\varepsilon$-satisfies $\mathscr{P}$ with probability at least $p_\mu(L,0)-2Ke^{-2T^{1-2\xi}}$, where 
\[\varepsilon=1-p_{\mu}(\mathscr{P}) +T^{-\xi}.\]
\end{theorem}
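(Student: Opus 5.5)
On the event that the sampled labels $\s{x}_0,\dots,\s{x}_{L-1}$ are pairwise distinct, the canonical basis vectors $\ket{\s{c}_k}=T^{-1/2}\sum_{\s{x}\in\s{C}_k}\ket{\s{x}}$ have pairwise disjoint supports. This event has probability $p_\mu(L,0)$ by Definition~\ref{def:parametersForApprox}. Since $\Pi_{\mathscr{P}}$ is diagonal in the basis $\ppp{\ket{x}}_{x\in X}$, disjointness gives $\bra{\s{c}_i}\Pi_{\mathscr{P}}\ket{\s{c}_j}=0$ for $i\neq j$. For a general normalized codeword $\ket{c}=\sum_k \beta_k\ket{\s{c}_k}$, with $\sum_k|\beta_k|^2=1$, we therefore get
\[\bra{c}\Pi_{\mathscr{P}}\ket{c}=\sum_{k}|\beta_k|^2\bra{\s{c}_k}\Pi_{\mathscr{P}}\ket{\s{c}_k}\geq \min_k \bra{\s{c}_k}\Pi_{\mathscr{P}}\ket{\s{c}_k}.\]
This reduces the infimum in \eqref{eq:propCondition} to the $K$ basis vectors.

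For each block we have $\bra{\s{c}_k}\Pi_{\mathscr{P}}\ket{\s{c}_k}=\frac1T\sum_{j=0}^{T-1}\Ind\pp{\s{x}_{Tk+j}\in\mathscr{P}}$, using distinctness within the block, since each label then contributes exactly $1/T$. This is an empirical mean of $T$ i.i.d.\ Bernoulli$(p_\mu(\mathscr{P}))$ variables. Hoeffding's inequality with deviation $T^{-\xi}$ gives
\[\P\pp{\tfrac1T\textstyle\sum_j \Ind[\s{x}_{Tk+j}\in\mathscr{P}]< p_\mu(\mathscr{P})-T^{-\xi}}\leq e^{-2T\cdot T^{-2\xi}}=e^{-2T^{1-2\xi}}.\]
I would quote the two-sided version, or simply be generous, to match the factor $2$ in the statement. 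A union bound over the $K$ blocks then bounds the probability of any failure by $2Ke^{-2T^{1-2\xi}}$.

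To combine the pieces, note that the Bernoulli averages are defined from the raw samples regardless of distinctness. Hence the event "all labels distinct and every block average is at least $p_\mu(\mathscr{P})-T^{-\xi}$" has probability at least $p_\mu(L,0)-2Ke^{-2T^{1-2\xi}}$. On this event every normalized codeword satisfies $\bra{c}\Pi_{\mathscr{P}}\ket{c}\geq 1-\varepsilon$ with $\varepsilon=1-p_\mu(\mathscr{P})+T^{-\xi}$.

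The only delicate point is the reduction to basis vectors. It needs both the off-diagonal vanishing and the exact block formula, and both follow from distinctness of the labels. Without distinctness, repeated labels would change the normalization of $\ket{\s{c}_k}$ and create cross terms. This is why the probability bound carries $p_\mu(L,0)$ rather than $1$. The constraint $\xi<1/2$ is used only to make the Hoeffding tail decay.
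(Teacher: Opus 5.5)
Your proposal is correct and follows the paper's proof essentially step for step. You restrict to the event that all labels are distinct (probability $p_\mu(L,0)$), use that $\Pi_{\mathscr{P}}$ is diagonal to reduce the infimum over codewords to the minimum block fraction, and then apply Hoeffding to each block with a union bound over the $K$ blocks. Your remark that the block averages should be defined from the raw samples, so that the Hoeffding events make sense independently of distinctness, slightly tidies the paper's phrasing, which defines the block events via $|\s{C}_i\cap\mathscr{P}|$.
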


The proof of Theorem~\ref{th:TypicallityProperties} (given in Appendix~\ref{app:Typicallityproof}) relies on a concentration argument. Given that all elements of the underlying random code $\s{C}$ are distinct, the projection coefficients $\bra{c} \Pi_{\mathscr{P}}\ket{c}$ of codewords reduce to convex combinations of those of the basis elements, which implies that it suffices to verify \eqref{eq:propCondition} for the basis of the code. For a basis state $\ket{\s{c}_i}$, we show that $\bra{\s{c}_i} \Pi_{\mathscr{P}}\ket{\s{c}_i}$ is exactly the fraction of elements in $\s{C}_i$ that satisfy property $\mathscr{P}$, and it concentrates around $p_{\mu}(\mathscr{P})$.

\section{Asymptotically good families from partition codes}\label{sec:Exampels}
In this section, we apply the general partition-code framework to construct asymptotically good families of quantum codes in several settings, including $q$-ary tensor-product systems, Majorana fermionic systems, multi-mode constant-excitation Fock spaces, and permutation-invariant subspaces, under a variety of noise models. We focus on the asymptotic regime in which both the rate and the correctable distance scale linearly with the natural system-size parameter, such as the number of qudits, modes, or Majorana degrees of freedom.

For each model, we derive existential exact QEC guarantees from Theorem~\ref{th:exactQECpatition}, and complement them with AQEC guarantees for the explicit random partition construction. As expected, the random constructions generally yield weaker rate/distance tradeoffs than the existential exact QEC results since they rely on a robust averaging mechanism and relatively coarse concentration bounds. Their advantage, however, is that they are efficient and explicit, whereas the exact constructions obtained through the Tverberg-type existence argument are computationally infeasible in general, as discussed in Remark~\ref{rem:Tver}.

Thus, the goal of this section is not to approach the best possible rate/distance tradeoff in each individual model, but rather to demonstrate the generality and robustness of quantum partition codes across a broad range of Hilbert-space geometries and noise mechanisms. More refined, model-specific analyses can potentially improve both the exact and approximate QEC guarantees. Nevertheless, the general framework already yields several new consequences, including asymptotically good 
codes for quantum deletions, non-stabilizer Majorana fermionic codes, codes for Rydberg atom chains, and constant-excitation Fock-state codes protecting against number-shift and phase-rotation noise. In certain cases, it also provides evidence of separation: for qudit codes, correcting linearly many amplitude damping errors is easier than correcting general bounded-weight errors.

\subsection{Asymptotic analysis}\label{sec: Asymptotic analysis}
We now describe the asymptotic arguments that enter all of our examples listed below. In each case, we consider a sequence of Hilbert spaces $\calH_{X_N}$ indexed by discrete metric spaces $(X_N,d)$, equipped with probability measures $\mu_N$ which are typically uniform. The noise model is given by an indexed error-set family $\mathscr{E}=(\calE_t)_t$ satisfying either $(\calH_{X_N},d,\mathscr{E})\in\mathscr{L}_1$ or $(\calH_{X_N},d,\mathscr{E})\in\mathscr{L}_2$. We study the linear-distance regime $t=\delta N$, where $\delta>0$ is fixed and $N$ tends to infinity. 

We consider quantum partition codes of dimension $K$ based on underlying classical codes of size $L$.  For each model, we use existence results for classical codes listed in Appendix~\ref{app:classicalCodes}. 
Letting $\delta$ be a value of the relative distance, let $R(\delta)$ denote the corresponding classical rate guarantee, defined formally in the Appendix.  In the estimates used below, there are two closely related random-code statements. The first is an {\em expurgated version}: starting with an i.i.d. random classical code and removing ``bad codewords'', with high probability we obtain a classical code of distance at least $\delta N$ and rate $R(\delta)$. The second is a pure i.i.d. version, with no expurgation, which gives the same distance guarantee at half this rate. This distinction is important for what follows. The Tverberg-based exact-QEC construction of Theorem~\ref{th:exactQECpatition} only requires the existence of an underlying classical code with a specified distance, and therefore we use the expurgated rate $R(\delta)$. In contrast, the random-AQEC proof of Theorem~\ref{th:AQECrandomPar} relies on concentration over independent samples inside each partition block; the expurgation step destroys this i.i.d. structure and introduces dependencies. Therefore, the AQEC random-partition analysis uses the pure i.i.d. rate $R(\delta)/2$. Thus, for the exact-QEC analysis, the classical distance constraint is 
\[ \frac{\log L}{N}\leq R(\delta)+o(1), \]
whereas for the random-AQEC analysis, it is
\[ \frac{\log L}{N}\leq \frac{R(\delta)}{2}+o(1). \]

The partition-code construction takes $L=KT$, and throughout this section, we parametrize the size of the underlying classical code
by $L=K^{\beta}$ for some $\beta\geq1$. Thus, if $\calK :=  \frac{\log K}{N},$ then the classical distance constraint becomes
\[ \beta\calK\leq \Big\{R(\delta),\frac{R(\delta)}{2}\Big\}+o(1) \]
for the exact and random QEC constructions, respectively.

The second constraint comes from the QEC condition itself. For exact QEC, it is obtained from Theorem~\ref{th:exactQECpatition}; for AQEC, it is obtained from Theorem~\ref{th:AQECrandomPar}. In all models considered below, the metric-basis error intensity from Definition~\ref{def:parametersForApprox} satisfies $\kappa_t(\mathscr{E},X_N)=1$. Hence, the relevant noise parameter is the normalized error-set exponent
\[
M_{\delta} :=  \frac{\log|\calE_{\delta N}|}{N},
\]
and its asymptotic limit. 
Thus, each example reduces to two ingredients: the classical Gilbert--Varshamov rate $R(\delta)$ and the error-set growth exponent $M_{\delta}$. Combining the classical distance constraint with the exact or approximate QEC constraint gives an upper bound on the achievable quantum rate $\calK=\log K/N$. This is the form in which we state the results below. To match standard conventions, logarithms are taken in base $q$ for $q$-ary qudit codes, and in base $e$ for the remaining models; this convention is used consistently for $R(\delta)$, $\calK$, and $M_{\delta}$.

\subsubsection*{Existence of exact QEC partition codes} 
In this subsection and the one that follows, we formulate conditions for asymptotically good codes that will be repeatedly used for the concrete code families and noise processes considered in the examples below. Suppose that $(\calH_{X_N},d,\mathscr{E})\in\mathscr{L}_i$, with $i\in\ppp{1,2}$. The existence result of Theorem~\ref{th:exactQECpatition} requires the underlying classical code to have size at least
\[ L\geq (K-1)\p{|\calE_t|^{3-i}+1}+1. \]
In the linear-distance regime $t=\delta N$, with $L=K^\beta$ and $\calK=\log K/N$, this condition becomes
\[ \beta\calK\geq \calK+(3-i)M_\delta+o(1).\]
On the other hand, the classical distance requirement gives
\[ \beta\calK=\frac{\log L}{N}\leq R(\delta)+o(1). \]
These two constraints are simultaneously satisfied whenever $ \calK+(3-i)M_\delta<R(\delta),$
up to vanishing $o(1)$ terms. Equivalently, the exact partition code construction gives a positive asymptotic quantum rate for every
\begin{equation}
     (\calH_X,d,\mathscr{E})\in \mathscr{L}_i,\quad  \text{and}\quad \calK\leq R(\delta)-(3-i)M_\delta.\label{eq:AsympRateExact}
\end{equation} Thus, at alignment level $\mathscr{L}_2$ one obtains the exact QEC rate condition $\calK<R(\delta)-M_\delta$, while at alignment level $\mathscr{L}_1$ one obtains $\calK<R(\delta)-2M_\delta$. 

\subsubsection*{AQEC guarantees for random partition codes } For the random AQEC construction, the argument is similar but the interpretation is different. We are no longer using Theorem~\ref{th:exactQECpatition} to prove the existence of an exact-QEC partition through a Tverberg-type argument. Instead, we analyze the explicit random partition code $\s{Q}^{\s{P}}_{K,L}(\mu)$ using Theorem~\ref{th:AQECrandomPar}. The first requirement is that the underlying i.i.d. classical code has distance at least $\delta N$. For the pure i.i.d. ensemble, without expurgation, the estimates of Appendix~\ref{app:classicalCodes} give $p_{\mu}(L,\delta N)=1-o(1)$ provided that
\[ \frac{\log L}{N}=\beta\calK\leq \frac{R(\delta)}{2}+o(1). \]
The second requirement is that the concentration error term in Theorem~\ref{th:AQECrandomPar} vanishes. Since in all examples below $\kappa_t(\mathscr{E},X_N)=1$ and $|\calE_{\delta N}|=\exp\p{NM_\delta+o(N)}$, this imposes a second constraint relating $\beta\calK$, $\calK$, and $M_\delta$. At level $\mathscr{L}_2$, the concentration term vanishes, even for an exponentially decaying AQEC parameter $\varepsilon_N$, whenever
\[(\beta-3)\calK\geq 2M_\delta+o(1). \]
Combining this with $\beta\calK\leq R(\delta)/2+o(1)$ gives the AQEC rate condition
\[3\calK+2M_\delta\leq \frac{R(\delta)}{2}. \]
Equivalently, for every arbitrarily small $\xi>0$,  with probability at least $1-e^{-\exp\p{c\,  \xi}}$ (for some universal constant $c$), 
 the random partition construction gives $\varepsilon_N$-AQEC codes with $\varepsilon_N\to0$ exponentially fast whenever
\begin{equation}
    (\calH_X,d,\mathscr{E})\in \mathscr{L}_2,\quad  \text{and}\quad \calK\leq \frac{1}{3}\p{\frac{R(\delta)}{2}-2M_\delta}-\xi, \quad \xi>0. \label{eq:AsympAQECcondL2}
\end{equation}
At level $\mathscr{L}_1$, the same calculation uses a stronger concentration term in \eqref{eq:probgoodL1}. In this case, the required condition is
\[ (\beta-3)\calK\geq 4M_\delta+o(1), \]
and therefore the combined rate condition becomes
\[ 3\calK+4M_\delta\leq \frac{R(\delta)}{2}. \]
Thus, for every arbitrarily small $\xi>0$, with probability $1-e^{-\exp\p{\Theta (\xi)}}$, the random partition construction gives $\varepsilon_N$-AQEC codes with exponentially decaying $\varepsilon_N$ whenever
\begin{equation}
    (\calH_X,d,\mathscr{E})\in \mathscr{L}_1,\quad  \text{and}\quad \calK\leq \frac{1}{3}\p{\frac{R(\delta)}{2}-4M_\delta}-\xi,  \quad \xi>0. \label{eq:AsympAQECcondL1}
\end{equation}

Throughout the asymptotic statements below, we suppress vanishing $o(1)$ terms in the rate constraints. We also omit the arbitrarily small slack parameter $\xi>0$: an inequality of the form $\calK<R$ means that $\calK$ is chosen strictly below $R$ asymptotically, so that the corresponding $o(1)$ terms and $\xi$ can be absorbed. Similarly, in the random coding statements, we do not explicitly repeat the quantitative success probability, such as bounds of the form $1-e^{-\exp(\Theta(\xi))}$ arising from the concentration estimates. Instead, we simply say that the corresponding random partition code satisfies the claimed AQEC guarantee \textit{with high probability.}

\subsection{Qudit codes}\label{sec:QuditsConst}
We begin with $q$-ary quantum codes on $\calH_q^{\otimes N}$, considering the noise models introduced in Section~\ref{sec:qudits}. In all cases, the indexing space is $X=[q]^N$, while the relevant metric depends on the noise model under consideration. Since exact and approximate rates for bounded-weight errors were already studied in \cite{movassagh2024constructing}, and further discussed in Remark~\ref{rem:QuariLimited}, we mainly focus below on deletion errors, amplitude damping noise, and the special case of codes for Rydberg atom chains. For completeness, we first recall the bounded-weight case. Let $d$ be the scaled Hamming metric, as in Example~\ref{ex:HWsecondlevel}, and let $\mathscr{E}=(\calE_t)_t$ be the family of support-limited HW errors defined in \eqref{eq:HWoperatorsT}. By Lemma~\ref{lem:random-qary-hamming}, i.i.d. random classical codes over $[q]^N$ attain the classical Gilbert--Varshamov bound with high probability. More generally, any family of classical codes attaining this bound can be used as the underlying code in Theorem~\ref{th:exactQECpatition}. As shown in \cite{movassagh2024constructing} and explained in Remark~\ref{rem:QuariLimited}, such codes give exact QEC partition codes correcting $\delta N$ bounded-weight errors with asymptotic rate
\begin{equation}
     \calK\leq 1-2H_q(2\delta). \label{eq:PartitionGeneralErrors}
\end{equation}
For $q=2$, this rate matches the tradeoff attained by the CSS construction of \cite{calderbank1996good}. Thus, in the bounded-weight setting, even structureless GV-type classical codes give partition codes with the same asymptotic rate as the highly structured CSS codes, which correspond to the nested linear codes coset
construction described in Example~\ref{ex:CSS}. 

For the random AQEC construction, note that
\[|\calE_{\delta N}|=\sum_{i=0}^{\delta N}\binom{N}{i}(q^2-1)^i, \]
and therefore, with logarithms taken in base $q$,
\[
M_\delta=\frac{\log_q|\calE_{\delta N}|}{N}=2H_{q^2}(\delta)+o(1),
\]
where $H_{q^2}(\delta)$ is the $q^2$-ary entropy function defined in \eqref{eq:qaryEntropy}. 
Since Example~\ref{ex:HWsecondlevel} shows that $(X,d,\mathscr{E})\in\mathscr{L}_1$, the general AQEC condition \eqref{eq:AsympAQECcondL1}, together with the GV-rate estimate from Lemma~\ref{lem:random-qary-hamming}, gives random partition codes with asymptotic AQEC rate
\[
\calK<\frac{1}{3}\p{\frac{R^{\s{H}}(\delta)}{2}-8H_{q^2}(\delta)}=\frac{1}{3}\p{\frac{1-H_q(2\delta)}{2}-8H_{q^2}(\delta)},
\]
whenever the right-hand side is positive. By continuity, this condition is nonvacuous for $\delta>0$ in some
neighborhood of zero. As in the exact-QEC setting (see Remark~\ref{rem:QuariLimited}), this AQEC rate is not expected to be optimal: a more delicate analysis of the bounded-weight model could improve the constants and the resulting tradeoff, but we do not pursue this direction here. 

\subsubsection*{Balancedness and other typical properties}
The random partition codes considered in this qudit subsection are generated from i.i.d. samples from the uniform measure on $[q]^N$. Hence, beyond their AQEC guarantees, they also inherit any property that holds with high probability for a uniformly random word. This follows directly from Theorem~\ref{th:TypicallityProperties}: if $\mathscr{P}_N\subseteq[q]^N$ satisfies $p_{\mu_N}(\mathscr{P}_N)=1-o(1)$, then the corresponding random partition code asymptotically approximately satisfies $\mathscr{P}_N$ with high probability, since the projection of each logical codeword onto $\calH_{\mathscr{P}_N}$ is controlled by the fraction of sampled labels satisfying $\mathscr{P}_N$.
A natural example is balancedness. Fix any $\varepsilon>0$. For $a\in[q]$ and $\ux\in[q]^N$, let $N_a(\ux)=|\ppp{i\in[N]:x_i=a}|$, and define
\[\mathscr{P}_N^{\s{bal}} := \ppp{\ux\in[q]^N~:~\abs{N_a(\ux)-N/q}\leq N^{1/2+\varepsilon}\text{ for all }a\in[q]}.\]
A uniformly random word in $[q]^N$ satisfies $\mathscr{P}_N^{\s{bal}}$ with probability $1-o(1)$. Therefore, by Theorem~\ref{th:TypicallityProperties}, the random AQEC partition codes constructed below asymptotically approximately satisfy $\mathscr{P}_N^{\s{bal}}$ with high probability. Operationally, measuring whether a quantum codeword is supported on balanced basis vectors 
succeeds with probability $1-o(1)$ for every normalized codeword.

This type of inherited typicality is useful because it allows the random construction to retain structural features of the uniform ensemble extending beyond distance and AQEC properties. Other high-probability properties can be inherited in the same way. For example, one may require local-pattern typicality, namely that every fixed word $\ua\in[q]^\ell$ appears with frequency $q^{-\ell}+o(1)$ among length-$\ell$ windows, for any fixed $\ell$. One may also require the absence of anomalously long constant runs, e.g., that the longest run of any fixed symbol is $O(\log N)$. These properties are not consequences of balancedness alone, but they are typical for uniform words and therefore are inherited by the random partition code in the same approximate sense.

\subsubsection{Deletion errors}\label{sec:ExDelEr}
We next consider the deletion noise model introduced in Section~\ref{sec:DelErrors}. Recall that the $t$-deletion error set is
\[ \calE_{t}^{\s{Del}}=\ppp{D_{I,\ux}~:~I\in\binom{[N]}{t},\ \ux\in[q]^t},\quad D_{I,\ux}=\bra{\ux}_I\otimes I_{[N]\setminus I}. \]
For a fixed subset $I\subseteq[N]$, the operators $\ppp{D_{I,\ux}}_{\ux\in[q]^t}$ are precisely the Kraus operators of the channel that traces out the coordinates in $I$. Hence, by Observation~\ref{obs:Convexity}, any random $t$-deletion channel, which deletes a subset $I\in\binom{[N]}{t}$ according to an arbitrary probability distribution, is $\calE_t^{\s{Del}}$-controlled. The relevant metric on $X=[q]^N$ is the deletion metric $d_{\s{Del}}$ defined in \eqref{eq:dDel}, and by Example~\ref{ex:DelL1} we have $(\calH_X,d_{\s{Del}},\mathscr{E}^{\s{Del}})\in\mathscr{L}_1$. For $t=\delta N$, the size of the deletion error set is
\[|\calE_{\delta N}^{\s{Del}}|=\binom{N}{\delta N}q^{\delta N}, \]
and therefore, with logarithms in base $q$,
\begin{equation}
    M^{\s{Del}}(\delta) := \frac{\log_q|\calE_{\delta N}^{\s{Del}}|}{N}=\delta+\frac{H_2(\delta)}{\log_2 q}+o(1). \label{eq:Mdel}
\end{equation}
The corresponding classical deletion rate from Lemma~\ref{lem:iidCodesDel} is
\begin{equation}
    R^{\s{Del}}(\delta) := \max\p{\frac{\max\p{1-\delta-\gamma_q,0}^2}{\ln q},1+\delta-2H_q  (\delta)},\label{eq:Rdel}
\end{equation}
where $(\gamma_q)_q$ are the Chv\'atal--Sankoff constants; see Definition~\ref{def:Chvvatal-SankoffConst}. Combining this with the exact-QEC asymptotic condition \eqref{eq:AsympRateExact} at level $\mathscr{L}_1$ gives exact-QEC partition codes correcting $\delta N$ deletions with asymptotic rate 
   \[
   \calK<R^{\s{Del}}(\delta)-2M^{\s{Del}}(\delta).
   \]
Similarly, applying the random-AQEC condition \eqref{eq:AsympAQECcondL1} gives random partition codes which are $\varepsilon_N$-AQEC for $\calE_{\delta N}^{\s{Del}}$, with $\varepsilon_N\to0$ exponentially fast and with high probability, whenever
\[
\calK<\frac{1}{3}\p{\frac{R^{\s{Del}}(\delta)}{2}-4M^{\s{Del}}(\delta)}=\frac{1}{3}\p{\frac{1}{2}\max\p{\frac{\max\p{1-\delta-\gamma_q,0}^2}{\ln q},1+\delta-2H_q  (\delta)}-4\delta-\frac{4H_2(\delta)}{\log_2 q}}.
\]
Since $R^{\s{Del}}(0)=1$ and $M^{\s{Del}}(0)=0$, both displayed rate bounds are positive for all sufficiently small $\delta>0$. In particular, the random partition construction gives asymptotically good quantum deletion codes with high probability. Figure~\ref{fig:DelRatesQary} illustrates the exact achievable rate bound for $q=2,3,5$. In each case, the curve remains positive on a nontrivial interval of deletion fractions, and the marked point on the horizontal axis indicates the first zero of the corresponding exact rate function.

\begin{figure}[t]
\centering
\begin{overpic}[width=.72\linewidth]{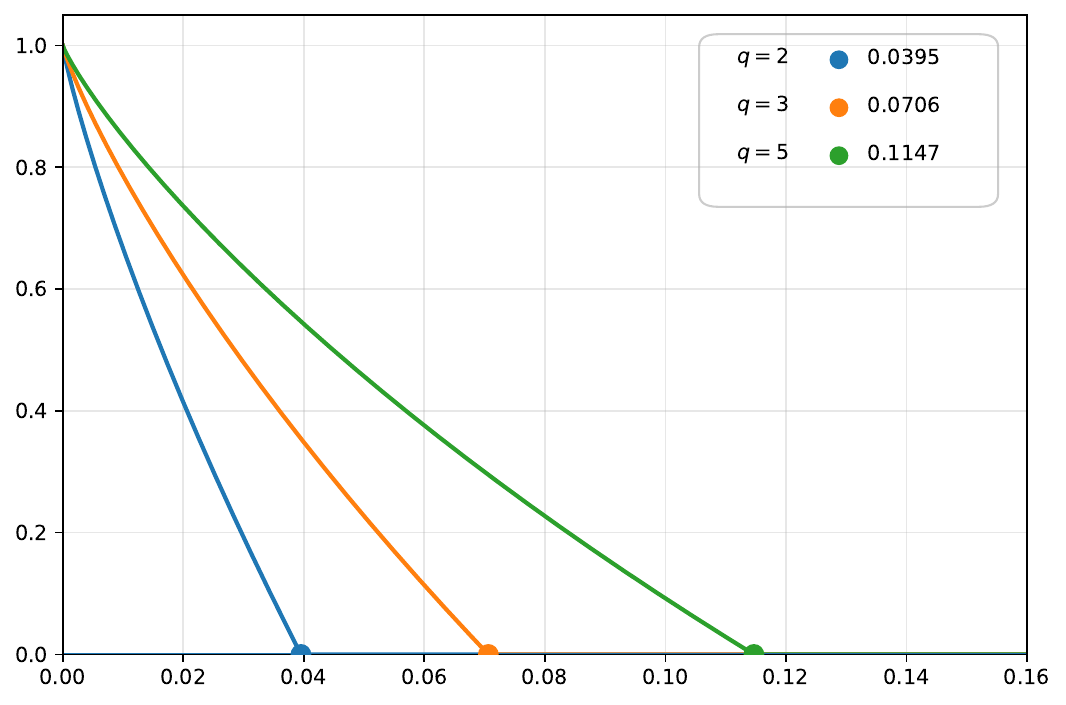}
\put(10,67){\small Exact achievable rate bound for $q$-ary quantum deletion codes}
\put(38,-2){\small{ deletion fraction $\delta$}}
\put(-3,26){\rotatebox{90}{\small{possible rate $\calK$}}}
\end{overpic}
\caption{Exact achievable rate bound for $q$-ary quantum deletion codes. The colored dots marked on the horizontal axis indicate the first positive zero of the corresponding exact rate function for $q=2,3,5$.}
\label{fig:DelRatesQary}
\end{figure}

\subsubsection{Amplitude damping errors}
We now consider qudit amplitude damping errors. Recall from Example~\ref{ex:AD} that the amplitude damping error set $\calE_t^{\s{AD}}$ given in \eqref{eq:ADerrorSet} controls all truncated amplitude damping channels, in the sense of Definition~\ref{def:AQECC}. When these operators are restricted to the $q$-ary space $\calH_q^{\otimes N}$, no mode can lose more than $q-1$ excitations; equivalently, every operator corresponding to the loss of at least $q$ excitations in one of the modes vanishes on $\calH_q^{\otimes N}$. Thus, the relevant error set for $q$-ary codes is
\[ 
\calE_{t,q}^{\s{AD}} := \ppp{\tilde{A}_{\ur}\in\calE_t^{\s{AD}}~:~\norm{\ur}_{\infty}\leq q-1}. 
\]
In particular, the nonzero error operators in $\calE_{t,q}^{\s{AD}}$ are indexed by loss vectors
\[ \Big\{\ur\in[q]^N~:~\sum_{i=1}^N r_i\leq t\Big\},\]
which form the $\ell_1$ ball $B_{q,N}^{\ell_1}(t/2,\uzero)$ with respect to the scaled $\ell_1$ metric of \eqref{eq:ell1Dist}. As shown in Example~\ref{ex:QuditAD}, the qudit amplitude damping model satisfies the second-level alignment condition, namely $(\calH_X,d_1,\mathscr{E}^{\s{AD}})\in\mathscr{L}_2$, where $X=[q]^N$, and $\mathscr{E}^{\s{AD}}=(\calE_{t,q}^{\s{AD}})_t$.

For $t=\delta N$, the exponential growth rate of this ball is computed in Lemma~\ref{lem:ell1BallZero}. With logarithms in base $q$, we write
\[ M^{\s{AD}}(\delta) := \frac{\log_q|\calE_{\delta N,q}^{\s{AD}}|}{N}=B_q(\delta/2)+o(1), \]
where $B_q( )$ is the $\ell_1$ ball exponent derived via the associated entropy maximization problem in Lemma~\ref{lem:ell1BallZero}; see Eq.~\eqref{eq:Bqfunction} . The corresponding classical $\ell_1$ rate function,
denoted by $R^{\ell_1}(\delta)$, is found in Lemma~\ref{lem:ell1randomCodes}; see Eq.~\eqref{eq:ell1GVrate}. 

Combining these quantities with the exact-QEC asymptotic condition \eqref{eq:AsympRateExact} at level $\mathscr{L}_2$ gives exact-QEC partition codes correcting $\delta N$ amplitude damping errors with asymptotic rate
\begin{equation}
    \calK< R_q^{\s{AD}}:= R^{\ell_1}(\delta)-M^{\s{AD}}(\delta)=R^{\ell_1}(\delta)-B_q(\delta/2).\label{eq:ADquareRate}
\end{equation}
Similarly, applying the random-AQEC condition \eqref{eq:AsympAQECcondL2} gives random partition codes which are $\varepsilon_N$-AQEC for $\calE_{\delta N,q}^{\s{AD}}$, with $\varepsilon_N\to0$ exponentially fast and with high probability, whenever
\[ \calK<\frac{1}{3}\p{\frac{R^{\ell_1}(\delta)}{2}-2M^{\s{AD}}(\delta)}=\frac{1}{3}\p{\frac{R^{\ell_1}(\delta)}{2}-2B_q(\delta/2)}.\]
Since $R^{\ell_1}(0)=1$ and $B_q(0)=0$, both displayed rate bounds are positive for all sufficiently small $\delta>0$. In particular, the random partition construction gives asymptotically good $q$-ary amplitude damping codes with high probability.

Observe that the existence of asymptotically good $q$-ary codes against amplitude damping errors is not surprising: any code that corrects $\delta N$ arbitrary bounded-weight errors also corrects $\delta N$ amplitude damping errors. The more meaningful question is whether amplitude damping is strictly easier to handle than general bounded-weight noise. In other words, can one correct $\delta N$ amplitude damping errors at rates that are impossible, or at least not guaranteed, for codes correcting $\delta N$ arbitrary errors? The rate bound in \eqref{eq:ADquareRate} gives a positive answer. Already for $q=3$, the curve $R_q^{\s{AD}}(\delta)$ exceeds the standard $q$-ary quantum Gilbert--Varshamov rate for correcting $\delta N$ arbitrary errors on a nontrivial interval of values of $\delta$; see Figure~\ref{fig:ADq3GV}. For $q=9$, the same amplitude damping rate can even exceed the nondegenerate quantum Hamming bound for correcting $\delta N$ arbitrary errors; see Figure~\ref{fig:ADq9Hamming}. Thus, up to the usual nondegeneracy caveat inherent in the Hamming bound, the partition-code construction produces rates against amplitude damping that cannot be achieved by nondegenerate codes correcting the same number of arbitrary errors. This demonstrates that the structured nature of amplitude damping noise is genuinely reflected in the achievable asymptotic rate.

\begin{figure}[t]
\centering
\begin{overpic}[width=.72\linewidth]{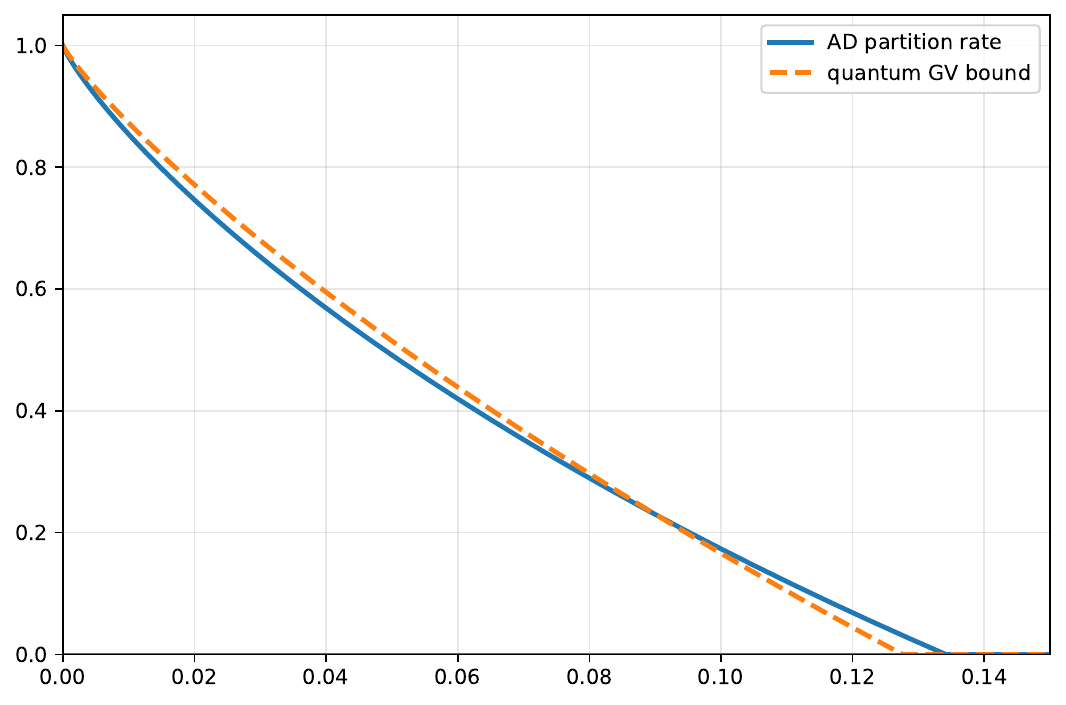}
\put(8,67){\small Exact amplitude damping rate versus quantum Gilbert--Varshamov bound for $q=3$}
\put(38,-2){\small{ error fraction $\delta$}}
\put(-3,26){\rotatebox{90}{\small Possible rate $\calK$}}
\end{overpic}
\caption{Comparison between the exact partition-code amplitude damping rate $R_3^{\s{AD}}(\delta)$ and the standard ternary quantum Gilbert--Varshamov bound for correcting $\delta N$ arbitrary errors. }
\label{fig:ADq3GV}
\end{figure} 

\begin{figure}[t]
\centering
\begin{overpic}[width=.72\linewidth]{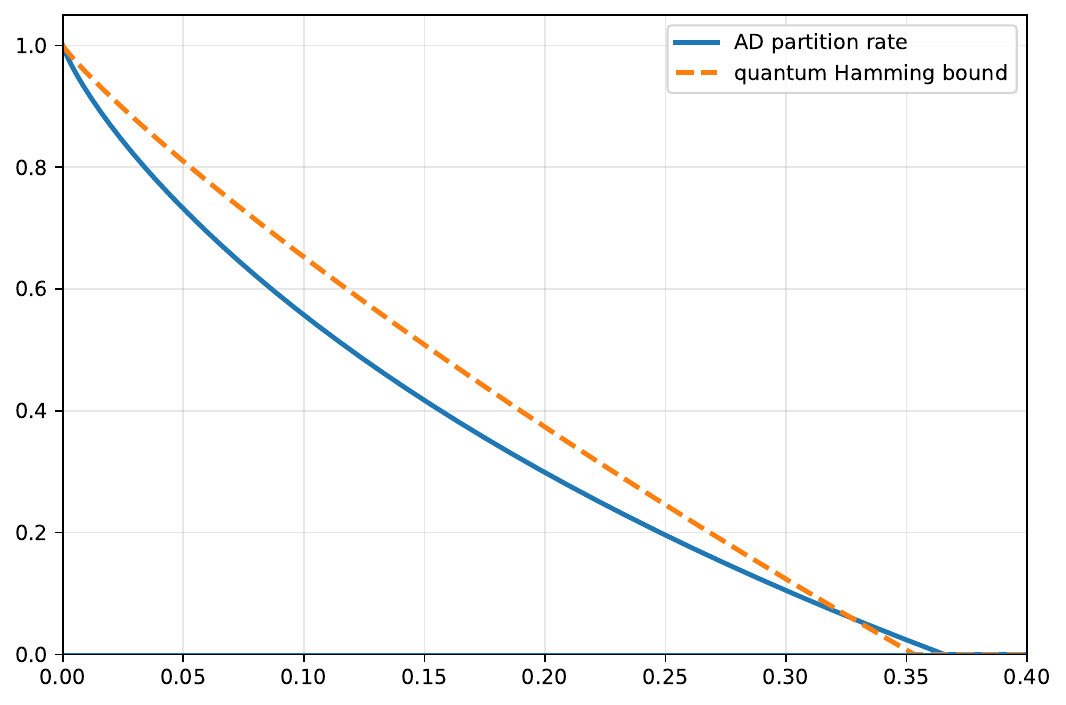}
\put(3,67){\small Exact amplitude damping rate versus quantum Hamming bound for $q=9$}
\put(38,-2){\small{ error fraction $\delta$}}
\put(-3,26){\rotatebox{90}{\small Possible rate $\calK$}}
\end{overpic}
\caption{Comparison between the exact partition-code amplitude damping rate $R_9^{\s{AD}}(\delta)$ and the nondegenerate quantum Hamming bound for correcting $\delta N$ arbitrary errors.}
\label{fig:ADq9Hamming}
\end{figure}
\subsection{Codes for Rydberg atom chains}\label{sec:Rydberg}
In this section, we construct partition codes inside constrained Hilbert spaces motivated by one-dimensional Rydberg atom chains. The general starting point is the quantum subspace correction framework of \cite{pawlak2024quantum}, which studies constraint-satisfying Hilbert spaces as subspaces that can be stabilized, checked, and recovered using QEC-like tools. In particular, their framework includes Ising-type and hard-core constraints, where admissible configurations are defined by local exclusion rules. This naturally raises the question of constructing quantum codes whose basis support already lies inside such a constrained subspace.

A simple and physically meaningful instance of this Ising/hard-core framework is obtained by taking the constraint graph to be a path. The resulting nearest-neighbor exclusion rule says that two adjacent sites cannot be simultaneously occupied. In a one-dimensional Rydberg chain, the same constraint appears naturally from the blockade mechanism, which forbids two adjacent atoms from being simultaneously excited. Thus, the admissible basis states are binary strings with no consecutive 1's, yielding the Fibonacci (or golden-mean-constrained) Hilbert space. Equivalently, if $\Pi_{i,i+1}$ denotes the local projector forbidding simultaneous excitation on sites $i$ and $i+1$, then the physical space is the projected subspace $\Phi_N :=  \Pi (\calH_2^{\otimes N})$, where $\Pi=\prod_{i=1}^{N-1}\Pi_{i,i+1}$. This space is no longer a tensor-product Hilbert space: its dimension grows as $\dim\Phi_N=(\varphi+o(1))^N$, where $\varphi=(1+\sqrt{5})/2$ is the golden ratio (see Lemma~\ref{lem:RLLavBall}). Such constrained chains arise naturally in Rydberg blockade physics and are closely related to hard-core models and Fibonacci-anyon fusion spaces~\cite{feiguin2007interacting,lesanovsky2012interacting,corcoran2025integrable}.
In our notation, the corresponding indexing space is the golden-mean constraint
\[ X_{N}^{\s{GM}} :=  \ppp{\ux\in\ppp{0,1}^{N}~:~(x_i,x_{i+1})\neq(1,1)\text{ for all }i\in[N-1]}, \]
as in \eqref{eq:GMspace}, and $\Phi_N=\calH_{X_N^{\s{GM}}}$. We equip $X_N^{\s{GM}}$ with the scaled Hamming metric \eqref{eq:HammingScaled}. We consider bounded-weight Pauli errors restricted to this constrained space, namely the family $\mathscr{E}=(\calE_t)_t$ from \eqref{eq:HWoperatorsT}. Since $X_N^{\s{GM}}\subseteq\ppp{0,1}^N$, the $\mathscr{L}_1$ alignment property follows directly from the corresponding bounded-weight Pauli example on the full binary Hamming space, and hence $(\calH_{X_N^{\s{GM}}},d,\mathscr{E})\in\mathscr{L}_1$.
We now apply the partition-code analysis in the regime $t=\delta N$. The classical input is supplied by Lemma~\ref{lem:RLLcodes}, which gives random classical codes inside $X_N^{\s{GM}}$ with scaled Hamming distance $2\delta N$ and rate $R^{\s{GM}}(\delta)$ given in \eqref{eq:RLLrateBinary}.
For the bounded-weight Pauli family, the number of errors of support at most $\delta N$ is
    {\setlength{\abovedisplayskip}{3pt}
 \setlength{\belowdisplayskip}{3pt}
 \[
   |\calE_{\delta N}|=\sum_{i=0}^{\delta N}\binom{N}{i}3^i,
   \]
   }
and hence, taking logarithms,
   {\setlength{\abovedisplayskip}{3pt}
 \setlength{\belowdisplayskip}{3pt}
    \[
    M^{\s{P}}(\delta) := \frac{1}{N}\log_2|\calE_{\delta N}|=2H_4(\delta)+o(1). 
    \]
}
Applying the general $\mathscr{L}_1$ exact-QEC rate condition \eqref{eq:AsympRateExact} gives exact-QEC partition codes inside $\Phi_N$ with asymptotic rate
\[ \calK<R^{\s{GM}}(\delta)-2M^{\s{P}}(\delta)=R^{\s{GM}}(\delta)-4H_4(\delta). \]
Similarly, applying the random-AQEC condition \eqref{eq:AsympAQECcondL1} gives random partition codes which are $\varepsilon_N$-AQEC for $\calE_{\delta N}$, with $\varepsilon_N\to0$ exponentially fast and with high probability, whenever
\[ \calK<\frac{1}{3}\p{\frac{R^{\s{GM}}(\delta)}{2}-4M^{\s{P}}(\delta)}=\frac{1}{3}\p{\frac{R^{\s{GM}}(\delta)}{2}-8H_4(\delta)}. \]
For the exact-QEC construction, one can sharpen the first bound by using the actual number of independent KL constraints, as in the unconstrained qubit case discussed in Remark~\ref{rem:QuariLimited}. Although the Pauli model is formally an $\mathscr{L}_1$ model, the pairwise KL constraints depend only on Pauli products of effective support at most $2\delta N$, and their exponential growth is governed by $H_2(2\delta)$. This gives the improved exact-QEC rate
\[\calK<R^{\s{GM}}(\delta)-H_2(2\delta). \]
At $\delta=0$, the improved exact-QEC bound gives $\calK<\log_2\varphi$, matching the exponential dimension of the constrained Hilbert space $\Phi_N$. The AQEC bound is also positive for all sufficiently small $\delta>0$. Thus, the partition-code construction gives both exact and approximate asymptotically good codes inside the Rydberg-blockaded Hilbert space.

\subsection{Majorana fermionic codes}\label{sec:majo}
Beyond $q$-ary quantum codes, partition codes give rise to a family of codes in the fermionic Fock space under Majorana noise, introduced in Section~\ref{sec:Fermionic}. Existing explicit constructions of Majorana error-correcting codes are primarily Majorana stabilizer or subsystem constructions, starting from the work of Bravyi, Leemhuis, and Terhal~\cite{bravyi2010majorana}, and continuing through surface-code, color-code, and high-rate fault-tolerant Majorana constructions~\cite{litinski2018quantum,mudassar2026fault}. To the best of our knowledge, no general non-stabilizer constructions of asymptotically good Majorana codes, exact or approximate, were previously known. The partition-code framework gives such a construction by lifting ordinary classical binary codes, in particular, the i.i.d. random classical codes from Lemma~\ref{lem:random-qary-hamming}, directly to Majorana codes.
Let $X=\ppp{0,1}^N$ index the occupation-number basis of the $N$-mode fermionic Fock space, equipped with the scaled Hamming metric $d$ from \eqref{eq:HammingScaled}. Let $\mathscr{E}^{\s{Maj}}=(\calE_{\leq t}^{\s{Maj}})_t$ be the family of Majorana errors of support size at most $t$. As shown in Example~\ref{ex:MajoranaL1}, the Majorana noise falls in the first level of the hierarchy: $(\calH_X,d,\mathscr{E}^{\s{Maj}})\in\mathscr{L}_1$. We work in the regime $t=\delta N$ and use binary logarithms, so $\calK=\log_2K/N$. The relevant classical rate is the binary Hamming Gilbert--Varshamov rate
\[ R^{\s{H}}(\delta)=1-H_2(2\delta). \]
The number of Majorana errors of support at most $\delta N$ is
\[ |\calE_{\leq\delta N}^{\s{Maj}}|=\sum_{i=0}^{\delta N}\binom{2N}{i}, \]
and hence
\[ M^{\s{Maj}}(\delta) := \frac{\log_2|\calE_{\leq\delta N}^{\s{Maj}}|}{N}=2H_2(\delta/2)+o(1). \]
The general exact-QEC condition \eqref{eq:AsympRateExact} at level $\mathscr{L}_1$ therefore gives exact-QEC partition codes correcting $\delta N$ Majorana errors with asymptotic rate
\[
\calK<R^{\s{H}}(\delta)-2M^{\s{Maj}}(\delta)=1-H_2(2\delta)-4H_2(\delta/2).
\]
Similarly, the random-AQEC condition \eqref{eq:AsympAQECcondL1} gives random partition codes which are $\varepsilon_N$-AQEC for $\calE_{\leq\delta N}^{\s{Maj}}$, with $\varepsilon_N\to0$ exponentially fast and with high probability, whenever
\[ \calK<\frac{1}{3}\p{\frac{R^{\s{H}}(\delta)}{2}-4M^{\s{Maj}}(\delta)}=\frac{1}{3}\p{\frac{1-H_2(2\delta)}{2}-8H_2(\delta/2)}. \]

The exact-QEC rate can be further improved by exploiting the fact that many pairwise Majorana KL constraints are linearly dependent. Indeed, for a Majorana monomial $\gamma_A$, let $s(A)\in\ppp{0,1}^N$ be the mode-flip pattern defined by $s(A)_j=|A\cap\ppp{2j,2j+1}|\bmod 2$. Then $\gamma_A$ maps each occupation vector $\ket{\ux}$ to a phase times $\ket{\ux\oplus s(A)}$. Hence, after the classical distance condition removes all cross terms, only pairs $A,B$ with $s(A)=s(B)$ contribute to the diagonal KL constraints. For such pairs, $\gamma_A^\dag\gamma_B$ is, up to a phase, a product of mode-parity operators $\prod_{j\in S}\gamma_{2j}\gamma_{2j+1}$ with $|S|\leq t$. Thus, the number of distinct diagonal constraints is at most $\sum_{s=0}^{t}\binom{N}{s},$
and therefore
\[ \frac{1}{N}\log_2\sum_{s=0}^{\delta N}\binom{N}{s}=H_2(\delta)+o(1). \]
Consequently, the exact-QEC partition-code rate found previously can be improved to
\[ \calK<1-H_2(2\delta)-H_2(\delta). \]

The refined Majorana rate should also be compared with the general partition-code rate for correcting $\delta N$ arbitrary qubit errors in \eqref{eq:PartitionGeneralErrors}. The bound
$\calK<1-H_2(2\delta)-H_2(\delta)$
is strictly larger than the corresponding general-error partition-code rate. This reflects the fact that Majorana errors have less freedom than arbitrary Pauli errors. Indeed, under the Jordan--Wigner representation, a Majorana monomial is mapped to a Pauli operator whose $X$-support determines the accompanying $Z$-string structure, up to local factors on the flipped modes. Thus, once the induced bit-flip pattern is fixed, the remaining phase structure is highly constrained.

\subsection{Constant excitation Fock state codes}\label{sec:CEFCex}
We now turn to constant-excitation bosonic Fock state codes introduced in Section~\ref{sec:FockStateCodes}. We consider the asymptotic regime in which the number of modes grows linearly with the total excitation, namely $q=\alpha N$ for fixed $\alpha>0$. This is the regime studied in \cite{elimelech2026asymptotically, aydin2026quantum}, and it is a natural high-excitation many-mode limit for bosonic codes. Since the dimension of $\calH_{q,N}$ is not of the form $\tilde{q}^N$ for an integer alphabet size, throughout this subsection, rates are measured using natural logarithms, i.e., $\calK=\ln(K)/N$.
Recall from Example~\ref{ex:BosonicL1L2} that the constant-excitation Fock space $\calH_{q,N}$ is indexed by the discrete simplex $\calS_{q,N}$, equipped with the scaled $\ell_1$ metric $d_1$ from \eqref{eq:ell1Dist}. This metric is aligned with the bosonic noise models considered below: amplitude damping errors satisfy the $\mathscr{L}_2$  condition, while number-shift and phase-rotation errors satisfy the $\mathscr{L}_1$ condition.

We study partition codes obtained from i.i.d. classical codes sampled according to two distributions on $\calS_{q,N}$ considered in \cite{elimelech2026asymptotically}: the uniform distribution, discussed in Section~\ref{sec:uniformSimples}, and the multinomial distribution, discussed in Section~\ref{sec:Multinomial}. The corresponding classical rate functions for i.i.d. simplex codes\footnote{Throughout, by a simplex code we mean a subset of points of the discrete simplex $\calS_{q,N}$ with the $\ell_1$ metric.} are denoted by $R_{\alpha}^{\s{U}}(\delta)$ and $R_{\alpha}^{\s{M}}(\delta)$, respectively, as defined in \eqref{eq:UniformSimplexrate} and \eqref{eq:MultinomialRate}.

These two ensembles also illustrate the usefulness of the inherited-typicality viewpoint. A physically important property is bounded per-mode occupancy: for a sequence $b_N$, define 
   \begin{equation} 
   \mathscr{P}_{N}^{\s{occ}}(b_N) := \ppp{\un\in\calS_{q,N}~:~\norm{\un}_{\infty}\leq b_N}. \label{eq:PMOprop}
   \end{equation} 
Bounding $\norm{\un}_{\infty}$ prevents the code support from concentrating too many photons in a single mode, which improves robustness to photon loss and increases the state’s coherence lifetime. Using Theorem~\ref{th:TypicallityProperties} together with Lemmas 19 and 23 from \cite{elimelech2026asymptotically}, the uniform simplex distribution gives bounded occupancy $O(\ln N)$ with high probability, while the multinomial distribution gives the stronger $O(\ln N/\ln\ln N)$ occupancy bound, at the cost of a weaker rate function. Thus, by choosing the underlying classical sampling distribution, the random partition construction can trade rate for additional structural properties of the resulting quantum code. Below we state the rate guarantees in terms of $R_{\alpha}^{*}$, with $*\in\ppp{\s{U},\s{M}}$, which covers both distribution choices.

\subsubsection{Number-shift and phase-rotation noise}
We consider the shift-rotation error family $\calE^{\s{SR}}_{t_{\mathrm{s}},t_{\mathrm{r}},\beta}$ from \eqref{eq:BosonicSR}. Here $t_{\mathrm{s}}$ bounds the total occupation-number shift, while $t_{\mathrm{r}}$ bounds the total discretized phase-rotation vector for the fixed maximal per-mode rotation parameter $\beta$. Throughout this subsection, we take $t_{\mathrm{s}}=\delta_{\mathrm{s}}N$ and $t_{\mathrm{r}}=\delta_{\mathrm{r}}N$. To apply the general asymptotic rate formulas, we first compute the error-set growth exponent
\[ M_{\alpha}^{\s{SR}}(\delta_{\mathrm{s}},\delta_{\mathrm{r}}) := \lim_{N\to\infty}\frac{1}{N}\ln\abs{\calE^{\s{SR}}_{\delta_{\mathrm{s}}N,\delta_{\mathrm{r}}N,\beta}}. \]
The shift labels $\ur$ and the rotation labels $\uTheta$ are signed integer vectors in $\Z^q$, with $q=\alpha N$, satisfying $\norm{\ur}_1\leq\delta_{\mathrm{s}}N$ and $\norm{\uTheta}_1\leq\delta_{\mathrm{r}}N$. Thus, their exponential growth is governed by the signed $\ell_1$ ball exponent $B_{\Z,\alpha}$ from Lemma~\ref{lem:signedL1BallZero}:
\[B_{\Z,\alpha}(\delta)=2\delta\ln\frac{2\delta}{\alpha}+\alpha\ln\p{\frac{2\delta}{\alpha}+\sqrt{1+\frac{4\delta^2}{\alpha^2}}}-2\delta\ln\p{\sqrt{1+\frac{4\delta^2}{\alpha^2}}-1}.\] 
Since $B_{\Z,\alpha}(\delta)$ is defined for balls of radius $\delta N$ in the scaled metric $d_1(\ux,\uy)=\frac{1}{2}\norm{\ux-\uy}_1$, the ordinary constraint $\norm{\ur}_1\leq\delta_{\mathrm{s}}N$ corresponds to scaled radius $\delta_{\mathrm{s}}N/2$, and similarly for $\uTheta$. Therefore
\[
M_{\alpha}^{\s{SR}}(\delta_{\mathrm{s}},\delta_{\mathrm{r}})=B_{\Z,\alpha}(\delta_{\mathrm{s}}/2)+B_{\Z,\alpha}(\delta_{\mathrm{r}}/2).
\]

 According to Example~\ref{ex:BosonicL1L21}, the shift-rotation model is an $\mathscr{L}_1$ model with respect to $d_1$. Applying the exact-QEC asymptotic condition \eqref{eq:AsympRateExact} gives exact-QEC partition codes obtained from the classical codes of Lemma~\ref{lem:iidCodesSimplex} correcting $t_{\mathrm{s}}=\delta_{\mathrm{s}}N$ shifts and $t_{\mathrm{r}}=\delta_{\mathrm{r}}N$ discretized rotations with rate
\[
\calK<R_{\alpha}^{*}(\delta_{\mathrm{s}})-2M_{\alpha}^{\s{SR}}(\delta_{\mathrm{s}},\delta_{\mathrm{r}})=R_{\alpha}^{*}(\delta_{\mathrm{s}})-2B_{\Z,\alpha}(\delta_{\mathrm{s}}/2)-2B_{\Z,\alpha}(\delta_{\mathrm{r}}/2).
\]
Similarly, applying the random-AQEC condition \eqref{eq:AsympAQECcondL1} gives random partition codes which are $\varepsilon_N$-AQEC for $\calE^{\s{SR}}_{\delta_{\mathrm{s}}N,\delta_{\mathrm{r}}N,\beta}$, with $\varepsilon_N\to0$ exponentially fast and with high probability whenever
\[
\calK<\frac{1}{3}\p{\frac{R_{\alpha}^{*}(\delta_{\mathrm{s}})}{2}-4M_{\alpha}^{\s{SR}}(\delta_{\mathrm{s}},\delta_{\mathrm{r}})}=\frac{1}{3}\p{\frac{R_{\alpha}^{*}(\delta_{\mathrm{s}})}{2}-4B_{\Z,\alpha}(\delta_{\mathrm{s}}/2)-4B_{\Z,\alpha}(\delta_{\mathrm{r}}/2)}.
\]
Since $B_{\Z,\alpha}(0)=0$ and $R_{\alpha}^{*}(0)>0$ for both ensembles, these bounds are positive whenever $\delta_{\mathrm{s}}$ and $\delta_{\mathrm{r}}$ are sufficiently small. Thus, the partition-code construction gives asymptotically good exact and approximate Fock-state codes against combined number-shift and phase-rotation noise.

\subsubsection{Amplitude damping errors}
Amplitude damping errors in the constant-excitation Fock setting are described by the normalized loss operators in \eqref{eq:ADerrorSet} and \eqref{eq:ADnormalizedOP}. These errors model the loss of at most $t$ photons across the $q$ modes, and the truncated amplitude damping channel of \cite{elimelech2026asymptotically} is controlled by the corresponding family $\calE_{\leq t}^{\s{AD}}$. By Example~\ref{ex:BosonicL1L2}, this model satisfies the second-level alignment condition with respect to the scaled $\ell_1$ metric.
For $t=\delta N$, the loss vectors are indexed by the union of simplices $\bigcup_{r\leq \delta N}\calS_{\alpha N , r}$. Therefore
\begin{align}
    M_{\alpha}^{\s{AD}}(\delta)& := \lim_{N\to\infty}\frac{1}{N}\ln\abs{\calE_{\leq \delta N}^{\s{AD}}}=\lim_{N\to\infty}\frac{1}{N}\ln\sum_{r=0}^{\delta N} \abs{\calS_{\alpha N,r}}\label{eq:label1}\\
    &\leq \lim_{N\to\infty}\p{\frac{1}{N}\ln \binom{\alpha N +\delta N-1}{\delta N}+\frac{\ln \delta N}{N}}=\ln(2)(\alpha+\delta)H_2\p{\frac{\alpha}{\alpha+\delta}},\label{eq:lable2}
\end{align}
where the inequality in the second line follows since $|\calS_{\alpha N,r}|$ is monotone increasing with $r$. A matching lower bound is obtained by lower bounding the sum in the r.h.s. of \eqref{eq:label1} by the largest summand, which implies that \eqref{eq:lable2} is in fact equality.
 Applying the exact-QEC asymptotic condition \eqref{eq:AsympRateExact} at level $\mathscr{L}_2$ gives exact-QEC partition codes correcting $\delta N$ photon losses with rate
   {\setlength{\abovedisplayskip}{3pt}
 \setlength{\belowdisplayskip}{3pt}
\[
\calK<R_{\alpha}^{*}(\delta)-M_{\alpha}^{\s{AD}}(\delta)=R_{\alpha}^{*}(\delta)-\ln(2)(\alpha+\delta)H_2\p{\frac{\alpha}{\alpha+\delta}}.
\]
}
Similarly, applying the random-AQEC condition \eqref{eq:AsympAQECcondL2} gives random partition codes which are $\varepsilon_N$-AQEC for $\calE_{\leq \delta N}^{\s{AD}}$, with $\varepsilon_N\to0$ exponentially fast and with high probability, whenever
\[
\calK<\frac{1}{3}\p{\frac{R_{\alpha}^{*}(\delta)}{2}-2M_{\alpha}^{\s{AD}}(\delta)}=\frac{1}{3}\p{\frac{R_{\alpha}^{*}(\delta)}{2}-2\ln(2)(\alpha+\delta)H_2\p{\frac{\alpha}{\alpha+\delta}}}.
\]
Since $M^{\s{AD}}(0)=0$ and $R_{\alpha}^{*}(0)>0$, both the exact and random-AQEC bounds are positive for all sufficiently small $\delta>0$, recovering asymptotically good constant-excitation Fock-state codes against linear photon loss in the high-excitation regime.

\subsection{Permutation-invariant codes}\label{sec:PIex}
We finally consider permutation-invariant partition codes, introduced in Section~\ref{sec:PIcodes}, against deletions and erasures. As in the constant-excitation Fock-state setting, we work in the asymptotic regime $q=\alpha N$ for fixed $\alpha>0$, where the dimension of the permutation-symmetric subspace grows exponentially in $N$. Since this dimension is not naturally of the form $\tilde q^N$ for a fixed alphabet size, we measure rates using natural logarithms throughout this subsection.
Recall from Example~\ref{ex:DelL1} that the symmetric space $\mathrm{Sym}(q,N)$, defined in \eqref{eq:symmetricSpaceX}, is indexed by the discrete simplex $\hat{X}=\calS_{q,N}$. With the scaled $\ell_1$ metric $d_1$ defined in \eqref{eq:ell1Dist}, the deletion error family takes the form
\[\hat{\calE}_{t}^{\s{Del}}=\ppp{\sqrt{\binom{t}{\ue}}E_{\ue}~:~\ue\in\calS_{q,t}}, \]
as in \eqref{eq:DelPIset}, and satisfies $(\calH_{\hat{X}},d_1,\hat{\mathscr{E}}^{\s{Del}})\in\mathscr{L}_2$. Moreover, by permutation invariance, deleting $t$ subsystems and erasing $t$ subsystems have the same action on the code space up to the classical side information specifying the erased coordinates. Thus, the deletion guarantees below also give erasure guarantees of the same order.
For $t=\delta N$, the error-set growth exponent is the same simplex exponent that appears for amplitude damping in constant-excitation Fock space:
\[ M_{\s{PI}}^{\s{Del}}(\delta) := \lim_{N\to\infty}\frac{1}{N}\ln\abs{\hat{\calE}_{\delta N}^{\s{Del}}}=\lim_{N\to\infty}\frac{1}{N}\ln\abs{\calS_{\alpha N,\delta N}}=\ln(2)(\alpha+\delta)H_2\p{\frac{\alpha}{\alpha+\delta}}.
\]

As in the Fock-state setting, we use the two simplex ensembles introduced in \cite{elimelech2026asymptotically}: the uniform simplex distribution from Section~\ref{sec:uniformSimples} and the multinomial distribution from Section~\ref{sec:Multinomial}. Their corresponding classical rate functions are denoted by $R_{\alpha}^{\s{U}}(\delta)$ and $R_{\alpha}^{\s{M}}(\delta)$, respectively, as defined in \eqref{eq:UniformSimplexrate} and \eqref{eq:MultinomialRate}. We write $R_{\alpha}^{*}(\delta)$ for either choice, with $*\in\ppp{\s{U},\s{M}}$.  

 The bounded-occupancy property $\mathscr{P}_{N}^{\s{occ}}(b_N)$ from \eqref{eq:PMOprop} has a natural interpretation here: a Dicke state $\ket{D_{\un}}$ is the uniform superposition over all computational-basis strings with composition $\un$, as in \eqref{eq:Copositionof}. Thus, requiring $\un\in\mathscr{P}_{N}^{\s{occ}}(b_N)$ means that no symbol appears more than $b_N$ times in the strings supporting the Dicke state. In this sense, bounded occupancy becomes a balancedness condition for the corresponding permutation-invariant codewords. As mentioned in Section~\ref{sec:CEFCex}, the resulting partition codes satisfy this balancedness property with high probability, with $b_N=O(\ln N)$ for the uniform distribution and $b_N=O(\ln N/\ln\ln N)$ for the multinomial distribution.
 
We now apply the general rate analysis. Since the deletion model is an $\mathscr{L}_2$ model, the exact-QEC asymptotic condition \eqref{eq:AsympRateExact} gives exact-QEC PI partition codes correcting $\delta N$ deletions, and hence also $\delta N$ erasures, with rate
\[ \calK<R_{\alpha}^{*}(\delta)-M_{\s{PI}}^{\s{Del}}(\delta)=R_{\alpha}^{*}(\delta)-\ln(2)(\alpha+\delta)H_2\p{\frac{\alpha}{\alpha+\delta}}.\]
Similarly, applying the random-AQEC condition \eqref{eq:AsympAQECcondL2} gives random PI partition codes which are $\varepsilon_N$-AQEC for $\hat{\calE}_{\delta N}^{\s{Del}}$, with $\varepsilon_N\to0$ exponentially fast and with high probability, whenever
\[\calK<\frac{1}{3}\p{\frac{R_{\alpha}^{*}(\delta)}{2}-2M_{\s{PI}}^{\s{Del}}(\delta)}=\frac{1}{3}\p{\frac{R_{\alpha}^{*}(\delta)}{2}-2\ln(2)(\alpha+\delta)H_2\p{\frac{\alpha}{\alpha+\delta}}}. \]
The same AQEC guarantee applies to erasures by the deletion-erasure equivalence for permutation-invariant codes. Since $M_{\s{PI}}^{\s{Del}}(0)=0$ and $R_{\alpha}^{*}(0)>0$, both bounds are positive for all sufficiently small $\delta>0$. Thus, the partition-code construction gives asymptotically good exact and approximate permutation-invariant codes against a linear number of deletions and erasures.
 
This concludes the main text of the paper. Our thoughts about future research directions are presented in Sec.~\ref{sec: conclusion}, beginning on p.\pageref{sec: conclusion}.
\subsection*{Acknowledgments}
\addcontentsline{toc}{section}{Acknowledgments}
The authors are grateful to Zi-Wen Liu for helpful discussions with D.E. on subsystem variance.  D.E. acknowledges support from the Yad Hanadiv Foundation through the Rothschild Fellowship.   V.V.A. acknowledges NSF grant OMA2120757 (QLCI).   
  The research of A.B. was partially supported by NSF grants CIF-2330909 and CIF-2526035. 
  This manuscript was edited with the assistance of Claude and Chat GPT, developed by Anthropic and Open AI, respectively.
  Claude and Chat GPT were used to refine language, improve clarity, and enhance readability in accordance with the authors’ instructions. All content, claims, and conclusions have been reviewed and verified by the authors to ensure accuracy and originality.   Certain products, commercial and otherwise, are mentioned in this publication. These mentions are for informational purposes only, and do not imply recommendation or endorsement by NIST.

\subsection*{Data availability} This is a purely mathematical work and no data was created or analyzed in this study. All figures can be reproduced directly from the presented equations.

\clearpage
\appendix
\makeatletter
\section*{Appendices}
\addcontentsline{toc}{section}{Appendices}

\section{Classical random codes}\label{app:classicalCodes}
 This section is devoted to presenting and developing the classical coding-theoretic results needed to establish the partition-code constructions of Section~\ref{sec:Exampels}. We focus on random constructions of two main families of classical codes. The first consists of $q$-ary codes with respect to several error metrics, including the Hamming metric, the $\ell_1$ metric, and deletion-type metrics. The second consists of simplex codes in the $\ell_1$ metric. Throughout, we are interested in the asymptotic regime in which both the relative distance and the rate are bounded away from zero.

We begin with a general random-coding principle for i.i.d. codes over metric spaces. This is a standard Gilbert--Varshamov-type argument, written in a distributional form that will be useful for our purposes. In classical coding theory, this argument is often applied to
the random code ensemble in the Hamming space, e.g., \cite{barg2002random}, but it is well recognized that it applies more broadly when
the codewords are sampled from an arbitrary probability measure on a discrete metric space. Following this route, we apply the same first-moment and expurgation argument to the different spaces and metrics arising from the noise models considered in Section~\ref{sec:Exampels}.

Let $(X,d)$ be a discrete metric space, and let $\mu$ be a probability measure on $X$. Consider a random code
\begin{equation}
     \s{C}=\ppp{\s{x}_0,\dots,\s{x}_{L-1}}\label{eq:RandomCode}
\end{equation}
generated by choosing $\s{x}_0,\dots,\s{x}_{L-1}$ independently according to $\mu$. Recall that $d(\s{C})$ denotes the minimum $d$-distance between two distinct codewords in $\s{C}$. Our goal is to identify conditions on $L$ under which $\s{C}$ has minimum distance at least $t$. We will apply the argument to sequences $(X_N,d)$ 
of metric spaces equipped with probability measures $\mu_N$ in the regime where the target distance $t=t_N$ grows linearly with $N$. The following theorem gives sufficient conditions for this general setting.

\begin{prop}[i.i.d.-type Gilbert--Varshamov bounds]\label{prop:GV} Let $X_N, N\ge 1$ be a sequence of discrete metric spaces, each equipped
with a probability measure $\mu_N$. Let $\s{C}_N=\ppp{\s{x}_0,\dots\s{x}_{L_N-1}}$ be a code in $X_N$ formed from i.i.d. codewords sampled  according to $\mu_N$, and let $\delta>0$ be fixed. Consider the lower tail probability:
    \[\calT_N(\mu,\delta) :=  \P\pp{d(\s{x},\s{y})\leq \delta N}, \quad (\s{x},\s{y})\sim \mu_N\otimes \mu_N,\]
    where $\mu_N\otimes \mu_N$ denotes two independent copies of $\mu_N$, and let 
    \[\calT_{\mu}(d,\delta)= -\limsup_{N\to \infty} \frac{\ln(\calT_N(\mu_N,\delta))}{N}. \]
Then for $N\to\infty$,if  
    $$
    \frac{\ln L_N}{N}\leq \frac{\calT_{\mu}(d,\delta)}{2}-o(1)
    $$
we have 
  $$\P_{\mu_N}\pp{d\p{\s{C}_N} >\delta N }=1-o(1). 
  $$
    Furthermore, if
\[
\frac{\ln L_N}{N}\leq \calT_{\mu}(d,\delta)-o(1)
\]
then, with probability $1-o(1)$, the code $\s{C}_N$ contains a subcode $\s{C}'_N\subseteq\s{C}_N$ of size $|\s{C}'_N|=(1-o(1))L_N$ such that $d(\s{C}'_N)\geq \delta N$.
\end{prop}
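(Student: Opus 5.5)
This is a first-moment (union bound) argument over pairs of codewords, followed by a Markov-plus-expurgation step. Throughout, write $p_N:=\calT_N(\mu_N,\delta)$. By the definition of $\calT_\mu(d,\delta)$, for every $\eta>0$ and all sufficiently large $N$ we have $p_N\le e^{-N(\calT_\mu(d,\delta)-\eta)}$. The $o(1)$ slack in the hypotheses will be used to absorb this $\eta$, by letting $\eta\to0$ slowly.

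For the first claim, consider the bad event $\{d(\s{C}_N)\le\delta N\}$. It occurs only if some pair of indices $0\le i<j\le L_N-1$ satisfies $d(\s{x}_i,\s{x}_j)\le\delta N$. Since $\s{x}_i$ and $\s{x}_j$ are independent with law $\mu_N$, each such pair has probability exactly $p_N$. The union bound then gives
\[
\P\pp{d(\s{C}_N)\le\delta N}\le \binom{L_N}{2}p_N\le \tfrac12 L_N^2\, e^{-N(\calT_\mu(d,\delta)-\eta)}.
\]
Under $\ln L_N/N\le \calT_\mu(d,\delta)/2-o(1)$, the exponent of the right-hand side is $-N\cdot\omega(1/N)$ once $\eta$ is chosen smaller than the $o(1)$ slack. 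More precisely, we should pick the slack sequence so that $N\cdot(\text{slack}-\eta)\to\infty$, and then the right-hand side is $o(1)$. Note that the definition of $d(\s{C})$ uses distinct codewords, so if repeated samples are to count as violations we should use the convention $\s{x}_i\ne\s{x}_j$ as indices. The repeated-sample case is covered anyway, because a repeat has distance $0\le\delta N$ and is already counted in $p_N$.

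For the expurgated claim, define the random variable $\s{Z}$ to be the number of indices $i$ for which there exists $j\ne i$ with $d(\s{x}_i,\s{x}_j)\le\delta N$. Bounding this by the count over ordered pairs gives $\E[\s{Z}]\le L_N(L_N-1)p_N\le L_N^2p_N$. Under $\ln L_N/N\le\calT_\mu(d,\delta)-o(1)$ we get $L_Np_N\le e^{-N(\text{slack}-\eta)}=:\theta_N\to0$, hence $\E[\s{Z}]\le \theta_N L_N$.

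Markov's inequality then gives $\P[\s{Z}\ge\sqrt{\theta_N}L_N]\le\sqrt{\theta_N}=o(1)$. On the complementary event we delete all $\s{Z}$ bad indices, which removes at most $\sqrt{\theta_N}L_N=o(L_N)$ codewords and leaves a subcode $\s{C}'_N$ of size $(1-o(1))L_N$. This subcode contains no pair at distance $\le\delta N$, so $d(\s{C}'_N)>\delta N\ge\delta N$. It also contains no repeated points, because a repeated pair would have been flagged as bad.

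The main point of care is the treatment of the $\limsup$ and the $o(1)$ terms. We need to interpret the hypothesis as requiring $\ln L_N/N\le \calT_\mu(d,\delta)/2-\epsilon_N$ (respectively $\calT_\mu(d,\delta)-\epsilon_N$) with $N\epsilon_N\to\infty$. We then choose $\eta=\eta_N\le\epsilon_N/2$, which is allowed because $p_N\le e^{-N(\calT_\mu-\eta)}$ holds eventually for each fixed $\eta$, and a diagonal argument lets $\eta_N\to0$ slowly. If $\calT_\mu(d,\delta)=\infty$, the same argument goes through with any finite exponent in its place. Everything else in the proof is routine.
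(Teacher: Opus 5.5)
Your proof is the paper's proof. The first claim is a union bound over the $\binom{L_N}{2}$ index pairs. The second is Markov's inequality followed by expurgation. The paper counts bad pairs $|\s{B}|$, uses the threshold $|\s{B}|\le L_N/N$, and removes one endpoint of each bad pair; you remove every flagged index with threshold $\sqrt{\theta_N}L_N$, which makes no difference.

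The one thing to fix is the quantifier order in your last paragraph. Any $\eta_N$ with $p_N\le e^{-N(\calT_\mu(d,\delta)-\eta_N)}$ must satisfy $\eta_N\ge \frac1N\ln p_N+\calT_\mu(d,\delta)$. That deviation tends to $0$, but at a rate dictated by $\mu_N$ that can be arbitrarily slow. So you cannot fix an arbitrary slack $\epsilon_N$ with $N\epsilon_N\to\infty$ and then find $\eta_N\le\epsilon_N/2$. The diagonal argument lets $\eta_N\to0$ slowly, but not below a prescribed sequence.

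For example, if $\frac1N\ln p_N=-\calT_\mu+1/\ln N$ and $\epsilon_N=N^{-1/2}$, then $L_N^2p_N=e^{N/\ln N-2\sqrt N}\to\infty$. The claim itself then fails: take the uniform measure on $e^{cN-N/\ln N}$ points with $d\equiv 2\delta N$ off the diagonal. The birthday bound forces a repeated sample with high probability, and the index convention you and the paper adopt counts that as a violation.

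The proposition only asserts that some $o(1)$ function works, as the paper says right after the statement. The correct order is the one in your own second paragraph. First fix $g(N):=\max\{0,\frac1N\ln p_N+\calT_\mu(d,\delta)\}\to0$, then take the slack to be, e.g., $g(N)+N^{-1/2}$; the paper uses $g(N)/2+N^{-1/2}$ and $g(N)+N^{-1/2}$ for the two parts. With that reading, the rest of your argument goes through unchanged.
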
 
More precisely, there exists an $o(1)$ function such that the claims of this proposition hold. We will follow this convention in other similar statements below. 

For better readability, below we drop $N$ from the notation $X_N,L_N$, $\mu_N$, and $\s{C}_N$, 
although the derivations apply to sequences of metric spaces and codes.
\begin{proof}
     For $0\leq i < j \leq  L-1$, let $A_{i,j}$ be the event that $d\p{\s{x}_i,\s{x}_j}\leq \delta N$.  By definition of $T_{\mu}(\delta)$ there is a decaying function $g(N)\to 0$ such that for all $i\neq j$ 
    \[\frac{\ln \P[A_{i,j}]}{N}=\frac{\ln \calT_N(\mu,\delta)}{N}\leq -\calT_{\mu}+g(N). \]
    Consider the function $\Tilde{g}(N)=g(N)/2+1/{\sqrt{N}}=o(1)$. Assuming that $\frac{\ln L }{N}\leq \calT_{\mu}(d,\delta)/2-\Tilde{g}(N)$, by the union bound we have
    \begin{align*}
        \P\pp{d(\s{C})\leq \delta N}&=\P\Big[{\bigcup_{i<j} A_{i,j}}\Big]\leq \sum_{i<j}\P\pp{A_{i,j}}=\binom{L}{2}\calT_N(\mu,\delta)\\
        &\leq L^2\exp\p{N\p{-\calT_\mu(d,\delta) +g(N)}}\\
        &=\exp\p{N\p{2\frac{\ln L}{N}-\calT_\mu(d,\delta) +g(N)}}\\
        &\leq \exp\p{-2\sqrt{N}}=o(1).
    \end{align*}
    This completes the first part of the proof.

    For the second part of the proof, we perform an expurgation process: Let $\s{B}$ be the set of all pairs $(\s{x}_i,\s{x}_j)$,  $0 \leq i<j\leq L-1$  such that $d(\s{x}_i,\s{x}_j)\leq \delta N$. We show that with probability $1-o(1)$, $|\s{B}|/L =o(1) $, for some $o(1)$ function, and then conclude by removing one element from each such pair in $|\s{B}|$ to obtain $\s{C}'$ that satisfies $d(\s{C}')>\delta N$. Indeed, note that 
    \[\E\pp{\frac{|\s{B}|}{L}}=\frac{1}{L}\sum_{0\leq i<j\leq L-1}\E\pp{\Ind_{\ppp{d(\s{x}_i,\s{x_j})\leq \delta N}}}=\frac{1}{L}\binom{L}{2}\calT_N(\mu,\delta)=\frac{(L-1)\calT_N(\mu,\delta)}{2}. \]
    Combining the above with Markov's inequality, for $\varepsilon>0$ we have 
    \begin{align*}
        \P\pp{\frac{|\s{B}|}{L} > \varepsilon}&\leq \frac{1}{\varepsilon}\E\pp{\frac{|\s{B}|}{L}}\leq \frac{L\calT_N(\mu,\delta)}{2\varepsilon}=\exp\p{N\p{\frac{\ln \calT_{N}(\mu,\delta)}{N}-\frac{\ln{2\varepsilon}}{N}+\frac{\ln L}{N} }}\\
        &\leq \exp\p{N\p{\frac{\ln L}{N} -\frac{\ln{2\varepsilon}}{N}-\calT_{\mu}(d,\delta)+g(N)  }}.
    \end{align*}
    In particular, for $\varepsilon=1/N =o(1)$ and $\hat{g}(N)=g(N)+1/\sqrt{N}=o(1)$, if $\frac{\ln L}{N}\leq \calT_{\mu}(d,\delta) - \hat{g}(N)$ we have 
    \[\P\pp{\frac{|\s{B}|}{L} > \frac{1}{N}}\leq \exp\p{-\sqrt{N}+\ln(N/2)}=o(1).\]
    This completes the proof.
\end{proof}
This proposition represents a ``meta-statement'' that we use several times below for distinct metric spaces (the differences are purely computational, related to the ball size).

A special case of interest is when $\mu$ is the uniform measure, where the lower tail probability is given by the average ball size.
\begin{lemma}\label{lem:AvBallSize}
     Assume that $\mu=\mu_{\s{U}}$ is the uniform measure on $X$. Let $B_d^{\s{Av}}(t)$ be the average ball size on $X$:
    \[B_d^{\s{Av}}(t)=\frac{1}{|X|}\sum_{x\in X}|B_d(t,x)|=\E_{\s{x}\sim\mu}\pp{|B_{d}(t,\s{x})|}, \quad B_d(t,x)=\abs{\ppp{y\in X ~:~ d(x,y)\leq t}}.\] 
    For $\delta>0$, 
    \[\calT_{\mu}(d,\delta)=\limsup_{N\to\infty} \frac{1}{N}\p{\ln|X|-\ln{B_d^{\s{Av}}(\delta N)}}.\]
\end{lemma}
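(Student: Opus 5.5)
The plan is to compute the lower tail probability $\calT_N(\mu_{\s{U}},\delta)$ exactly in terms of ball sizes, and then substitute the resulting identity into the definition of $\calT_{\mu}(d,\delta)$ from Proposition~\ref{prop:GV}. Let $\s{x},\s{y}$ be independent and uniform on $X$. Conditioning on $\s{x}$ gives
\[
\calT_N(\mu_{\s{U}},\delta)=\P\pp{d(\s{x},\s{y})\leq \delta N}=\frac{1}{|X|}\sum_{x\in X}\P\pp{\s{y}\in B_d(\delta N,x)}=\frac{1}{|X|}\sum_{x\in X}\frac{|B_d(\delta N,x)|}{|X|}.
\]
The last expression equals $B_d^{\s{Av}}(\delta N)/|X|$. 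The only inputs here are Fubini (a finite double sum) and the fact that a uniform point lands in a given set with probability equal to the set's size divided by $|X|$.

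Next I take logarithms and divide by $N$:
\[
\frac{\ln\calT_N(\mu_{\s{U}},\delta)}{N}=-\frac{1}{N}\p{\ln|X|-\ln B_d^{\s{Av}}(\delta N)}.
\]
Negating and passing to the limit, the $-\limsup$ in the definition of $\calT_{\mu}(d,\delta)$ becomes a $\liminf$ of $\frac1N(\ln|X|-\ln B_d^{\s{Av}}(\delta N))$. This quantity is always finite, since $B_d^{\s{Av}}\geq 1$ because each ball contains its own center.

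The one point needing care is the direction of the limit. The definition gives $-\limsup$ of a negative quantity, which equals the $\liminf$ of its negation, whereas the statement is written with $\limsup$. In every application below the sequence $\frac1N(\ln|X_N|-\ln B^{\s{Av}}_d(\delta N))$ actually converges, to an entropy-type exponent, so the two agree. I would add a remark that the identity holds with $\liminf$ in general, and with $\limsup$ whenever the limit exists, which is the case used later. Apart from this bookkeeping, the argument is a direct computation with no real obstacle.
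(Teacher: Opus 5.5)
Your proposal is correct and follows the paper's proof: you use the same double-counting identity $\calT_N(\mu_{\s{U}},\delta)=B_d^{\s{Av}}(\delta N)/|X|$ and then substitute it into the definition of $\calT_{\mu}(d,\delta)$. Your remark about the limit is also right and worth keeping. Since $-\limsup(-a_N)=\liminf a_N$, the definition actually yields a $\liminf$, a point the paper's ``follows immediately from the definition'' passes over. The $\liminf$ form is exactly the direction that Proposition~\ref{prop:GV} needs. For example, in the deletion case the paper only supplies an upper bound on $\limsup_N \frac{1}{N}\ln B_d^{\s{Av}}(\delta N)$, and that suffices with $\liminf$ even without convergence.
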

\begin{proof}
    The proof is straightforward. Note that for uniform independent $\s{x},\s{y}$ we have
    \begin{align*}
        \P\pp{d(\s{x},\s{y})\leq \delta N }&=\frac{\abs{\ppp{(x,y)\in X^2 ~:~ d(x,y)\leq \delta N}}}{|X|^2}=\frac{1}{|X|^2}\sum_{x\in X}\sum_{y\in X} \Ind_{\ppp{d(x,y)\leq \delta N}}\\
        &=\frac{1}{|X|^2}\sum_{x\in X} |B_d(\delta N, x)|=\frac{B_{d}^{\s{Av}}(\delta N)}{|X|}.
    \end{align*}
    The result follows immediately from the definition of $\calT_{\mu}(d,\delta)$.
\end{proof}

\subsection{Random $q$-ary codes}

We now specialize the general i.i.d. random-coding statement to codes over the alphabet $[q]$. Throughout this subsection, let $\mu_{\s{U}}$ denote the uniform measure on $[q]^N$, and let
\begin{equation}
     \s{C}=\ppp{\s{x}_0,\dots,\s{x}_{L-1}}\label{eq:RandomQaryCode}
\end{equation}
be the random code obtained by choosing $\s{x}_0,\dots,\s{x}_{L-1}$ independently according to $\mu_{\s{U}}$. Since the ambient space has size $q^N$, we express rates through the normalized quantity $\frac{\log_q L}{N}$ using base-$q$ logarithms. We first consider the Hamming metric.

\subsubsection*{Random $q$-ary codes in the Hamming metric}

Throughout this subsection, distances are measured with respect to the scaled Hamming metric specified in \eqref{eq:HammingScaled}. Define
\begin{equation}
    R^{\s{H}}(\delta):=1-H_q(2\delta),\label{eq:GVrateHamming}
\end{equation}
where $H_q$ is the $q$-ary entropy function defined in \eqref{eq:qaryEntropy}. The following lemma recovers the standard Gilbert--Varshamov random-coding behavior in this setting: an i.i.d. uniform random code achieves one half of the Gilbert--Varshamov rate without expurgation, while the full Gilbert--Varshamov rate is obtained after deleting one endpoint from each close pair. The binary case was earlier studied in ~\cite[Theorem~2.1]{barg2002random}, and the extension to arbitrary alphabet size $q$ is straightforward.

\begin{lemma}[Random codes in the Hamming metric]
\label{lem:random-qary-hamming}
Let $\s{C}$ be the random code in \eqref{eq:RandomQaryCode}, with distance given by the scaled Hamming metric in \eqref{eq:HammingScaled}. Fix $\delta\in\p{0,\frac{1}{2}(1-\frac{1}{q})}$. 
For $N\to\infty$ if
\[
\frac{\log_q L}{N}\leq \frac{1}{2}R^{\s{H}}(\delta)-o(1),
\]
then $\P\pp{d(\s{C})\geq \delta N}=1-o(1)$. Furthermore, if
\[
\frac{\log_q L}{N}\leq R^{\s{H}}(\delta)-o(1),
\]
then, with probability $1-o(1)$, the code $\s{C}$ contains a subcode $\s{C}'\subseteq\s{C}$ of size $|\s{C}'|=(1-o(1))L$ such that $d(\s{C}')\geq \delta N$.
\end{lemma}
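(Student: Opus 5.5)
The plan is to derive Lemma~\ref{lem:random-qary-hamming} directly from Proposition~\ref{prop:GV} together with Lemma~\ref{lem:AvBallSize}. Both statements of the lemma have exactly the form of the two conclusions of the proposition, with natural logarithms replaced by base-$q$ logarithms. So the only work is to compute the exponent $\calT_{\mu_{\s{U}}}(d,\delta)$ for the uniform measure on $[q]^N$ with the scaled Hamming metric \eqref{eq:HammingScaled}, and to show that it equals $\ln q\cdot R^{\s{H}}(\delta)=\ln q\,(1-H_q(2\delta))$. Dividing by $\ln q$ then converts the conditions $\frac{\ln L}{N}\le \calT/2-o(1)$ and $\frac{\ln L}{N}\le \calT-o(1)$ into the stated base-$q$ conditions.

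First, the Hamming space is homogeneous under translations, so every ball has the same size. By Lemma~\ref{lem:AvBallSize} it therefore suffices to compute the ball size
\[
|B_d(\delta N,\ux)|=\abs{\ppp{\uy:\wt(\ux-\uy)\le 2\delta N}}=B_q(N,2\delta N),
\]
since scaled distance at most $\delta N$ means ordinary Hamming distance at most $2\delta N$. Lemma~\ref{lem:AvBallSize} then gives
\[
\calT=\lim_N\frac1N\bigl(N\ln q-\ln B_q(N,2\delta N)\bigr).
\]

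Second, we use the standard estimate
\[
\frac1N\log_q B_q(N,\theta N)=H_q(\theta)+o(1), \qquad 0<\theta\le 1-\tfrac1q.
\]
The upper bound comes from the usual binomial-tail argument, $q^{-NH_q(\theta)}B_q(N,\theta N)\le 1$, valid for $\theta\le 1-1/q$. The lower bound uses the largest term $\binom{N}{\theta N}(q-1)^{\theta N}$ together with Stirling's formula. The hypothesis $\delta<\frac12(1-\frac1q)$ guarantees that $\theta=2\delta$ lies in this range, so that the entropy expression is the correct exponent rather than saturating at $1$. This gives $\calT=\ln q\,(1-H_q(2\delta))>0$. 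Here the limsup in the definition of $\calT$ is an actual limit, and the $o(1)$ terms are absorbed into the $o(1)$ allowed by the proposition.

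Third, there is a minor mismatch between the statements: the proposition yields $d(\s{C})>\delta N$ in the first part and $d(\s{C}')\ge\delta N$ in the second, while the lemma asks for $\ge\delta N$ in both. Both are at least as strong as needed, since $>$ implies $\ge$. The only delicate point is the entropy asymptotics near the boundary of the admissible range, and that is classical, so I expect no real obstacle. The main care is in tracking the factor $2$ coming from the scaling of the metric and the conversion between natural and base-$q$ logarithms.
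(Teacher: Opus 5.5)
Your proposal is correct and takes essentially the same route as the paper. The paper also combines Proposition~\ref{prop:GV} and Lemma~\ref{lem:AvBallSize} with the standard $q$-ary Hamming-ball estimate $q^{N(H_q(2\delta)+o(1))}$ for the unscaled radius $2\delta N$, then converts from natural to base-$q$ logarithms. Your extra remarks supply bookkeeping the paper leaves implicit: translation invariance, which makes every ball equal the average ball; the range $2\delta<1-1/q$; and the fact that $>$ implies $\geq$.
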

\begin{proof}
The proof follows as a straightforward application of Proposition~\ref{prop:GV}, Lemma~\ref{lem:AvBallSize}, and the standard estimate for $q$-ary Hamming balls: for $0\leq p\leq 1-\frac{1}{q}$, the volume of a Hamming ball of radius $pN$ in $[q]^N$ is $q^{N(H_q(p)+o(1))}$; see \cite[p.105]{roth2006intro}. Accounting for the $\frac12$ factor in \eqref{eq:HammingScaled}, we use this estimate
for the ball of radius $2\delta N$ whose size scales as $q^{N(H_q(2\delta)+o(1))}$; see \eqref{eq:GVrateHamming}.
\end{proof}
\subsubsection*{Random $q$-ary codes against deletions}

Codes against deletions are part of the broader theory of codes for synchronization errors, with motivations ranging from communication channels with insertions and deletions to DNA storage; see, e.g., the survey~\cite{cheraghchi2021overview}. Here, we focus on random codes in the asymptotic regime. Our analysis will be derived from the typical longest common subsequence behavior of pairs of independent random words.

\begin{definition}[Longest common subsequences]\label{def:Chvvatal-SankoffConst}
Let $\ux,\uy\in[q]^N$. A word $\uz\in[q]^\ell$ is a common subsequence of $\ux$ and $\uy$ if there exist indices $0\leq i_1< \cdots<i_\ell\leq N-1$ and $1\leq j_1< \cdots<j_\ell\leq N-1$ such that $z_r=x_{i_r}=y_{j_r}$ for every $r\in[\ell]$. We denote by $\lcs(\ux,\uy)$ the maximal such $\ell$ and define the deletion distance to be
\[d_{\s{Del}}(\ux,\uy)=N-\lcs(\ux,\uy).\] Let $\mu=(\mu_N)_{N\geq 1}$ be a sequence of probability measures with $\mu_N$ supported on $[q]^N$. We define
\[ \lcs_{\mu}(q,N):=\mathbb{E}_{\mu_N^{\otimes 2}}\pp{\lcs(\su{x},\su{y})}, \]
where $\mu_N^{\otimes 2}$ is the the joint law of two independent copies of $\mu_N$. Whenever the limit exists, we also define
\[ \lcs_{\mu}(q):=\lim_{N\to\infty}\frac{\lcs_{\mu}(q,N)}{N}. \]
\end{definition}

We will use this notation for $\mu_{\s{U}}$, the sequence of uniform measures on $[q]^N$, where the numbers $(\lcs_{\mu_{\s{U}}}(q))_q$ are usually called the Chv\'atal-Sankoff constants. The following standard properties of these constants serve as the basis for our random-code construction.

\begin{lemma}[LCS of uniform random words]
\label{lem:LCS-uniform}
For every fixed $q\geq 2$, the limit
\[ \gamma_q :=  \lcs_{\mu_{\s{U}}}(q)=\lim_{N\to\infty}\frac{\lcs_{\mu_{\s{U}}}(q,N)}{N} \]
exists and satisfies $ \gamma_q<1$. Moreover,
\[ \lim_{q\to\infty}\sqrt{q}   \gamma_q=2. \]
\end{lemma}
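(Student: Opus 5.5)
The plan is to treat the three claims separately. Existence of the limit comes from a superadditivity argument, $\gamma_q<1$ from a first-moment counting bound, and the large-$q$ asymptotics from the known reduction of the random LCS problem to the longest increasing subsequence problem.

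\emph{Existence of $\gamma_q$.} Write $\ell_N:=\lcs_{\mu_{\s{U}}}(q,N)$. Take $\ux\in[q]^{N+M}$ and split it as $\ux=\ux'\ux''$ with $\ux'\in[q]^N$ and $\ux''\in[q]^M$; split $\uy=\uy'\uy''$ in the same way. Concatenating a common subsequence of $(\ux',\uy')$ with one of $(\ux'',\uy'')$ gives a common subsequence of $(\ux,\uy)$, so
\[
\lcs(\ux,\uy)\geq \lcs(\ux',\uy')+\lcs(\ux'',\uy'').
\]
Under $\mu_{\s{U}}$ the two halves are independent and again uniform. Taking expectations gives $\ell_{N+M}\geq \ell_N+\ell_M$. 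Since also $\ell_N\leq N$, Fekete's lemma yields that
\[
\gamma_q=\lim_{N\to\infty}\frac{\ell_N}{N}=\sup_{N}\frac{\ell_N}{N}
\]
exists and lies in $[0,1]$.

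\emph{The bound $\gamma_q<1$.} I would use the classical Chv\'atal--Sankoff first-moment argument. Fix $c\in(0,1)$. If $\lcs(\ux,\uy)\geq cN$, then there are index sets $I,J\subseteq[N]$ with $|I|=|J|=\lceil cN\rceil$ such that $\ux_I=\uy_J$. For a fixed pair $(I,J)$, this event has probability $q^{-\lceil cN\rceil}$. A union bound over all such pairs gives
\[
\P\pp{\lcs\geq cN}\leq \binom{N}{cN}^2 q^{-cN}=\exp\Big(N\big(2H(c)-c\ln q\big)+o(N)\Big),
\]
where $H$ is the natural-log binary entropy. At $c=1$ the exponent equals $-\ln q<0$. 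By continuity there is some $c_0<1$ at which the exponent is still negative, so $\P[\lcs\geq c_0N]$ is exponentially small. Since $\lcs\leq N$, it follows that $\ell_N\leq c_0N+N\,e^{-\Omega(N)}$, and hence $\gamma_q\leq c_0<1$.

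\emph{The limit $\sqrt q\,\gamma_q\to 2$.} This is the main obstacle; it is the theorem of Kiwi, Loebl and Matou\v{s}ek, and I would cite it rather than reprove it. For intuition, both bounds run through the same picture. Mark every pair of positions $(i,j)$ with $x_i=y_j$. Each of the $N^2$ cells is marked with probability $1/q$, so there are about $N^2/q$ matches. A common subsequence is exactly a chain of matches that is strictly increasing in both coordinates. When $q$ is large the matches are sparse and almost independent, so the grid looks like a Poisson point process of intensity $1/q$ on an $N\times N$ square. The longest increasing chain in $n$ such random points is about $2\sqrt n$ by the Vershik--Kerov/Logan--Shepp solution of Ulam's problem. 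Taking $n\approx N^2/q$ gives length about $2N/\sqrt q$, which is the claimed asymptotics.

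The lower bound $\liminf\sqrt q\,\gamma_q\geq 2$ follows by restricting to a sparse sub-configuration where the Poisson approximation is accurate. The upper bound is the delicate direction, because matches in a common row or column are correlated. The approach I would follow is a refined first-moment count over monotone chains, which controls these correlations at the cost of $1+o(1)$ factors in $q$.
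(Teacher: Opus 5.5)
Your proposal is correct and matches the paper's treatment. The paper proves nothing itself: it cites Chv\'atal--Sankoff for the existence of $\gamma_q$ and for $\gamma_q<1$, and your superadditivity/Fekete argument and first-moment union bound are the standard proofs of those facts. Like the paper, you cite Kiwi--Loebl--Matou\v{s}ek for $\sqrt{q}\,\gamma_q\to 2$; your Poisson/Ulam heuristic is only intuition and is not needed.
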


The existence of the limit and the proof that $\gamma_q$ is strictly smaller than $1$
go back to Chv\'atal and Sankoff~\cite[Theorem~1]{chvatal1975longest}. The large-alphabet asymptotic is due to Kiwi, Loebl, and Matou\v{s}ek~\cite[Corollary~2]{kiwi2005expected}. Let $H_q^*(x)=H_q(\min(x,1-\frac{1}{q}))$ be the modified $q$-ary entropy function. For a fixed $q\geq2 $ and $0<\delta<1$, consider the function 
\begin{equation}
    R^{\s{Del}}(\delta) :=  \max\Big(\frac{\max(1-\delta- \gamma_q,0)^2}{\ln{q}},1+\delta-2H_q^*(\delta)\Big),\label{eq:RateGVdel}
\end{equation}
where $\delta$ is the relative proportion of deletions in the code block of length $N$. For most values of $\delta$, the  term $1+\delta-2H_q^*(\delta)$ dominates the quadratic term and therefore determines the value of $R^{\s{Del}}(\delta)$. However, this term becomes nonpositive at a smaller value of $\delta$ than the quadratic term. The maximum in \eqref{eq:RateGVdel} therefore extends the  positive-rate guarantee to a larger segment of values of $\delta$.

\begin{lemma}\label{lem:iidCodesDel}
    Consider the uniform i.i.d. code of \eqref{eq:RandomQaryCode}. If
    \[ \frac{\log_{q}L}{N}\leq \frac{1}{2}R^{\s{Del}}(\delta)-o(1). \]
    then $\P\pp{d_{\s{Del}}(\s{C})>\delta N}=1-o(1)$. Furthermore, if
\[
\frac{\log_q L}{N}\le R^{\s{Del}}(\delta)-o(1),
\]
then, with probability $1-o(1)$, the code $\s{C}$ contains a subcode $\s{C}'\subseteq\s{C}$ of size $|\s{C}'|=(1-o(1))L$ such that $d_{\s{Del}}(\s{C}')\geq \delta N$.
\end{lemma}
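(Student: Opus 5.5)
By Proposition~\ref{prop:GV}, both claims follow once we show that the lower-tail exponent satisfies $\calT_{\mu_{\s{U}}}(d_{\s{Del}},\delta)\geq R^{\s{Del}}(\delta)\ln q$. The factor $\ln q$ appears because Proposition~\ref{prop:GV} uses natural logarithms, whereas the lemma is stated in base $q$. Since the rate $R^{\s{Del}}(\delta)$ is defined as a maximum of two terms, I will bound the tail probability $\calT_N=\P[d_{\s{Del}}(\su x,\su y)\leq\delta N]=\P[\lcs(\su x,\su y)\geq(1-\delta)N]$ in two independent ways and keep the better exponent.

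\emph{Counting bound (second term).} Lemma~\ref{lem:AvBallSize} gives $\calT_N=B^{\s{Av}}_{d_{\s{Del}}}(\delta N)/q^N$. A word $\uy$ at deletion distance at most $t=\delta N$ from $\ux$ is obtained in two steps. First delete $t$ symbols from $\ux$, which can be done in at most $\binom{N}{t}$ ways. Then insert $t$ symbols into the resulting word. The number of distinct words obtainable by $t$ insertions into a fixed word of length $N-t$ is independent of that word and is at most $\binom{N}{t}q^{t}$; the sharper classical formula $\sum_{i\le t}\binom{N}{i}(q-1)^i$ could also be used. This gives $|B_{d_{\s{Del}}}(t,\ux)|\leq\binom{N}{t}^2q^{t}$, hence exponent (base $q$) $2H_2(\delta)/\log_2 q+\delta$. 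I still need to reconcile this with the stated form $1+\delta-2H^*_q(\delta)$. Using the deletion count together with the $(q-1)$-weighted insertion count, and the identity $H_q(\delta)=\delta\log_q(q-1)+H_2(\delta)/\log_2 q$, the ball size becomes $q^{N(2H_q(\delta)-\delta+o(1))}$. This gives exactly $1-(2H_q(\delta)-\delta)=1+\delta-2H_q(\delta)$. The truncation $H^*_q$ only matters when this term is already nonpositive, where the bound is vacuous anyway.

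\emph{Concentration bound (first term).} The LCS is a function of $2N$ independent symbols and changes by at most $1$ when a single symbol changes. McDiarmid's inequality therefore gives
\[
\P\bigl[\lcs\geq \E\lcs+s\bigr]\leq \exp\!\bigl(-s^2/N\bigr).
\]
By Lemma~\ref{lem:LCS-uniform} together with superadditivity (Fekete), $\E\lcs\leq\gamma_q N$ for every $N$. Taking $s=(1-\delta-\gamma_q)N$ when this quantity is positive yields $\calT_N\leq\exp(-N(1-\delta-\gamma_q)^2)$. This is natural-log exponent $(1-\delta-\gamma_q)^2$, i.e.\ base-$q$ exponent $(1-\delta-\gamma_q)^2/\ln q$, matching the first term.

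Combining the two bounds, $\calT_{\mu}(d_{\s{Del}},\delta)/\ln q\geq R^{\s{Del}}(\delta)$. Substituting into both parts of Proposition~\ref{prop:GV} gives the claimed non-expurgated rate $\tfrac12R^{\s{Del}}$ and the expurgated rate $R^{\s{Del}}$; the distinction between strict and non-strict distance thresholds is absorbed into the $o(1)$ terms.

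The main obstacle is the counting step. Getting the precise constant $1+\delta-2H_q(\delta)$, rather than a slightly weaker expression, depends on using the exact insertion-ball formula $\sum_{i\le t}\binom{N}{i}(q-1)^i$ for $t$ insertions. It also requires checking that the deletion-then-insertion overcount does not worsen the exponent. The McDiarmid step is routine, apart from justifying $\E\lcs\le\gamma_qN$ at finite $N$ via superadditivity.
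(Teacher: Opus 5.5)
Your McDiarmid step is sound. Your use of superadditivity and Fekete to get $\E\lcs\le\gamma_q N$ at every finite $N$ is actually more careful than the paper, which simply centers the deviation at $\gamma_q N$. The gap is in the counting step, where an arithmetic slip hides a missing idea. With the worst-case deletion count $\binom{N}{t}$ and the exact insertion count $\sum_{i\le t}\binom{N}{i}(q-1)^i$, the base-$q$ exponent of the ball is
\[
\frac{H_2(\delta)}{\log_2 q}+H_q(\delta)=2H_q(\delta)-\delta\log_q(q-1),
\]
not $2H_q(\delta)-\delta$. Since $\log_q(q-1)<1$, you lose $\delta\log_q\frac{q}{q-1}$ in the rate. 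For $q=2$ you only obtain $1-2H_2(\delta)$ rather than $1+\delta-2H_2(\delta)$. For small $\delta$ it is this counting term, not the quadratic one, that determines $R^{\s{Del}}(\delta)$. So as written, your argument proves the lemma only with a strictly smaller rate. The deletion-then-insertion overcount you flagged as the ``main obstacle'' does worsen the exponent.

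The missing idea is that Lemma~\ref{lem:AvBallSize} needs only the \emph{average} ball size, and the number $|D_t(\ux)|$ of distinct length-$(N-t)$ subsequences of $\ux$ is exponentially smaller on average than $\binom{N}{t}$. Double count pairs $(\ux,\uz)$ with $\uz\in[q]^{N-t}$ a subsequence of $\ux$. Each such $\uz$ has exactly $\sum_{i\le t}\binom{N}{i}(q-1)^i$ supersequences of length $N$, so
\[
q^{-N}\sum_{\ux}|D_t(\ux)|=q^{-t}\sum_{i\le t}\binom{N}{i}(q-1)^i .
\]
Since $|B_{d_{\s{Del}}}(t,\ux)|\le |D_t(\ux)|\cdot\sum_{i\le t}\binom{N}{i}(q-1)^i$, this gives
\[
B^{\s{Av}}_{d_{\s{Del}}}(t)\le q^{-t}\Bigl(\sum_{i\le t}\binom{N}{i}(q-1)^i\Bigr)^2=q^{N(2H_q^*(\delta)-\delta+o(1))}.
\]
This is exactly the Levenshtein estimate (his Eq.~(6)) that the paper invokes, and it yields $\calT_{\mu}(d_{\s{Del}},\delta)\ge(1+\delta-2H_q^*(\delta))\ln q$. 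With this averaged bound in place of your worst-case count, the rest of your argument goes through as in the paper.
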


\begin{proof}
The proof follows from Proposition~\ref{prop:GV} upon bounding the lower tail probability $\calT_{\mu}(d_{\s{Del}},\delta)$ in two different ways, which gives the maximum in \eqref{eq:RateGVdel}. First, note that if $\su{x}=(\su{x}_0,\dots ,\su{x}_{N-1})$ and $\su{y}=(\su{y}_0,\dots ,\su{y}_{N-1})$ are i.i.d. uniform vectors in $[q]^N$ then the random variables $\su{x}_0,\dots ,\su{x}_{N-1},\su{y}_0,\dots ,\su{y}_{N-1}$ are i.i.d. uniform over $[q]$. Additionally, the function $\lcs$ (as a function of $2N$ variables) has bounded-differences since in each coordinate the variance is bounded by $1$. Thus, by McDiarmid’s inequality (see Lemma~\ref{lem:McDiarmid}) and Lemma~\ref{lem:LCS-uniform} for $\delta \leq1-\gamma_q$ we have
\begin{align*}
    \P\pp{d_{\s{Del}}(\su{x},\su{y})\leq \delta N}&=\P\pp{\lcs(\su{x},\su{y})> (1-\delta) N}\\
    &=\P\pp{\lcs(\su{x},\su{y})-\gamma_q N> (1-\delta-\gamma_q) N}\\
    &\leq \P\pp{|\lcs(\su{x},\su{y})-\gamma_q N|> (1-\delta-\gamma_q) N}\\
    &\leq 2\exp\p{-N(1-\delta -\gamma_q)^2}.
\end{align*}
In particular
\begin{equation}
    \calT_{\mu}(d_{\s{Del}},\delta)\geq  (1-\delta -\gamma_q)^2.    \label{eq:Taudelta1}
\end{equation}

On the other hand, by Lemma~\ref{lem:AvBallSize}, $\calT_{\mu}(d,\delta)$ is given by the asymptotics of the average volume of the ``deletion ball''. 
Using the result of \cite[Eq. (6)]{levenshtein2002bounds},  we have
\[\limsup_{N\to\infty}\frac{1}{N}\log_q B^{\s{Av}}_{d_{\s{Del}}}(\delta N)\leq 2H_q^*(\delta)-\delta.\]
Combining these results with Lemma~\ref{lem:AvBallSize}, we obtain
\begin{equation}
    \calT_{\mu}(d_{\s{Del}},\delta)\geq \p{1+\delta-2H_q^*(\delta)}\ln q.\label{eq:Taudelta2}
\end{equation}
Taking together \eqref{eq:Taudelta1} and \eqref{eq:Taudelta2}, we obtain 
\[\calT_{\mu}(d_{\s{Del}},\delta)\geq \max\p{\p{1+\delta-2H_q^*(\delta)}\ln q,(1-\delta -\gamma_q)^2}.\]
Now Proposition~\ref{prop:GV} implies that, once 
   \begin{equation}
    \frac{\ln L}{N}\leq \frac{1}{2}\max\p{\p{1+\delta-2H_q^*(\delta)}\ln q,(1-\delta -\gamma_q)^2}-o(1), \label{eq:Shek1}
 \end{equation}
then the code distance satisfies $\P\pp{d_{\s{Del}}(\s{C})>\delta N}=1-o(1)$. Further, if
\begin{equation}
    \frac{\ln L}{N}\le \max\p{\p{1+\delta-2H_q^*(\delta)}\ln q,(1-\delta -\gamma_q)^2}-o(1),\label{eq:Shek2}
\end{equation}
then with probability $1-o(1)$, the code $\s{C}$ contains a subcode $\s{C}'\subseteq\s{C}$ of size $|\s{C}'|=(1-o(1))L$ such that $d_{\s{Del}}(\s{C}')\geq \delta N$. Changing the logarithm base from $e$ to $q$ in \eqref{eq:Shek1} and \eqref{eq:Shek2} matches the assumptions of the lemma and completes the proof. 
\end{proof}

\subsubsection*{Random $q$-ary codes in the $\ell_1$ metric}

We now present GV-type bounds for i.i.d. uniform classical codes with respect to the (scaled) $\ell_1$ metric \eqref{eq:ell1Dist}, where $\mu$ is the uniform distribution on $[q]^N$. The key input is the precise asymptotic evaluation of average $\ell_1$ ball sizes over   $[q]^N$, obtained in \cite[Section~V]{goyal2024gilbert}. We use this evaluation with the additional factor $1/2$ in our metric normalization. We fix $q\in \Z_0$ throughout this section and let $B_{d_1}^{\s{Av}}(\delta N)$ denote the average ball (of radius $\delta N$) size in $[q]^N$.

\begin{lemma}[Asymptotic volume of the $\ell_1$ ball, {\cite[Section~V]{goyal2024gilbert}}]\label{lem:ell1avBall}
Let
\[ S_q(y) :=  q+2\sum_{j=1}^{q-1}(q-j)y^j.\]
For $0\leq \delta\leq (q^2-1)/6q$, let $y_\delta\in[0,1]$ be the unique solution, with the convention $y_0=0$, of
\[ \sum_{j=1}^{q-1}(q-j)(j-2\delta)y_\delta^j=q\delta. \]
Define
\[ \Tilde{ T}_q(\delta) :=  \ln S_q(y_\delta)-2\delta\ln y_\delta. \]
Then,
\[ \lim_{N\to\infty}\frac{1}{N}\ln B_{d_1}^{\s{Av}}(\delta N)= \begin{cases}
\widetilde T_q(2\delta)-\ln q & 0\leq \delta\leq \frac{q^2-1}{6q},\\
\ln q  & \delta\geq \frac{q^2-1}{6q}.
\end{cases} \]
\end{lemma}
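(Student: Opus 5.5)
The plan is to reduce the average ball size to a pair-counting problem and then evaluate it with a Chernoff/saddle-point argument. By the computation in the proof of Lemma~\ref{lem:AvBallSize},
\[
B_{d_1}^{\s{Av}}(\delta N)=\frac{1}{q^N}\abs{\ppp{(\ux,\uy)\in([q]^N)^2~:~\norm{\ux-\uy}_1\leq 2\delta N}},
\]
where the factor $2$ comes from the scaling in \eqref{eq:ell1Dist}. The pair count factorizes over coordinates. For a single coordinate, the number of pairs $(a,b)\in[q]^2$ with $|a-b|=j$ equals $q$ for $j=0$ and $2(q-j)$ for $1\leq j\leq q-1$. Hence the one-coordinate generating function is exactly $S_q(y)$, and the pair count equals $\sum_{D\leq 2\delta N}[y^D]S_q(y)^N$. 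It then suffices to show that
\[
\frac1N\ln\sum_{D\leq uN}[y^D]S_q(y)^N\;\longrightarrow\;\widetilde T_q(u)
\]
with $u=2\delta$ in the subcritical range, and $\ln(q^2)$ beyond it. Subtracting $\ln q$ for the $q^{-N}$ normalization then gives the stated formula.

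For the upper bound I will use the standard Chernoff trick. For every $y\in(0,1]$,
\[
\sum_{D\leq uN}[y^D]S_q(y)^N\leq S_q(y)^N y^{-uN}.
\]
Minimizing $\ln S_q(y)-u\ln y$ over $y$ gives the stationarity condition $yS_q'(y)=uS_q(y)$. Expanding, $2\sum_j(q-j)jy^j=u\bigl(q+2\sum_j(q-j)y^j\bigr)$, which after rearranging is $\sum_{j}(q-j)(j-u)y^j=qu/2$. I will verify that this is the defining equation of $y_\delta$ after the parameter matching between the scaled and unscaled radius, keeping careful track of the factor $2$. Uniqueness of the root comes from $yS'/S$ being the mean of the tilted distribution $P_y(j)\propto c_jy^j$. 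This mean is strictly increasing in $y$, since its derivative in $\ln y$ is a variance. It starts at $0$ for $y=0$ and equals the uniform-pair mean $\E|a-b|=(q^2-1)/(3q)$ at $y=1$. That identifies the critical radius. For radii above it, the trivial upper bound $q^{2N}$ together with the law of large numbers for $|a-b|$ gives a count of $q^{2N(1+o(1))}$, hence the $\ln q$ branch.

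For the matching lower bound I will change measure. Let $\s{J}_1,\dots,\s{J}_N$ be i.i.d.\ with law $P_{y_\delta}$, whose mean is exactly $u$. Then
\[
[y^D]S_q(y)^N=S_q(y_\delta)^N y_\delta^{-D}\,\P\Bigl[\textstyle\sum_i\s{J}_i=D\Bigr].
\]
Summing over $D\in[uN-\sqrt N,uN]$, the CLT shows that $\sum_i\s{J}_i$ falls in this window with probability bounded below by a constant. Moreover $y_\delta^{-D}\geq y_\delta^{-uN+\sqrt N}$, so only a subexponential factor $e^{O(\sqrt N)}$ is lost. This yields $\liminf\frac1N\ln(\cdot)\geq\widetilde T_q(u)$. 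The endpoint $\delta=0$, with $y_0=0$, is handled by continuity or directly: the count is $q^N$, so the exponent is $0$.

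The main obstacle is not analytic but bookkeeping. I must check that the normalizations in the cited formula of \cite{goyal2024gilbert} (the argument $2\delta$ in $\widetilde T_q$, the $2\delta\ln y_\delta$ term, and the critical value $(q^2-1)/6q$) agree with the scaled metric \eqref{eq:ell1Dist} and with the tilted-mean equation above. I will recheck these by evaluating both sides at small $q$ (e.g.\ $q=2$, where $S_2(y)=2+2y$) before finalizing the parameter identification.
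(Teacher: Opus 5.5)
Your analytic route (pair counting with the one-coordinate generating function $S_q$, the Chernoff bound $S_q(y)^N y^{-uN}$, and exponential tilting with a $\sqrt N$ window plus the CLT) is sound. It is also genuinely self-contained: the paper gives no proof, and simply imports the evaluation from \cite{goyal2024gilbert} with a rescaling by $1/2$. The trouble is the bookkeeping step you postponed, which cannot come out the way you announce. With $u=2\delta$, your stationarity equation $\sum_j(q-j)(j-u)y^j=qu/2$ is verbatim the lemma's defining equation for $y_\delta$. So the optimizer is $y_\delta$, and the optimal exponent is $\ln S_q(y_\delta)-2\delta\ln y_\delta=\widetilde T_q(\delta)$. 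Your reduction ``it suffices to show $\frac1N\ln\sum_{D\le uN}[y^D]S_q(y)^N\to\widetilde T_q(u)$ with $u=2\delta$'' is therefore false as stated. The lemma's $\widetilde T_q$ already has the factor $2$ built in, so $\widetilde T_q(u)$ is the exponent for the unscaled radius $2uN$, not $uN$.

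Carried through, your argument proves $\lim_{N\to\infty}\frac1N\ln B_{d_1}^{\s{Av}}(\delta N)=\widetilde T_q(\delta)-\ln q$ for $0\le\delta\le(q^2-1)/(6q)$. This is the true statement. The printed $\widetilde T_q(2\delta)-\ln q$ applies the rescaling twice. Several checks confirm this.

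First, $\widetilde T_q(2\delta)$ is not even defined for $\delta>(q^2-1)/(12q)$, since the equation for $y_{2\delta}$ then has no root in $[0,1]$.

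Second, $y_\delta$ reaches $1$ exactly at $\delta=(q^2-1)/(6q)$. There $\widetilde T_q(\delta)-\ln q=\ln S_q(1)-\ln q=\ln q$, which matches the second branch.

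Third, for $q=2$ one gets $y_\delta=2\delta/(1-2\delta)$ and $\widetilde T_2(\delta)-\ln 2=(\ln 2)H_2(2\delta)$. This is the Hamming-ball exponent at radius $2\delta N$, as it must be since $d_1$ is half the Hamming distance on binary words. The printed version would instead give $(\ln 2)H_2(4\delta)$.

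Fourth, the rate \eqref{eq:ell1GVrate} that the paper actually uses downstream is $2-\widetilde T_q(\delta)/\ln q$.

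So finish with $\widetilde T_q(\delta)$ and flag the misprint, rather than trying to verify the formula as printed.
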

Switching to base-$q$ logarithms, we define:
\begin{equation}
    R^{\ell_1}(\delta)=2-\log_q(S_q(y_{\delta}))+2\delta \log_q y_{\delta}. \label{eq:ell1GVrate}
\end{equation}
Combining Lemma~\ref{lem:ell1avBall} with Proposition~\ref{prop:GV} and Lemma~\ref{lem:AvBallSize}, we obtain the following achievable rates for uniform i.i.d. codes under the $\ell_1$ metric.  

\begin{lemma}\label{lem:ell1randomCodes}
    Consider the uniform i.i.d. code of \eqref{eq:RandomCode}. If 
    \[ \frac{\log_{q}L}{N}\leq \frac{1}{2}R^{\ell_1}(\delta)-o(1), \]
    then $\P\pp{d_{1}(\s{C})>\delta N}=1-o(1)$. Furthermore, if
\[\frac{\log_q L}{N}\le R^{\ell_1}(\delta)-o(1), \]
then, with probability $1-o(1)$, the code $\s{C}$ contains a subcode $\s{C}'\subseteq\s{C}$ of size $|\s{C}'|=(1-o(1))L$ such that $d_1(\s{C}')\geq \delta N$.
\end{lemma}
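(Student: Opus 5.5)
This is a direct instance of Proposition~\ref{prop:GV}, with the lower-tail exponent computed through Lemma~\ref{lem:AvBallSize}, exactly as in Lemma~\ref{lem:random-qary-hamming}. Take $X_N=[q]^N$ with the scaled metric $d_1$ of \eqref{eq:ell1Dist} and the uniform measure $\mu_{\s{U}}$. By Lemma~\ref{lem:AvBallSize}, the exponent satisfies
\[\calT_{\mu}(d_1,\delta)=\limsup_{N\to\infty}\frac1N\bigl(N\ln q-\ln B^{\s{Av}}_{d_1}(\delta N)\bigr).\]
It therefore suffices to evaluate the growth rate of the average $\ell_1$ ball in $[q]^N$, and then feed the result into the two parts of Proposition~\ref{prop:GV}.

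The ball-size asymptotics come from Lemma~\ref{lem:ell1avBall}. In the nondegenerate range of $\delta$, it gives $\frac1N\ln B^{\s{Av}}_{d_1}(\delta N)\to \widetilde T_q(2\delta)-\ln q$. Hence
\[\calT_{\mu}(d_1,\delta)=2\ln q-\ln S_q(y)+2\delta'\ln y,\]
where $y$ and $\delta'$ are the saddle-point solution and radius parameter dictated by the scaling in Lemma~\ref{lem:ell1avBall}. Dividing by $\ln q$ turns this into $R^{\ell_1}(\delta)$ as defined in \eqref{eq:ell1GVrate}.

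The factor-of-two scaling is the main bookkeeping issue. The metric carries a factor $\tfrac12$, so a $d_1$-ball of radius $\delta N$ is an unscaled $\ell_1$-ball of radius $2\delta N$. I would carefully match this against the convention of \cite{goyal2024gilbert} as restated in Lemma~\ref{lem:ell1avBall}, checking that the definition of $y_\delta$ and the term $2\delta\log_q y_\delta$ in \eqref{eq:ell1GVrate} are consistent with that radius. The limit in Lemma~\ref{lem:ell1avBall} exists, so the $\limsup$ in the definition of $\calT_\mu$ becomes an honest limit. For $\delta$ beyond the threshold the exponent is $\le 0$, and the statement is vacuous there.

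Converting between $\ln$ and $\log_q$, the first part of Proposition~\ref{prop:GV} gives
\[\P\bigl[d_1(\s{C})>\delta N\bigr]=1-o(1)\quad\text{whenever}\quad \frac{\log_q L}{N}\le \tfrac12 R^{\ell_1}(\delta)-o(1).\]
The expurgation part of the same proposition gives a subcode $\s{C}'$ of size $(1-o(1))L$ with $d_1(\s{C}')\ge\delta N$ under the weaker condition $\frac{\log_q L}{N}\le R^{\ell_1}(\delta)-o(1)$. The $o(1)$ terms are absorbed as in the proof of Proposition~\ref{prop:GV}.
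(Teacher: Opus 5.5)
Your route is the paper's: there the lemma is obtained simply by combining Proposition~\ref{prop:GV}, Lemma~\ref{lem:AvBallSize} and Lemma~\ref{lem:ell1avBall}. However, the step you postponed is the only non-formal content, and as written your chain does not close. If you insert the displayed limit $\widetilde T_q(2\delta)-\ln q$ literally, you get $\calT_{\mu}(d_1,\delta)=2\ln q-\widetilde T_q(2\delta)=2\ln q-\ln S_q(y_{2\delta})+4\delta\ln y_{2\delta}$. Dividing by $\ln q$ then yields $R^{\ell_1}(2\delta)$, not $R^{\ell_1}(\delta)$. Moreover, $y_{2\delta}\in[0,1]$ exists only for $\delta\le (q^2-1)/(12q)$, whereas the formula is claimed up to $(q^2-1)/(6q)$. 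Leaving $y$ and $\delta'$ unspecified hides exactly this discrepancy.

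The fix is to notice that the factor $\tfrac12$ in $d_1$ is already built into $y_\delta$ and $\widetilde T_q(\delta)$. The $2\delta$ in the displayed limit of Lemma~\ref{lem:ell1avBall} therefore applies it a second time. To see this, note that $S_q(y)=\sum_{a,b\in[q]}y^{|a-b|}$ is the generating function of the unscaled distance of one coordinate pair. Hence the number of pairs $(\ux,\uy)\in([q]^N)^2$ with $\norm{\ux-\uy}_1\le\rho N$ is $\exp\{N(\ln S_q(y)-\rho\ln y)+o(N)\}$, where $yS_q'(y)/S_q(y)=\rho$, as long as $\rho$ is below the mean $(q^2-1)/(3q)$ of $|a-b|$ for independent uniform $a,b$. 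For $\rho=2\delta$ this saddle-point equation is exactly $\sum_{j}(q-j)(j-2\delta)y^j=q\delta$. So $\widetilde T_q(\delta)$ is the pair exponent for $d_1$-radius $\delta N$, and $\frac1N\ln B^{\s{Av}}_{d_1}(\delta N)\to\widetilde T_q(\delta)-\ln q$. This matches both endpoints: it gives $0$ at $\delta=0$, and $\ln q$ at $\delta=(q^2-1)/(6q)$, where $y_\delta=1$ and $S_q(1)=q^2$. Hence $\calT_{\mu}(d_1,\delta)=2\ln q-\widetilde T_q(\delta)=R^{\ell_1}(\delta)\ln q$, which is \eqref{eq:ell1GVrate}. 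With this in place, the two parts of Proposition~\ref{prop:GV}, after converting to base-$q$ logarithms, give the statement exactly as you describe.
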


Another quantity of interest in the context of the $\ell_1$ metric is the exponential growth rate of the size of the $\ell_1$ ball centered at $\underline{0}$, which turns out to be related to the number of amplitude damping error operators on $q$-ary codes (this is discussed in Section~\ref{sec:Exampels}). In the following lemma, we give an exact characterization of this quantity using large deviations theory. 

\begin{lemma}\label{lem:ell1BallZero}
    Let $q\geq 2$ be fixed and $B^{\ell_1}_{q,N}(\delta N,\uzero)$ denote the ball of radius $\delta N$ centered in $\uzero_N\in [q]^N$ with respect to the scaled $\ell_1$ distance \eqref{eq:ell1Dist}. 
    Then
    \begin{equation}
        B_q(\delta) :=  \lim_{N\to \infty} \frac{1}{N}\log_q|B^{\ell_1}_{q,N}(\delta N,\uzero)| = \begin{cases}
            \log_q f_q(x_\delta) -2\delta \log_q(x_\delta ) & \delta \leq \frac{q-1}{4},\\
            1 & \text{otherwise,}
        \end{cases}\label{eq:Bqfunction}
    \end{equation}
    where $f_q(x)=1+x+\cdots + x^{q-1}$ and $x_{\delta}$ is the unique solution of the equation 
    \[2\delta=\frac{xf_q'(x)}{f_q(x)}\]
\end{lemma}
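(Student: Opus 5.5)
The ball $B^{\ell_1}_{q,N}(\delta N,\uzero)$ in the scaled metric \eqref{eq:ell1Dist} is the set of $\ux\in[q]^N$ with $\sum_i x_i\le 2\delta N$, because $d_1(\ux,\uzero)=\frac12\sum_i x_i$. The plan is therefore to count words in $[q]^N$ with bounded coordinate sum, using generating functions together with an exponential-tilting (Chernoff/Cramér) argument. The number of words with sum exactly $s$ is the coefficient $[x^s]f_q(x)^N$. The ball size is then $\sum_{s\le 2\delta N}[x^s]f_q(x)^N$, which lies between the largest term and $(2\delta N+1)$ times the largest term. This polynomial factor is invisible at exponential scale.

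For the upper bound, we apply the standard saddle-point inequality: for any $0<x\le 1$,
\[
\sum_{s\le 2\delta N}[x^s]f_q(x)^N \le x^{-2\delta N}f_q(x)^N .
\]
Optimizing over $x$ gives the minimizer of $\log f_q(x)-2\delta\log x$. Its stationarity condition is exactly $2\delta = x f_q'(x)/f_q(x)$. The map $x\mapsto xf_q'(x)/f_q(x)$ is the mean of the tilted distribution $P_x(j)\propto x^j$ on $[q]$, and this mean is strictly increasing because its derivative in $\log x$ is a variance. It runs from $0$ at $x=0$ to $(q-1)/2$ at $x=1$. Hence for $2\delta\le (q-1)/2$, that is $\delta\le (q-1)/4$, a unique $x_\delta\in(0,1]$ exists, and the upper bound is $N(\log_q f_q(x_\delta)-2\delta\log_q x_\delta)$.

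For $\delta>(q-1)/4$, the trivial bound $q^N$ gives exponent $1$. For the matching lower bound, the law of large numbers shows that at least half of $[q]^N$ has sum at most $(q-1)N/2+o(N)\le 2\delta N$. This gives exponent $1$ in that regime, and also at the boundary value $x_\delta=1$, where $f_q(1)=q$.

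For the lower bound when $\delta<(q-1)/4$, we use the tilted i.i.d. measure $P_{x_\delta}^{\otimes N}$, which has mean sum $2\delta N$. We restrict to words whose sum $s$ lies in $[2\delta N - \sqrt N,\,2\delta N]$. By the central limit theorem (or Chebyshev applied to a window of width $O(\sqrt N)$ below the mean), this window has $P_{x_\delta}$-probability bounded below by a constant. On it, each word has probability $x_\delta^{s}/f_q(x_\delta)^N\le x_\delta^{2\delta N-\sqrt N}/f_q(x_\delta)^N$, using $x_\delta<1$. The number of such words is therefore at least $c\,f_q(x_\delta)^N x_\delta^{-2\delta N+\sqrt N}$, which matches the upper bound up to $e^{o(N)}$.

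The only delicate point is this lower-bound step: we need a constant-mass window strictly below the mean and must check that the $x_\delta^{\sqrt N}$ loss is subexponential. Both follow from the finite variance of the bounded tilted variables. An alternative is to cite Cramér's theorem directly for $\P[\sum \s{y}_i\le 2\delta N]$ with $\s{y}_i$ uniform on $[q]$, which gives the same rate function via the Legendre transform. Combining the two bounds yields \eqref{eq:Bqfunction}.
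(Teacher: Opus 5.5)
Your proof is correct, but it takes a different route from the paper's. The paper uses the method of types. It writes $|B^{\ell_1}_{q,N}(\delta N,\uzero)|=q^N\,\P[\mu_{\su{X}}\in A_\delta]$ for uniform $\su{X}$ and applies Sanov's theorem. This reduces the exponent to the entropy maximization $\sup\{H(\nu):\sum_a a\,\nu(a)\le 2\delta\}$, which the paper solves by Lagrange multipliers. The maximizer is the Gibbs law $\mu_\delta(a)=x_\delta^a/f_q(x_\delta)$, whose entropy is $\log_q f_q(x_\delta)-2\delta\log_q x_\delta$.

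You work instead at the level of Cram\'er's theorem. Your upper bound is a one-line Chernoff bound $|B|\le x^{-2\delta N}f_q(x)^N$ for every $x\in(0,1]$, optimized at $x_\delta$. Your lower bound is a change of measure to $P_{x_\delta}^{\otimes N}$, and this tilted law is exactly the paper's maximizer $\mu_\delta$, so the two proofs are Legendre-dual views of the same exponent.

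Your route gives several things the paper's does not. The upper bound $q^{NB_q(\delta)}$ holds for every $N$, not just asymptotically. The saddle point $x_\delta$ appears directly as a stationarity condition, with no constrained maximization or boundary/KKT justification. You also never need to check that the infimum over the interior of $A_\delta$ in Sanov's lower bound matches the infimum over its closure. The paper's route, in turn, gives a clean maximum-entropy-over-types characterization that carries over unchanged to other linear constraints on the empirical distribution.

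One correction: Chebyshev's inequality alone does not give a constant lower bound on the mass of the one-sided window $[2\delta N-\sqrt N,\,2\delta N]$ below the mean. Rely on the CLT there; it applies because the tilted law is nondegenerate once $x_\delta>0$. Alternatively, tilt to some $\delta'<\delta$ and use the law of large numbers to put mass $1-o(1)$ on sums within $\eta N$ of $2\delta' N$, where $2\delta'+\eta\le 2\delta$. Then let $\delta'\uparrow\delta$ and $\eta\downarrow 0$, using continuity of the exponent.

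The case $\delta=0$, where $x_\delta=0\notin(0,1]$, is trivial because the ball is just $\{\uzero\}$.
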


\begin{proof}
    The proof follows a classical `method-of-types' argument. For a vector $\ux\in[ q]^N$, the empirical distribution of $\ux$, $\mu_{\ux}\in [0,1]^q $ is  
    \[ \mu_{\ux}(a) := \frac{1}{N}\abs{\ppp{i\in[N]~:~x_i=a}},\qquad a\in[q]. \]
    Define the set $A_\delta\subseteq [0,1]^q$ as  
    \begin{equation}
        A_\delta :=  \Big\{\mu\in [0,1]^q ~:~ \sum_{a\in[q]} \mu(a)=1,~\sum_{a\in [q]}a \mu(a)\leq 2 \delta\Big\}.\label{eq:BallAdelta}
    \end{equation}
    
 For any $\ux\in [q]^N$, the following statements are equivalent:
   \begin{align*}
        \ux\in B_{q,N}^{\ell_1}(\delta N,\uzero)& \iff  d_1(\ux,0)=\frac{1}{2}\sum_{i=0}^{N-1} x_i\leq \delta N  \iff \frac{1}{2}\sum_{a\in[q]}  a\mu_{\ux}(a) \leq  \delta \iff \mu_{\ux} \in A_{\delta},
    \end{align*}
   Let $(\s{X}_0,\dots, \s{X}_{N-1})=\su{X}\sim \mu_{\s{U}}$ be a uniform vector on  $[q]^N$.
    By \eqref{eq:BallAdelta} we have 
    \begin{equation}
        B_{q,N}^{\ell_1}(\delta N,\uzero)=q^N\P\pp{\mu_{\su{X}}\in A_{\delta}}.\label{eq:BallProbA}
    \end{equation}
    Note that its coordinates $\s{X}_0,\dots,\s{X}_{N-1}$ are i.i.d. uniform on $[q]$. Thus, by Sanov's theorem (see Lemma~\ref{lem:Sanov}) we have 
    \begin{align}
        -\inf_{\nu\in A_{\delta}^{\circ}}D(\nu\|\mu_{\s{U}})\leq\liminf_{N\to\infty}\frac{1}{N}\log_q\P\pp{\mu_{\su{X}}\in A_{\delta}}\leq\limsup_{N\to\infty}\frac{1}{N}\log_q\P\pp{\mu_{\su{X}}\in A_{\delta}}\leq-\inf_{\nu\in\overline{A}_{\delta}}D(\nu\|\mu_{\s{U}}).\label{eq:SanovEq}
    \end{align}
    Note that $A_\delta$ is a closed and convex set, on which the function 
    \begin{equation}
        f(\nu)=D(\nu||\mu_{\s{U}})=\sum_{a\in[q]}\nu(a)\log_q\frac{\nu(a)}{\mu_{\s{U}}(a)}=1-H(\nu) \label{eq:SanovFunction}
    \end{equation}
    is continuous (where $H(\nu)=-\sum_{a}\nu(a)\log_q\nu(a)$ is the Shannon entropy). Combining \eqref{eq:BallProbA}, \eqref{eq:SanovEq}, and \eqref{eq:SanovFunction}, we formulate the asymptotic growth rate of the ball volume as an entropy maximization problem:
    \begin{equation}
        \lim_{N\to\infty} \frac{1}{N}\log_q|B_{q,N}^{\ell_1}(\delta N,\uzero)| =\sup_{\nu\in A_{\delta}}H(\nu)= \sup\Big\{H(\nu) : \nu \in  [0,1]^q,~\sum_{a\in [q]} \nu(a)=1,~\sum_{a\in [q]}a \nu(a)\leq 2\delta\Big\}.\label{eq:BallOpti}
    \end{equation}
The global optimum of $H(\nu)$ is attained for the uniform distribution $\mu_{\s{U}}$, which is contained in $A_\delta$ if
$\delta\geq \frac{q-1}{4}$. In this case, \eqref{eq:BallOpti} becomes $H_{q}(\mu_{\s{U}})=1$. For $\delta\leq \frac{q-1}{4}$, using concavity of $H$ and convexity of $A_{\delta}$, we have that the supremum is achieved on the boundary of $A_{\delta}$, and the problem
takes the form
    \[
\begin{aligned}  & H(\mu) \to\max\\
\text{subject to}\quad & \sum_{a\in[q]}\mu(a)=1,\\
& \sum_{a\in[q]}a\mu(a)=2\delta,\\
& \mu(a)\geq 0,\qquad a\in[q].
\end{aligned}\]
Using Lagrange multipliers, we obtain that the optimal distribution has the form $\mu_{\delta}(a)={x_{\delta}^{a}}/{f_q(x_{\delta})}$ for $ a\in[q],
$
where $f_q(x) :=  \sum_{a\in[q]}x^a$.
The parameter $x_{\delta}\in[0,1]$ is determined by the constraint
\[ 2\delta=\sum_{a\in[q]}a\mu_{\delta}(a)=\frac{x_{\delta}f_q'(x_{\delta})}{f_q(x_{\delta})}. \]
The solution $x_{\delta}$ exists and is unique because 
$ \frac{x f_q'(x)}{f_q(x)} $ equals 0 at $x=0$ and is strictly increasing on $(0,1]$.
Substituting the maximizing distribution gives
   \[ 
   H(\mu_{\delta})
= \log_q f_q(x_{\delta})-2\delta\log_q x_{\delta}. \qedhere
\]
\end{proof}

Lemma~\ref{lem:ell1BallZero} gives the exponential size of an $\ell_1$ ball centered at $\uzero$ when the coordinates are restricted to the finite one-sided alphabet $[q]$. The natural two-sided infinite-alphabet analog is obtained by replacing $[q]$ with $\Z$, so that the ball consists of all signed integer vectors whose total $\ell_1$ weight is bounded. This is precisely the large-alphabet Lee-ball regime studied by Gardy and Sol\'e~\cite[Theorem~7]{gardy1991saddle}. We will use this exponent in Section~\ref{sec:Exampels} to describe the growth rate of the number-shift and phase-rotation error sets for constant-excitation Fock-state code.
\begin{lemma}[Signed scaled-$\ell_1$ balls over $\Z$ {\cite[Theorem~7]{gardy1991saddle}}]\label{lem:signedL1BallZero}
Let
\[
B_{\Z,n}^{\ell_1}(r,\uzero) := \ppp{\ux\in\Z^n~:~d_1(\ux,\uzero)=\frac{1}{2}\norm{\ux}_1\leq r}.
\]
Then, for $n=\alpha N$ and $r=\delta N$,
\[
B_{\Z,\alpha}(\delta) := \lim_{N\to\infty}\frac{1}{N}\ln\abs{B_{\Z,\alpha N}^{\ell_1}(\delta N,\uzero)}
=2\delta\ln\frac{2\delta}{\alpha}+\alpha\ln\p{\frac{2\delta}{\alpha}+\sqrt{1+\frac{4\delta^2}{\alpha^2}}}-2\delta\ln\p{\sqrt{1+\frac{4\delta^2}{\alpha^2}}-1}.
\]
\end{lemma}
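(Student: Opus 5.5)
We plan to obtain the exponent from an exact generating-function formula, with an upper bound by a Chernoff-type inequality and a matching lower bound by a largest-term estimate.

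\emph{Generating function.} Set $m=\lfloor 2\delta N\rfloor$, so that $\abs{B_{\Z,n}^{\ell_1}(\delta N,\uzero)}=\#\ppp{\ux\in\Z^n:\norm{\ux}_1\le m}$. The one-dimensional series $\sum_{x\in\Z}z^{|x|}$ equals $\frac{1+z}{1-z}$, so
\[
\sum_{m\ge0}\abs{B_{\Z,n}^{\ell_1}(m/2,\uzero)}\,z^m=\frac{1}{1-z}\Big(\frac{1+z}{1-z}\Big)^{n}.
\]
Equivalently, there is the exact finite formula
\[
\abs{B_{\Z,n}^{\ell_1}(m/2,\uzero)}=\sum_{k\ge0}\binom{n}{k}\binom{m}{k}2^k .
\]
To see this, choose the support of size $k$ and the $k$ signs; the absolute values form a composition of some integer $s\le m$ into $k$ positive parts. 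Adding a slack part, the number of such compositions with $s\le m$ is $\binom{m}{k}$.

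\emph{Upper bound.} For every $z\in(0,1)$, since all coefficients are nonnegative,
\[
\abs{B}\le z^{-m}\frac{1}{1-z}\Big(\frac{1+z}{1-z}\Big)^{\alpha N}.
\]
Taking $\frac1N\ln$ and letting $N\to\infty$ gives
\[
\limsup_{N\to\infty}\tfrac1N\ln\abs{B}\le \phi(z):=\alpha\ln\frac{1+z}{1-z}-2\delta\ln z .
\]
We then minimize $\phi$ over $z\in(0,1)$. The condition $\phi'(z)=0$ reads $\frac{2\alpha}{1-z^2}=\frac{2\delta}{z}$, i.e. $\delta z^2+\alpha z-\delta=0$. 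Hence
\[
z_*=\frac{\sqrt{\alpha^2+4\delta^2}-\alpha}{2\delta}\in(0,1).
\]

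\emph{Lower bound.} The finite sum above has only $\min(n,m)+1=O(N)$ terms, so $\abs{B}$ is at least its largest term and at most $O(N)$ times that term. With $k=\kappa N$ and Stirling's formula, the exponent of the largest term is
\[
\max_{0\le\kappa\le\min(\alpha,2\delta)}\Big[\alpha h(\kappa/\alpha)+2\delta h\big(\kappa/(2\delta)\big)+\kappa\ln2\Big],
\]
where $h$ is the natural-log binary entropy. Its stationarity equation is $\kappa^2=2(\alpha-\kappa)(2\delta-\kappa)$. By convex duality (the Chernoff bound is tight for such sums of products of binomials), this maximum equals $\phi(z_*)$. Concretely, we will check that the maximizer $\kappa_*$ is related to the saddle point via $z_*$ (for example $\kappa_*/(2\delta-\kappa_*)$ expressed through $z_*$), and substitute to confirm that the two values agree.

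\emph{Simplification.} It remains to rewrite $\phi(z_*)$ in the stated form. Put $s=\sqrt{1+4\delta^2/\alpha^2}$, so that $z_*=\frac{\alpha}{2\delta}(s-1)$. Using $(s-1)(s+1)=4\delta^2/\alpha^2$, one gets
\[
\frac{1+z_*}{1-z_*}=\frac{2\delta}{\alpha}+s .
\]
Indeed, cross-multiplying this identity reduces it to $\delta z_*^2+\alpha z_*-\delta=0$. Moreover,
\[
-2\delta\ln z_*=2\delta\ln\frac{2\delta}{\alpha}-2\delta\ln(s-1),
\]
which together with $\alpha\ln\frac{1+z_*}{1-z_*}$ gives exactly the claimed expression.

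I expect the main obstacle to be the matching lower bound, namely identifying the largest-term exponent with $\phi(z_*)$ cleanly. If that duality step becomes messy, the fallback is a direct saddle-point (Cauchy integral) estimate, or a tilted-measure argument: under the product measure $\P(x)\propto z_*^{|x|}$ on $\Z^{n}$, the mean of $\norm{\ux}_1$ is $m$, and the local CLT puts polynomial mass on the ball, which gives the lower bound.
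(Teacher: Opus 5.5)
Your proof is correct, and it takes a more elementary, self-contained route than the paper. The paper gives no argument of its own: it imports the exponent from Gardy--Sol\'e's saddle-point analysis \cite[Theorem~7]{gardy1991saddle}.

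\textbf{Comparison of routes.} You work with the same generating function $(1+z)^n/(1-z)^{n+1}$ and reach the same saddle-point equation $\delta z^2+\alpha z-\delta=0$. You then replace the contour-integral machinery with two elementary estimates:
\begin{itemize}
\item a Chernoff bound for the upper estimate;
\item the explicit Delannoy-type sum $\sum_k\binom{n}{k}\binom{m}{k}2^k$, which has only $O(N)$ terms, for the lower estimate.
\end{itemize}
Your route gives only the exponent. A full saddle-point analysis typically also yields the polynomial prefactor, but the lemma does not need it. In fact the finite sum alone gives both bounds, using $\binom{n}{k}\le e^{nh(k/n)}$. The Chernoff step is then only a convenient way to identify the answer in closed form through $z_*$.

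\textbf{Closing the step you flagged.} The largest-term lower bound closes in a few lines, without appealing to duality.
\begin{itemize}
\item The function $F(\kappa)=\alpha h(\kappa/\alpha)+2\delta h\big(\kappa/(2\delta)\big)+\kappa\ln 2$ is strictly concave.
\item Your stationarity equation has roots $\alpha+2\delta\pm\sqrt{\alpha^2+4\delta^2}$. The root $\kappa_*=\alpha+2\delta-\sqrt{\alpha^2+4\delta^2}$ lies in $(0,\min(\alpha,2\delta))$, so by concavity it is the maximizer.
\item Substituting $2\ln\kappa_*=\ln 2+\ln(\alpha-\kappa_*)+\ln(2\delta-\kappa_*)$ into $F$ gives $F(\kappa_*)=\alpha\ln\frac{\alpha}{\alpha-\kappa_*}+2\delta\ln\frac{2\delta}{2\delta-\kappa_*}$.
\item One checks $2\delta-\kappa_*=\sqrt{\alpha^2+4\delta^2}-\alpha=2\delta z_*$ and $\alpha-\kappa_*=\sqrt{\alpha^2+4\delta^2}-2\delta=\alpha\frac{1-z_*}{1+z_*}$.
\end{itemize}
Together these give $F(\kappa_*)=\alpha\ln\frac{1+z_*}{1-z_*}-2\delta\ln z_*=\phi(z_*)$. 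Your simplification of $\phi(z_*)$ then yields exactly the stated formula. As a sanity check, at $\alpha=1,\delta=\tfrac12$ it gives $2\ln(1+\sqrt2)$, the growth rate of the central Delannoy numbers.
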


\subsection{Codes for the golden mean shift}

We next consider binary runlength-limited codes, a classical family of constrained codes used in recording and storage systems; see, e.g., \cite{lind2021introduction}. The $(1,\infty)$-RLL system, also known as the \textit{golden mean shift}, is the set of binary words 
that do not contain adjacent ones:
\begin{equation}
    X_{N}^{\s{GM}} :=  \ppp{\ux\in\ppp{0,1}^{N}~:~(x_i, x_{i+1})\neq (1,1) \text{ for all } i\in [N-1]}.\label{eq:GMspace}
\end{equation}
We equip $X_{N}^{\s{GM}}$ with the scaled Hamming metric \eqref{eq:HammingScaled} and consider the uniform distribution on $X_{N}^{\s{GM}}$. Note that the literature references given in this section speak of the $(0,1)$-RLL system (of all strings with no two consecutive zeros), which is equivalent to the $(1,\infty)$-RLL system by flipping the bits $0\leftrightarrow1$. 

\begin{lemma}[Capacity and average Hamming ball size for the $(1,\infty)$-RLL system]\label{lem:RLLavBall}
The $(1,\infty)$-RLL system satisfies
\[ C_{\s{GM}} :=  \lim_{N\to\infty}\frac{1}{N}\ln|X_{N}^{\s{GM}}|= \ln\Big(\frac{1+\sqrt{5}}{2}\Big). \]
Additionally, for $z\in[0,1]$, let $\lambda(z)$ be the largest positive root of
\[ \lambda^3-(1+z)\lambda^2-(1+z)\lambda+z=0. \]
For $0\leq \delta\leq 1/5$, let $z_\delta\in[0,1]$ be the unique solution of the differential equation
\[ 2\delta=\frac{z_\delta\lambda'(z_\delta)}{\lambda(z_\delta)}.\]
 Then
\[\lim_{N\to\infty}\frac{1}{N}\ln B_{d}^{\s{Av}}(\delta N) =\begin{cases}\ln\lambda(z_\delta)-2\delta\ln z_\delta-C_{\s{GM}} & 0\leq \delta\leq \frac{1}{5},\\
C_{\s{GM}} & \delta\geq \frac{1}{5}.
\end{cases}
\]
\end{lemma}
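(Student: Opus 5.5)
The capacity statement is the standard transfer-matrix computation for the golden mean shift. The words in $X_N^{\s{GM}}$ correspond to length-$N$ walks on the two-state graph with states $0$ and $1$, where every edge is allowed except $1\to1$. The transfer matrix is $\begin{pmatrix}1&1\\1&0\end{pmatrix}$, so $|X_N^{\s{GM}}|=F_{N+2}$, a Fibonacci number. Perron--Frobenius then gives $\frac1N\ln|X_N^{\s{GM}}|\to\ln\varphi$ with $\varphi=(1+\sqrt5)/2$.

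For the average ball size, I would follow Lemma~\ref{lem:AvBallSize}. The quantity to estimate is
\[
|X_N^{\s{GM}}|\,B_d^{\s{Av}}(\delta N)=\#\{(\ux,\uy)\in (X_N^{\s{GM}})^2:\ \tfrac12 \wt(\ux-\uy)\le \delta N\}.
\]
Pairs $(\ux,\uy)$ of constrained words are themselves walks on the product graph with four states $\{0,1\}^2$. The allowed transitions are $(a,b)\to(a',b')$ with $aa'\ne 1$ and $bb'\ne1$, and each transition is weighted by $z^{\1\{a'\ne b'\}}$ to record the Hamming distance. This gives a weighted transfer matrix $T(z)$.

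Restricting to the symmetric subspace, which is valid because the dominant eigenvector is symmetric under swapping the two coordinates, lumps $(0,1)$ and $(1,0)$ together. The resulting $3\times3$ matrix on the states $00$, $\{01,10\}$, $11$ should have characteristic polynomial $\lambda^3-(1+z)\lambda^2-(1+z)\lambda+z$. I would verify this by direct determinant expansion. Its largest root $\lambda(z)$ is then the Perron root, so $\sum_{\ux,\uy}z^{\wt(\ux-\uy)}=\lambda(z)^{N(1+o(1))}$.

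Next, I would extract the exponent of $\#\{\wt(\ux-\uy)=wN\}$ by a saddle-point / Gärtner--Ellis (or Chernoff plus tilting) argument. For the upper bound, for every $z\in(0,1]$,
\[
\#\{\wt\le 2\delta N\}\le z^{-2\delta N}\sum z^{\wt}=\exp\bigl(N(\ln\lambda(z)-2\delta\ln z)+o(N)\bigr).
\]
Minimizing over $z$ gives the stationarity condition $2\delta=z\lambda'(z)/\lambda(z)$, whose solution is $z_\delta$. For the matching lower bound, I would use the tilted Markov chain with parameter $z_\delta$. Under this chain the distance fraction concentrates at $2\delta$, and the Gärtner--Ellis theorem applies because $\ln\lambda(e^s)$ is analytic and strictly convex in $s$; strict convexity follows since the chain is irreducible and aperiodic with a non-degenerate weight. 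Dividing by $|X_N^{\s{GM}}|$ in Lemma~\ref{lem:AvBallSize} produces the $-C_{\s{GM}}$ term.

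The threshold comes from the behaviour at $z=1$. The map $z\mapsto z\lambda'(z)/\lambda(z)$ increases from $0$ and reaches $2\delta^*$ at $z=1$, where $2\delta^*$ is the typical distance fraction of two independent uniform constrained words. The mark $1/5$ should be $\delta^*$, and I would check it as follows. The stationary density of ones in the golden mean shift is $1/(1+\varphi^2)=1/\varphi^{\sqrt5}\approx 0.2764$. The mismatch probability is $2p(1-p)\approx 0.4$, giving scaled radius $0.2$. Implicit differentiation of the cubic at $z=1$, where $\lambda=\varphi^2$, should reproduce $2\delta^*=2/5$. For $\delta\ge1/5$ the ball contains almost all of the space, so the ball-size exponent equals $C_{\s{GM}}$.

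The main obstacle is the lower bound and the monotonicity/uniqueness of $z_\delta$, which require strict log-convexity of the Perron root. I would handle this through standard properties of the pressure of an irreducible weighted shift. I would also double-check the stated case value $C_{\s{GM}}$ for $\delta\ge 1/5$ against the first case at $z=1$, where $\ln\lambda(1)-C_{\s{GM}}=2\ln\varphi-\ln\varphi=C_{\s{GM}}$, which is consistent.
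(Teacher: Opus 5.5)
Your argument is correct. It is essentially the Kolesnik--Krachkovsky product-graph/Perron-root method that the paper invokes only by citation (Lind--Marcus for the capacity, Theorem~2 of Goyal et al.\ for the average ball), which you carry out explicitly. Your lumped $3\times 3$ matrix does have characteristic polynomial $\lambda^3-(1+z)\lambda^2-(1+z)\lambda+z$; the antisymmetric sector only adds the eigenvalue $-z$. Implicit differentiation at $z=1$, $\lambda=\varphi^2$, indeed gives $z\lambda'/\lambda=2/5$. One cosmetic slip: $1+\varphi^2=\sqrt5\,\varphi$, not $\varphi^{\sqrt5}$, so the density of ones is $(5-\sqrt5)/10\approx 0.2764$. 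That is the value you actually use, and it gives $2p(1-p)=2/5$ exactly.
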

The  capacity value is a well-known result in discrete dynamical systems \cite[Example 4.1.4.]{lind2021introduction}. The asymptotic average-ball calculation follows from the general average-ball evaluation method for constrained systems in \cite[Theorem~2]{goyal2024evaluating}, originally due to Kolesnik and Krachkovsky \cite{kolesnik1991generating}. 

Define the $(1,\infty)$-RLL rate function to be 
\begin{equation}
    \label{eq:RLLrateBinary}
    R^{\s{GM}}(\delta)=\begin{cases}
    2\log_2\p{\frac{1+\sqrt{5}}{2}}-\log_2\lambda(z_\delta)+2\delta\log_2 z_\delta & 0\leq \delta\leq \frac{1}{5},\\
    0 & \delta\geq \frac{1}{5}.
\end{cases}
\end{equation}
Combining the above evaluation with Proposition~\ref{prop:GV} and Lemma~\ref{lem:AvBallSize} we obtain the following result.
\begin{lemma}\label{lem:RLLcodes}
    Consider the uniform i.i.d.  code $\s{C}_N$ on $X_{N}^{\s{GM}}$ of size $L$. If
    \[ \frac{\log_{2}L}{N}\leq \frac{1}{2}R^{\s{GM}}(\delta)-o(1), \]
    then $\P\pp{d(\s{C})>\delta N}=1-o(1)$. Furthermore, if
\[\frac{\log_2 L}{N}\le R^{\s{GM}}(\delta)-o(1), \]
then, with probability $1-o(1)$, the code $\s{C}$ contains a subcode $\s{C}'\subseteq\s{C}$ of size $|\s{C}'|=(1-o(1))L$ such that $d(\s{C}')\geq \delta N$.
\end{lemma}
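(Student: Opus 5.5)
The plan is to obtain Lemma~\ref{lem:RLLcodes} as a direct instance of the meta-statement Proposition~\ref{prop:GV}, exactly as was done for the Hamming, deletion and $\ell_1$ cases. Take $X_N=X_N^{\s{GM}}$ with the scaled Hamming metric \eqref{eq:HammingScaled} and let $\mu_N$ be the uniform measure on $X_N^{\s{GM}}$. Proposition~\ref{prop:GV} then gives both claims (the pure i.i.d.\ claim at half rate and the expurgated claim at full rate), with rates measured in natural logarithms. So the whole task is to identify the lower-tail exponent $\calT_\mu(d,\delta)$ and compare it with $R^{\s{GM}}(\delta)\ln 2$.

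Since $\mu_N$ is uniform, Lemma~\ref{lem:AvBallSize} gives
\[
\calT_\mu(d,\delta)=\limsup_{N\to\infty}\frac1N\big(\ln|X_N^{\s{GM}}|-\ln B_d^{\s{Av}}(\delta N)\big).
\]
Both limits exist by Lemma~\ref{lem:RLLavBall}. The first term tends to $C_{\s{GM}}=\ln\varphi$. For $0\le\delta\le 1/5$, the second tends to $\ln\lambda(z_\delta)-2\delta\ln z_\delta-C_{\s{GM}}$. Hence
\[
\calT_\mu(d,\delta)=2\ln\varphi-\ln\lambda(z_\delta)+2\delta\ln z_\delta=R^{\s{GM}}(\delta)\ln 2,
\]
using \eqref{eq:RLLrateBinary}. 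Dividing the condition $\frac{\ln L}{N}\le \frac12\calT_\mu-o(1)$ (respectively $\le\calT_\mu-o(1)$) by $\ln 2$ yields exactly the hypotheses of the lemma in base-$2$ logarithms, with the $o(1)$ terms rescaled.

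For $\delta\ge 1/5$ we have $R^{\s{GM}}(\delta)=0$, and the hypotheses force $L=e^{o(N)}$. Here I would either treat the statement as vacuous or note that $\calT_\mu\ge 0$ always holds, so the conditions are consistent with Proposition~\ref{prop:GV} as stated. One point needs care. The Lemma as written returns $C_{\s{GM}}$ for the average ball exponent when $\delta\ge1/5$, which would give $\calT_\mu=0$. I would read this as the full-space regime, i.e.\ the exponent saturating at the capacity, so that $\calT_\mu=0$ and the claim reduces to the trivial subexponential case.

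The main (mild) obstacle is bookkeeping of conventions rather than mathematics. First, the metric scaling: the radius $\delta N$ in the scaled metric corresponds to Hamming radius $2\delta N$, and this is already built into the $2\delta$ of Lemma~\ref{lem:RLLavBall}. Second, the strict versus non-strict distance inequality: Proposition~\ref{prop:GV} gives $d(\s{C})>\delta N$ in the first part and $d(\s{C}')\ge\delta N$ in the second, matching the lemma. Third, the change of logarithm base. One must also check that Proposition~\ref{prop:GV} only uses the $\limsup$ form of the exponent, so the existence of the limits is not actually needed beyond Lemma~\ref{lem:RLLavBall}.
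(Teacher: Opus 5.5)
Your proposal is correct and is the same argument as the paper's, which obtains the lemma by combining Proposition~\ref{prop:GV}, Lemma~\ref{lem:AvBallSize}, and Lemma~\ref{lem:RLLavBall}: the uniform lower-tail exponent is $2C_{\s{GM}}-\ln\lambda(z_\delta)+2\delta\ln z_\delta=R^{\s{GM}}(\delta)\ln 2$, and changing the logarithm base from $e$ to $2$ gives the stated conditions. Your handling of the degenerate regime $\delta\ge 1/5$ (exponent $0$) is fine, as is your $\limsup$ bookkeeping, which is harmless because the limits exist.
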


\subsection{Simplex codes}

We next review the i.i.d. simplex codes introduced in \cite{elimelech2026asymptotically}. In this setting, codewords are drawn independently from distributions over the discrete simplex $\calS_{q,N}$. We consider two natural choices: the uniform distribution $\mu_{\s{U}}$ on $\calS_{q,N}$ and the multinomial distribution $\mu_{\s{M}}$, both discussed in \cite{elimelech2026asymptotically}. Throughout this subsection, we focus on the linear-alphabet regime $q=\alpha N$ for a fixed $\alpha>0$, and present the Gilbert--Varshamov bounds obtained from Proposition~\ref{prop:GV} by evaluating the corresponding average ball sizes.

\subsubsection{Uniform simplex codes}\label{sec:uniformSimples}

Recall the definition of the discrete simplex:
\[
\calS_{q,N} := \ppp{\ux\in\mathbb Z_{0}^{q}:\sum_{i=1}^{q}x_i=N},
\]
and let $\mu_{\s{U}}$ be the uniform distribution on $\calS_{q,N}$. We assume that $q=\alpha N$ and consider the scaled $\ell_1$ metric $d_1$ in \eqref{eq:ell1Dist}, which equals one half of the standard $\ell_1$ distance (used here because the standard distance on $\calS_{q,N}$ is always even).

\begin{lemma}[Asymptotic $\ell_1$ ball size for uniform simplex codes, {\cite[Proposition~9]{goyal2024gilbert}}]\label{lem:SimplexUnifAvBall}
Let
\[ C_{\alpha}^{\s{U}} :=  (1+\alpha)\ln(2)H_2\p{\frac{\alpha}{1+\alpha}}=\lim_{N\to \infty}\frac{\ln |\calS_{\alpha N,N}|}{N}. \]
For $0\leq \delta\leq (1+\alpha)/(2+\alpha)$, define
\begin{align*}
    \Tilde{T}_{\alpha}^{\s{U}}(2\delta)& := 
-\alpha+2\delta\ln(2\delta)-\alpha\ln\p{\sqrt{\alpha^2+4\delta^2}-2\delta}-2\delta\ln\p{\sqrt{\alpha^2+4\delta^2}-\alpha}\\
&\qquad
+(1+\alpha-\delta)\ln(2+2\alpha-2\delta)-(1-\delta)\ln(2-2\delta).
\end{align*}
Then,
\[ \lim_{N\to\infty}\frac{1}{N}\ln B_{d_1}^{\s{Av}}(\delta N)=
\begin{cases}
\Tilde{T}_{\alpha}^{\s{U}}(2\delta)-C_{\alpha}^{\s{U}} & 0\leq \delta\leq \frac{1+\alpha}{2+\alpha},\\
C_{\alpha}^{\s{U}} & \delta\geq \frac{1+\alpha}{2+\alpha}.
\end{cases}\]
\end{lemma}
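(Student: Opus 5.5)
The plan is to compute the average ball size directly as a normalized pair count. By the identity in the proof of Lemma~\ref{lem:AvBallSize},
\[
B_{d_1}^{\s{Av}}(\delta N)=\frac{1}{|\calS_{q,N}|}\,\abs{\ppp{(\ux,\uy)\in\calS_{q,N}^2~:~\norm{\ux-\uy}_1\leq 2\delta N}}.
\]
We then reparametrize each pair by its difference $\uz=\ux-\uy$. Because $\sum_i z_i=0$, the positive part $\uz^+$ and the negative part $\uz^-$ have equal total mass $k=\frac12\norm{\uz}_1$. For a fixed $\uz$, the pairs with this difference are exactly those with $\uy=\uz^-+\underline{w}$ and $\ux=\uz^++\underline{w}$, where $\underline{w}\in\calS_{q,N-k}$ ranges freely. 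Hence
\[
\abs{\ppp{(\ux,\uy):\tfrac12\norm{\ux-\uy}_1\le \delta N}}=\sum_{k=0}^{\delta N} Z_q(k)\,\abs{\calS_{q,N-k}},
\]
where $Z_q(k)$ is the number of $\uz\in\Z^q$ with $\sum_i z_i=0$ and $\norm{\uz}_1=2k$. The sum has only $\delta N+1$ terms, so its exponential rate equals the maximum of the exponents of the individual terms over $k=\kappa N$ with $\kappa\le\delta$.

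Next we evaluate the two factors with $q=\alpha N$ and $k=\kappa N$. The factor $\abs{\calS_{q,N-k}}=\binom{q+N-k-1}{N-k}$ has exponent
\[
(1+\alpha-\kappa)\ln(1+\alpha-\kappa)-(1-\kappa)\ln(1-\kappa)-\alpha\ln\alpha .
\]
This exponent produces the terms $(1+\alpha-\delta)\ln(2+2\alpha-2\delta)-(1-\delta)\ln(2-2\delta)$ in $\Tilde T^{\s U}_\alpha$ once it is combined with the remaining logarithms. For $Z_q(k)$, we would write it as a sum over the sizes $a,b$ of the disjoint supports of $\uz^+$ and $\uz^-$:
\[
Z_q(k)=\sum_{a,b}\binom{q}{a}\binom{q-a}{b}\binom{k-1}{a-1}\binom{k-1}{b-1}.
\]
Alternatively, we can use the single-coordinate generating function
\[
\phi(u,t)=1+\frac{ut}{1-ut}+\frac{u/t}{1-u/t}
\]
and extract the coefficient of $u^{2k}t^0$ in $\phi^q$ by a two-variable saddle point. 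The symmetry $t\leftrightarrow 1/t$ forces $t=1$ at the saddle, which reduces the problem to the one-variable function
\[
\frac{1+u}{1-u}=1+\frac{2u}{1-u}
\]
and the saddle equation $\alpha\,\frac{2u}{1-u^2}=2\kappa$. Solving this quadratic yields the expression $u=\bigl(\sqrt{\alpha^2+4\kappa^2}-\alpha\bigr)/(2\kappa)$. Substituting it back gives the square-root and logarithm terms
\[
-\alpha\ln\bigl(\sqrt{\alpha^2+4\delta^2}-2\delta\bigr)-2\delta\ln\bigl(\sqrt{\alpha^2+4\delta^2}-\alpha\bigr)+2\delta\ln(2\delta)
\]
appearing in $\Tilde T^{\s U}_\alpha$. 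This is the same computation that underlies the signed-ball exponent of Lemma~\ref{lem:signedL1BallZero}, now restricted to zero-sum vectors. The restriction costs only a subexponential factor, since the sum constraint is a single linear condition.

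The main obstacle is the final optimization over $\kappa\in[0,\delta]$. We must show that the total exponent $E(\kappa)$, given by the $Z_q$ exponent plus the $\abs{\calS_{q,N-k}}$ exponent, is increasing on $[0,\kappa^*]$ with $\kappa^*=\frac{1+\alpha}{2+\alpha}$. This gives the maximum at $\kappa=\delta$ for $\delta\le\kappa^*$, and therefore the formula $\Tilde T^{\s U}_\alpha(2\delta)-C^{\s U}_\alpha$ after subtracting $\ln\abs{\calS_{q,N}}\sim NC^{\s U}_\alpha$. I would first try to verify that $E'(\kappa)$ has its unique zero at $\kappa^*$ by differentiating the closed forms. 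At that point the typical distance of two independent uniform simplex points is attained, so $E(\kappa^*)=2C^{\s U}_\alpha$. Beyond $\kappa^*$ the pair count saturates at $\abs{\calS_{q,N}}^2$, and the average ball exponent equals $C^{\s U}_\alpha$. A careful check is also needed on the boundary of the $(a,b)$ sum and on the regime $k\to N$, but these only affect $o(N)$ terms.
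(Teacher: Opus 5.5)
The paper gives no proof of this lemma; it imports the formula from \cite[Proposition~9]{goyal2024gilbert}, where it is derived with analytic-combinatorics methods. Your route is genuinely different, more elementary, and sound.

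\begin{itemize}
\item Splitting pairs by $\uz=\ux-\uy$ is correct: the pairs with a given $\uz$ are parametrized bijectively by $\underline{w}=\min(\ux,\uy)\in\calS_{q,N-k}$.
\item The symmetric two-variable saddle does collapse to $t=1$. This symmetry, which makes $0$ the mean of $\sum_i z_i$ under the tilted measure, is the actual reason the zero-sum constraint costs only a polynomial factor. ``A single linear condition'' would not suffice by itself.
\item The step you call the main obstacle is easy. By the envelope theorem, the $\kappa$-derivative of the $Z_q$-exponent is $-2\ln u(\kappa)$, and that of the $\abs{\calS_{q,N-k}}$-exponent is $\ln\frac{1-\kappa}{1+\alpha-\kappa}$. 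Both are decreasing, so the total exponent $E$ is concave. Combining $E'(\kappa)=0$ with $\kappa u^2+\alpha u-\kappa=0$ gives $u=1/(1+\alpha)$, $\kappa^*=\frac{1+\alpha}{2+\alpha}$, and $E(\kappa^*)=2C^{\s U}_\alpha$.
\end{itemize}

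What does not go through is your final claim that the pieces assemble into the displayed $\Tilde T^{\s U}_\alpha(2\delta)$. Write $s=\sqrt{\alpha^2+4\kappa^2}$. Then $\frac{1+u}{1-u}=\frac{s+2\kappa}{\alpha}=\frac{\alpha}{s-2\kappa}$, so the $Z_q$-exponent is $\alpha\ln\alpha-\alpha\ln(s-2\kappa)+2\kappa\ln(2\kappa)-2\kappa\ln(s-\alpha)$. Its $\alpha\ln\alpha$ cancels the $-\alpha\ln\alpha$ from the $\abs{\calS_{q,N-k}}$ factor. Rewriting $\ln(1+\alpha-\kappa)$ and $\ln(1-\kappa)$ as $\ln(2+2\alpha-2\kappa)-\ln 2$ and $\ln(2-2\kappa)-\ln 2$ leaves the constant $-\alpha\ln 2$. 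Thus your argument yields
\[
E(\delta)=\Tilde T^{\s U}_\alpha(2\delta)+\alpha(1-\ln 2),
\]
that is, the lemma with the leading $-\alpha$ in $\Tilde T^{\s U}_\alpha$ replaced by $-\alpha\ln 2$.

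The discrepancy lies in the displayed formula, not in your method:
\begin{itemize}
\item At $\delta=0$, and by continuity for all small $\delta\ge 0$, the displayed expression $\Tilde T^{\s U}_\alpha(2\delta)-C^{\s U}_\alpha$ is close to $\alpha(\ln 2-1)<0$. But $B^{\s{Av}}_{d_1}(\delta N)\ge 1$, and your $E(0)-C^{\s U}_\alpha=0$ is the correct value.
\item The displayed formula also fails to match the value $C^{\s U}_\alpha$ at $\delta=\kappa^*$.
\item It would make $R^{\s U}_\alpha(0)=2C^{\s U}_\alpha-\Tilde T^{\s U}_\alpha(0)$ exceed $C^{\s U}_\alpha$.
\end{itemize}
So, carried out carefully, your proof establishes the corrected statement. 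Track the constant terms (a $\delta=0$ check suffices) rather than asserting that the terms ``appear'' in $\Tilde T^{\s U}_\alpha$.
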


For a fixed $\alpha>0$, define 
\begin{equation}
    R_{ \alpha}^{\s{U}}(\delta)= \begin{cases}
2C_{\alpha}^{\s{U}} -\Tilde{T}_{\alpha}^{\s{U}}(2\delta)& 0\leq \delta\leq \frac{1+\alpha}{2+\alpha},\\
0 & \delta\geq \frac{1+\alpha}{2+\alpha}.
\end{cases}\label{eq:UniformSimplexrate}
\end{equation}

\subsubsection{Multinomial simplex codes}\label{sec:Multinomial}

We next consider the multinomial distribution on $\calS_{q,N}$. Let $\s{Y}_1,\ldots,\s{Y}_N$ be i.i.d. random variables uniformly distributed on $[q]$, and define $\s{\ux}\in\calS_{q,N}$ by
\[\s{x}_i :=  \sum_{j=1}^{N}\1\pp{\s{Y}_j=i},\quad i=1,\ldots,q. \]
We denote the distribution of $\su{x}$ by $\mu_{\s{M}}$. Equivalently, $\mu_{\s{M}}$ is obtained by independently placing $N$ balls into $q$ bins and recording the occupancy vector. We again assume $q=\alpha N$ and use the scaled $\ell_1$ metric $d_1$ in \eqref{eq:ell1Dist}.

\begin{lemma}[Lower tail for multinomial simplex codes, {\cite[Proposition~17 and Lemma~18]{elimelech2026asymptotically}}]\label{lem:MultTail}
Let $\mu_{\s{M}}$ be the multinomial distribution on $\calS_{\alpha N,N}$, 
and let
\[ \Delta_\alpha :=  \frac{1}{\pi}\intop_{0}^{\pi}e^{-\frac{2(1-\cos\theta)}{\alpha}}(1-\cos\theta)d\theta. \]
 Then, for every $0\leq\delta<\Delta_\alpha$,
\[
\calT_N(\mu_{\s{M}},\delta)\leq \exp\Big(-\frac{N}{4}\p{(\Delta_\alpha-\delta)^2+o(1)}\Big).
\]
Consequently,
\begin{equation}
    R_{\alpha}^{\s{M}}(\delta) :=  \calT_{\mu_{\s{M}}}(d_1,\delta)=-\limsup_{N\to\infty}\frac{1}{N}\ln\calT_N(\mu_{\s{M}},\delta)
\geq \frac{(\Delta_\alpha-\delta)^2}{4}. \label{eq:MultinomialRate}
\end{equation}
\end{lemma}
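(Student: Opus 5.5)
The plan is to bound the lower tail probability $\calT_N(\mu_{\s{M}},\delta)=\P\pp{d_1(\su{x},\su{y})\leq \delta N}$ by a concentration argument for the $\ell_1$ distance between two independent multinomial occupancy vectors. Then we feed the resulting exponent into the definition of $\calT_{\mu}(d,\delta)$ from Proposition~\ref{prop:GV}. The consequence \eqref{eq:MultinomialRate} is immediate from the tail bound by taking $-\frac1N\ln$ and a $\limsup$. So the work is in the tail estimate, which splits into two steps: identify the mean of $\frac1N d_1(\su{x},\su{y})$, and show concentration around it.

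\textbf{Step 1 (the mean).} Write $\su{x}-\su{y}=\sum_{j=1}^N(\ue_{\s{Y}_j}-\ue_{\s{Y}'_j})$, so each coordinate $\s{x}_i-\s{y}_i$ is a sum of $N$ i.i.d. variables taking values $\pm1$ with probability $\frac1q(1-\frac1q)$ each and $0$ otherwise. With $q=\alpha N$, the coordinate difference converges in law to a difference of two independent Poisson$(1/\alpha)$ variables, i.e. a Skellam law. For an integer variable $\s{Z}$, we use the Fourier identity
\[
\E|\s{Z}|=\frac1\pi\int_0^\pi\frac{1-\mathrm{Re}\,\varphi_{\s{Z}}(\theta)}{1-\cos\theta}\,d\theta .
\]
Together with $\varphi(\theta)=e^{-\frac{2}{\alpha}(1-\cos\theta)}$ for the Skellam law, this gives $\E|\s{Z}|$ in closed integral form. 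Summing over the $q=\alpha N$ coordinates and halving, we expect $\E\, d_1(\su{x},\su{y})/N\to\Delta_\alpha$ in the normalization of the statement. Matching the exact integrand to $\Delta_\alpha$ as written is a bookkeeping check that I would do carefully. If it does not match, the honest form of $\Delta_\alpha$ is whatever this limit gives, and the rest of the argument is unchanged. The finite-$N$ error is $o(N)$ by dominated convergence, since the binomial characteristic function converges uniformly to the Poisson one.

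\textbf{Step 2 (concentration).} View $d_1(\su{x},\su{y})$ as a function of the $2N$ independent labels $\s{Y}_1,\dots,\s{Y}_N,\s{Y}'_1,\dots,\s{Y}'_N$. Changing one label moves one ball between bins, which changes $\norm{\su{x}-\su{y}}_1$ by at most $2$ and hence $d_1$ by at most $1$. McDiarmid's inequality (Lemma~\ref{lem:McDiarmid}) with $2N$ coordinates of bounded difference $1$ gives
\[
\P\pp{d_1-\E d_1\leq -sN}\leq \exp\p{-\frac{2s^2N^2}{2N}}=\exp(-s^2N).
\]
Taking $s=\Delta_\alpha-\delta-o(1)$ yields a bound at least as strong as the claimed $\exp(-\frac N4((\Delta_\alpha-\delta)^2+o(1)))$. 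The weaker constant $\frac14$ in the statement suggests the original proof used a cruder bounded-difference constant, for example $2$, so our bound implies it.

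The main obstacle is Step 1, specifically the uniform control of $\E|\s{x}_i-\s{y}_i|$ summed over linearly many coordinates. I would handle it by exchangeability, since all coordinates have the same law, which reduces it to a single-coordinate limit. The reduction is then completed by the Fourier integral representation and a uniform bound $|1-\varphi_N(\theta)|/(1-\cos\theta)\leq C$ that justifies passing the limit under the integral.
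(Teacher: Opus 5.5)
Your architecture is sound, and I can't compare it with the paper's proof because the paper gives none; it cites \cite{elimelech2026asymptotically}. The parts that work: exchangeability plus the Fourier formula for $\E|\s{Z}|$ gives the limiting mean of $d_1/N$. Dominated convergence is justified because $1-(1-2p(1-p)(1-\cos\theta))^N\le 2Np(1-p)(1-\cos\theta)\le \frac{2}{\alpha}(1-\cos\theta)$. McDiarmid over the $2N$ labels with $c_i=1$ correctly gives $\exp(-s^2N)$.

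The gap is in Step~1, and it is not bookkeeping: the limit your computation produces is not $\Delta_\alpha$. Write $z=2/\alpha$ and $u=1-\cos\theta$. Your limit is
\[
m_\alpha:=\frac{1}{\pi z}\int_0^\pi\frac{1-e^{-zu}}{u}\,d\theta=\frac{1}{\pi}\int_0^\pi e^{-zu}(1+\cos\theta)\,d\theta .
\]
To verify this, multiply by $z$, differentiate in $z$, and integrate by parts using $\frac{d}{d\theta}e^{-zu}=-z\sin\theta\,e^{-zu}$ and $u(1+\cos\theta)=\sin^2\theta$; both sides vanish at $z=0$. The stated $\Delta_\alpha$ has the factor $1-\cos\theta$ in place of $1+\cos\theta$. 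In terms of modified Bessel functions, $m_\alpha=e^{-z}(I_0(z)+I_1(z))$ and $\Delta_\alpha=e^{-z}(I_0(z)-I_1(z))$. For example, $m_1\approx 0.524$ while $\Delta_1\approx 0.093$. As $\alpha\to 0$, $m_\alpha\sim\sqrt{\alpha/\pi}$ whereas $\Delta_\alpha=\Theta(\alpha^{3/2})$.

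Your fallback (``the honest form of $\Delta_\alpha$ is whatever this limit gives'') therefore proves a statement with a different constant, not the lemma as written. The direction of the mismatch matters. If the true mean were smaller than $\Delta_\alpha$, then for $\delta$ between the two constants $\calT_N(\mu_{\s{M}},\delta)\to 1$, and the stated bound would be false.

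The missing step is the comparison $m_\alpha\ge\Delta_\alpha$. It holds by pairing $\theta$ with $\pi-\theta$:
\[
m_\alpha-\Delta_\alpha=\frac{2}{\pi}\int_0^\pi e^{-z(1-\cos\theta)}\cos\theta\,d\theta=\frac{2}{\pi}\int_0^{\pi/2}\cos\theta\p{e^{-z(1-\cos\theta)}-e^{-z(1+\cos\theta)}}d\theta>0 .
\]
With this, for every $\delta<\Delta_\alpha\le m_\alpha$, your McDiarmid step gives $\calT_N(\mu_{\s{M}},\delta)\le\exp\p{-N(m_\alpha-\delta)^2+o(N)}$. This is stronger than the claimed $\exp\p{-\frac{N}{4}\p{(\Delta_\alpha-\delta)^2+o(1)}}$, and \eqref{eq:MultinomialRate} follows. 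The stated $\Delta_\alpha$ is therefore a valid but non-sharp constant, not the limiting mean. Your guess about the factor $\frac14$ is consistent with this: McDiarmid with $c_i=2$ on the $\frac12$-scaled distance over $2N$ labels gives exactly that constant.
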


Taking together the results of the last two subsections, we obtain a bound on the attainable rate of codes in the simplex.
\begin{lemma}\label{lem:iidCodesSimplex}
    Let $\alpha>0$ be fixed and consider the random     
     simplex code generated by the measure $\mu_{\s{*}}$ on $\calS_{\alpha N, N }$, where $\s{*}\in \ppp{\s{M},\s{U}}$. If
    \[ \frac{\ln L}{N}\leq \frac{1}{2}R^{\s{*}}_{\alpha}(\delta)-o(1), \]
    then $\P\pp{d_{1}(\s{C})>\delta N}=1-o(1)$. Furthermore, if
\[
\frac{\ln L}{N}\le R^{\s{*}}_{\alpha}(\delta)-o(1),
\]
then, with probability $1-o(1)$, the code $\s{C}$ contains a subcode $\s{C}'\subseteq\s{C}$ of size $|\s{C}'|=(1-o(1))L$ such that $d_{1}(\s{C}')\geq \delta N$. 
\end{lemma}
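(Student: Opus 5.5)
This lemma is a direct instance of Proposition~\ref{prop:GV}, so the plan is to supply its two inputs: a uniform measure and a lower-tail exponent. Take $\mu_{\s{*}}$ on $\calS_{\alpha N,N}$ as the sampling measure and $d_1$ as the metric. The whole task then reduces to identifying $\calT_{\mu_{\s{*}}}(d_1,\delta)$ with (or bounding it below by) $R^{\s{*}}_{\alpha}(\delta)$, measured in natural logarithms. After that, both conclusions, the pure i.i.d. one at half rate and the expurgated one at full rate, are read off verbatim from Proposition~\ref{prop:GV}.

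For the uniform ensemble $\s{*}=\s{U}$, I will apply Lemma~\ref{lem:AvBallSize}. It gives
\[
\calT_{\mu_{\s{U}}}(d_1,\delta)=\limsup_{N\to\infty}\frac1N\bigl(\ln|\calS_{\alpha N,N}|-\ln B^{\s{Av}}_{d_1}(\delta N)\bigr).
\]
The first term converges to $C^{\s{U}}_\alpha$ by the standard binomial estimate for $|\calS_{\alpha N,N}|=\binom{(1+\alpha)N-1}{N}$. The second term converges to $\Tilde T^{\s{U}}_\alpha(2\delta)-C^{\s{U}}_\alpha$ by Lemma~\ref{lem:SimplexUnifAvBall} in the range $\delta\le(1+\alpha)/(2+\alpha)$. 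The difference is therefore $2C^{\s{U}}_\alpha-\Tilde T^{\s{U}}_\alpha(2\delta)=R^{\s{U}}_\alpha(\delta)$, as defined in \eqref{eq:UniformSimplexrate}.

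Outside that range $R^{\s{U}}_\alpha=0$. There the statement only concerns codes with $\ln L/N=o(1)$, which I will handle trivially or by convention. One caveat: Lemma~\ref{lem:SimplexUnifAvBall} as printed gives $C^{\s{U}}_\alpha$ beyond the threshold, where one would expect $\ln B^{\s{Av}}\approx\ln|X|$. I will just note that only a lower bound on $\calT$ is needed, and that a lower bound of $0$ is trivial.

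For the multinomial ensemble $\s{*}=\s{M}$, there is nothing left to compute. Lemma~\ref{lem:MultTail} defines $R^{\s{M}}_\alpha(\delta)$ to be exactly $\calT_{\mu_{\s{M}}}(d_1,\delta)$, and separately supplies the explicit lower bound $(\Delta_\alpha-\delta)^2/4$. Substituting into Proposition~\ref{prop:GV} gives both claims.

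The only delicate point I anticipate is a boundary mismatch. Proposition~\ref{prop:GV} yields $d(\s C)>\delta N$ for the pure ensemble, and for the expurgated subcode its proof removes pairs at distance $\le\delta N$, which gives a strict inequality; that is stronger than the stated $d_1(\s C')\ge\delta N$. The rest is bookkeeping with the $o(1)$ terms, including the requirement that $q=\alpha N$ be an integer, which I will handle by taking integer parts without changing any exponents.
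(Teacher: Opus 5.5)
Your proposal is correct and follows the paper's route exactly. The paper proves this lemma as a direct consequence of Proposition~\ref{prop:GV}, with the lower-tail exponent supplied by Lemmas~\ref{lem:AvBallSize} and~\ref{lem:SimplexUnifAvBall} in the uniform case and by the definition in Lemma~\ref{lem:MultTail} in the multinomial case.

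One small correction: your caveat about Lemma~\ref{lem:SimplexUnifAvBall} beyond the threshold is unfounded. There $C^{\s{U}}_\alpha=\lim_N\frac1N\ln|\calS_{\alpha N,N}|$ is exactly the value you expected for $\lim_N\frac1N\ln B^{\s{Av}}_{d_1}(\delta N)$. Hence $\calT_{\mu_{\s{U}}}(d_1,\delta)=0=R^{\s{U}}_\alpha(\delta)$ in that range, and Proposition~\ref{prop:GV} applies without a separate case.
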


The proof is a direct consequence of Proposition~\ref{prop:GV} and Lemmas~\ref{lem:AvBallSize}, \ref{lem:SimplexUnifAvBall}, and \ref{lem:MultTail}.

\section{Technicalities and auxiliary results} 

\subsection{State and channel norms}\label{app: Norms}
In this section, we briefly recall the definitions and results on state and channel norms used throughout this work.

\begin{definition}[State norms]\label{def:statenorms}
    Let $\calH$ and $\calH'$ be finite-dimensional Hilbert spaces, and let $X:\calH \to \calH'$ be a linear operator. For $p\geq 1$ we define the $p$-Schatten norm of $X$ as 
    \[
    \norm{X}_{p} :=  \tr\p{|X|^p}^{\frac{1}{p}}=
    \biggl({\sum_{\lambda\in \s{eig(X^\dag X)}}|\lambda|^{p/2}}\biggr)^{\frac{1}{p}},
    \]
where $|X|:=\sqrt{X^\dag X}$. For $p=\infty$  we define $\norm{ }_\infty$, a.k.a. the spectral norm, as
    \[\norm{X}_{\infty}=\lim_{p\to\infty} \norm{X}_p=\max_{\lambda\in \s{eig(X^\dag X)}}\sqrt{|\lambda|}=\sup_{0\neq \ket{x}\in \calH}\sqrt{\frac{\bra{x}X^\dag X\ket{x}}{\braket{x|x}}}.\]
    For $p=1$ and $p=2$, these norms are also called the {\em trace} and {\em Frobenius norm}, respectively.
\end{definition}

\begin{lemma}[Norm inequalities]\label{lem:stateNorms}
Let $\calH$ and $\calH'$ be a finite dimensional space, and let $X:\calH \to \calH'$ be a linear operator. Then the following statements hold:
\begin{enumerate}
    \item $\norm{ }_p$ is monotone non-increasing with $p$. In particular 
    \[\norm{X}_\infty\leq \norm{X}_2\leq \norm{X}_1.\]
    \item The duality principle {\rm (\cite[eq. (1.173)
, p.~33]{watrous2018theory})}: 
    
    For any $Y:\calH'\to \calH$ 
    \[\sup_{\norm{X}_1}\tr(XY)\leq \norm{Y}_\infty.\]
    \item The partial trace is a $\norm{ }_1$-contraction {\rm (\cite[Corollary 9.1.2]{wilde})}. Formally, if $\calH=\calH'=\calH_1\otimes \calH_2$ then 
    \[\max\p{\norm{\tr_{\calH_2}(X)}_1,\norm{\tr_{\calH_1}(X)}_1}\leq \norm{X}_1.\]
    \item H\"older's inequality for Schatten norms {\rm (e.g., \cite[p.33]{watrous2018theory})}: for $p,q\in [1,\infty]$ such that $\frac{1}{p}+\frac{1}{q}=1$ and an operator $Y:\calH''\to \calH$,
    \[\norm{XY}_1\leq \norm{X}_p\norm{Y}_q.\]
\end{enumerate}

\end{lemma}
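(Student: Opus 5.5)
These are standard facts; I would cite them with short justifications, item by item, working from the singular value characterization $\norm{X}_p=\p{\sum_i s_i^p}^{1/p}$, where $s_i$ are the singular values of $X$.

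For item 1 (monotonicity), I would normalize so that $\norm{X}_p=1$ for a fixed $p$. Then every $s_i\le 1$, so $s_i^{p'}\le s_i^p$ for $p'\ge p$. Hence $\norm{X}_{p'}\le 1$, and homogeneity gives the general case. The chain $\norm{X}_\infty\le\norm{X}_2\le\norm{X}_1$ is the special case $p=1,2,\infty$. The $\infty$ case follows either as the limit or directly from $\max_i s_i\le(\sum_i s_i^2)^{1/2}$.

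For item 4 (H\"older), I would invoke von Neumann's trace inequality, $\abs{\tr(AB)}\le\sum_i s_i(A)s_i(B)$. Combined with the variational formula $\norm{Z}_1=\max_{U}\abs{\tr(UZ)}$ over unitaries (or partial isometries) $U$, this gives $\norm{XY}_1\le\sum_i s_i(X)s_i(Y)$. Scalar H\"older applied to this sum yields the claim.

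Item 2 (duality) then follows as a special case. Taking $p=1$, $q=\infty$ and writing $\abs{\tr(XY)}\le\norm{XY}_1\le\norm{X}_1\norm{Y}_\infty$, the supremum over $\norm{X}_1\le 1$ gives the bound. Alternatively, one can cite Watrous (1.173) directly.

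For item 3, I would use the same variational formula: $\norm{\tr_{\calH_2}(X)}_1=\max_U\abs{\tr\p{U\,\tr_{\calH_2}X}}=\max_U\abs{\tr\p{(U\otimes I)X}}$. Since $U\otimes I$ is unitary, this is at most $\norm{X}_1$, and the other factor is handled symmetrically.

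The only nontrivial ingredient is von Neumann's trace inequality, which is where I expect the main obstacle to lie. I would cite it rather than reprove it, as the paper already does by referring to Watrous and Wilde.
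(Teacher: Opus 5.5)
Your proposal is correct. The paper does not prove this lemma: item 1 is taken as standard, and items 2--4 are only attributed to Watrous and Wilde. Your derivations are therefore a self-contained substitute for those citations, and essentially the textbook arguments behind them. They reduce everything to the variational formula $\|Z\|_1=\max_{\|W\|_\infty\le 1}|\tr(WZ)|$ plus von Neumann's trace inequality.

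One step in item 4 should be made explicit. Von Neumann's inequality applied to $\tr\bigl((UX)Y\bigr)$ gives $\sum_i s_i(UX)\,s_i(Y)$, not $\sum_i s_i(X)\,s_i(Y)$. You need $s_i(UX)\le\|U\|_\infty s_i(X)\le s_i(X)$ for the contraction $U$, a min--max fact, before scalar H\"older applies.

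It is also worth noticing where von Neumann's inequality is genuinely needed: only in the general-$p$ H\"older inequality, which the paper never uses. Every invocation is the case $\{p,q\}=\{1,\infty\}$ or item 2: in Lemma~\ref{lem:compoDiamond}, Lemma~\ref{lem:diamondnormNACHS}, the proof of Lemma~\ref{lem:replaceOpt}, and the proof of Proposition~\ref{prop:AQECcond}. In that case the SVD suffices. Writing $X=\sum_i s_i\ket{u_i}\bra{v_i}$ gives $|\tr(WAX)|=\bigl|\sum_i s_i\bra{v_i}WA\ket{u_i}\bigr|\le\|A\|_\infty\sum_i s_i$ for every contraction $W$, hence $\|AX\|_1\le\|A\|_\infty\|X\|_1$. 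So the obstacle you flag matters only for the full statement as written, not for anything the paper builds on it.
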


 \begin{definition}[Diamond norm, completely bounded norm]\label{def:CbDiamondNorm}
    Let $\calH$ and $\calH'$ be finite dimensional Hilbert spaces, and let $\calN:L(\calH)\to L(\calH')$ be a superoperator. 
    \begin{enumerate}
        \item The diamond norm of $\calN$ is defined as 
        \[\norm{\calN}_{\diamond} :=  \sup_{\rho\in D(\calH^{\otimes 2})}\norm{I_{L(\calH)}\otimes \calN (\rho)}_{1}.\]
          \item The completely bounded (c.b.) norm of $\calN$ is defined as 
        \[\norm{\calN}_{\s{cb}} :=  \sup_{\substack{X\in L(\calH^{\otimes 2})\\ \norm{X}_1=1}}\norm{I_{L(\calH)}\otimes \calN (X)}_{1}.\]
    \end{enumerate}
    \end{definition}
    
    \begin{lemma}[Properties of the diamond and c.b. norms]\label{lem:cbDiamondProp}
    Let $\calN:L(\calH)\to L(\calH')$ be a superoperator. Then the following holds:
    \begin{enumerate}
        \item \label{item:HPdiamond} If $\calN$ is Hermitian preserving then {\rm (\cite[Theorem 3.51]{watrous2018theory})}
        \[\norm{\calN}_{\diamond}=\norm{\calN}_{\s{cb}}=\sup_{\substack{\ket{\psi}\in \calH^{\otimes 2}\\ \braket{\psi|\psi}=1}}\norm{I_{L(\calH)}\otimes \calN (\ket{\psi}\bra{\psi})}_{1}.\]
        
        \item \label{item:OptDia} Equivalent ancilla-dimension optimization {\rm (\cite[pp.241-242]{wilde} and \cite[Theorem 3.46]{watrous2018theory})}: 
        \[\norm{\calN}_{\diamond} =\sup_{n\in \N} \sup_{\rho\in D(\C^n\otimes \calH )}\norm{I_{L(\calH)}\otimes \calN (\rho)}_{1},\]
        and similarly
        \[\norm{\calN}_{\s{cb}} =\sup_{n\in \N} \sup_{\substack{X\in L(\C^{n}\otimes \calH)\\ \norm{X}_1=1}}\norm{I_{L(\calH)}\otimes \calN (\rho)}_{1}.\]
        \item \label{item:CPTPnorm}If $\calN$ is CPTP then $\norm{\calN}_{\diamond}=1$ {\rm (\cite[Proposition 3.44]{watrous2018theory})}.
        \item \label{item:CBmuliplicative}The completely bounded norm is submultiplicative with respect to compositions {\rm (\cite[Proposition 3.48]{watrous2018theory})}. Namely, if $\calM:L(\calH')\to L(\calH'')$ then  \[\norm{\calM \circ\calN}_{\s{cb}}\leq \norm{\calM }_{\s{cb}}  \norm{\calN}_{\s{cb}}.\]
    \end{enumerate}
        
    \end{lemma}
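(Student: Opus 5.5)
These four items are standard facts, and the plan is to reduce each to a short argument about the extended map $I\otimes\calN$, following the cited references. The order will be (\ref{item:OptDia}), (\ref{item:CPTPnorm}), (\ref{item:CBmuliplicative}), (\ref{item:HPdiamond}), since the Hermitian-preserving identity uses the ancilla-reduction of item (\ref{item:OptDia}).

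\textbf{Item (\ref{item:OptDia}).} By convexity of $\norm{\cdot}_1$ and linearity of $I\otimes\calN$, the supremum over density operators on $\C^n\otimes\calH$ is attained on pure states $\ket{\psi}$. By the Schmidt decomposition, every such $\ket{\psi}$ has the form $(V\otimes I)\ket{\psi'}$, where $\ket{\psi'}\in\calH\otimes\calH$ and $V$ is an isometry into $\C^n$. Then
\[(I\otimes\calN)(\ket{\psi}\bra{\psi})=(V\otimes I)\,(I\otimes\calN)(\ket{\psi'}\bra{\psi'})\,(V\otimes I)^\dag,\]
and isometric conjugation preserves the trace norm. Hence enlarging the ancilla beyond $\dim\calH$ cannot increase the value, while smaller ancillas embed into $\calH$. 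For the c.b. norm the same works with a general $X$: its singular value decomposition gives two Schmidt-type isometries, one on each side of $X$, which again act trivially on the trace norm.

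\textbf{Item (\ref{item:CPTPnorm}).} If $\calN$ is CPTP then so is $I\otimes\calN$. Every output on a density operator is therefore positive with unit trace, hence has trace norm $1$.

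\textbf{Item (\ref{item:CBmuliplicative}).} We use $I\otimes(\calM\circ\calN)=(I\otimes\calM)\circ(I\otimes\calN)$. By homogeneity, $\norm{(I\otimes\calM)(Y)}_1\le\norm{\calM}_{\s{cb}}\norm{Y}_1$ for every $Y$; this needs the enlarged-ancilla form from item (\ref{item:OptDia}), because $Y=(I\otimes\calN)(X)$ lives on $\calH\otimes\calH'$. Taking the supremum over $\norm{X}_1=1$ then gives the claim.

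\textbf{Item (\ref{item:HPdiamond}).} This is the main obstacle. Clearly $\norm{\calN}_\diamond\le\norm{\calN}_{\s{cb}}$, and the pure-state formula for the diamond norm follows from convexity as in item (\ref{item:OptDia}). For the reverse inequality, convexity of the trace norm shows that the c.b. supremum is attained at a rank-one operator $X=\ket{u}\bra{v}$ with unit vectors $u,v$. We then pass to a doubled ancilla and consider the unit vector
\[\ket{\psi}=\tfrac{1}{\sqrt2}\big(\ket{0}\ket{u}+\ket{1}\ket{v}\big).\]
The output $(I\otimes\calN)(\ket{\psi}\bra{\psi})$ is a $2\times2$ block operator whose off-diagonal blocks are $\tfrac12(I\otimes\calN)(\ket{u}\bra{v})$ and its adjoint; the adjoint relation holds precisely because $\calN$ is Hermitian preserving. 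The key step is a block-matrix estimate: for a block operator $\begin{pmatrix}A&B\\B^\dag&D\end{pmatrix}$, the trace norm is at least $2\norm{B}_1$. To prove it, test the duality principle of Lemma~\ref{lem:stateNorms} against the contraction $\begin{pmatrix}0&W^\dag\\W&0\end{pmatrix}$, where $W$ is the partial isometry from the polar decomposition of $B$. This yields
\[\norm{(I\otimes\calN)(\ket{\psi}\bra{\psi})}_1\ \ge\ \norm{(I\otimes\calN)(\ket{u}\bra{v})}_1=\norm{\calN}_{\s{cb}},\]
and by item (\ref{item:OptDia}) the doubled ancilla does not change $\norm{\calN}_\diamond$. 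This completes the equality $\norm{\calN}_\diamond=\norm{\calN}_{\s{cb}}$.
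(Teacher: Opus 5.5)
Your proposal is correct. The paper gives no proof of this lemma beyond citing Watrous and Wilde, and you reconstruct the standard arguments behind those references: Schmidt/isometry reduction for ancilla independence, positivity plus unit trace for CPTP maps, the factorization $I\otimes(\calM\circ\calN)=(I\otimes\calM)\circ(I\otimes\calN)$ combined with ancilla independence for submultiplicativity, and the doubled-ancilla block argument for Hermitian-preserving maps.

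Two small fixes are needed. In item~(1), with $B$ in the upper-right block and $B=W|B|$, the test contraction should be $\begin{pmatrix}0&W\\ W^\dag&0\end{pmatrix}$, which gives $\tr(WB^\dag)+\tr(W^\dag B)=2\norm{B}_1$. Your ordering gives $\tr(W^\dag B^\dag)+\tr(WB)$ instead, and this vanishes for $B=\ket{0}\bra{1}$. In the c.b.\ part of item~(2), first reduce to a rank-one $\ket{u}\bra{v}$ by convexity, then Schmidt-decompose $u$ and $v$ separately, since a general $X$ does not factor through a single pair of isometries.
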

    \begin{lemma}\label{lem:compoDiamond}
        Let $U,V:\calH\to \calH'$ be operators, and consider the superoperator $\calT_{U,V}:L(\calH)\to L(\calH')$ defined by $\calT_{U,V}(X)= U X V^\dag$. Then 
        \[\norm{\calT_{U,V}}_{\s{cb}}\leq \norm{U}_\infty  \norm{V}_{\infty},\]
        and if $U=V$ equality holds. 
    \end{lemma}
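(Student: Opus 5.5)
The statement to prove is Lemma~\ref{lem:compoDiamond}: $\norm{\calT_{U,V}}_{\s{cb}}\leq \norm{U}_\infty\norm{V}_\infty$, with equality when $U=V$. The plan is to bound the c.b. norm directly from Definition~\ref{def:CbDiamondNorm}, using H\"older's inequality from Lemma~\ref{lem:stateNorms}.

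First, note that for any ancilla dimension $n$ we have $(I_{L(\C^n)}\otimes \calT_{U,V})(X)=(I\otimes U)X(I\otimes V)^\dag$. We will apply H\"older twice, with $(p,q)=(\infty,1)$ and then $(1,\infty)$, together with unitary invariance of singular values. This gives
\[
\norm{(I\otimes U)X(I\otimes V)^\dag}_1\leq \norm{I\otimes U}_\infty\norm{X}_1\norm{I\otimes V}_\infty=\norm{U}_\infty\norm{V}_\infty\norm{X}_1 .
\]
Here we use that the spectral norm is multiplicative on tensor products and that $\norm{I}_\infty=1$. We also need the variant $\norm{AB}_1\le\norm{A}_1\norm{B}_\infty$, which follows from H\"older and $\norm{B^\dag}_\infty=\norm{B}_\infty$. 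Taking the supremum over $\norm{X}_1=1$ and over $n$, via item~\ref{item:OptDia} of Lemma~\ref{lem:cbDiamondProp}, yields the upper bound.

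For equality when $U=V$, we exhibit a near-maximizer. Let $\ket{x}$ be a unit right singular vector of $U$ with $\norm{U\ket{x}}_2=\norm{U}_\infty$. Take $X=\ket{0}\bra{0}\otimes\ket{x}\bra{x}$, which has trace norm $1$. Then the output is $\ket{0}\bra{0}\otimes U\ket{x}\bra{x}U^\dag$, whose trace norm is $\norm{U\ket{x}}_2^2=\norm{U}_\infty^2$. This already attains the bound at the trivial ancilla, so $\norm{\calT_{U,U}}_{\s{cb}}\ge\norm{U}_\infty^2$.

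Since $\calT_{U,U}$ is Hermitian preserving, item~\ref{item:HPdiamond} of Lemma~\ref{lem:cbDiamondProp} transfers the result to the diamond norm. This transfer is the step used in the proof of Theorem~\ref{th:BOerrorset}.

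No step is a real obstacle. The only points needing care are the correct form of H\"older's inequality with the $\infty$-norm factor on either side, and the tensor-product norm identity $\norm{I\otimes U}_\infty=\norm{U}_\infty$, which follows because the singular values of $I\otimes U$ are those of $U$ repeated.
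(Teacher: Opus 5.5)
Your proof is correct and follows essentially the same route as the paper's proof. The paper applies H\"older's inequality to $(I\otimes U)X(I\otimes V)^\dag$ using $\norm{I\otimes U}_\infty=\norm{U}_\infty$, and obtains equality for $U=V$ from a product input built on a top singular vector of $U$. The only cosmetic difference is the witness: the paper uses $\ket{\psi}\bra{\psi}\otimes\ket{\psi}\bra{\psi}$ where you use $\ket{0}\bra{0}\otimes\ket{x}\bra{x}$.
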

    \begin{proof}
        Let $X\in \calH^{\otimes 2}$. Note that 
        \[I_{L(\calH)}\otimes\calT_{U,V}(X)=(I_{\calH}\otimes U)X (I_{\calH}\otimes V)^\dag.\]
        Thus, by the H\"older's inequality (Lemma~\ref{lem:stateNorms}) we have 
        \begin{align*}
            \norm{I_{L(\calH)}\otimes\calT_{U,V}(X)}_{1}&=\norm{(I_{\calH}\otimes U)X (I_{\calH}\otimes V)^\dag}_1\\
            &\leq \norm{I_{\calH}\otimes U}_\infty   \norm{(I_{\calH}\otimes V)^\dag}_\infty  \norm{X}_1\\
            &=\norm{U}_{\infty}  \norm{V}_{\infty}  \norm{X}_1,
        \end{align*}
        where the last equality follows since $I\otimes U$ and $I\otimes V$ have the same singular values as $U$ and $V$ respectively, and since the spectral norm is invariant under conjugation. If $U=V$, then equality is obtained by taking $X=\ket{\psi}\bra{\psi}\otimes \ket{\psi}\bra{\psi}$, where $\ket{\psi}$ is a normalized singular vector of maximal singular value of $U$.
    \end{proof}
\begin{lemma}\label{lem:diamondnormNACHS}
    Let $\calH,\calH'$ be Hilbert spaces and let $\calB:L(\calH)\to L(\calH')$ be a superoperator of the form $\calB(\rho)= \tr(\rho B)   \tau$, where $B\in L(\calH)$ and $\tau\in L(\calH')$ are linear operators. Then 
    \[\norm{\calB}_{\diamond}\leq \norm{B}_{\infty }  \norm{\tau}_1,\]
    and equality in the above holds if $B$ is a normal operator.
   \end{lemma}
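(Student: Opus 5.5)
The plan is to realize $\calB$ in a form where H\"older's inequality and the contractivity of the partial trace apply directly. Note that $\calB=\calT_{\tau}\circ\phi_B$, where $\phi_B(\rho)=\tr(\rho B)$ is a linear functional into $\C$ and $\calT_\tau$ sends a scalar $c$ to $c\,\tau$. The diamond norm is the supremum over density operators $\rho\in D(\calH^{\otimes 2})$ of $\norm{(I\otimes\calB)(\rho)}_1$, so we compute this quantity directly. For $\rho$ on $\calH_A\otimes\calH$ we have
\[
(I\otimes\calB)(\rho)=\tr_{2}\bigl(\rho\,(I\otimes B)\bigr)\otimes\tau,
\]
where $\tr_2$ is the partial trace over the second factor. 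This identity is checked on product operators $X\otimes Y$ and extended by linearity.

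For the upper bound, multiplicativity of the trace norm under tensor products gives
\[
\norm{(I\otimes\calB)(\rho)}_1=\norm{\tr_2(\rho(I\otimes B))}_1\,\norm{\tau}_1 .
\]
By contractivity of the partial trace (Lemma~\ref{lem:stateNorms}, item 3) followed by H\"older's inequality (item 4),
\[
\norm{\tr_2(\rho(I\otimes B))}_1\le\norm{\rho(I\otimes B)}_1\le\norm{\rho}_1\norm{I\otimes B}_\infty=\norm{B}_\infty .
\]
Taking the supremum over $\rho$ yields $\norm{\calB}_\diamond\le\norm{B}_\infty\norm{\tau}_1$.

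For equality when $B$ is normal, we choose a witness state with no ancilla entanglement. Normality of $B$ means it is unitarily diagonalizable, so there is a unit eigenvector $\ket{v}$ with $B\ket{v}=\mu\ket{v}$ and $|\mu|=\norm{B}_\infty$; this uses the fact that for normal operators the spectral radius equals the largest singular value. Take $\rho=\ket{\psi}\bra{\psi}\otimes\ket{v}\bra{v}$ for any unit ancilla vector $\ket{\psi}$. Then
\[
(I\otimes\calB)(\rho)=\bra{v}B\ket{v}\,\ket{\psi}\bra{\psi}\otimes\tau=\mu\,\ket{\psi}\bra{\psi}\otimes\tau,
\]
whose trace norm is $|\mu|\,\norm{\tau}_1=\norm{B}_\infty\norm{\tau}_1$. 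This matches the upper bound.

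No step is a real obstacle. The only points needing care are the partial-trace identity above and the use of normality. Without normality, $\sup_\rho|\tr(\rho B)|$ over density operators is the numerical radius, which can be strictly smaller than $\norm{B}_\infty$. That is why equality is claimed only for normal $B$.
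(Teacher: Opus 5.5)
Your proof is correct and follows essentially the same route as the paper. Both use the identity $(I\otimes\calB)(X)=\tr_2\bigl((I\otimes B)X\bigr)\otimes\tau$, checked on product operators. Both then get the upper bound from contractivity of the partial trace and H\"older's inequality, and get equality from an eigenvector of the normal operator $B$ whose eigenvalue has maximal modulus.

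Your write-up is slightly more careful than the paper's in two places. First, you keep the modulus $|\mu|=\norm{B}_\infty$, whereas the paper writes $\norm{B}_\infty=\bra{\psi}B\ket{\psi}$, which holds only up to a phase. Second, you make the ancilla explicit through the product state $\ket{\psi}\bra{\psi}\otimes\ket{v}\bra{v}$.
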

   \begin{proof}
       Let $X$ be a linear operator on $\calH^{\otimes 2}$ with $\norm{X}_{1}\leq 1$. Note that if $X=X_1\otimes X_2$ is a product state, we have 
       \[I_{L(\calH)}\otimes \calB (X)=\tr(BX_2)  X_1\otimes \tau=\tr_2\big((I_{\calH}\otimes B )(X_1\otimes X_2)\big) \otimes \tau=\tr_2(I_{\calH}\otimes B )(X)) \otimes \tau.  \]
       Extending by linearity, the above holds for all $X$. In particular, using multiplicativity of the trace norm and the fact that the partial trace is a contraction of the trace norm (see Lemma~\ref{lem:stateNorms}), we have
       \begin{align*}
           \norm{I_{L(\calH)}\otimes \calB (X)}_1&=\norm{\tr_2(I_{\calH}\otimes B )X) \otimes \tau}\leq \norm{(I_{\calH}\otimes B )X}_1  \norm{\tau}_1\\
           &\leq \norm{I_{\calH}\otimes B }_{\infty}   \norm{X}_1  \norm{\tau}_1\leq \norm{I_{\calH}\otimes B }_{\infty}   \norm{\tau}_1=\norm{ B }_{\infty}   \norm{\tau}_1,
       \end{align*}
       where we used the duality principle (see Lemma~\ref{lem:stateNorms}) in the first inequality of the second line. This proves that 
       \[\norm{\calB}_{\diamond}\leq \norm{B}_{\infty}  \norm{\tau}_1.\]
       For the opposite inequality, assume that $B$ is normal, so there exists an eigenvector $\ket{\psi}$  of $B$ such that $\norm{B}_{\infty}=\bra{\psi}B\ket{\psi}$. Note that 
       \[\norm{\calB}_{\diamond}\geq \norm{\calB(\ket{\psi}\bra{\psi})}_1=\abs{\tr(B\ket{\psi}\bra{\psi})}  \norm{\tau }_1=\braket{\psi | B|\psi}  \norm{\tau}_1=\norm{B}_{\infty}  \norm{\tau}_1. \qedhere\]
   \end{proof}
   \begin{definition}[Block representation]\label{def:blockRep}
        Let $\calH$ be a Hilbert space such that $\calH$ is given by a direct sum $\calH=\calH_1\oplus\cdots \oplus\calH_n$, and let $X\in L(\calH)$. The block decomposition of $X$ is the set of operators $\p{X_{ij}}_{i,j}$, where $X_{ij}:\calH_i\to\calH_j$ is the orthogonal projection of $X$ on $L(\calH_i,\calH_j)$. Equivalently, let $\imath_i:\calH_i \to \calH$ be the natural embedding, then
        $X_{ij}=\imath_j^\dag X \imath_i$.
   \end{definition}
\begin{definition}[Direct sum maps]\label{def:directsumMap}
    Let $\calH$ and $\calH'$ be Hilbert spaces such that $\calH$ admits a direct sum decomposition $\calH=\calH_1\oplus \cdots \oplus\calH_n$. 
Let $X:\calH\to\calH$ be an operator and let  $(X_{ij})_{i,j}$ be its block representation. A superoperator $\calN:L(\calH)\to L(\calH')$ is called a {\em direct sum map} if it acts as 
    \begin{equation}\label{eq:direct sum}
    \calN(X)=\sum_{i=1}^n \calN_{i}(X_{ii}),
   \end{equation}
where each $\calN_i:L(\calH_i)\to    L(\calH')$ is a superoperator. Abusing notation, we write $\calN=\bigoplus_{i=1}^n \calN_i$.
\end{definition}

    \begin{lemma}\label{lem:directSumLemma}
        Let $\bigoplus_{i=1}^n \calN_i=\calN:L(\calH)\to L(\calH')$ be a direct sum map. Then, 
        \[\norm{\calN}_{\s{cb}}= \max_i \norm{\calN_i}_{\s{cb}}.\]
    \end{lemma}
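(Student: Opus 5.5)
The plan is to prove the two inequalities $\norm{\calN}_{\s{cb}}\geq \max_i\norm{\calN_i}_{\s{cb}}$ and $\norm{\calN}_{\s{cb}}\leq \max_i\norm{\calN_i}_{\s{cb}}$ separately. Throughout, I will use the ancilla-free form of the c.b. norm from item~\ref{item:OptDia} of Lemma~\ref{lem:cbDiamondProp}. This matters because $\calN$ and $\calN_i$ have inputs of different dimensions. With the supremum over all ancilla dimensions $n$, both norms can be evaluated on inputs in $L(\C^n\otimes\calH)$ and $L(\C^n\otimes\calH_i)$ for a common $n$.

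\emph{Lower bound.} Fix $i$, an ancilla dimension $n$, and $Y\in L(\C^n\otimes\calH_i)$ with $\norm{Y}_1=1$. Embed it as $X=(I_n\otimes\imath_i)Y(I_n\otimes\imath_i)^\dag\in L(\C^n\otimes\calH)$. Since $\imath_i$ is an isometry, $\norm{X}_1=\norm{Y}_1=1$. With respect to the decomposition $\C^n\otimes\calH=\bigoplus_j \C^n\otimes\calH_j$, the only nonzero block of $X$ is the $(i,i)$ block, which equals $Y$. By \eqref{eq:direct sum}, extended to the ancilla, this gives $(I\otimes\calN)(X)=(I\otimes\calN_i)(Y)$. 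Taking suprema over $Y$ and $n$ gives $\norm{\calN}_{\s{cb}}\geq\norm{\calN_i}_{\s{cb}}$ for every $i$.

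\emph{Upper bound.} Let $X\in L(\C^n\otimes\calH)$ with $\norm{X}_1=1$, and let $X_{ii}\in L(\C^n\otimes\calH_i)$ be its diagonal blocks with respect to $\C^n\otimes\calH=\bigoplus_i\C^n\otimes\calH_i$. Applying linearity to \eqref{eq:direct sum} gives $(I\otimes\calN)(X)=\sum_i (I\otimes\calN_i)(X_{ii})$. The triangle inequality and the definition of the c.b. norm then give
\[
\norm{(I\otimes\calN)(X)}_1\leq \sum_i\norm{\calN_i}_{\s{cb}}\norm{X_{ii}}_1\leq \max_i\norm{\calN_i}_{\s{cb}}\sum_i\norm{X_{ii}}_1 .
\]
It remains to prove the pinching inequality $\sum_i\norm{X_{ii}}_1\leq\norm{X}_1$; I expect this to be the only step needing an argument. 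To prove it, take polar decompositions $X_{ii}=U_i|X_{ii}|$ with partial isometries $U_i$. Set $W=\bigoplus_i U_i^\dag$, which satisfies $\norm{W}_\infty\leq1$. Because $W$ is block diagonal, $\tr(WX)=\sum_i\tr(U_i^\dag X_{ii})=\sum_i\norm{X_{ii}}_1$. The duality principle (Lemma~\ref{lem:stateNorms}) gives $|\tr(WX)|\leq\norm{W}_\infty\norm{X}_1\leq\norm{X}_1$. Substituting this into the display and taking the supremum over $X$ and $n$ yields $\norm{\calN}_{\s{cb}}\leq\max_i\norm{\calN_i}_{\s{cb}}$, which completes the proof.
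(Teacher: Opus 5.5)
Your proof is correct and follows the paper's argument: the same linear extension of the direct-sum formula plus triangle inequality for the upper bound, and the same embedding of a single block for the lower bound. The only difference is the pinching inequality $\sum_i\norm{X_{ii}}_1\leq\norm{X}_1$. You obtain it from polar decomposition and H\"older duality with the block-diagonal contraction $\bigoplus_i U_i^\dag$. The paper instead uses that the pinching map $X\mapsto\sum_i Q_iXQ_i$ is CPTP, hence trace-norm contractive, and that the trace norm of a block-diagonal operator is the sum of the trace norms of its blocks.
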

\begin{proof}
    Let $\imath_i:\calH_i\to \calH$ be the embedding map as in Definition~\ref{def:blockRep}.  For an operator $X\in L(\calH^{\otimes 2})$ let $X_{i,i}$ denote the projection of $X$ on $L(\calH\otimes H_i)$. Note that for a tensor product element $X=W\otimes Y$, we have 
    \begin{align*}
        X_{i,i}=(I_\calH \otimes \imath_i)^\dag (W\otimes Y )(I_\calH \otimes \imath_i)=W\otimes Y_{i,i},
    \end{align*} where $Y_{i,i}$ is the projection of $Y$ on $L(\calH_i)$ (as in Definition~\ref{def:blockRep}). 
    \begin{align}
        I_{L(\calH)}\otimes \calN (X)&\nonumber =I_{L(\calH)}\otimes \calN (W\otimes Y)\\&\nonumber=\sum_i W\otimes \calN_i(Y_{i,i} )  \\
        &\nonumber=\sum_i   I_{L(\calH)}\otimes \calN_i(W\otimes Y_{i,i} )  \\
        &\nonumber=\sum_i  I_{L(\calH)}\otimes \calN_i\big( (I_{\calH}\otimes \imath_i^\dag) (W\otimes Y) (I_{\calH}\otimes \imath_i ) \big)  .\\
        &=\sum_i   I_{L(\calH)}\otimes \calN_i( X_{i,i} ) .\label{eq:Yashmin}
    \end{align}
    Extending \eqref{eq:Yashmin} using linearity, for any operator $X\in L(\calH^{\otimes 2})$ we obtain 
    \begin{align*}
        I_{L(\calH)}\otimes \calN (X)=\sum_i I_{L(\calH)}\otimes \calN_i (X_{i,i}),
    \end{align*}
    Using item \ref{item:OptDia} in Lemma~\ref{lem:cbDiamondProp} we obtain:
    \begin{align}
        \norm{I_{L(\calH)}\otimes \calN (X)}_1&\nonumber=\norm{\sum_i I_{L(\calH)}\otimes \calN_i (X_{i,i})}_1\\
        &\nonumber\leq \sum_i \norm{ I_{L(\calH)}\otimes \calN_i (X_{i,i})}_1\\
        &\leq \sum_i\norm{\calN_i}_{\s{cb}}   \norm{ (X_{i,i})}_1\label{eq:prodSum2}\\&
        \leq \p{\max_{j=1,\dots,n }\norm{\calN_j}_{\s{cb}} } \sum_i  \norm{ (X_{i,i})}_1.\label{eq:prodSum3} 
    \end{align}
    Here \eqref{eq:prodSum2} follows from item \ref{item:OptDia} of Lemma~\ref{lem:cbDiamondProp}.
    Next, we show that $\sum_{i}\norm{X_{i,i}}_1\leq \norm{X}_1$. To that end, consider the superoperator $\calM:L(\calH)\to L(\calH)$ which extracts the diagonal blocks: 
    \[\calM(X)=\sum_{i}Q_i XQ_i, \quad Q_i=I_{\calH}\otimes P_i,\]
    where $P_i$ is the orthogonal projection on $\calH_i$. Note that $\calM$ is a CPTP map since $\sum_{i}Q_i^\dag Q_i=\sum_iQ_i= I_{\calH^{\otimes 2}}$. Thus, by item \ref{item:CPTPnorm}. in Lemma~\ref{lem:cbDiamondProp}, $\norm{\calM}_{\s{cb}}=1$, and in particular 
    \begin{align}
        \norm{\calM(X)}_{1}\leq \norm{\calM}_{\s{cb}}  \norm{X}_1=\norm{X}_1.\label{eq:MchannelNOrm1}
    \end{align} 
    On the other hand, note that $Q_iXQ_i$ are supported on orthogonal spaces and therefore the singular values of $\calM(X)$, $\s{S}(\calM(X))$, are the union (in a multiset manner) of $\s{S}(Q_i X Q_i)$. In particular, 
    \begin{align}
        \norm{\calM(X)}_1=\sum_{\lambda\in \s{S}(\calM(X))}\lambda =\sum_{i}\sum_{\lambda_i\in \s{S}(Q_iXQ_i)}\lambda_i=\sum_{i}\norm{Q_iXQ_i}_1.\label{eq:normSumEq}
    \end{align}
    We observe that $Q_iXQ_i$ has the same singular values as $X_{i,i}$ and therefore has the same trace norm. Indeed, for $\ket{\psi}\in \calH_i$, a singular vector of $X_{i,i}$ with singular value $\lambda_i$, we have
    \begin{align*}
        Q_iXQ_iQ_iX^\dag Q_i \ket{\psi}&=(I_{\calH}\otimes \imath_i^\dag) XQ_iX^\dag (I_{\calH}\otimes \imath_i)  \ket{\psi}\\
        &=(I_{\calH}\otimes \imath_i^\dag) X (I_{\calH}\otimes \imath_i )(I_{\calH}\otimes \imath_i^\dag )X^\dag (I_{\calH}\otimes \imath_i)  X_{i,i}\ket{\psi}\\
        &=X_{i,i}X_{i,i}^\dag \ket{\psi}=\lambda_i^2 \ket{\psi},
    \end{align*}
    where the first equality follows since $I_{\calH}\otimes \imath_i$ and $I_{\calH}\otimes P_i=Q_i$ agree on $\calH\otimes \calH_i$, and since the adjoint of the embedding map is the projection on $\calH\otimes \calH_i$, and the second equality follows since $I_{\calH}\otimes \imath_i^\dag \imath_i=I_{\calH}\otimes P_i =Q_i$. Combining this observation with \eqref{eq:prodSum3}, \eqref{eq:MchannelNOrm1}, and \eqref{eq:normSumEq} we obtain: 
    \begin{align*}
        \norm{I_{L(\calH)}\otimes \calN (X)}_1&\leq \p{\max_{j=1,\dots,n }\norm{\calN_j}_{\s{cb}} } \sum_i  \norm{ (X_{i,i})}_1\\&=\p{\max_{j=1,\dots,n }\norm{\calN_j}_{\s{cb}} } \sum_i \norm{Q_iXQ_i}_1\\
        &=\p{\max_{j=1,\dots,n }\norm{\calN_j}_{\s{cb}} }  \norm{\calM(X)}_1\\
        &\leq \p{\max_{j=1,\dots,n }\norm{\calN_j}_{\s{cb}} }   \norm{X}_1.
    \end{align*}
    This proves the inequality \[\norm{\calN}_{\s{cb}}\leq \max_{j=1,\dots,n }\norm{\calN_j}_{\s{cb}} .\]

    For the opposite inequality, assume Without loss of generality that $\norm{\calN_1}_{\s{cb}}$ achieves the maximum, and let $X_1\in L(\calH_1^{\otimes2})$ be a norm-$1$ operator for which $\norm{\calN_1}_{\s{cb}}=\norm{I_{L(\calH_{1})}\otimes \calN_{1}(X_1)}_{1}$. Consider the operator $Y\in L(\calH^{\otimes 2})$ given by $P_1\otimes P_1 X_1 P_1\otimes P_1$. Using a technique similar to the first part of the proof, one can show that 
    \[ \norm{I_{L(\calH)}\otimes \calN(Y)}_1=\norm{I_{L(\calH_{1})}\otimes \calN_1(X_1)}_1\]
    and that $\norm{Y}_1=\norm{X_1}_1=1$. This shows that 
    \[\norm{\calN}_{\s{cb}}\geq  \norm{I_{L(\calH)}\otimes \calN(Y)}_1=\norm{I_{L(\calH_{1})}\otimes \calN_1(X_1)}_1=\norm{\calN_{1}}_{\s{cb}}.\]
\end{proof}
The next lemma shows that the Hellinger distance is nonincreasing under conjugation by a contraction.
\begin{lemma}\label{lem:HellingerContaction}
    Let $A,B\in L(\calH)$ be positive semidefinite operators and $C:\calH\to \calH'$ be an operator with $\norm{C}_\infty\leq 1$. Then:
    \[D_{\s{H}}(CAC^\dag, CBC^\dag)\leq D_{\s{H}}(A,B),\]
    where $D_{\s{H}}(A,B)=\|{\sqrt{A}-\sqrt{B}}\|_2$.
\end{lemma}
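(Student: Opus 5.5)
The plan is to reduce conjugation by an arbitrary contraction to two simpler operations: an isometric embedding, under which the Hellinger distance is invariant, followed by compression to a block, under which it can only decrease.

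\emph{Step 1 (dilation).} Since $\norm{C}_\infty\le 1$, the operator $I-C^\dag C$ is positive semidefinite. Define
\[V=\begin{pmatrix} C\\ (I-C^\dag C)^{1/2}\end{pmatrix}:\calH\to\calH'\oplus\calH.\]
Then $V^\dag V=C^\dag C+(I-C^\dag C)=I$, so $V$ is an isometry. Let $P$ be the orthogonal projection onto $\calH'$. Then $CAC^\dag$ is the $\calH'$-block of $VAV^\dag$, that is, $CAC^\dag=P\,VAV^\dag\,P$ restricted to $\calH'$.

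\emph{Step 2 (isometric invariance).} For an isometry $V$ we have $\sqrt{VAV^\dag}=V\sqrt A V^\dag$, because $(V\sqrt AV^\dag)^2=V\sqrt A V^\dag V\sqrt A V^\dag=VAV^\dag$ and the right-hand side is positive semidefinite. Hence
\[\|\sqrt{VAV^\dag}-\sqrt{VBV^\dag}\|_2=\|V(\sqrt A-\sqrt B)V^\dag\|_2=\|\sqrt A-\sqrt B\|_2.\]
Setting $M=VAV^\dag$ and $N=VBV^\dag$, it therefore suffices to show that
\[\|\sqrt{PMP}-\sqrt{PNP}\|_2\le\|\sqrt M-\sqrt N\|_2\qquad\text{for all } M,N\succeq 0.\]

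\emph{Step 3 (pinching).} Let $P^\perp=I-P$ and $U=P-P^\perp$, which is unitary. Consider the pinching
\[\Phi(X)=PXP+P^\perp XP^\perp=\tfrac12(X+UXU^\dag).\]
Expand the squared distance as
\[D_{\s H}^2(X,Y)=\tr X+\tr Y-2\tr(\sqrt X\sqrt Y).\]
The map $\Phi$ preserves traces. By Lieb's concavity theorem with exponent $t=1/2$, the functional $(X,Y)\mapsto\tr(\sqrt X\sqrt Y)$ is jointly concave, and it is clearly invariant under simultaneous unitary conjugation. Applying concavity to the two-term average defining $\Phi$ gives
\[\tr\sqrt{\Phi(M)}\sqrt{\Phi(N)}\ge\tfrac12\tr\sqrt M\sqrt N+\tfrac12\tr\sqrt{UMU^\dag}\sqrt{UNU^\dag}=\tr\sqrt M\sqrt N.\]
Consequently $D_{\s H}(\Phi(M),\Phi(N))\le D_{\s H}(M,N)$.

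\emph{Step 4 (restrict to the block).} Both $\Phi(M)$ and $\Phi(N)$ are block diagonal with respect to $\calH'\oplus\calH$, so their square roots are block diagonal as well, with blocks $\sqrt{PMP}$ and $\sqrt{P^\perp MP^\perp}$ (and similarly for $N$). The squared Frobenius distance between block-diagonal operators is the sum of the squared distances of the blocks. Dropping the nonnegative $P^\perp$ contribution yields
\[\|\sqrt{PMP}-\sqrt{PNP}\|_2\le D_{\s H}(\Phi(M),\Phi(N)).\]
Chaining Steps 2 through 4 gives $D_{\s H}(CAC^\dag,CBC^\dag)\le D_{\s H}(A,B)$.

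\emph{Main obstacle.} Steps 1, 2 and 4 are routine. The substantive input is Step 3, namely the joint concavity of $\tr(X^{1/2}Y^{1/2})$, which is a special case of Lieb's concavity theorem. It must be invoked rather than derived here, since a direct attempt to compare $\tr\sqrt{CAC^\dag}\sqrt{CBC^\dag}$ with $\tr(\sqrt A\, C^\dag C\sqrt B)$ does not seem to close without a concavity statement of this kind.
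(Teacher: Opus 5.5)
Your proof is correct and follows the paper's argument. The paper uses the same dilation $R=\sqrt{I-C^\dag C}$ and the same trace-preserving map $A\mapsto CAC^\dag\oplus RAR^\dag$, which is exactly your $\Phi(VAV^\dag)$; it then drops the $R$-block via block additivity of the squared Frobenius norm and bounds the remaining term by monotonicity of $\tr\sqrt{X}\sqrt{Y}$. The only difference is that the paper cites the general data-processing inequality for the affinity under CPTP maps, whereas you derive it for this particular map from Lieb's joint concavity by writing it as an isometry followed by a two-unitary pinching, which makes your argument slightly more self-contained.
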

\begin{proof}
    We recall the data processing inequality for affinity (see \cite[Eq.\,(6)]{wilde2018recoverability}): if  $\calM$ is a CPTP map, then for all $A,B\succeq 0$ 
    \begin{equation}
        \tr(\sqrt{A}\sqrt{B})\leq  \tr\p{\sqrt{\calM(A)}\sqrt{\calM(B)}}. \label{eq:AffinityDataProcessing}
    \end{equation}
    Now consider a superoperator $\calM:L(\calH)\to L(\calH\oplus \calH')$ 
      (where $\calH\oplus \calH'$ is the direct sum of $\calH$ with $\calH'$) given by
    \[\calM(A)=CAC^\dag \oplus RAR^\dag, \quad R=\sqrt{I_{\calH}-C^\dag C}.\]
    First, note that $R$ is well defined since $\norm{C}_{\infty}\leq 1$ which implies that $I_{\calH}-C^{\dag}C \succeq 0$. We claim that $\calM$ is CPTP. Indeed, it has a Kraus representation with operators 
    \[\hat{C}=\begin{bmatrix}
        C \\ 0
    \end{bmatrix},\quad \hat{R}=\begin{bmatrix}
         0\\ R
    \end{bmatrix},\]
    and is therefore completely positive. On the other hand, for any $A$, 
    \begin{align*}
        \tr(\calM(A))&=\tr\p{CAC^\dag \oplus RAR^\dag}=\tr(CAC^\dag)+\tr(RAR^\dag)\\
        &=\tr(C^\dag C A)+\tr(R^\dag RA)=\tr((C^\dag C + R^\dag R)A)=\tr(A).
    \end{align*}
Next, we observe that $\calM$ maps any operator to a block-diagonal operator. In particular, for all positive $A,B\in L(\calH)$ we have 
\begin{align}
    D_{\s{H}}^2(\calM(A),\calM(B)) &\nonumber=\norm{\sqrt{\calM(A)}-\sqrt{\calM(B)}}_2^2\\&
    \nonumber=\norm{\sqrt{CAC^\dag \oplus RAR^\dag}-\sqrt{CBC^\dag \oplus RBR^\dag}}_2^2\\
    &=\norm{\sqrt{CAC^\dag} \oplus \sqrt{RAR^\dag}-\sqrt{CBC^\dag} \oplus\sqrt{ RBR^\dag}}_2^2\label{eq:oplusSqrt}\\
    &\nonumber=\norm{\p{\sqrt{CAC^\dag}-\sqrt{CBC^\dag}}\oplus \p{\sqrt{ RAR^\dag}-\sqrt{ RBR^\dag}}}_2^2\label{eq:aditivityNormOplus}\\
    &=\norm{\sqrt{CAC^\dag}-\sqrt{CBC^\dag}}_2^2+\norm{\sqrt{RAR^\dag}-\sqrt{RBR^\dag}}_2^2\\
    &\nonumber=D^2_{\s{H}}\p{CAC^\dag,CBC^\dag}+D^2_{\s{H}}\p{RAR^\dag,RBR^\dag}\\
    &\geq D^2_{\s{H}}\p{CAC^\dag,CBC^\dag},\label{eq:psitivityHer}
\end{align}
where in \eqref{eq:oplusSqrt} we used the additivity 
of square root under direct sum operation, in \eqref{eq:aditivityNormOplus} we used block-additivity of the squared Frobenius norm, and in \eqref{eq:psitivityHer} we used positivity of the Hellinger distance.  

On the other hand, 
\begin{align}
    D_{\s{H}}^2(\calM(A),\calM(B)) &\nonumber=\norm{\sqrt{\calM(A)}-\sqrt{\calM(B)}}_2^2=\tr\p{\p{\sqrt{\calM(A)}-\sqrt{\calM(B)}}^2}\\
    &\nonumber=\tr\p{\calM(A)}+\tr\p{\calM(B)}-2\tr\p{\sqrt{\calM(A)}\sqrt{\calM(B)}}\\
    &=\tr(A)+\tr(B)-2\tr\p{\sqrt{\calM(A)}\sqrt{\calM(B)}}\label{eq:TracePres}\\
    &\leq \tr(A)+\tr(B)-2\tr\p{\sqrt{A}\sqrt{B}}=D_{\s{H}}^2(A,B) \label{eq:Affinity}
\end{align}
where \eqref{eq:TracePres} follows since $\calM$ is trace-preserving and  \eqref{eq:Affinity} follows from \eqref{eq:AffinityDataProcessing}. The proof is concluded by combining \eqref{eq:psitivityHer} and \eqref{eq:Affinity}. \end{proof}

    \subsection{Auxiliary technical results}
    \begin{lemma}[Properties of the partial trace {\cite[pp.~107]{nielsen2010quantum}}] \label{lem:propPartTrace}
    Let $\calH=\calH_1\otimes \calH_2$ be a tensor product Hilbert space. For any  $\rho_1\in L(\calH_1) $ and $\tau\in L(\calH)$ we have 
    \[\tr((\rho_1\otimes I_{\calH_2}) 
    \tau)=\tr(\rho_1 \tau_1),\]  
    where $\tau_1$ is given by the partial trace:
    \[\tau_1=\tr_{\calH_2}(\tau).\]
    \end{lemma}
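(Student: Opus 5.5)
The statement is the partial-trace identity $\tr((\rho_1\otimes I_{\calH_2})\tau)=\tr(\rho_1\tau_1)$ with $\tau_1=\tr_{\calH_2}(\tau)$. The plan is to prove it first for product operators and then extend by linearity. Both sides are linear in $\tau$ for fixed $\rho_1$. The partial trace $\tr_{\calH_2}$ is itself the linear map determined by $\tr_{\calH_2}(A\otimes B)=\tr(B)\,A$. Since $L(\calH_1\otimes\calH_2)=L(\calH_1)\otimes L(\calH_2)$, the product operators $A\otimes B$ span $L(\calH)$. It therefore suffices to check the identity for $\tau=A\otimes B$ with $A\in L(\calH_1)$ and $B\in L(\calH_2)$.

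For such $\tau$, the left-hand side equals $\tr(\rho_1A\otimes B)$. Multiplicativity of the trace over tensor products gives $\tr(\rho_1A\otimes B)=\tr(\rho_1A)\tr(B)$. The right-hand side equals $\tr(\rho_1\,\tr(B)A)=\tr(B)\tr(\rho_1A)$, so the two sides agree. Linearity in $\tau$ then extends the identity to all $\tau\in L(\calH)$.

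Alternatively, one can compute directly in orthonormal bases $\{\ket{i}\}$ of $\calH_1$ and $\{\ket{k}\}$ of $\calH_2$. Expanding the left-hand side gives
\[
\sum_{i,j,k}(\rho_1)_{ij}\bra{j,k}\tau\ket{i,k},
\]
and the inner sum over $k$ is $\bra{j}\tau_1\ket{i}$, which is exactly the right-hand side.

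There is no real obstacle. The only point needing care is that the partial trace is well defined as a linear map, i.e. independent of the choice of product decomposition. This holds because it is defined via the basis formula $\bra{j}\tau_1\ket{i}=\sum_k\bra{j,k}\tau\ket{i,k}$, which is manifestly linear in $\tau$ and agrees with $\tr(B)\,A$ on product operators.
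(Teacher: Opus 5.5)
Your proof is correct. The paper gives no proof of its own and simply cites Nielsen--Chuang, where the identity is obtained exactly as in your argument: check it on product operators $A\otimes B$, where both sides equal $\mathrm{tr}(\rho_1 A)\,\mathrm{tr}(B)$, and extend by linearity. Your basis-expansion computation is an equally valid direct alternative.
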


    \begin{lemma}[Hoeffding's inequality {\cite[Theorem 2]{hoeffding1963probability}; \cite[Lemma 2.2]{Boucheron2013}}]
\label{lem:Hoeffding}
Let $\s{X}_1,\dots,\s{X}_T$ be independent real-valued random variables. Assume that, for every $i\in[T]$, there exist constants $a_i,b_i\in\R$ such that
\[ a_i\leq \s{X}_i\leq b_i\]
almost surely. Then, for every $\varepsilon>0$,
\begin{equation*}
    \P\pp{\abs{\frac{1}{T}\sum_{i=1}^T \pp{\s{X}_i-\E[\s{X}_i]}}>\varepsilon} \leq 2\exp\p{ -\frac{2T^2\varepsilon^2}{\sum_{i=1}^T (b_i-a_i)^2}}.
\end{equation*}
\end{lemma}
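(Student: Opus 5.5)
This is Hoeffding's inequality, quoted from standard references, so the plan is to reproduce the classical Chernoff--Hoeffding argument.

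\textbf{Step 1: reduce to a one-sided centered bound.} Write $\s{Y}_i=\s{X}_i-\E[\s{X}_i]$ and $\s{S}=\sum_{i=1}^T \s{Y}_i$. The event $\abs{\s{S}}>T\varepsilon$ is the union of $\ppp{\s{S}>T\varepsilon}$ and $\ppp{-\s{S}>T\varepsilon}$. The variables $-\s{X}_i$ satisfy the same hypotheses with ranges $[-b_i,-a_i]$, which have the same lengths. So it suffices to prove
\[
\P\pp{\s{S}>T\varepsilon}\le \exp\p{-\frac{2T^2\varepsilon^2}{\sum_i(b_i-a_i)^2}},
\]
and the union bound then supplies the factor $2$.

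\textbf{Step 2: Chernoff bound.} For any $s>0$, Markov's inequality and independence give
\[
\P\pp{\s{S}>T\varepsilon}\le e^{-sT\varepsilon}\,\E\pp{e^{s\s{S}}}=e^{-sT\varepsilon}\prod_i \E\pp{e^{s\s{Y}_i}}.
\]

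\textbf{Step 3: Hoeffding's lemma (the main obstacle).} The key estimate is that a centered variable $\s{Y}$ taking values in an interval $[a,b]$ satisfies $\E[e^{s\s{Y}}]\le e^{s^2(b-a)^2/8}$. I would prove it in three moves.
\begin{enumerate}
\item By convexity of $y\mapsto e^{sy}$ on $[a,b]$, write $e^{sy}\le \frac{b-y}{b-a}e^{sa}+\frac{y-a}{b-a}e^{sb}$. Taking expectations and using $\E\s{Y}=0$ gives $\E[e^{s\s{Y}}]\le e^{\phi(u)}$.
\item Here $u=s(b-a)$, $p=-a/(b-a)\in[0,1]$, and $\phi(u)=-pu+\ln(1-p+pe^u)$.
\item Check that $\phi(0)=\phi'(0)=0$. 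Then show $\phi''(u)=\theta(1-\theta)\le 1/4$, where $\theta=pe^u/(1-p+pe^u)$. Taylor's theorem with remainder gives $\phi(u)\le u^2/8$.
\end{enumerate}
This is the only nontrivial computation in the argument.

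\textbf{Step 4: optimize over $s$.} Combining Steps 2 and 3 gives the bound $\exp\p{-sT\varepsilon+s^2\sum_i(b_i-a_i)^2/8}$. Minimizing over $s$ gives $s=4T\varepsilon/\sum_i(b_i-a_i)^2$, and substituting this value yields exactly the exponent $-2T^2\varepsilon^2/\sum_i(b_i-a_i)^2$. The union bound from Step 1 then completes the proof.
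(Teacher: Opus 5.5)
Your proposal is correct. The paper gives no proof of this lemma; it only cites Hoeffding's original paper and Boucheron--Lugosi--Massart, and your argument is the standard one behind those citations: Lemma~2.2 of the latter is precisely the moment-generating-function bound of your Step~3, combined with the Chernoff bound and the choice $s=4T\varepsilon/\sum_i(b_i-a_i)^2$ exactly as you do. The one point to add is the degenerate case $a_i=b_i$, which your division by $b-a$ implicitly excludes. There $\s{Y}_i=0$ almost surely, so the MGF bound holds trivially with factor $1$.
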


\begin{lemma}[McDiarmid's inequality {\cite[Theorem 3.1]{mcdiarmid1989method}; \cite[Sec.~6.1]{Boucheron2013}}]\label{lem:McDiarmid}
Let $\s{X}_1,\dots,\s{X}_n$ be independent random variables with $\s{X}_i$ taking values in $A_i$, and let $f:A_1\times s\times A_n\to\R$ be measurable. Assume that $f$ satisfies the bounded differences condition with constants $c_1,\dots,c_n$, namely, for every $i\in[n]$ and every $\ux,\uy\in A_1\times \cdots\times A_n$ that differ only in the $i$-th coordinate, $\abs{f(\ux)-f(\uy)}\leq c_i$. Then, for every $\varepsilon>0$,
\[
\P\pp{\abs{\frac{1}{n}f(\s{X}_1,\dots,\s{X}_n)-\E\pp{\frac{1}{n}f(\s{X}_1,\dots,\s{X}_n)}}>\varepsilon}\leq 2\exp\p{-\frac{2n^2\varepsilon^2}{\sum_{i=1}^n c_i^2}}.
\]
\end{lemma}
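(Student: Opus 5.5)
We plan to prove the equivalent unnormalized form: with $t=n\varepsilon$ and $Z=f(\s{X}_1,\dots,\s{X}_n)$, show
\[
\P\pp{\abs{Z-\E Z}>t}\leq 2\exp\p{-\frac{2t^2}{\sum_{i=1}^n c_i^2}}.
\]
Dividing the event by $n$ gives exactly the displayed statement. The route is the standard martingale (Azuma--Hoeffding) argument. We treat the upper tail $\P[Z-\E Z>t]$ and obtain the lower tail by applying the same bound to $-f$, which satisfies the same bounded-differences condition. A union bound then produces the factor $2$.

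First, form the Doob martingale. Let $\calF_i=\sigma(\s{X}_1,\dots,\s{X}_i)$ and $Z_i=\E[Z\mid\calF_i]$, so that $Z_0=\E Z$ and $Z_n=Z$. Write $Z-\E Z=\sum_{i=1}^n D_i$ with increments $D_i=Z_i-Z_{i-1}$.

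The key step is to bound each increment, conditionally on $\calF_{i-1}$, inside an interval of length $c_i$. By independence of the $\s{X}_j$,
\[
Z_i=g_i(\s{X}_1,\dots,\s{X}_i), \qquad g_i(x_1,\dots,x_i)=\E\pp{f(x_1,\dots,x_i,\s{X}_{i+1},\dots,\s{X}_n)}.
\]
Define
\[
L_i=\inf_{a\in A_i}g_i(\s{X}_1,\dots,\s{X}_{i-1},a)-Z_{i-1},\qquad U_i=\sup_{a\in A_i}g_i(\s{X}_1,\dots,\s{X}_{i-1},a)-Z_{i-1}.
\]
Then $L_i\le D_i\le U_i$, and both $L_i,U_i$ are $\calF_{i-1}$-measurable. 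The bounded-differences condition, applied inside the expectation defining $g_i$, gives $U_i-L_i\le c_i$. This is the one place where independence is genuinely used: it lets $g_i$ be evaluated with the first $i-1$ coordinates frozen and only the $i$-th coordinate varied.

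Next, apply Hoeffding's lemma conditionally. For a random variable $D$ with $\E[D\mid\calG]=0$ and $L\le D\le U$, where $L,U$ are $\calG$-measurable and $U-L\le c$, we have $\E[e^{sD}\mid\calG]\le e^{s^2c^2/8}$. This is the usual convexity bound $e^{sx}\le \frac{U-x}{U-L}e^{sL}+\frac{x-L}{U-L}e^{sU}$, followed by the elementary estimate $\phi(u)=-\theta u+\ln(1-\theta+\theta e^{u})\le u^2/8$. It is the same lemma that underlies Lemma~\ref{lem:Hoeffding}, so we take it as standard.

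Iterating the tower property from $i=n$ down to $i=1$ then gives
\[
\E\pp{e^{s(Z-\E Z)}}\le \exp\p{\frac{s^2}{8}\sum_{i=1}^n c_i^2}.
\]
Chernoff's bound yields $\P[Z-\E Z>t]\le\exp\p{-st+s^2\sum_i c_i^2/8}$. Optimizing at $s=4t/\sum_i c_i^2$ gives $\exp\p{-2t^2/\sum_i c_i^2}$.

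The main obstacle is the bounded-increment step. The interval $[L_i,U_i]$ must have \emph{conditional} width at most $c_i$; the cruder bound $\abs{D_i}\le c_i$ would only give the weaker constant $1/2$ in place of $2$. Everything else is routine.
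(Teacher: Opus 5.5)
Your proof is correct. It is the standard Doob-martingale argument with a conditional Hoeffding lemma, which is the proof in the sources the paper cites (McDiarmid's Theorem~3.1 and Boucheron--Lugosi--Massart, Sec.~6.1); the paper itself gives no proof and only quotes the result. Your reduction $t=n\varepsilon$, the conditional-width bound $U_i-L_i\le c_i$, and the choice $s=4t/\sum_i c_i^2$ all check out. For uncountable $A_i$, apply Hoeffding's lemma pointwise for fixed $(x_1,\dots,x_{i-1})$ to sidestep measurability of the supremum; in the paper's use $A_i=[q]$ is finite, so this does not arise.
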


\begin{lemma}[Sanov's Theorem {\cite[Theorem 2.1.10]{dembo2009large}}] \label{lem:Sanov}
Let $[q]=\ppp{0,1,\dots,q-1}$, let $\nu$ be a probability distribution on $[q]$, and let $\s{X}_1,\dots,\s{X}_n$ be i.i.d. random variables distributed according to $\nu$. For $\ux=(x_1,\dots,x_n)\in[q]^n$, let $\mu_{\ux}$ denote its empirical distribution, defined by
\[
\mu_{\ux}(a) := \frac{1}{n}\abs{\ppp{i\in[n]~:~x_i=a}},\qquad a\in[q].
\]
Then, for every set $A$ of probability distributions on $[q]$,
\[
-\inf_{\mu\in A^{\circ}}D(\mu\|\nu)\leq\liminf_{n\to\infty}\frac{1}{n}\ln\P\pp{\mu_{\s{X}_1,\dots,\s{X}_n}\in A}\leq\limsup_{n\to\infty}\frac{1}{n}\ln\P\pp{\mu_{\s{X}_1,\dots,\s{X}_n}\in A}\leq-\inf_{\mu\in\overline{A}}D(\mu\|\nu),
\]
where
\[
D(\mu\|\nu) := \sum_{a\in[q]}\mu(a)\ln\frac{\mu(a)}{\nu(a)},
\]
with the usual convention that $D(\mu\|\nu)=\infty$ if $\mu(a)>0$ for some $a$ satisfying $\nu(a)=0$. Here $A^{\circ}$ and $\overline{A}$ denote the interior and closure of $A$ with respect to the total-variation topology on the probability simplex over $[q]$.
\end{lemma}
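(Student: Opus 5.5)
We would prove this by the method of types, working with the $n$-types on $[q]$: the empirical distributions $\mu_{\ux}$, $\ux\in[q]^n$, which form the set $\calP_n$ of distributions with entries in $\frac1n\Z_0$.

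The argument rests on two counting facts:
\begin{enumerate}
\item[(i)] $|\calP_n|\le (n+1)^q$, since each of the $q$ entries takes at most $n+1$ values;
\item[(ii)] for every $\mu\in\calP_n$, the probability that the empirical distribution equals $\mu$ satisfies
\[
(n+1)^{-q}e^{-nD(\mu\|\nu)}\;\le\; \P\pp{\mu_{\s{X}_1,\dots,\s{X}_n}=\mu}\;\le\; e^{-nD(\mu\|\nu)}.
\]
\end{enumerate}

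For (ii), write $T(\mu)$ for the type class of $\mu$. Every $\ux\in T(\mu)$ has probability $\prod_a\nu(a)^{n\mu(a)}=e^{-n(H(\mu)+D(\mu\|\nu))}$, with $H$ the natural-log Shannon entropy. Thus (ii) reduces to the bounds $(n+1)^{-q}e^{nH(\mu)}\le |T(\mu)|\le e^{nH(\mu)}$.
\begin{itemize}
\item The upper bound on $|T(\mu)|$ follows because $T(\mu)$ has total probability at most $1$ under the i.i.d.\ law $\mu^{\otimes n}$, and every element of $T(\mu)$ has $\mu^{\otimes n}$-probability $e^{-nH(\mu)}$.
\item The lower bound follows from the standard fact that, under $\mu^{\otimes n}$, the type class $T(\mu)$ is the most probable type class. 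This is proved by comparing ratios of multinomial coefficients using $m!/k!\ge k^{m-k}$. Combined with (i), it gives $\mu^{\otimes n}(T(\mu))\ge (n+1)^{-q}$.
\end{itemize}
If $\mu$ charges a letter with $\nu(a)=0$, the probability in (ii) is $0$ and $D=\infty$, so the bounds hold trivially.

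\textbf{Upper bound.} Sum (ii) over the types lying in $A$:
\[
\P\pp{\mu_{\s{X}}\in A}\le \sum_{\mu\in A\cap\calP_n} e^{-nD(\mu\|\nu)}\le (n+1)^q e^{-n\inf_{\mu\in A}D(\mu\|\nu)}\le (n+1)^q e^{-n\inf_{\mu\in\overline A}D(\mu\|\nu)}.
\]
Taking $\frac1n\ln$ and $\limsup$ removes the polynomial factor and gives the right-hand inequality.

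\textbf{Lower bound.} Fix $\mu\in A^\circ$ with $D(\mu\|\nu)<\infty$; otherwise there is nothing to prove. Then $\mathrm{supp}\,\mu\subseteq\mathrm{supp}\,\nu$. We construct types $\mu_n\in\calP_n$ supported on $\mathrm{supp}\,\mu$ with $\|\mu_n-\mu\|_{TV}\le q/n$, by rounding the coordinates of $\mu$ down to multiples of $1/n$ and adding the deficit to one coordinate in the support. Since $\mu$ is an interior point, $\mu_n\in A$ for all large $n$. By (ii),
\[
\liminf_{n\to\infty}\frac1n\ln\P\pp{\mu_{\s{X}}\in A}\ge \liminf_{n\to\infty}\bigl(-D(\mu_n\|\nu)\bigr).
\]
Because every $\mu_n$ lives on the fixed support $\mathrm{supp}\,\mu\subseteq\mathrm{supp}\,\nu$, the map $\mu'\mapsto D(\mu'\|\nu)$ is continuous there, so $D(\mu_n\|\nu)\to D(\mu\|\nu)$. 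Taking the supremum over $\mu\in A^\circ$ yields the left-hand inequality.

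The step we expect to need the most care is this approximation in the lower bound. The approximating types must keep their support inside $\mathrm{supp}\,\nu$, so that the divergence stays finite and continuous; a careless rounding that puts mass on a letter with $\nu(a)=0$ would make $D$ infinite. This is why the rounding is restricted to $\mathrm{supp}\,\mu$. The counting estimate (ii) is classical and routine.
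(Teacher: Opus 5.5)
Your proposal is correct and is essentially the standard method-of-types proof. The paper gives no proof of its own; it cites \cite[Theorem 2.1.10]{dembo2009large}, whose argument uses exactly your ingredients: the polynomial bound on the number of types, the two-sided estimate $(n+1)^{-q}e^{-nD(\mu\|\nu)}\le\P[\mu_{\s{X}}=\mu]\le e^{-nD(\mu\|\nu)}$ via the ``most probable type class'' argument, summation over types for the upper bound, and a support-preserving type approximation together with continuity of $D(\cdot\|\nu)$ on $\mathrm{supp}\,\mu$ for the lower bound. As a minor aside, your upper-bound chain already yields the slightly stronger bound with $\inf_{\mu\in A}$ in place of $\inf_{\mu\in\overline{A}}$.
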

    \section{Additional proofs}

    \subsection{Proof of Lemma~\ref{lem:replaceOpt}}\label{app:lemREPLACE}
        Let $\lambda$ be a matrix such that $\norm{\calB^{\calE}_{\lambda,Q}}_{\diamond}\leq \varepsilon$. Our goal is to show that we can replace the matrix $\lambda$ by $\lambda'$ given by $\lambda_{k,l}'=\bra{c_0}{E_l^\dag E_k} \ket{c_0}$, where $\ket{c_0}$ is an arbitrary (normalized) nonzero codeword such that $\norm{\calB^{\calE}_{\lambda',Q}}_{\diamond}\leq 2\varepsilon$. Indeed, let $B_{k,l}$ and $B_{k,l}'$ be defined as in \eqref{eq:BOoperators} with respect to $\lambda$ and $\lambda'$, respectively. We consider the superoperator $\calL$  defined as 
    \begin{equation}
        \calL(\rho)=\sum_{k,l}\tr\p{L_{k,l}P\rho}\ket{k}\bra{l}, \quad L_{k,l}=\lambda_{k,l}'-\lambda_{k,l},\label{eq:Amatrix}
    \end{equation}
    where $P$ is the projection on $Q$ as in \eqref{eq:BOoperators}.
    Note that 
    \begin{align*}
        B_{k,l}'&=P{E_l^\dag E_k} P -\lambda_{k,l}'P=P{E_l^\dag E_k} P -\lambda_{k,l}P-L_{k,l}P=B_{k,l}-L_{k,l}P,
    \end{align*}
    and in particular, for any $\rho$
    \[\calB^{\calE}_{\lambda',Q}(\rho)=\sum_{k,l}\tr(B_{k,l}'\rho)  \ket{k}\bra{l}=\sum_{k,l}\tr((B_{k,l}-L_{k,l}P)\rho)  \ket{k}\bra{l}=\calB^{\calE}_{\lambda,Q}(\rho)-\calL(\rho).\]

    Using the triangle inequality, we obtain 
\begin{align}
    \norm{\calB^{\calE}_{\lambda',Q}}_{\diamond}=\norm{\calB^{\calE}_{\lambda,Q}-\calL}_{\diamond}\leq \norm{\calB^{\calE}_{\lambda,Q}}_{\diamond}+\norm{\calL}_{\diamond}\leq \varepsilon+\norm{\calL}_{\diamond}, \label{eq:triangDiamond}
\end{align}
 Thus, it is sufficient to show that $\norm{\calL}_{\diamond}\leq \varepsilon$. Consider the matrix $L$ whose entries are given by $L_{k,l}$ defined in \eqref{eq:Amatrix}. We first prove that $\norm{\calL}_{\diamond}=\norm{L}_1$.  Note that for any product state $ \tau \otimes \eta \in \calH^{\otimes 2}$ we have 
 \begin{align*}
     I_{L(\calH)}\otimes \calL(\tau \otimes \eta)&=\sum_{k,l}\tr(P\eta)L_{k,l} \cdot     \tau\otimes \ket{k}\bra{l}=\tr(P \eta )  \tau \otimes\overset{L}{\overbrace{\p{\sum_{k,l}L_{k,l}   \ket{k}\bra{l}}}}\\&=  \tr_{\calH}\p{\tau  \otimes P \eta}\otimes L=\tr_{\calH}\p{\p{I_{\calH}\otimes  P} (\tau \otimes \eta)}\otimes L.
 \end{align*}
 Using linearity, we obtain that the above equality holds beyond pure tensor products, for any $\rho\in L(\calH^{\otimes 2})$. That is, 
 \[I_{L(\calH)}\otimes \calL(\rho)=\tr_{\calH}\p{\p{I_{\calH}\otimes  P} \rho}\otimes L.\]  In particular, we have
 \begin{align}
     \norm{\calL}_{\diamond}&\nonumber=\sup_{\rho\in D(\calH^{\otimes 2})}\norm{I_{L(\calH)}\otimes \calL(\rho)}_1\\
     &\nonumber=\sup_{\rho\in D(\calH^{\otimes 2})}\norm{\tr_{\calH}\p{\p{I_{\calH}\otimes  P} \rho}\otimes L}_1\\
     &\label{eq:traceDistMul}=\sup_{\rho\in D(\calH^{\otimes 2})}\norm{\tr_{\calH}\p{\p{I_{\calH}\otimes  P} \rho}}_1  \norm{L}_1\\
     &\label{eq:tracecChannel}\leq \sup_{\rho\in D(\calH^{\otimes 2})}\norm{\p{I_{\calH}\otimes  P} \rho}_1  \norm{L}_1\\
     &\label{eq:Proj}\leq \sup_{\rho\in D(\calH^{\otimes 2})}\norm{I_{\calH}\otimes  P }_\infty  \norm{\rho}_1  \norm{L}_1\\
     &\label{eq:calLbound}=\norm{L}_1.
 \end{align}
 Here \eqref{eq:traceDistMul} follows from the multiplicativity of the trace norm with respect to tensor product, \eqref{eq:tracecChannel} follows since the partial trace is a contraction for the trace norm (see Lemma~\ref{lem:stateNorms}), \eqref{eq:Proj} follows from H\"older's inequality for trace norm (see Lemma~\ref{lem:stateNorms}), and the \eqref{eq:calLbound} follows since $I_{\calH}\otimes P$ is a projection and therefore its spectral norm is $1$. The opposite inequality is obtained by lower bounding the diamond norm with the evaluation of $I_{L(\calH)}\otimes \calL$ on $\rho=\ket{c}\bra{c}\otimes \ket{c}\bra{c}$ for any normalized codeword $\ket{c}\in Q$. 

 In the next step, we show that \[\norm{L}_1\leq\norm{\calB^{\calE}_{\lambda,Q}}_{\diamond}\leq \varepsilon.\] Indeed, consider the state $\rho_0 =\ket{c_0}\bra{c_0}\otimes \ket{c_0}\bra{c_0} \in \calH^{\otimes 2}$. We have:
 \begin{align}
     \norm{\calB^{\calE}_{\lambda,Q}}_{\diamond}&\nonumber=\sup_{\rho \in D(\calH^{\otimes 2})}\norm{I_{L(\calH)}\otimes\calB^{\calE}_{\lambda,Q}(\rho)}_1\geq \norm{I_{L(\calH)}\otimes\calB^{\calE}_{\lambda,Q}(\rho_0)}_1\\
     &\nonumber=\norm{\ket{c_0}\bra{c_0}\otimes\calB^{\calE}_{\lambda,Q}(\ket{c_0}\bra{c_0})}_1=\norm{\ket{c_0}\bra{c_0}}_1  \norm{\calB^{\calE}_{\lambda,Q}(\ket{c_0}\bra{c_0})}_1\\
     &\nonumber=\norm{\sum_{k,l}\tr(B_{k,l}\ket{c_0}\bra{c_0})\ket{k}\bra{l}}_1=\norm{\sum_{k,l}\bra{c_0}(P{E_l^\dag E_k} P -\lambda_{k,l}P)\ket{c_0}  \ket{k}\bra{l}}_1\\
     &=\norm{\sum_{k,l}(\bra{c_0}{E_l^\dag E_k}\ket{c_0}-\lambda_{k,l})  \ket{k}\bra{l}}_1=\norm{\sum_{k,l} L_{k,l}  \ket{k}\bra{l}}_1=\norm{L}_1.\label{eq:L_1normbound}
 \end{align}
 Putting \eqref{eq:triangDiamond}, \eqref{eq:calLbound} and \eqref{eq:L_1normbound} together we obtain 
 \begin{equation}
     \norm{\calB^{\calE}_{\lambda',Q}}_{\diamond}\leq \varepsilon+\norm{\calL}_{\diamond}= \varepsilon+\norm{L}_{1}\leq \varepsilon+\norm{\calB^{\calE}_{\lambda,Q}}_{\diamond}\leq 2\varepsilon.\label{eq:replace}
 \end{equation}

 \begin{remark}\label{rem:generlizedLemma2}
     The proof above applies unchanged if instead the matrix $\lambda'_{k,l}=\bra{c_0}{E_l^\dag E_k}\ket{c_0}$ we take $\lambda^{\sigma}_{k,l}=\tr(\sigma {E_l^{\dag} E_k})$ where $\sigma\in D(Q)$ is a (possibly mixed) state in the code. The only minor change in the proof is in the chain of inequalities leading to \eqref{eq:L_1normbound}, where instead of $\rho_0=\ket{c_0}\bra{c_0}\otimes \ket{c_0}\bra{c_0}$ we take $\rho_0=\sigma\otimes \sigma$. The remaining arguments work exactly the same. Thus, we get the bound: 
     \[\norm{\calB_{\lambda^{\sigma},Q}^{\calE}}_{\diamond}\leq 2 \zeta(\calE,Q).\]
 \hfill$\triangleleft$ \end{remark}

    \subsection{Proof of Proposition~\ref{prop:exampleOptimality}}\label{app:optimalCodeExamle}
   Consider the qubit code $Q_N\subseteq \calH_2^{\otimes N}$ spanned by the basis 
   \[ \ket{c_0}=\ket{0}^{\otimes N} \quad \ket{c_1}=\sqrt{\alpha}  \ket{0}^{\otimes N-2}\otimes \ket{01}+ \sqrt{1-\alpha}\ket{0}^{\otimes N-2}\otimes \ket{10}, \]
   with projector $P=\ket{c_0}\bra{c_0}+\ket{c_1}\bra{c_1}$.
   Consider the Pauli set $\calE_N=\ppp{E_0,\dots, E_{N-1}}$ where $E_i=X_i$ (that is, $X$ acting on the $i$'th qudit with the identity on the remaining space)   for $0\leq i\leq N-3$, $E_{N-2}=I$ and $E_{N-1}=Z_{N-1}$. Consider the B\'eny-Oreshkov operator $\calB_{\lambda^N,Q_N}^{\calE_N}$ associated with the $N\times N$ matrix $\lambda$ given by 
   \[\lambda^N_{i,j}={\begin{cases}
       1-\alpha & (i,j)\in\ppp{(N-2,N-1),(N-1,N-2)},\\
       \bra{c_0}E_j^\dag E_i \ket{c_0} & \text{otherwise}.
   \end{cases}}\]
     A straightforward calculation reveals that:
     \begin{align*}
         \bra{c_0}E_i^\dag E_j\ket{c_0}&={\begin{cases}
             1 & i=j \text{ or } (i,j)\in\ppp{(N-2,N-1),(N-1,N-2)},\\
             0 & \text{otherwise};
         \end{cases}}\\
         \bra{c_1}E_i^\dag E_j\ket{c_1}&={\begin{cases}
             1 & i=j,\\
             1-2\alpha & (i,j)\in\ppp{(N-2,N-1),(N-1,N-2)},\\
             0 & \text{otherwise};
         \end{cases}}
         \\
          \bra{c_0}E_i^\dag E_j\ket{c_1}&=  \bra{c_1}E_i^\dag E_j\ket{c_0}=0.
     \end{align*}
     In particular, 
     \begin{align}
         B_{i,j}=P{E_j^{\dag}E_i}P-\lambda^N_{i,j}P&\nonumber=\sum_{k,l=0,1}\bra{c_k}{E_j^{\dag} E_i} \ket{c_l}   \ket{c_k}\bra{c_l} -\sum_{k=0,1} \lambda^N_{i,j}\ket{c_k}\bra{c_k}\\
         &\nonumber =\sum_{k=0,1}\bra{c_k}{E_j^{\dag} E_i} \ket{c_k}   \ket{c_k}\bra{c_k} -\sum_{k=0,1} \lambda^N_{i,j}\ket{c_k}\bra{c_k}\\
         &={\begin{cases}
             \alpha\p{\ket{c_0}\bra{c_0}-\ket{c_1}\bra{c_1}} & (i,j)\in\ppp{(N-2,N-1),(N-1,N-2)},\\
             0 & \text{otherwise},
         \end{cases}}\label{eq:Nachsim}
     \end{align}
     and therefore
     \begin{align*}
         \calB_{\lambda^N, Q_N}^{\calE_N}(\rho)&=\sum_{i,j=0}^{N-1} \tr(B_{i,j}\rho)  \ket{i}\bra{j}\\
         &={\alpha\tr\p{(\ket{c_0}\bra{c_0}-\ket{c_1}\bra{c_1})\rho}\p{\ket{N-2}\bra{N-1}+ \ket{N-1}\bra{N-2}}}. 
     \end{align*}
     By Lemma~\ref{lem:diamondnormNACHS} we have 
     \begin{align}
         \norm{\calB_{\lambda^N, Q_N}^{\calE_N}}_{\diamond}={\alpha\norm{\ket{c_0}\bra{c_0}-\ket{c_1}\bra{c_1}}_\infty   \norm{\ket{N-2}\bra{N-1}+ \ket{N-1}\bra{N-2}}_1=2\alpha},\label{eq:sameargument}
     \end{align}
     where we used the equality {$\|\ket{c_0}\bra{c_0}-\ket{c_1}\bra{c_1}\|_\infty=1$} and the fact that the singular values of the matrix $\ket{N-2}\bra{N-1}+ \ket{N-1}\bra{N-2}$ are $1$ and $0$ with multiplicity 2 and $N-2$, respectively.

     Now consider the channel $\calN_N$ with Kraus set $\calE_N^A=\ppp{A_0,A_1}$ where 
     \[A_0=\frac{1}{2}(E_{N-1}+E_{N-2}),\quad A_1=\frac{1}{2}(E_{N-1}-E_{N-2}).\quad \]
     A quick calculation shows that $\calE_{N}^A$ satisfies the completeness condition
    \begin{align}
         A_0A_0^\dag+ A_1A_1^\dag &=\nonumber \frac{1}{4}\p{I-Z_{N-1}-Z_{N-1}^\dag +Z_{N-1}^\dag Z_{N-1}  }+\frac{1}{4}\p{I+Z_{N-1}+Z_{N-1}^\dag +Z_{N-1}^\dag Z_{N-1}  }\\
         &=\frac{1}{4}\p{I-Z_{N-1}-Z_{N-1} +I  }+\frac{1}{4}\p{I+Z_{N-1}+Z_{N-1} +I  }=I.\label{eq:checkstam}
    \end{align}
     Let us calculate the B\'eny-Oreshkov operator $\calB_{\lambda^{N,A},Q_N}^{\calE_N^A}$, with respect to the matrix $\lambda^{N,A}$ given by $\lambda^{N,A}_{i,j}=\bra{c_0}{A_j^\dag A_i} \ket{c_0}$. Note that 
     \[A_0^\dag A_1=(I+Z_{N-1}^\dag)(I-Z_{N-1})/4=0,\]
     and therefore for $(i,j)\in\ppp{(0,1),(1,0)}$
     \[B^A_{i,j}=P{A_j^\dag A_i} P -\lambda^{N,A}_{i,j}P=0.\]
     On the other hand, using \eqref{eq:checkstam} and \eqref{eq:Nachsim} we have 
     \begin{align*}
         B_{0,0}^A&=P(A_0^\dag A_0)P-\bra{c_0}A_0^\dag A_0 \ket{c_0}  P\\
         &=\frac{1}{4}\p{P(2I+2E_{N-1})P-\bra{c_0}(2I+2E_{N-1}) \ket{c_0}  P}\\
         &={-\alpha   \ket{c_1}\bra{c_1}}.
         \end{align*}
         Similarly, we compute that
         \[ B_{1,1}^A={\alpha   \ket{c_1}\bra{c_1}},\]
         and therefore 
         \[\calB_{\lambda^{N,A},Q_N}^{\calE_N^A}(\rho)=\sum_{i,j=0,1}\tr(B_{i,j}\rho)  \ket{i}\bra{j}=\tr(\ket{c_1}\bra{c_1}\rho)  (-\alpha \ket{0}\bra{0}+\alpha\ket{1}\bra{1}).\]
         Using the same argument as in \eqref{eq:sameargument}, we have 
         \begin{align}
             \label{eq:samargumentagian}
             \norm{\calB_{\lambda^{N,A},Q_N}^{\calE_N^A}}_{\diamond}=\alpha\norm{\ket{c_1}\bra{c_1}}_\infty   \norm{\ket{0}\bra{0}-\ket{1}\bra{1}}_1=2\alpha.
         \end{align}

     We now have everything we need to take the final logical step. Note that by Proposition~\ref{prop:orthUni} $\calN_N\in\mathscr{N}(\calE_N)$ (as it is a quantum channel with Kraus operators spanned by $\calE_N$, and $\calE_N$ is a unitary Hilbert-Schmidt orthogonal set). Thus, if $Q_N$ is an $\varepsilon$-AQEC for $\calE_N$, then it is an $\varepsilon$-AQEC for $\calN_N$. In particular, by B\'eny-Oreshkov Theorem (see Theorem~\ref{th:BO}), the definition of the environment-leakage distance, and \eqref{eq:diamondBures}, 
     \[
     \zeta(\calE_N^A,Q_N)\leq {2\sqrt{2}\varepsilon}.
     \]
     On the other hand, by Lemma~\ref{lem:replaceOpt} and {\eqref{eq:samargumentagian}}, we have 
     \[2\alpha=\norm{\calB_{\lambda^{N,A},Q_N}^{\calE_N^A}}_{\diamond}\leq 2 \zeta(\calE_N^A,Q_N)\leq {4\sqrt{2} \varepsilon},\]
     Which implies $\alpha \leq {2\sqrt{2} \varepsilon} $. Combining the above with  {\eqref{eq:sameargument}} we obtain 
     \[\zeta(\calE_N,Q_N)\leq \norm{\calB_{\lambda^N, Q_N}^{\calE_N}}_{\diamond}={2\alpha \leq 4 \sqrt{2}\varepsilon}.\]
     
  \subsection{Proof of Lemma~\ref{lem:orthogonalProjectionKLspace}}\label{app:ProofLemGeometric}
  \begin{proof}
    We begin our proof by observing that for any  operator $X\in L(\C^{K}\otimes \C^M)$ we have \[ P_{\s{KL}}(X)=\frac{1}{K} I_K\otimes \tr_K(X),\]
    where $P_{\s{KL}}$ is the orthogonal projection on $\calH_{\s{KL}}$ with respect to the Hilbert-Schmidt inner product. It is sufficient to show that for any $\lambda \in L(\C^{M})$,
    \[\innerP{X-\frac{1}{K}I_K\otimes \tr_K(X),I_K\otimes \lambda }_{\s{HS}}=0.\]
    Indeed, 
    \begin{align*}
        \innerP{X-\frac{1}{K}I_K\otimes \tr_K(X),I_K\otimes \lambda }_{\s{HS}}^*&=\tr\p{\p{X-\frac{1}{K}I_K\otimes \tr_K(X)} I_K\otimes \lambda^\dag }\\&=\tr\p{X (I_K\otimes \lambda^\dag)}-\frac{1}{K}\tr\p{I_K \otimes (\tr_K (X)\lambda^\dag)}\\
        &=\tr\p{\tr_K(X)  \lambda^\dag}-\frac{1}{K}\tr\p{I_K } \tr(\tr_K (X)\lambda^\dag)=0,
    \end{align*}
    where the equality on the last line follows from Lemma~\ref{lem:propPartTrace} and the equality $\tr(I_K)=K$. Since the orthogonal projection $P_{\s{KL}}(X)$ is the closest point to $X$ in $\calH_{\s{KL}}$ we have 
    \begin{align}
        \min_{\lambda\in L(\C^{M})}\norm{X-I_K\otimes \lambda}_2^2&\nonumber=\norm{X-\frac{1}{K}I_K\otimes \tr_K(X)}_2^2=\norm{X-P_{\s{KL}}(X)}_2^2\\
        &=\norm{X}_2^2-\norm{P_{\s{KL}}(X)}_2^2=\norm{X}_2^2-\norm{\frac{1}{K}I_K\otimes \tr_K(X)}_2^2\label{eq:projec1}\\
        &=\norm{X}_2^2-\frac{1}{K}\norm{\tr_K(X)}_2^2,\label{eq:projec2}
    \end{align}
where the first equality in \eqref{eq:projec1} follows from the Pythagorean theorem for orthogonal projections and the last equality follows by straightforward calculation.
  
Next, note that $A_{\s{QEC}}\succeq 0$, so the square root $\sqrt{A_{\s{QEC}}}$ is well defined, and  let $X=\sqrt{A_{\s{QEC}}}$. Note that 
 $\norm{\sqrt{A_{\s{QEC}}}}_2^2=\tr\p{A_{\s{QEC}}}$ and therefore \eqref{eq:projec2} gives
   \begin{align}
      \tr(A_{\s{QEC}})-\frac{1}{K}\norm{\tr_K\sqrt{A_{\s{QEC}}}}_2^2&\nonumber=\norm{\sqrt{A_{\s{QEC}}}}_2^2-\frac{1}{K}\norm{\tr_K(\sqrt{A_{\s{QEC}}})}_2^2\\
      &\nonumber = \min_{\lambda\in L(\C^{M})}\norm{\sqrt{A_{\s{QEC}}}-I_K\otimes \lambda}_2^2\\
      & = \min_{\substack{\lambda\in L(\C^{M})\\
      \lambda \succeq 0}}\norm{\sqrt{A_{\s{QEC}}}-I_K\otimes \lambda}_2^2\label{eq:minOverPositive}\\
      &\nonumber = \min_{\substack{B\in \calH_{\s{KL}}\\
     B \succeq 0}}\norm{\sqrt{A_{\s{QEC}}}-B}_2^2\\
     & = \min_{\substack{B\in \calH_{\s{KL}}\\
     B \succeq 0}}\norm{\sqrt{A_{\s{QEC}}}-\sqrt{B}}_2^2\label{eq:minovverSqrt}\\
   \end{align}
Here, \eqref{eq:minOverPositive} follows since we have already shown that the minimizer is given by $\frac{1}{K}\tr_K(\sqrt{A_{\s{QEC}}})$, which is positive since the partial trace is CPTP, and \eqref{eq:minovverSqrt} follows since the map $\sqrt{  }$  is a bijection on the set of positive operators in $\calH_{\s{KL}}$.  
\end{proof}

    \subsection{Proof of Theorem~\ref{th:equivErGen}}
    \label{app:ThErGen}
    \subsubsection*{Proof of the upper bound}
    Let us start with the upper bound. The proof idea is as follows: We start with a B\'eny-Oreshkov operator for erasures, $\calB_{\lambda_{\s{Er}},Q}^{\calE^{\s{Er}}_{2t}}$, with small diamond norm. We then show that we can find a matrix $\lambda_{t}$ such that the corresponding B\'eny-Oreshkov operator for HW errors with weight at most $t$, $\calB_{\lambda_t,Q}^{\calE_t}$, can be represented as 
    \begin{align}
        \calB_{\lambda_t,Q}^{\calE_t}=\calF\circ \calB_{\lambda_{\s{Er}},Q}^{\calE^{\s{Er}}_{2t}}, \label{eq:FactGoal1}
    \end{align}
       where $\calF$ is a superoperator acting blockwise, {to be defined below.}
    We then bound the completely bounded norms of these factorization maps and use submultiplicativity of the completely bounded norm to bound the diamond norm of $\calB_{\lambda_t,Q}^{\calE_t}$.

    We begin with a simple observation: let $I_1,I_2\subseteq [N]$ be  two distinct sets of coordinates of size $2t$, $\ui,\uj\in [q]^{2t}$, and let $E_{\ui,I_1}$ and $E_{\uj,I_2}$ be the corresponding erasure operators \eqref{eq:erasure_operator}. For any basis state $\ket{\ux}\in \calH_{q}^{\otimes N}$ we have
    \begin{equation}
        \bra{\ux}E_{\ui,I_1}^\dag E_{\uj, I_2}\ket{\ux}=\braket{\ux_{[I_1]}|\ui}\braket{\ux_{[I_2]}|\uj}^*  (\bra{\ux_{[N]\setminus I_1}}\otimes \bra{\perp}_{I_1})( \ket{\perp}_{I_2}\otimes \ket{\ux_{[N]\setminus I_2}})=0,\label{eq:ijzero}
    \end{equation}
since $I_1\triangle I_2\ne\varnothing$. In particular, we have 
\begin{equation}
    E_{\ui,{I_1}}^\dag E_{\uj,I_2}=\delta_{I_1,I_2}  \ket{\ui}_{I_1}\bra{\uj}_{I_1}\otimes I_{[N]\setminus I_1}.\label{eq:deltaI1I2}
\end{equation}
   Consider the matrix $\lambda_{\s{Er}}$ given by 
    \begin{align}
        \lambda_{\s{Er}}(\ui,\uj,I_1,I_2)=\bra{c_0}{E_{\uj,I_2}^\dag E_{\ui,I_1}}\ket{c_0}=\delta_{I_1,I_2}  \bra{c_0} ({\ket{\uj}_{I_1}\bra{\ui}_{I_1}}\otimes I_{[N]\setminus I_1})\ket{c_0},\label{eq:lambdaER}
    \end{align}
    where $\ket{c_0}\in Q$ is an arbitrary code state, and the last equality follows from \eqref{eq:deltaI1I2}. By Lemma~\ref{lem:replaceOpt} we have  
    \begin{equation*}
      \norm{\calB_{\lambda_{\s{Er}},Q}^{\calE_{2t}^{\s{Er}}}}_{\diamond}\leq 2\zeta(\calE^{\s{Er}}_{2t},Q).
   \end{equation*}
  For better readability, let us write $\calB_{\s{Er}}$ for $\calB_{\lambda_{\s{Er}},Q}^{\calE_{2t}^{\s{Er}}}$ and define
    \begin{equation}
        E_{\ui,\uj}^{(I)}:={E_{\uj,I}^\dag E_{\ui,I}=\ket{\uj}_{I}\bra{\ui}_{I}}\otimes I_{[N]\setminus I}, \quad \lambda_{\ui,\uj}^{(I)}:=\lambda_{\s{Er}}(\ui,\uj,I,I).\label{eq:defLAMBDAIij}
    \end{equation}
By \eqref{eq:lambdaER} and \eqref{eq:deltaI1I2}, the action of $\calB_{\s{Er}}$ is given by 
    \[\calB_{\s{Er}}(X)=\sum_{I\in \binom{[N]}{2t}}\sum_{\ui,\uj\in [q]^{2t}}\tr\p{(PE_{\ui,\uj}^{(I)}P-\lambda_{\ui,\uj}^{(I)}P)X}  \ket{\ui,I}\bra{\uj,I}, \]
    where $P$ is the projection on $Q$ and $\ppp{\ket{\ui,I} ~:~ \ui\in[q]^{2t} ,I \in \binom{[N]}{2t}}$ is an orthonormal basis of the direct sum space:
    \[\calH_{\s{Er}}=\bigoplus_{I\in \binom{[N]}{2t}}\calH_I, \quad \calH_I=\C^{q^{2t}}.\]

    Let $\s{vec}_t(I)$ be the set of all vectors of weight at most $t$ in $[q^2]^{N}$ supported on the set $I$ and let $\s{P}_t(I)$ denote the set of all HW operators on $\calH_{q}^{\otimes N}$ indexed by elements in  $\s{vec}_t(I)$; see \eqref{eq:HWoperatorsT}. Consider the spaces
    \[
    \calH_{\s{HW}}^{\s{loc}}=\bigoplus_{I\in \binom{[N]}{2t}}\calH^{\s{HW}}_I, \quad \calH^{\s{HW}}_I=\C^{\s{vec}_t(I)}. 
    \]
   Any HW operator $W_{\ux}\in \s{P}_t(I)$ is given by
    \[W_{\ux}=W_{\ux,I}\otimes I_{[N]\setminus I}, \]
    where $W_{\ux,I}$ is the Pauli on $\calH_{q}^{\otimes 2t}$ indexed by the restriction of $\ux$ to $I$. Let $\s{P}_{N,t}=\cup_{I, |I|\le t}\s P_t(I)\otimes I_{[N]\backslash I}$ be the set of all (global) HW operators of weight at most $t$ by  
    and consider the space $\calH_{\s{HW}}^{\s{glob}}=\C^{\s{P}_{N,t}}$. For simplicity, we think of $\s{P}_{N,t}$ as the set of all vectors in $[q^2]^{N}$ with weight at most $t$. Note that $|\s{vec}_t(I)|=B_{q^2}(|I|,t)$ and $|\s P_{N,t}|=B_{q^2}(N,t)$.

Next, we define the local block maps. First, consider the superoperator $\calF^{\s{b}}_I:L(\calH_I)\to L(\calH_{I}^{\s{HW}})$ 
   \begin{equation}\label{eq: Fb}
   \calF_I^{\s{b}}(Z)=\sum_{\ux,\uy\in\s{vec}_t(I)} \tr({W_{\uy,I}^\dag W_{\ux, I}} Z)  \ket{\ux,I}\bra{\uy,I},
  \end{equation}
For defining the second block map, we introduce a map $\sigma:\s{P}_{N,t}\times \s{P}_{N,t}\to \binom{[N]}{2t}$ defined
as follows. Let $\ux,\uy\in \s{P}_{N,t}$ and let $U=\supp(\ux)\cup\supp(\uy)$. If $|U|=2t$, we put $\sigma(\ux,\uy)=U$,
otherwise $\sigma(\ux,\uy)$ is formed as a union of $U$ and the $2t-|U|$ smallest coordinates in $[N]$. The defining
property that we use below is that for any $\ux,\uy\in \s{P}_{N,t}$,
  \[
  \supp(\ux)\cup\supp(\uy)\subseteq \sigma(\ux,\uy).
  \]
Now consider the block map $\calF^{\s{g}}_I:L(\calH_{I}^{\s{HW}})\to L(\calH_{\s{HW}}^{\s{glob}})$, defined as: 
\begin{equation}
    \calF^{\s{g}}_I(Z)=\sum_{(\ux,\uy)\in \sigma^{-1}(I)  }\bra{\ux,I} Z \ket{\uy,I}  \ket{\ux}\bra{\uy}.\label{eq:globfBlock}
\end{equation}

We define the global factorization map 
$\calF:L(\calH_{\s{Er}})\to L(\calH_{\s{HW}}^{\s{glob}})$ as the direct sum map (see Def.~\ref{def:directsumMap}):
  \begin{equation}
\calF=\bigoplus_{I\in\binom{[N]}{2t}}\calF_I^{\s{g}}\circ \calF_I^{\s{b}}.\label{eq:Fglob}
  \end{equation}
The intuition behind this representation is as follows. The block $\calB_{\s{Er}}(X)_I$ measures how close erasures on the subsystem $I$ are to acting as scalars on the code space, expressing the approximate KL conditions on $I$.  Indeed, since the operators $\{E_{\ui,\uj}^{(I)}\}_{\ui,\uj}$ span the set of operators supported on $I$, the operators $P E_{\ui,\uj}^{(I)}P-\lambda_{\ui,\uj}^{(I)} P$ quantify the deviation of $I$-supported operators from scalar action on the code. In particular, this local erasure data determines how all HW operators supported on $I$, and hence in particular the products ${W_{\uy}^\dagger W_{\ux}}$ supported in $I$, act on the code space.  The role of $\calF_{I}^{\s{b}}$ is then to re-express the local data in the HW-pair coordinates relevant to the global B\'eny-Oreshkov superoperators: it converts the information of $\calB_{\s{Er}}(X)_I$ into the coordinates associated with the operators ${W_{\uy}^\dagger W_{\ux}}$ supported on $I$. In turn, $\calF_{I}^{\s{g}}$ is a bookkeeping map for the block $I$: it selects from these local HW-pair coefficients precisely the entries indexed by $\sigma^{-1}(I)$ and places them in the corresponding locations of the global output. 

\begin{lemma}[Correctness of factorization]\label{lem:FactCor}
    There exists a $B_{q^2}(N,t)\times B_{q^2}(N,t)$ matrix $\lambda_t$ such that 
    \[\calB_{\lambda_t,Q}^{\calE_t}=\calF\circ \calB_{\s{Er}},\]
    where $\calF$ is the superoperator defined in \eqref{eq:Fglob}.
\end{lemma}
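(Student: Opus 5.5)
The plan is to verify the identity entrywise. We choose $\lambda_t$ so that it is automatically consistent with $\lambda_{\s{Er}}$, namely
\[
(\lambda_t)_{\ux,\uy}:=\bra{c_0}W_{\uy}^\dag W_{\ux}\ket{c_0},\qquad \ux,\uy\in\s{P}_{N,t},
\]
with the same code state $\ket{c_0}$ used in \eqref{eq:lambdaER}. Both sides of the claimed identity are linear in $X$ and take values in $L(\calH_{\s{HW}}^{\s{glob}})$. It therefore suffices to compare the coefficient of $\ket{\ux}\bra{\uy}$ for every pair $(\ux,\uy)$. On the left, $\calB_{\lambda_t,Q}^{\calE_t}(X)$ has coefficient $\tr\big((PW_{\uy}^\dag W_{\ux}P-(\lambda_t)_{\ux,\uy}P)X\big)$.

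On the right, the direct-sum map $\calF$ acts only on the diagonal blocks $\calB_{\s{Er}}(X)_{I}$, which is exactly what $\calB_{\s{Er}}$ produces by \eqref{eq:deltaI1I2}. For a fixed pair $(\ux,\uy)$, the map $\calF^{\s g}_I$ contributes only for the unique block $I=\sigma(\ux,\uy)$. So the coefficient of $\ket{\ux}\bra{\uy}$ on the right equals $\bra{\ux,I}\calF_I^{\s b}(\calB_{\s{Er}}(X)_I)\ket{\uy,I}$ with $I=\sigma(\ux,\uy)$. Here we use $\supp(\ux)\cup\supp(\uy)\subseteq\sigma(\ux,\uy)$, so that $\ux,\uy\in\s{vec}_t(I)$ and the entry is well defined. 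Expanding $\calF_I^{\s b}$ gives
\[
\sum_{\ui,\uj\in[q]^{2t}} \bra{\ui}W_{\uy,I}^\dag W_{\ux,I}\ket{\uj}\cdot\tr\big((PE^{(I)}_{\ui,\uj}P-\lambda^{(I)}_{\ui,\uj}P)X\big),
\]
since $\calF_I^{\s b}$ pairs the $(\ui,\uj)$ matrix unit with $\tr(W_{\uy,I}^\dag W_{\ux,I}\ket{\ui}\bra{\uj})$.

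The key step is the operator expansion
\[
W_{\uy}^\dag W_{\ux}=\big(W_{\uy,I}^\dag W_{\ux,I}\big)\otimes I_{[N]\setminus I}=\sum_{\ui,\uj}\bra{\ui}W_{\uy,I}^\dag W_{\ux,I}\ket{\uj}\,\ket{\ui}\bra{\uj}_I\otimes I_{[N]\setminus I},
\]
which expresses $W_{\uy}^\dag W_{\ux}$ as a linear combination of the operators $E^{(I)}$ from \eqref{eq:defLAMBDAIij}. Substituting this into $PW_{\uy}^\dag W_{\ux}P$ recovers the first part of the sum above. Taking the expectation in $\ket{c_0}$ of the same expansion shows that the matching combination of the $\lambda^{(I)}_{\ui,\uj}$ equals $\bra{c_0}W_{\uy}^\dag W_{\ux}\ket{c_0}=(\lambda_t)_{\ux,\uy}$. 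This is where choosing both $\lambda$'s from the same $\ket{c_0}$ is essential: linearity transports $\lambda_{\s{Er}}$ to $\lambda_t$ exactly, and so the two coefficients agree.

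The main obstacle is index bookkeeping. The convention $E^{(I)}_{\ui,\uj}=\ket{\uj}\bra{\ui}_I\otimes I$ from \eqref{eq:defLAMBDAIij} has to match the order of the trace in \eqref{eq: Fb}, which may force a transpose or swap of $\ui,\uj$. I would first fix that convention and check it on a single pair of matrix units, then run the argument above. One should also confirm that pairs mapped by $\sigma$ to a set $I$ containing padding coordinates cause no problem. This holds because $W_{\ux},W_{\uy}$ act as the identity on those extra coordinates, so the tensor-factor expansion above remains valid.
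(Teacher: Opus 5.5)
Your proposal is correct and follows essentially the paper's route: the paper sets $\lambda_t(\ux,\uy)=\tr\big(W_{\uy,I}^\dag W_{\ux,I}\lambda^{(I)}\big)$ with $I=\sigma(\ux,\uy)$, which unwinds to exactly your $\bra{c_0}W_{\uy}^\dag W_{\ux}\ket{c_0}$. It then uses the same expansion of $W_{\uy}^\dag W_{\ux}$ in the local operators $E^{(I)}_{\ui,\uj}$ via the partial trace, organized blockwise (first $\calF_I^{\s{b}}$, then $\calF_I^{\s{g}}$) rather than entrywise.

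On the bookkeeping you flagged, no transpose is needed, but your displayed formulas have the indices swapped. In fact $\tr\big(W_{\uy,I}^\dag W_{\ux,I}\ket{\ui}\bra{\uj}\big)=\bra{\uj}W_{\uy,I}^\dag W_{\ux,I}\ket{\ui}$, and since $E^{(I)}_{\ui,\uj}=\ket{\uj}\bra{\ui}_I\otimes I_{[N]\setminus I}$, one gets $\sum_{\ui,\uj}\bra{\uj}W_{\uy,I}^\dag W_{\ux,I}\ket{\ui}\,E^{(I)}_{\ui,\uj}=W_{\uy}^\dag W_{\ux}$, which is exactly the matching you need. As you wrote it, with $\bra{\ui}\cdot\ket{\uj}$ paired against $E^{(I)}_{\ui,\uj}$, you would obtain the computational-basis transpose instead.
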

\begin{proof}
    Define a $B_{q^2}(N,t)\times B_{q^2}(N,t)$ matrix $\lambda_t$ as 
\[\lambda_t(\ux,\uy)=\hat{\lambda}^{(I)}_{\ux,\uy}, \]
where $\sigma(\ux,\uy)=I$, {$|I|=2t$,} and 
\begin{equation}
    \hat{\lambda}^{(I)}_{\ux,\uy}=\tr\p{{W_{\uy,I}^\dag W_{\ux,I}} \lambda^{(I)}}, \quad \lambda^{(I)}=\sum_{\ui,\uj\in [q]^{2t}} \lambda_{\ui,\uj}^{(I)}   \ket{\ui,I}\bra{\uj,I},\label{eq:AnotherDef}
\end{equation}
where $\lambda_{\ui,\uj}^{(I)}$ is defined in \eqref{eq:defLAMBDAIij}.  
We start by showing that for any $I$,  $\calF^{\s{b}}_I$ acts on $ \calB_{\s{Er}}$ as
\begin{align}
    \calF^{\s{b}}_I \p{\calB_{\s{Er}}(X)_I}=\sum_{\ux,\uy\in\s{vec}_t(I)}  \tr\p{(P{W_{\uy}^\dag  W_{\ux}}P -\hat{\lambda}^{(I)}_{\ux,\uy}P)X} \ket{\ux,I}\bra{\uy, I},\label{eq:Composition1}
\end{align}
where {the action is written in the direct-sum form \eqref{eq:direct sum}}. Indeed, the $(I,I)$ block of $\calB_{\s{Er}}(X)$  (see Def.~\ref{def:blockRep}) is given by 
\[\calB_{\s{Er}}(X)_I=\sum_{\ui,\uj}\tr\p{(PE_{\ui,\uj}^{(I)}P-\lambda^{(I)}_{\ui,\uj}P)X}\ket{\ui,I}\bra{\uj,I}.\]
Therefore,
\begin{align}
    \calF^{\s{b}}_I \p{\calB_{\s{Er}}(X)_I}&\nonumber=\sum_{\ux,\uy\in \s{vec}_t(I)}\sum_{\ui,\uj\in[q]^{2t}}\tr\p{(PE_{\ui,\uj}^{(I)}P-\lambda^{(I)}_{\ui,\uj}P)X} \tr\p{{W_{\uy,I}^\dag W_{\ux,I}}\ket{\ui,I}\bra{\uj,I}} \ket{\ux,I} \bra{\uy,I}\\
    &\nonumber=\sum_{\ux,\uy\in \s{vec}_t(I)}\Bigg(\sum_{\ui,\uj\in[q]^{2t}}\tr\p{PE_{\ui,\uj}^{(I)}PX} \tr\p{{W_{\uy,I}^\dag W_{\ux,I}}\ket{\ui,I}\bra{\uj,I}}\\
    &\nonumber\hspace{2.6cm}-  \tr(PX) \Big(\sum_{\ui,\uj\in[q]^{2t}}\lambda_{\ui,\uj}^{(I)}  \tr\p{{W_{\uy,I}^\dag W_{\ux,I}}\ket{\ui,I}\bra{\uj,I}}\Big)\Bigg) \ket{\ux,I} \bra{\uy,I}.
\end{align}
Let us compute each term separately, starting with the second one. For any $\ux,\uy \in \s{vec}_t(I)$ we have
\begin{align}\label{eq: second term}
    \tr(PX) \Big(\sum_{\ui,\uj\in[q]^{2t}}\lambda_{\ui,\uj}^{(I)}  \tr\p{{W_{\uy,I}^\dag W_{\ux,I}}\ket{\ui,I}\bra{\uj,I}}\Big)=\tr\p{\hat{\lambda}_{\ux,\uy}^{(I)} P X},
\end{align}
immediately from the definition of $\hat{\lambda}^{(I)}$ \eqref{eq:AnotherDef}. For the other summand, denote by $(PXP)_I$ the operator $\tr_{[N]\setminus I}(PXP)$ obtained by tracing out the qudits in $[N]\setminus I$ in $PXP$. We have:
\begin{align}
    \sum_{\ui,\uj\in[q]^{2t}}&\tr\p{PE_{\ui,\uj}^{(I)}PX}  \tr\p{{W_{\uy,I}^\dag W_{\ux,I}}\ket{\ui,I}\bra{\uj,I}} \nonumber
    \\&\label{eq:trCyc}=\sum_{\ui,\uj\in[q]^{2t}}\tr\p{({\ket{\uj,I}\bra{\ui,I}}\otimes I_{[N]\setminus I})PX P}  \bra{\uj,I}{W_{\uy,I}^\dag W_{\ux,I}}\ket{\ui,I}\\
    &\label{eq:trProp}=\sum_{\ui,\uj\in[q]^{2t}}\tr\p{({\ket{\uj,I}\bra{\ui,I}}(PXP)_I }  \bra{\uj,I}{W_{\uy,I}^\dag W_{\ux,I}}\ket{\ui,I}
   \\
     &\nonumber=\sum_{\ui,\uj\in[q]^{2t}}{\bra{\ui,I}(PXP)_I\ket{\uj,I}}  {\bra{\uj,I}W_{\uy,I}^\dag W_{\ux,I}\ket{\ui,I}}\\
     &\nonumber=\tr\p{{W_{\uy,I}^\dag W_{\ux, I}}(PXP)_I }\\
     &\label{eq:trProp3} =\tr\p{({W_{\uy,I}^\dag W_{\ux, I}}\otimes I_{[N]\setminus I}) PX P }\\
     &\nonumber=\tr\p{{W_{\uy}^\dag W_{\ux}}PX P }\\
     &\nonumber=\tr\p{P{W_{\uy}^\dag W_{\ux}}PX  }.
\end{align}
where \eqref{eq:trCyc} follows from the cyclicity  of the trace and the definition of $E_{\ui,\uj}^{(I)}$ 
\eqref{eq:defLAMBDAIij}, and \eqref{eq:trProp}, \eqref{eq:trProp3} follow by Lemma~\ref{lem:propPartTrace}. {This 
calculation together with \eqref{eq: second term} proves \eqref{eq:Composition1}.}

Now, let us compute the composition $\calF^{\s{g}}_I\circ \calF^{\s{b}}_I$ on $\calB_{\s{Er}}(X)$. Using \eqref{eq:globfBlock} and \eqref{eq:Composition1}, we have
\begin{align*}
    \calF_{I}^{\s{g}}\circ \calF_{I}^{\s{b}}\p{\calB_{\s{Er}}(X)_I}&=\sum_{(\ux,\uy)\in \sigma^{-1}(I)} \sum_{\ux',\uy'\in \s{vec}_t(I)} \tr\p{(P{W_{\uy'}^\dag  W_{\ux'}}P -\hat{\lambda}^{(I)}_{\ux',\uy'}P)X} \\
    &\nonumber\hspace{6cm}  \times \braket{\ux,I| \ux',I}\braket{\uy, I|\uy',I}   \ket{\ux}\bra{\uy}\\
    &=\sum_{(\ux,\uy)\in \sigma^{-1}(I)} \tr\p{(P{W_{\uy}^\dag  W_{\ux}}P -\hat{\lambda}^{(I)}_{\ux,\uy}P)X}    \ket{\ux}\bra{\uy}\\
    &=\sum_{(\ux,\uy)\in \sigma^{-1}(I)} \tr\p{(P{W_{\uy}^\dag  W_{\ux}}P -\lambda_t\p{\ux,\uy}P)X}    \ket{\ux}\bra{\uy}. 
\end{align*}

Finally, we use the definition of $\calF$ and the direct sum representation \eqref{eq:direct sum} to obtain 
\begin{align*}
    \calF\p{\calB_{\s{Er}}(X)}&=\sum_{I\in \binom{[N]}{2t}} \calF_{I}^{\s{g}}\circ \calF_{I}^{\s{b}}\p{\calB_{\s{Er}}(X)_I}\\
    &=\sum_{I\in \binom{[N]}{2t}}\sum_{(\ux,\uy)\in \sigma^{-1}(I)} \tr\p{(P{W_{\uy}^\dag  W_{\ux}}P -\lambda_t\p{\ux,\uy}P)X}    \ket{\ux}\bra{\uy}\\
    &=\sum_{\ux,\uy\in \s{P}_{N,t}}\tr\p{(P{W_{\uy}^\dag  W_{\ux}}P -\lambda_t\p{\ux,\uy}P)X}    \ket{\ux}\bra{\uy}\\
    &=\calB_{\lambda_t,Q}^{\calE_t}(X).
\end{align*}
This proves correctness of the factorization map. 
\end{proof}
Let us bound the diamond norm of $\calB_{\lambda_t,Q}^{\calE_t}$. Using Lemmas~\ref{lem:FactCor}, \ref{lem:directSumLemma}, and \ref{lem:cbDiamondProp}, we have
\begin{align}
    \norm{\calB_{\lambda_t,Q}^{\calE_t}}_{\diamond}\nonumber&\leq \norm{\calB_{\lambda_t,Q}^{\calE_t}}_{\s{cb}}\leq \norm{\calF }_{\s{cb}}   \norm{  \calB_{\s{Er}}}_{\s{cb}}\\
    &\nonumber=\norm{  \calB_{\s{Er}}}_{\s{cb}}  \max_{I\in \binom{[N]}{2t}} \norm{\calF_{I}^{\s{g}}\circ \calF_{I}^{\s{b}}}_{\s{cb}}\\
    &\label{eq:ShownInth}\leq \norm{  \calB_{\s{Er}}}_{\diamond}   \max_{I\in \binom{[N]}{2t}}  \norm{\calF_{I}^{\s{b}}}_{\s{cb}}   \norm{ \calF_{I}^{\s{g}}}_{\s{cb}}\\
    &\leq 2\zeta(\calE_{2t}^{\s{Er}},Q)   \max_{I\in \binom{[N]}{2t}}  \norm{\calF_{I}^{\s{b}}}_{\s{cb}}   \norm{ \calF_{I}^{\s{g}}}_{\s{cb}},\label{eq:UpporERt}
\end{align}
where \eqref{eq:ShownInth} follows since 
for Hermitian preserving operators, the diamond and completely bounded norms are equivalent, and since the  specific form of $\lambda_{\ui,\uj}$ described above implies that $\calB_{\s{Er}}$ is Hermitian preserving (which is proved similarly to
the proof that $\calB_{\lambda,Q}^{\calE}$ is Hermitian preserving in Theorem~\ref{th:BOerrorset}). 
It now remains to bound the quantity
$ \max_{I}  \norm{\calF_{I}^{\s{b}}}_{\s{cb}}   \norm{ \calF_{I}^{\s{g}}}_{\s{cb}}$.

\begin{lemma} If $\ast\in\{\s b,\s g\}$, then
\begin{align}
    \norm{\calF_I^{\s{\ast}}}_{\s{cb}}\leq B_{q^2}(2t,t).\label{eq:boundFloc}
\end{align}
\end{lemma}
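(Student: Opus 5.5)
Both bounds come from writing each block map as a composition of maps whose completely bounded norms are controlled by Lemma~\ref{lem:compoDiamond} and item~\ref{item:CPTPnorm} of Lemma~\ref{lem:cbDiamondProp}, and then using submultiplicativity (item~\ref{item:CBmuliplicative}). Throughout, write $n:=|\s{vec}_t(I)|=B_{q^2}(2t,t)$.

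\emph{The map $\calF_I^{\s{b}}$.} I will use a Stinespring-type representation. Define the operator
\[
G:=\sum_{\ux\in\s{vec}_t(I)}\ket{\ux,I}\otimes W_{\ux,I}\;:\;\calH_q^{\otimes 2t}\to \calH_I^{\s{HW}}\otimes\calH_q^{\otimes 2t}.
\]
Then
\[
GZG^\dag=\sum_{\ux,\uy}\ket{\ux,I}\bra{\uy,I}\otimes W_{\ux,I}ZW_{\uy,I}^\dag .
\]
Tracing out the second factor and using cyclicity, $\tr(W_{\ux,I}ZW_{\uy,I}^\dag)=\tr(W_{\uy,I}^\dag W_{\ux,I}Z)$. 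This gives the factorization $\calF_I^{\s{b}}=\tr_{\calH_q^{\otimes 2t}}\circ\calT_{G,G}$.

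The partial trace is CPTP, so its cb norm is $1$. By Lemma~\ref{lem:compoDiamond},
\[
\norm{\calT_{G,G}}_{\s{cb}}=\norm{G}_\infty^2=\norm{G^\dag G}_\infty=\Big\|\sum_{\ux}W_{\ux,I}^\dag W_{\ux,I}\Big\|_\infty=n,
\]
since every $W_{\ux,I}$ is unitary. Submultiplicativity then gives $\norm{\calF_I^{\s{b}}}_{\s{cb}}\le n$.

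\emph{The map $\calF_I^{\s{g}}$.} Any pair $(\ux,\uy)\in\sigma^{-1}(I)$ satisfies $\supp(\ux)\cup\supp(\uy)\subseteq I$, so both $\ux$ and $\uy$ lie in $\s{vec}_t(I)$. Distinct local labels correspond to distinct global ones, so $V:\ket{\ux,I}\mapsto\ket{\ux}$ is an isometry.

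Let $S$ be the $0/1$ mask with $S_{\ux\uy}=1$ exactly when $(\ux,\uy)\in\sigma^{-1}(I)$. Then $\calF_I^{\s{g}}(Z)=V(S\circ Z)V^\dag$, where $\circ$ is the Schur product. I decompose the Schur multiplier row by row:
\[
S\circ Z=\sum_{\ux}\ket{\ux,I}\bra{\ux,I}\,Z\,D_{\ux},\qquad D_{\ux}=\sum_{\uy:S_{\ux\uy}=1}\ket{\uy,I}\bra{\uy,I}.
\]
Each summand is a map $\calT_{U,V'}$ with $U,V'$ projections, so by Lemma~\ref{lem:compoDiamond} it has cb norm at most $1$. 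The triangle inequality over the $n$ rows gives cb norm at most $n$. Conjugation by the isometry $V$ contributes a factor $\norm{V}_\infty^2=1$, again by Lemma~\ref{lem:compoDiamond}. Hence $\norm{\calF_I^{\s{g}}}_{\s{cb}}\le n$.

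\emph{Main obstacle.} The delicate step is $\calF_I^{\s{b}}$. A naive entrywise expansion would give a bound of order $n^2$ (or worse), whereas the operator $G$ packages the whole map into a single conjugation followed by a CPTP map. This reduces the bound to the operator-norm identity $\norm{\sum_{\ux}W_{\ux,I}^\dag W_{\ux,I}}_\infty=n$, which holds precisely because the HW operators are unitary. For $\calF_I^{\s{g}}$, the point to check is that the masked pairs stay inside $\s{vec}_t(I)$, which is what makes $V$ well defined.
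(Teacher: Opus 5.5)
Your proof is correct. For $\calF_I^{\s b}$ it coincides with the paper's argument. Your $G$ is the paper's $U_I$, and the paper likewise factors $\calF_I^{\s b}=\tr\circ\calT_{G,G}$ and uses $G^\dag G=B_{q^2}(2t,t)\,I$.

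For $\calF_I^{\s g}$ you take a different and simpler route.

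\emph{The paper's route.} It forms the bipartite graph $G_I$ on $\s{vec}_t(I)\times\s{vec}_t(I)$ with edge set $\sigma^{-1}(I)$. It then uses K\"onig's edge-colouring theorem to split the edges into $\Delta_I$ matchings. For each matching it factors the corresponding map as $\calT_{\hat U_k,\hat V_k}\circ\calJ_k\circ\calT_{U_k,V_k}$: partial isometries into $\C^{m_k}$, a diagonal pinching, and a re-embedding, each of cb norm at most $1$. The triangle inequality then gives $\norm{\calF_I^{\s g}}_{\s{cb}}\le\Delta_I\le B_{q^2}(2t,t)$.

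\emph{Your route.} You view $\calF_I^{\s g}$ as an isometric conjugation of a $0/1$ Schur multiplier, and split that multiplier by rows. Each row is a two-sided compression $Z\mapsto\ket{\ux,I}\bra{\ux,I}\,Z\,D_{\ux}$ by projections, which Lemma~\ref{lem:compoDiamond} handles directly. This avoids the graph theory and the pinching altogether. The price is that you bound by the number of rows rather than by the maximum degree $\Delta_I$. Both are at most $B_{q^2}(2t,t)$, which is all the lemma needs.

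\emph{What the paper's template buys.} It is reused later: the same matching decomposition, with the diagonal extractor, bounds $\calF_\sigma$ in the lower-bound part. There an entry $\bra{\ux}Z\ket{\uy}$ is sent to a multiple of the unitary $\hat V_I^\dag$ rather than to a matrix unit. That map is not a Schur multiplier followed by an isometric conjugation, so your row splitting would not carry over verbatim. For the present lemma, though, your argument is complete and shorter.
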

\begin{proof} Define a linear map $U_I:\calH_I \to \calH_I^{\s{HW}}\otimes \calH_I$ given by 
   \[
   U_I\ket{\psi}= \sum_{\ux \in \s{vec}_t(I)}\ket{\ux,I}\otimes {W_{\ux,I}} \ket{\psi}.
   \]
Starting with the definition in \eqref{eq: Fb}, for $Z\in L(\calH_I)$ we may write
\begin{align*}
    \calF_I^{\s{b}}(Z)
    =\tr_{\calH_{I}}(U_I Z U_I^\dag)=\tr_{\calH_I}\circ \calT_{U_I}(Z),
\end{align*}
with $\tr_{\calH_1}$ is the operation denotes the partial trace over $\calH_I$ and $\calT_{U_I}=\calT_{U_I,U_I}$ is the superoperator defined in Lemma~\ref{lem:compoDiamond} with $V=U=U_I$.

Recall that the partial trace operation is CPTP and therefore by Lemmas~\ref{lem:cbDiamondProp} and \ref{lem:compoDiamond} we have 
\begin{align}
    \norm{\calF_I^\s b}_{\s{cb}}=\norm{\tr_{\calH_I}\circ \calT_{U_I}}_{\s{cb}}\leq  \norm{\tr_{\calH_I}}_{\s{cb}}  \norm{ \calT_{U_I}}_{\s{cb}}= \norm{U_I}_{\infty}^2.\label{eq:FiComp}
\end{align}
Note that $U_I$ is a scaled partial isometry: 
\begin{align*}
    U_I^\dag U_I&=\sum_{\ux,\uy\in\s{vec}_t(I)} \braket{\ux,I |\uy,I} {W_{\ux,I}^\dag W_{\uy,I}}=\sum_{\ux\in\s{vec}_t(I)}{W_{\ux,I}^\dag W_{\ux,I}}=|\s{vec}_t(I)|   I_{\C^{q^{2t}}}.
\end{align*}
Combining the above with \eqref{eq:FiComp}, we obtain
\[\norm{\calF_I^\s{b}}_{\s{cb}}\leq \norm{U_I}_{\infty}^2\leq |\s{vec}_t(I)|=B_{q^2}(2t,t).\]

Next, we prove that $\calF_{I}^{\s{g}}$ satisfies the same norm inequality. 
Consider a bipartite graph $G_I({\mathscr V}_1\cup{\mathscr V}_2,{\mathscr E}) $ whose sets of left and right vertices are indexed by $\s{vec}_t(I)$ and $(\ux,\uy)\in{\mathscr E}$ place an edge iff
$\sigma(\ux,\uy)=I$. Note that $|\mathscr{V}_1|=|\mathscr{V}_2|=B_{q^2}(2t,t)$. By Konig's coloring theorem \cite[Prop.5.3.1]{Diestel2025}, 
any bipartite graph with maximum degree $\Delta_I$ admits a proper edge coloring with $\Delta_I$ colors. In other words, ${\mathscr E}$ admits a partition 
into $\Delta_I$ disjoint matchings $M_1,\dots,M_{\Delta_I}$, where for $k=1,\dots,\Delta_I$,
   \[
   M_k=\set{(\ux_{k,1},\uy_{k,1}),\dots,(\ux_{k,m_k},\uy_{k,m_k})}.
   \] 
 Let $\calM_{I,k}:L(\calH_{I}^{\s{HW}})\to L(\calH_{\s{HW}}^{\s{glob}})$ be the superoperator defined as 
    \[
    \calM_{I,k}(Z)=\sum_{(\ux,\uy)\in M_k}\bra{\ux,I} Z \ket{\uy,I} \ket{\ux}\bra{\uy}.
    \]
We will show that $\norm{\calM_{I,k}}_{\s{cb}}\leq 1$, which will imply the claimed bound for $\norm{\calF_{I}^{\s{g}}}_{\s{cb}}$.
To bound the norm of $\calM_{I,k}$ we follow the same ideas as those used to bound $\norm{\calF_I^{\s{b}}}_{\s{cb}}$: we decompose $\calM_{I,k}$ into a product of three superoperators of norm at most $1$ and then use submultiplicativity. This decomposition relies on the fact that each $M_k, k=1,\dots,\Delta_I$, is a matching, so $\ux_{k,i}\neq \ux_{k,j}$ and $\uy_{k,i}\neq \uy_{k,j}$ for distinct 
$i,j \in \set{1,\dots, m_k}$, and therefore $\{\ket{\ux_{k,i},I}\}_i$ and $\{\ket{\uy_{k,i},I}\}_i$ 
are orthonormal sets. 
Let $U_k,V_k:\calH_{I}^{\s{HW}}\to \C^{m_k}$ be operators defined as 
   \[
   U_k=\sum_{i=1}^{m_k}\ket{i}\bra{\ux_{k,i},I}, \quad V_k=\sum_{i=1}^{m_k}\ket{i}\bra{\uy_{k,i},I},
   \]
where $\ppp{\ket{i}}_i$ is an orthonormal basis of $\C^{m_k}$, and consider a superoperator
    \[
    \calT_{U_k,V_k}(Z)=U_kZV_k^{\dag}.
    \]
By Lemma~\ref{lem:compoDiamond}, we have
    \[
    \norm{\calT_{U_k,V_k}}_{\s{cb}}\leq \max\p{\norm{U_k}_{\infty},\norm{V_k}_{\infty}}^2=1, 
    \] 
where the equality follows since $\{\ket{\ux_{k,i},I}\}_i$ and $\{\ket{\uy_{k,i},I}\}_i$ are orthonormal sets and thus $U_k$ and $V_k$ are partial isometries. Next, consider the diagonal extraction superoperator $\calJ_k:L(\C^{m_k})\to L(\C^{m_k})$ given by 
\[\calJ_k(Z)=\sum_{j=1}^{m_k}\bra{j}Z\ket{j}  \ket{j}\bra{j}. \]
Note that $\calJ_k$ is CPTP since it has a Kraus representation with projection operators $\ppp{\ket{j}\bra{j}}$ that satisfy the completeness condition. In particular, by Lemma~\ref{lem:cbDiamondProp}, $\norm{\calJ_k}_{\s{cb}}=1$. We also consider the maps $\hat{U}_k,\hat{V}_k: \C^{m_k}\to \calH_{\s{HW}}^{\s{glob}}$ given by 
   \[
   \hat{U}_{k}=\sum_{i=1}^{m_k}\ket{\ux_{k,i}}\bra{i}, \quad  \hat{V}_{k}=\sum_{i=1}^{m_k}\ket{\uy_{k,i}}\bra{i},
   \]
and the corresponding superoperator $\calT_{\hat{U}_k,\hat{V}_k}$. Using the same arguments as above, we have 
$\norm{\calT_{\hat{U}_k,\hat{V}_k}}_{\s{cb}}\leq 1$.
Finally, let us show that $\calM_{I,k}=\calT_{\hat{U}_k,\hat{V}_k}\circ \calJ_k \circ \calT_{U_k,V_k}$: 
\begin{align*}
    \calT_{\hat{U}_k,\hat{V}_k}\circ \calJ_k \circ \calT_{U_k,V_k}(Z)&=\calT_{\hat{U}_k,\hat{V}_k}\circ \calJ_k\p{\sum_{i,j=1}^{m_k}\bra{\ux_{k,i},I}Z\ket{\uy_{k,j},I}  \ket{i}\bra{j}}\\
    &=\calT_{\hat{U}_k,\hat{V}_k}\p{\sum_{j=1}^{m_k}\bra{\ux_{k,j},I}Z\ket{\uy_{k,j},I}  \ket{j}\bra{j}}\\
    &=\sum_{j=1}^{m_k}\bra{\ux_{k,j},I}Z\ket{\uy_{k,j},I}  \ket{\ux_{k,j}}\bra{\uy_{k,j}}=\calM_{I,k}(Z).
\end{align*}
Putting everything together, we have 
    \[
\norm{\calM_{I,k}}_{\s{cb}}= \norm{\calT_{\hat{U}_k,\hat{V}_k}\circ \calJ_k \circ \calT_{U_k,V_k}}_{\s{cb}}\leq \|\calT_{\hat{U}_k,\hat{V}_k}\|_{\s{cb}}  \norm{\calJ_k }_{\s{cb}}  \norm{ \calT_{U_k,V_k}}_{\s{cb}}\leq 1.
\]
Finally, $\calF_{I}^{\s{g}}(Z)=\sum_{k=1}^{\Delta_I}\calM_{I,k}(Z)$ for all $Z$, and therefore by the triangle inequality
\begin{equation}
    \norm{\calF_{I}^{\s{g}}}_{\s{cb}}\leq\sum_{k=1}^{\Delta_I}\norm{\calM_{I,k}}_{\s{cb}}\leq \Delta_{I}\leq B_{q^2}(2t,t),\label{eq:Ftriangular}
\end{equation}
where the last inequality follows since the maximum degree in $G_I$ satisfies $\Delta_I\le \max(|{\mathscr V}|_1,|{\mathscr V}_2|)$. 
\end{proof}
Relying on \eqref{eq:boundFloc}, we can now establish the upper bound of the theorem:
    \[
    \zeta(\calE_t, Q)\leq \|\calB_{\lambda_t,Q}^{\calE_t}\|_{\diamond}\leq 2\zeta(\calE_{2t}^{\s{Er}},Q)  
            \max_{I\in \binom{[N]}{2t}}  \big(\|\calF_{I}^{\s{b}}\|_{\s{cb}} \cdot  \| \calF_{I}^{\s{g}}\|_{\s{cb}}\big)\leq 2B_{q^2}(2t,t)^2\ \zeta(\calE_{2t}^{\s{Er}},Q).
\]

\subsubsection*{Proof of the lower bound}
For the lower bound, we use a similar strategy: we  choose the matrix $\lambda_t=\lambda_t(\ux,\uy)$ defined by 
$\lambda_t(\ux,\uy)=\bra{c_0}{W_{\uy}^\dag W_{\ux}}\ket{c_0}$ for a codeword $\ket{c_0}$. As in the proof of the upper bound, the corresponding B\'eny-Oreshkov operator $\calB$ is Hermitian preserving and (by Lemmas~\ref{lem:replaceOpt} and \ref{lem:cbDiamondProp}) satisfies
\begin{equation}
\norm{\calB}_{\s{cb}}=\norm{\calB}_{\diamond}\leq 2\zeta(\calE_t, Q). \label{eq:normBaSSUMPTION}    
\end{equation}
We then find a superoperator $\calF_\sigma:L(\calH_{\s{HW}}^{\s{glob}})\to L(\calH_{\s{Er}})$ (depending on some embedding function $\sigma$ of  our choice) such that 
   \begin{equation}\label{eq: lfactor}
   \calB_{\s{Er}}:=\calB_{\lambda_{\s{Er}},Q}^{\calE_{2t}^{\s{Er}}}=\calF_{\sigma}\circ \calB
   \end{equation}
for some appropriate choice of a matrix $\lambda_{\s{Er}}$. We then bound $\|\calF_{\sigma}\|_{\s{cb}}$  in order to control 
$\|\calB_{\s{Er}}\|_{\diamond}$. 

We begin by defining the embedding function. Following the notation from the upper bound proof, let $\s{P}_{2t}(I)$ denote the set of all HW operators on $\calH_{q}^{\otimes N}$ of weight at most $2t$ supported on a subset $I\in \binom{[N]}{2t}$, 
and  let $\s{P}_{2t}$ denote a basis of HW operators on $\calH_q^{\otimes 2t}$.  
An HW operator $W\in \s{P}_{2t}(I)$ can be written as $W=W_I\otimes I_{[N]\setminus I}$, where $W_I\in \s{P}_{2t}$. 
Let $[q^2]^N_{\leq t}$ denote the set of $N$-vectors over $[q^2]$ of weight at most $t$. 
A mapping $\sigma:\binom{[N]}{2t}\times \s P_{2t}\to [q^2]^{N}_{\leq t}\times [q^2]^N_{\leq t}$ is called an embedding function if for any subset $I\in \binom{[N]}{2t}$ and HW operator $V\in \s P_{2t}$ we have  
    \[
    \sigma(I,V)=(\ux_{V,I},\uy_{V,I}), \quad \text{where}\quad W_{\ux_{V,I}}^\dag W_{\uy_{V,I}}=\varphi_{V,I}  V\otimes I_{[N]\setminus I}, 
    \] 
    and $\varphi_{V,I}$ is some global phase factor. 
    For fixed $I$ and $V$, we define  
\[\hat{V}_I=\varphi_{V,I}^*  V^\dag\in \s{P}_{2t}.\] Observe that potentially there are many choices of $\sigma$. 
We will later show that such an embedding function always exists, and we will optimize our bound over its choice. 

The map $\calF_\sigma$, which we will now define, quantifies how close the action of the partial trace on the code space is to being constant, using only the corresponding proximity-to-constant information for limited HW operators. The role of $\sigma$ in this map is to determine how a local HW operator $V$ supported on a subsystem $I$ is represented in terms of the global low-weight HW data: for each pair $(I,V)$, the value $\sigma(I,V)=(\ux,\uy)$ specifies low-weight labels $\ux,\uy$ whose associated HW operators are supported on $I$ and satisfy $W_{\ux}^\dag W_{\uy}=V$ on $I$ and 
$\s{id}$ in $I^c$. In this way, $\sigma$ lifts local subsystem data to the corresponding global HW data from which the local erasure characteristic is reconstructed: once all such local HW operators on $I$ are taken into account, they determine the action of the erasure on the code space.

Let
\begin{align}\label{eq:calFsigma}
\calF_{\sigma}(Z) :=  \frac{1}{   q^{2t}}\sum_{I\in \binom{[N]}{2t}} \sum_{V\in \s{P}_{2t}} \bra{\ux_{V,I}} Z \ket{\uy_{V,I}}   {\hat{V}_I^\dag},    
\end{align}
 Formally, $\calF_{\sigma}(Z)$ is a block-diagonal operator on $\calH_{\s{Er}}=\bigoplus_I \calH_I$ where the $I$-th block is given by
\[\calF_{\sigma}(Z)_I=\frac{1}{q^{2t}}\sum_{V\in \s{P}_{2t}} \bra{\ux_{V,I}} Z \ket{\uy_{V,I}}   {\hat{V}_I^\dag},\]
where $\hat{V}_I$ is the operator defined on $\calH_I$ by 
\begin{equation}\label{eq:VI}
   \hat{V}_I=\sum_{\ui,\uj\in [q]^{2t}} \bra{\ui} (\varphi_{V,I}^* V^\dag)\ket{\uj}   \ket{\ui,I}\bra{\uj,I}. 
\end{equation}
Observe that the image of $\hat V_I$ is the subspace $\calH_I\subset \calH_{\s{Er}}$.

\begin{lemma}[Correctness of factorization]\label{lem:FactCor2}
    Factorization \eqref{eq: lfactor} holds; in other words, for any $X\in L(\calH)$
\begin{equation}
    \calF_{\sigma}(\calB(X))=\sum_{I\in \binom{[N]}{2t}}\sum_{\ui,\uj\in [q]^{2t}}\tr\p{(PE_{\ui,\uj}^{(I)}P -\lambda_{\s{Er}}^{(I)}(\ui,\uj)P)X}  \ket{\ui,I}\bra{\uj,I},\label{eq:CompDesc}
\end{equation}
where $\lambda_{\s{Er}}^{(I)}$ is a $[q]^{2t}\times [q]^{2t}$ matrix given by
    \[
    \lambda_{\s{Er}}^{(I)}(\ui,\uj)=\bra{\ui,I}\Big(\frac{1}{q^{2t}}\sum_{V\in \s{P}_{2t}}\lambda_t(\ux_{V,I},\uy_{V,I})  {\hat{V}^{\dag}_I}\Big)\ket{\uj,I}.
    \]

\end{lemma}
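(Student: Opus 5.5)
The plan is to verify \eqref{eq:CompDesc} block by block, after expanding both sides linearly. The key tool is the completeness of the Heisenberg--Weyl basis on $\calH_q^{\otimes 2t}$: the operators $\s{P}_{2t}$, normalized by $q^{-t}$, form a Hilbert--Schmidt orthonormal basis of $L(\calH_q^{\otimes 2t})$.

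Start from the left side. By definition of $\calB$ with $\lambda_t(\ux,\uy)=\bra{c_0}W_{\uy}^\dag W_{\ux}\ket{c_0}$, we have
\[
\bra{\ux}\calB(X)\ket{\uy}=\tr\big((PW_{\uy}^\dag W_{\ux}P-\lambda_t(\ux,\uy)P)X\big).
\]
Substituting $(\ux,\uy)=\sigma(I,V)$ into \eqref{eq:calFsigma}, the $I$-block of $\calF_\sigma(\calB(X))$ becomes
\[
\frac{1}{q^{2t}}\sum_{V\in\s{P}_{2t}}\tr\big((PW_{\uy_{V,I}}^\dag W_{\ux_{V,I}}P-\lambda_t(\ux_{V,I},\uy_{V,I})P)X\big)\,\hat V_I^\dag .
\]
Here one must be careful with the index order. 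The embedding-function property gives $W_{\ux}^\dag W_{\uy}=\varphi_{V,I}\,V\otimes I_{[N]\setminus I}$, while $\calB$ involves $W_{\uy}^\dag W_{\ux}$, its adjoint, namely $\varphi_{V,I}^*V^\dag\otimes I=\hat V_I\otimes I_{[N]\setminus I}$ (reading $\hat V_I$ as an operator on the $I$ qudits). So the first term equals $\tr\big(P(\hat V_I\otimes I)PX\big)=\tr\big(\hat V_I\,(PXP)_I\big)$, using Lemma~\ref{lem:propPartTrace} with $(PXP)_I=\tr_{[N]\setminus I}(PXP)$. This is the same manipulation as in the upper-bound proof.

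The scalar term involving $\lambda_t$ splits off immediately. It is exactly $\tr(PX)$ times the operator $\frac{1}{q^{2t}}\sum_V\lambda_t(\ux_{V,I},\uy_{V,I})\hat V_I^\dag$, whose matrix entries are by definition $\lambda_{\s{Er}}^{(I)}(\ui,\uj)$.

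For the first term, as $V$ ranges over $\s{P}_{2t}$ the operators $\hat V_I$ range over $\s{P}_{2t}$ up to phases, bijectively (since $V\mapsto V^\dag$ permutes the HW basis up to phase). The phases cancel because $\hat V_I$ and $\hat V_I^\dag$ appear together. The HW completeness identity
\[
\frac{1}{q^{2t}}\sum_{U\in\s{P}_{2t}}\tr(UY)\,U^\dag=Y
\]
then gives exactly $(PXP)_I$. Its $(\ui,\uj)$ entry is
\[
\bra{\ui}(PXP)_I\ket{\uj}=\tr\big((\ket{\uj}\bra{\ui}\otimes I)PXP\big)=\tr\big(PE_{\ui,\uj}^{(I)}PX\big),
\]
using \eqref{eq:defLAMBDAIij}. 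Summing over $I$ recovers \eqref{eq:CompDesc}.

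The step I expect to be delicate is the bookkeeping of adjoints and phases: checking that the orientation $\ket{\uj}\bra{\ui}$ in $E^{(I)}_{\ui,\uj}$ matches the placement $\ket{\ui,I}\bra{\uj,I}$ in the output, and that the choice $\hat V_I=\varphi_{V,I}^*V^\dag$ is precisely what makes the completeness identity apply. If this orientation turns out wrong, I would instead use the transpose form of the completeness relation, $\sum_U\tr(UY)U^\dag$ versus $\sum_U\tr(U^\dag Y)U$, which are equal. The remaining steps are routine linear algebra.

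Finally, I would remark that $\lambda_{\s{Er}}$ is block-diagonal in $I$. This is consistent with \eqref{eq:deltaI1I2}, since the erasure B\'eny--Oreshkov operator has vanishing off-diagonal blocks. Hence the right-hand side of \eqref{eq:CompDesc} is indeed $\calB^{\calE_{2t}^{\s{Er}}}_{\lambda_{\s{Er}},Q}(X)$.
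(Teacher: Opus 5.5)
Your proposal is correct and follows essentially the same route as the paper. Like the paper, you compute the $I$-block of $\calF_\sigma(\calB(X))$ and use the embedding property to rewrite $W_{\uy_{V,I}}^\dag W_{\ux_{V,I}}$ as $\hat V_I\otimes I_{[N]\setminus I}$, reduce it to $\tr(\hat V_I(PXP)_I)$ via Lemma~\ref{lem:propPartTrace}, recover $(PXP)_I$ from the Hilbert--Schmidt orthonormality of $\{\hat V_I^\dag/q^{t}\}_V$, match the $\lambda_t$-term to $\lambda_{\s{Er}}^{(I)}$ by definition, and identify entries via $\bra{\ui}(PXP)_I\ket{\uj}=\tr(PE^{(I)}_{\ui,\uj}PX)$, exactly as in \eqref{eq:TransposeDef} and \eqref{eq:lastEqBAA}.
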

 
{\em Remark}: Note that $\calF_\sigma \circ \calB $ is a valid B\'eny-Oreshkov operator for $\calE^{\s{Er}}_{2t}$, as ${E_{\uj,I'}^\dag E_{\ui,I}}=0$ for all $I\neq I'$ (which was shown in \eqref{eq:deltaI1I2}).

\begin{proof}[Proof of Lemma~\ref{lem:FactCor2}]
    We begin the proof by analyzing the summands on the right-hand side. For a fixed $I$ and $X$, recall that $(PXP)_I$ denotes the partial trace $\tr_{[N]\setminus I}(PXP)$. We have
\begin{align}
      \sum_{\ui,\uj\in[q]^{2t}}\tr(PE_{\ui,\uj}^{(I)}P X)  \ket{\ui,I}\bra{\uj,I}&=\nonumber\sum_{\ui,\uj\in[q]^{2t}}\tr({\ket{\uj,I}\bra{\ui,I}}\otimes I_{[N]\setminus I} P XP)  \ket{\ui,I}\bra{\uj,I}\\
      &\label{eq:useLemma} =\sum_{\ui,\uj\in[q]^{2t}}\tr({\ket{\uj,I}\bra{\ui,I}} (P XP)_I)  \ket{\ui,I}\bra{\uj,I}\\
      &\nonumber=\sum_{\ui,\uj\in[q]^{2t}} {\bra{\ui,I} (P XP)_I\ket{\uj,I}}   \ket{\ui,I}\bra{\uj,I}\\
      &\label{eq:TransposeDef}={(PXP)_I},
\end{align}
where \eqref{eq:useLemma} follows from Lemma~\ref{lem:propPartTrace} and \eqref{eq:TransposeDef} follows from the matrix representation with respect to the basis $\ppp{\ket{\ui,I}}$. For the second term in \eqref{eq:CompDesc} we compute
\begin{align}
    \sum_{\ui,\uj\in[q]^{2t}}&\nonumber\tr(\lambda_{\s{Er}}^{(I)}(\ui,\uj)P X)  \ket{\ui,I}\bra{\uj,I}\\&=\tr(PX) \sum_{\ui,\uj\in[q]^{2t}}\lambda_{\s{Er}}^{(I)}(\ui,\uj)  \ket{\ui,I}\bra{\uj,I}\\
    &\nonumber=\tr(PX) \sum_{\ui,\uj\in[q]^{2t}}\bra{\ui,I}\p{\frac{1}{q^{2t}}\sum_{V\in \s{P}_{2t}}\lambda_t(\ux_{V,I},\uy_{V,I})  {\hat{V}^{\dag}_I}}\ket{\uj,I}  \ket{\ui,I}\bra{\uj,I}\\
    &\nonumber=\frac{1}{q^{2t}}\tr(PX) \sum_{V\in \s{P}_{2t}}\lambda_t(\ux_{V,I},\uy_{V,I}) \sum_{\ui,\uj\in[q]^{2t}}\bra{\ui,I} {\hat{V}^{\dag}_I} \ket{\uj,I}  \ket{\ui,I}\bra{\uj,I}\\
    &=\frac{1}{q^{2t}}\tr(PX) \sum_{V\in \s{P}_{2t}}\lambda_t(\ux_{V,I},\uy_{V,I})  {\hat{V}_I^\dag}.\label{eq:part2}
\end{align}
where \eqref{eq:part2} follows since the last sum on the previous line is an expansion of ${\hat{V}_I^\dag}$ in the
basis $\ket{\ui,I}\bra{\uj,I}$.

Further, recall that $\calB$ is the B\'eny-Oreshkov operator defined by $\lambda_t$ and the error set $\calE_t$, and it is therefore given by
\[\calB(X)=\sum_{\ux,\uy\in P_{N,t}}\tr\p{(P{W_{\uy}^\dag W_{\ux}}P-\lambda_t(\ux,\uy)P)X}  \ket{\ux}\bra{\uy},\]
and therefore the summands defining $\calF_{\sigma}(\calB(X))$ can be rewritten as:
\begin{align}
    \frac{1}{q^{2t}}\sum_{V\in \s{P}_{2t}}&\nonumber \bra{\ux_{V,I}}\calB(X)\ket{\uy_{V,I}}  {\hat{V}_I^\dag}\\&\nonumber= \frac{1}{q^{2t}} \sum_{V\in \s{P}_{2t}} \sum_{\ux,\uy\in \s{P}_{N,t}}\braket{\ux_{V,I}|\ux} \braket{\uy| \uy_{V,I}} \tr\p{(P{W_{\uy}^{\dag }W_{\ux}}P-\lambda_t(\ux,\uy)P)X}  {\hat{V}_I^\dag}\\
    &\nonumber=\frac{1}{q^{2t}}\sum_{V\in \s{P}_{2t}}  \tr\p{(P{W_{\uy_{V,I}}^{\dag }W_{\ux_{V,I}}}P-\lambda_t(\ux_{V,I},\uy_{V,I})P)X}  {\hat{V}_I^\dag}
    \\&\nonumber=\frac{1}{q^{2t}}\sum_{V\in \s{P}_{2t}}  \tr\p{({\hat{V}_{I}}\otimes I_{[N]\setminus I}-\lambda_t(\ux_{V,I},\uy_{V,I}))PXP}  {\hat{V}_I^\dag}\\
    &=\frac{1}{q^{2t}}\sum_{V\in \s{P}_{2t}}  \tr{\p{\hat{V}_{I}(PXP)_I}}  {\hat{V}_I^\dag}-\frac{1}{q^{2t}}\sum_{V\in \s{P}_{2t}}  \lambda_t(\ux_{V,I},\uy_{V,I})\tr(PX)  {\hat{V}_I^\dag}\label{eq:LemmaAgain}\\
    &\label{eq:transPdef}=\frac{1}{q^{2t}}\sum_{V\in \s{P}_{2t}}  \innerP{{\hat{V}_{I}^\dag},(PXP)_I}_{\s{HS}}  {\hat{V}_I^\dag}-\frac{1}{q^{2t}}\sum_{V\in \s{P}_{2t}}  \lambda_t(\ux_{V,I},\uy_{V,I})\tr(PX)  {\hat{V}_I^\dag}\\
    &\label{eq:lastEqBAA}={(PXP)_I}-\frac{1}{q^{2t}}\sum_{V\in \s{P}_{2t}}  \lambda_t(\ux_{V,I},\uy_{V,I})\tr(PX)  {\hat{V}_I^\dag}.
\end{align}
In the above, \eqref{eq:LemmaAgain} follows from Lemma~\ref{lem:propPartTrace}, \eqref{eq:transPdef} follows from {the definition of the Hilbert--Schmidt inner product} and \eqref{eq:lastEqBAA} follows since the set $\{{\hat{V}_I^\dag}/q^{t}\}_{V\in \s{P}_{2t}}$ is a Hilbert--Schmidt orthonormal basis for $L(\calH_q^{2t})$. Indeed, recall that $\hat{V}_I=\varphi_{V,I}^*  V^\dag$, and note that the adjoint operation permutes the elements of $ \s{P}_{2t}$ (which is a Hilbert--Schmidt orthonormal basis for $L(\calH_q^{2t})$ when normalized by $1/q^{t}$) and adds a phase factor to each element. We then recall that multiplying vectors of an orthonormal basis by phase factors yields another orthonormal basis

We now combine \eqref{eq:calFsigma}, \eqref{eq:TransposeDef}, \eqref{eq:part2} and \eqref{eq:lastEqBAA} to conclude \eqref{eq:CompDesc}:
\begin{align*}
    \calF_{\sigma}(\calB(X))&=\sum_{I\in \binom{[N]}{2t}}\frac{1}{q^{2t}}\sum_{V\in \s{P}_{2t}}\bra{\ux_{V,I}}\calB(X)\ket{\uy_{V,I}}  {\hat{V}_I^\dag}\\
    &=\sum_{I\in \binom{[N]}{2t}}\p{{(PXP)_I}- \tr(PX)   \frac{1}{q^{2t}}\sum_{V\in \s{P}_{2t}}  \lambda_t(\ux_{V,I},\uy_{V,I})  {\hat{V}_I^\dag}} \\
    &=\sum_{I\in \binom{[N]}{2t}}\p{\sum_{\ui,\uj\in[q]^{2t}}\tr(PE_{\ui,\uj}^{(I)}P X)  \ket{\ui,I}\bra{\uj,I}-\sum_{\ui,\uj\in[q]^{2t}}\tr(\lambda_{\s{Er}}^{(I)}(\ui,\uj)P X)  \ket{\ui,I}\bra{\uj,I}}\\
    &=\sum_{I\in \binom{[N]}{2t}}\sum_{\ui,\uj\in [q]^{2t}}\tr\p{(PE_{\ui,\uj}^{(I)}P -\lambda_{\s{Er}}^{(I)}(\ui,\uj)P)X}  \ket{\ui,I}\bra{\uj,I}. \qedhere
\end{align*}
\end{proof}

We have proved the validity of \eqref{eq: lfactor}.
Next, let us bound $\norm{\calF_{\sigma}}_{\s{cb}}$ using a technique similar to the argument in the upper bound part of the proof. Consider a bipartite multigraph $G_{\sigma}$ whose left and right vertex sets are sets of indices labeled by global HW operators in $P_{N,2t}$ and represented by vectors in $[q]^{2t}$.  We connect a pair of vertices $\ux,\uy$ from the left and right sides of $G_{\sigma}$ by a labeled edge $(\ux,\uy,V,I)$ if $\sigma(I,V)=(\ux_{V,I},\uy_{V,I})=(\ux,\uy)$. Denote the maximal degree of $G_{\sigma}$ by $\Delta_\sigma$. As in the proof of the upper bound, we find a partition of the (labeled) edges of the graph into $\Delta_{\sigma}$ into disjoint matchings, $M_1,\dots,M_{\Delta_{\sigma}}$. For $k=1,\dots,\Delta_\sigma$ define a superoperator $\calM_k:L(\calH_{\s{HW}^{\s{glob}}})\to L(\calH_{\s{Er}})$ by
  $$
\calM_{k}(Z)=\frac{1}{q^{2t}}\sum_{I\in \binom{[N]}{2t}}\sum_{\substack{V\in P_{2t}\\ (\sigma(I,V),V,I)\in M_k}}\bra{\ux_{I,V}}Z\ket{\uy_{I,V}}{\hat{V}_{I}^{\dag}}.
  $$
We have $\calF_{\sigma}=\sum_{k}\calM_k$ and therefore
\begin{equation}
    \label{eq:calMopSumBound}
    \norm{\calF_{\sigma}}_{\s{cb}}\leq \sum_{k=1}^{\Delta_{\sigma}}\norm{\calM_k}_{\s{cb}}.
\end{equation}

We claim that $\norm{\calM_{k}}_{\s{cb}}\leq 1$ for all $k$. To prove this, we show that $\calM_k$ has a decomposition of the form 
\begin{equation}
    \calM_{k}= \calL_k\circ\calJ_k\circ\calT_{U_k,V_k}, \label{eq:composition100}
\end{equation}
where each of these maps has $\s{cb}$-norm bounded by $1$. Let $\{(\ux_i,\uy_i,(V_i,I_i))\}_{i=1}^{m_k}$ be the labeled edges in $M_k$. Define $U_k, V_k:\calH_{\s{HW}}^{\s{glob}}\to \C^{m_k}$ by 
\[U_k=\sum_{i=1}^{m_k}\ket{i}\bra{\ux_i},\quad V_k=\sum_{i=1}^{m_k}\ket{i}\bra{\uy_i},\]
where $\ppp{\ket{i}}_{i}$ is an orthonormal basis of $\C^{m_k}$, and consider the superoperator $\calT_{U_k,V_k}$ as defined in Lemma~\ref{lem:compoDiamond}, $\calT_{U_k,V_k}(Z)=U_kZV_k^{\dag}$. Next, we define $\calJ_k$ to be the diagonal extractor map:
\[\calJ_k(Z)=\sum_{i=1}^{m_k}\braket{i|Z|i}   \ket{i}\bra{i}.\]
 Note that 
\begin{equation}
    \calJ_k\circ \calT_{U_k,V_k}(Z)=\sum_{i=1}^{m_k}\bra{\ux_i}Z\ket{\uy_i}  \ket{i}\bra{i}.\label{eq:middleCalc}
\end{equation}
Finally, consider the maps $\calL_k:L(\C^{m_k})\to L(\calH_{\s{Er}})$ given by 
\begin{align*}
    \calL_{k}(Z)&=\frac{1}{q^{2t}}\sum_{i=1}^{m_k}\bra{i}Z\ket{i}\sum_{I\in{\binom{[N]}{2t}}}\sum_{V\in\s{P}_{2t}} \delta_{(I,V),(I_i,V_i)}   {\hat{V}_{I}^\dag}=\frac{1}{q^{2t}}\sum_{i=1}^{m_k}\bra{i}Z\ket{i}  {\hat{V}_{i,I_i}^\dag},
    \end{align*}
where in the above, $\delta$ denotes the Kronecker delta. A straightforward calculation using \eqref{eq:middleCalc} reveals that indeed $\calL_k\circ \calJ_k\circ \calT_{U_k,V_k}=\calM_{k}$ as desired. 

Let us now prove that 
\[\max\p{\norm{\calL_k}_{\diamond},\norm{\calJ_k}_{\diamond},\norm{\calT_{U_k,V_k}}_{\diamond}}\leq 1.\] Indeed, by Lemma~\ref{lem:compoDiamond}, 
\[\norm{\calT_{U_k,V_k}}_{\s{cb}}\leq \norm{V_k}_\infty  \norm{U_k}_{\infty}=1,\] 
where the last equality follows since $U_k$ and $V_k$ are partial isometries (as $M_k$ is a matching and therefore all $\ux_i$ are distinct and all $\uy_i$ are distinct). We now bound the norm of $\calJ_k$. As shown in the proof of the upper bound, $\calJ_k$ is CPTP and by Lemma~\ref{lem:cbDiamondProp}, $\norm{\calJ_k}_{\s{cb}}=1$. It remains to show that $\norm{\calL_k}_{\s{cb}}\leq 1$. Denote $P_i=\ket{i}\bra{i}$ and let $X\in L(\C^{m_k}\otimes \C^{m_k})$ be any linear operator. If $X=X_1\otimes X_2$, then
\begin{align}
    I_{L(\C^{m_k})}\otimes \calL_k (X)&\nonumber=\frac{1}{q^{2t}}\sum_{i=1}^{m_k}\bra{i}X_2\ket{i}  X_1 \otimes  {\hat{V}_{i,I_i}^\dag}\\
    &=\nonumber\frac{1}{q^{2t}}\sum_{i=1}^{m_k} \tr_2(X_1 \otimes P_i X_2 P_i  ) \otimes  {\hat{V}_{i,I_i}^\dag}\\
    &=\frac{1}{q^{2t}}\sum_{i=1}^{m_k}\tr_2\p{(I_{\C^{m_k}}\otimes P_i) X (I_{\C^{m_k}}\otimes P_i)^\dag}  \otimes  {\hat{V}_{i,I_i}^\dag}.\label{eq:linEx}
\end{align}
Extending \eqref{eq:linEx} using linearity, we obtain 
\begin{align}
    I_{L(\C^{m_k})}\otimes \calL_k (X)=\sum_{i=1}^{m_k}\tr_2\p{Q_iXQ_i^\dag}\otimes \frac{{\hat{V}_{i,I_i}^\dag}}{q^{2t}}, \quad Q_i=I_{\C^{m_k}}\otimes P_i.
\end{align}
Using multiplicativity of the trace norm with respect to the tensor product, the triangle inequality, and the fact that the partial trace is a contraction for the trace norm (see Lemma~\ref{lem:stateNorms}), we obtain:
\begin{align}
     \norm{I_{L(\C^{m_k})}\otimes \calL_k (X)}_1 \leq \sum_{i=1}^{m_k}\norm{Q_iXQ_i^\dag}_1  \frac{1}{q^{2t}} \norm{{\hat{V}_{i,I_i}^\dag}}_1=\sum_{i=1}^{m_k}\norm{Q_iXQ_i^\dag}_1,\label{eq:Lkfinal}
\end{align}
where the last equality follows since for all $i$, ${\hat{V}_{i,I_i}^\dag}$ is a unitary on $\calH_q^{2t}$ and therefore satisfies $\norm{{\hat{V}_{i,I_i}^\dag}}_1=q^{2t}$. By \eqref{eq:Lkfinal}, in order to complete the proof that $\norm{\calL_{k}}_{\s{cb}}\leq 1$, it suffices to show that for all $X\in L(\C^{m_k}\otimes \C^{m_k})$, we have
\begin{equation}
    \sum_{i=1}^{m_k}\norm{Q_iXQ_i^\dag}_1\leq \norm {X}_1.\label{eq:letsconclude}
\end{equation}
Indeed, consider the superoperator $\calR$ given by the Kraus operators $\ppp{Q_i}_{i=1}^{m_k}$. Since the $Q_i$'s satisfies the completeness relation we have that $\calR$ is CPTP and by Lemma~\ref{lem:cbDiamondProp} 
\begin{equation}
    \norm{\calR(X)}_1\leq\norm{\calR}_{\s{cb}}  \norm{X}_1=\norm{X}_1. \label{eq:Part1Rnorm}
\end{equation}
 On the other hand, note that $Q_iXQ_i$ are supported on orthogonal spaces and therefore the singular values of $\calR(X)$, $\s{S}(\calR(X))$, are given by the union (in a multiset manner) of $\s{S}(Q_i X Q_i)$. In particular, 
    \begin{align}
        \norm{\calR(X)}_1=\sum_{\lambda\in \s{S}(\calR(X))}\lambda =\sum_{i}\sum_{\lambda_i\in \s{S}(Q_iXQ_i)}\lambda_i=\sum_{i}\norm{Q_iXQ_i}_1.\label{eq:normSumEq1}
    \end{align}
Combining \eqref{eq:Part1Rnorm} and \eqref{eq:normSumEq1}, we conclude \eqref{eq:letsconclude}, as desired. 
To conclude, combining the norm bounds on $\calL_k,\calT_k$ and $\calJ_k$ with \eqref{eq:composition100} and \eqref{eq:calMopSumBound} we conclude that 
\[\norm{\calF_\sigma}_{\s{cb}}\leq \sum_{k=1}^{\Delta_\sigma} \norm{\calM_{k}}_{\s{cb}}\leq \sum_{k=1}^{\Delta_\sigma} \norm{\calL_{k}}_{\s{cb}}  \norm{\calJ_{k}}_{\s{cb}}\norm{\calT_{U_k,V_k}}_{\s{cb}}\leq \Delta_{\sigma}.\]
Combining the above with \eqref{eq:normBaSSUMPTION} we get
\[\zeta(\calE^{\s{Er}}_{2t},Q)\leq \norm{\calB_{\s{Er}}}_{\diamond}\leq  \norm{\calB_{\s{Er}}}_{\s{cb}}\leq \norm{\calF_\sigma}_{\s{cb}}  \norm{\calB}_{\s{cb}}\leq \Delta_{\sigma}  \norm{\calB}_{\diamond}\leq 2\Delta_{\sigma}  \zeta(\calE_{t},Q). \] 
\medskip
\noindent\textbf{Existence and maximal degree of embedding functions:}
 Let us now show that an embedding function always exists and bound the maximal degree of the corresponding graph. First, recall that each HW operator $W\in P_{2t}$ can be represented by a vector $\uv\in[q^2]^{2t}$ such that $\uv=(a_i,b_i)_{i\in [2t]}$, $a_i,b_i\in[q]$,
    \[
    W=W_{\uv}=\bigotimes_{i}X^aZ^b.
    \] 
Recalling \eqref{eq: Wab}, we note that
     \[
     W_{\uv}^{\dag}=\omega^{\ua  \ub}W_{-\uv}, \quad \omega=e^{2\pi i/q},
     \]
where $-\uv$ is computed $\operatorname{mod} q$. Additionally, for two HW operators given by $\uv=(a_i,b_i)_i$ and $\uv'=(a_i',b_i')_i$ we have:
\[W_{\uv}W_{\uv'}=\omega^{\uv\odot \uv'}W_{\uv+\uv'}, \quad \uv+\uv'=(a_i+a_i',b_i+b_i')_i,\]
where addition is again modulo $q$ and $\odot$ denotes the symplectic inner product. Thus, the problem of finding an embedding function reduces to the following combinatorial problem: Find a function $\sigma:\binom{[N]}{2t}\times [q^2]^{2t}\to [q^2]^{N}_{\leq t}\times [q^2]^{N}_{\leq t}$ (where $[q^2]^{N}_{\leq t}$ denotes the vectors of Hamming weight at most $t$ in $[q^2]^N$) such that for any $(I,\uv)$ we have 
\[\sigma(I,\uv)=(\ux,\uy), \quad\text{s.t.} \quad \hat{\uv}=\uy-\ux, \]
where $\hat{\uv}$ is the length-$N$ vector obtained by placing $\uv$ in the coordinates of $I\subset [N]$ and
$(0,0)$ in the remaining coordinates. 

Given $I\in \binom{[N]}{2t}$, let $I=I_1\bigcup I_2$ be its partition into two halves of size $t$ each. For $\uv\in [q^2]^{2t}$ let $\uv_1\in [q^2]^t$ and $\uv_2\in [q^2]^t$ such that $(\uv_1,\uv_2)=\uv$. Define $\hat{\uv}_1\in [q^2]^N$ and $\hat{\uv}_2\in [q^2]^N$ to be the vectors obtained by placing $-\uv_1$ and $\uv_2$ in $I_1$ and $I_2$, respectively,
and $(0,0)$ elsewhere. Now define $\sigma(I,\uv)=(\hat{\uv}_1,\hat{\uv}_2)$.
By construction, $\hat{\uv}_1$ and $\hat{\uv}_2$ are of weight at most $t$ and $\hat{\uv}=\hat{\uv}_2-\hat{\uv}_1$ as desired. Let us estimate the maximum degree of the resulting graph $G_{\sigma}$. Let $\ux$ be a vertex in $G_{\sigma}$.
Clearly,
\[\s{\mathrm{deg}}(\ux)\leq \binom{N}{2t} B_{q^2}(2t,t),\]
and since this bound does not depend on $\ux$, it is also valid for $\Delta_{\sigma}$.

\subsection{Proof of Proposition~\ref{prop:SSVzeta}}\label{app:Complexity} 
Throughout this proof, we denote $\calH_2^{\otimes N}$ by simply $\calH$. Consider the $\binom{N}{t}2^t\times \binom{N}{t}2^t$ matrix $\lambda$, indexed by pairs $((I_1,\ui),(I_2,\uj))$, $I_1,I_2\in\binom{N}{t}$, $\ui,\uj\in \ppp{0,1}^t$ (representing pairs of operators of $\calE_{t}^{\s{Er}}$, see \eqref{eq:erasure_operator}) given by 
    \[\lambda_{I_1, I_2, \ui, \uj}=\tr\p{\Gamma {E_{I_2,\uj}^\dag E_{I_1,\ui}}}=\delta_{I_1,I_2} \tr\p{\Gamma ({\ket{\uj}_{I_1}\bra{\ui}_{I_1}}\otimes I_{{[N]}\setminus I_1})},\]
    where the last equality follows from \eqref{eq:deltaI1I2}. We follow the notation of the proof of Theorem~\ref{th:equivErGen}  and define
    \begin{equation}
        E_{\ui,\uj}^{(I)}:={E_{\uj,I}^\dag E_{\ui,I}=\ket{\uj}_{I}\bra{\ui}_{I}}\otimes I_{[N]\setminus I}, \quad \lambda_{\ui,\uj}^{(I)}:={\lambda_{I,I,\ui,\uj}}.\label{eq:defLAMBDAIij1}
    \end{equation}
    Let $\calB_{\lambda,Q}^{\s{Er}}$ be the B\'eny-Oreshkov operator associated with $\lambda$:
\[\calB_{\lambda,Q}^{\s{Er}}(X)=\sum_{I\in \binom{[N]}{t}}\sum_{\ui,\uj\in [2]^t}\tr\p{(PE_{\ui,\uj}^{(I)} P-\lambda_{\ui,\uj}^{(I)}P) X )}  \ket{\ui,I}\bra{\uj,I},\]
where $P=P_Q$ is the projection on $Q$. To prove the upper bounds, it is sufficient to show that 
\[\norm{\calB_{\lambda,Q}^{\s{Er}}}_{\diamond}\leq 2K\binom{N}{t} V_{\s{sub}}(Q,t). \]
For a fixed $I\in \binom{[N]}{t}$, consider the superoperator: $\calB_{\lambda,Q,I}^{\s{Er}}:L(\calH)\to L(\C^{2^t})$ given by 
\[\calB_{\lambda,Q,I}^{\s{Er}}(X)=\sum_{\ui,\uj\in [2]^t}\tr\p{(PE_{\ui,\uj}^{(I)} P-\lambda_{\ui,\uj}^{(I)}P) X )}  \ket{\ui,I}\bra{\uj,I},\]
and note that $\calB_{\lambda,Q}^{\calE_t^{\s{Er}}}$ is exactly the direct sum of $\p{\calB_{\lambda,Q,I}^{\s{Er}}}_I$. Thus, by the triangle inequality 
\begin{equation}
    \norm{\calB_{\lambda,Q}^{\calE_t^{\s{Er}}}}_{\diamond}\leq \binom{N}{t}\max_{I\in \binom{[N]}{t}}\norm{\calB_{\lambda,Q,I}^{\s{Er}}}_\diamond.\label{eq:UnionI}
\end{equation}
Using the Cauchy--Schwarz argument, one can show that the above diamond norm can be dominated by the operator norm (with respect to the trace norm on $L(\C^{2^t})$):
\begin{equation}
    \norm{\calB_{\lambda,Q,I}^{\s{Er}}}_\diamond\leq K  \sup_{\norm{X}_1\leq 1}\norm{\calB_{\lambda,Q,I}^{\s{Er}}(X)}_1;\label{eq:normoperatordiamind}
\end{equation}
see \eqref{eq:DiamBound} for a complete derivation. Let us denote 
\[\norm{\calB_{\lambda,Q,I}^{\s{Er}}}_{\calD} :=  \sup_{\rho\in D(\calH)} \norm{\calB_{\lambda,Q,I}^{\s{Er}}(\rho)}_1.\]

 For a fixed operator $X$, we can always write $X=X_1+iX_2$ where $X_1, X_2$ are Hermitian operators. 
\[X_1=\frac{X+X^\dag}{2},\quad X_2=\frac{X-X^\dag}{2i}.\]
Note that by the triangle inequality $\norm{X_i}_1\leq \norm{X}_1$ for $i=1,2$. We can also decompose $X_i=X_i^+-X_i^-$, where $X_i^\pm$ are positive operators supported on orthogonal space and therefore satisfy:
\[\norm{X_i}_1=\norm{X_i^+}_1+\norm{X_i^-}_1.\]
We also note that $X_i^{\pm}/\norm{X_i^{\pm}}$ are positive with trace $1$ and therefore belong in $D(\calH)$. Substituting into  \eqref{eq:normoperatordiamind}, we obtain: 
\begin{align}    \norm{\calB_{\lambda,Q,I}^{\s{Er}}}_\diamond&\nonumber\leq K  \sup_{\norm{X}_1\leq 1}\norm{\calB_{\lambda,Q,I}^{\s{Er}}(X)}_1\\
&\nonumber\leq K  \sup_{\norm{X}_1\leq 1}\norm{\sum_{i=1,2}\p{\calB_{\lambda,Q,I}^{\s{Er}}(X_i^{+})-\calB_{\lambda,Q,I}^{\s{Er}}(X_i^{-})}}_1\\
&\nonumber\leq K  \sup_{\norm{X}_1\leq 1}\sum_{i=1,2}\p{\norm{\calB_{\lambda,Q,I}^{\s{Er}}(X_i^{+})}_1+\norm{\calB_{\lambda,Q,I}^{\s{Er}}(X_i^{-})}_1}\\
&\nonumber\leq K  \sup_{\norm{X}_1\leq 1}\sum_{i=1,2}\p{\norm{\calB_{\lambda,Q,I}^{\s{Er}}(X_i^{+})}_1+\norm{\calB_{\lambda,Q,I}^{\s{Er}}(X_i^{-})}_1}\\
&\nonumber= K  \sup_{\norm{X}_1\leq 1}\sum_{i=1,2}\p{\norm{X_i^+}_1  \norm{\calB_{\lambda,Q,I}^{\s{Er}}(X_i^{+}/\norm{X_i^+}_1)}_1+\norm{X_i^-}  \norm{\calB_{\lambda,Q,I}^{\s{Er}}(X_i^{-}/\norm{X_i^-})}_1}\\
&\nonumber\leq  K  \sup_{\norm{X}_1\leq 1}\sum_{i=1,2} \p{ \norm{X_i^-}_1  \norm{\calB_{\lambda,Q,I}^{\s{Er}}}_{\calD} + \norm{X_i^+}_1  \norm{\calB_{\lambda,Q,I}^{\s{Er}}}_{\calD} }\\
&\leq 2 K \norm{\calB_{\lambda,Q,I}^{\s{Er}}}_{\calD}.\label{eq:DiamondNormDnorm}
\end{align}

We furthermore note that whenever $I$ is fixed, $\calB_{\lambda,Q,I}^{\s{Er}}$ is the superoperator defined in \cite[eq. (A13)]{yi2024complexity}, for the special case where $\calN_I$ is the $I$-subsystem erasure channel given by Kraus operators $\ppp{E_{I,\ui}}_{\ui\in [2^t]}$. In \cite[Proposition 6]{yi2024complexity} it is shown that for a fixed state $\rho\in D(\calH)$ we have
 \begin{equation}
     \norm{\calB_{\lambda,Q,I}^{\s{Er}}(\rho)}_{1}=\norm{\sigma_I-\tr(\sigma)\Gamma_I}_1\label{eq:GotemanIden},
 \end{equation}
where $\sigma=P\rho P$. Combining \eqref{eq:UnionI}, \eqref{eq:GotemanIden} and \eqref{eq:DiamondNormDnorm} we obtain
\begin{align*}
    \zeta(\calE_{t}^{\s{Er}},Q)\leq \norm{\calB_{\lambda,Q}^{\s{Er}}}_{\diamond} &\leq 2K\binom{N}{t}\max_{I\in \binom{[N]}{t}}\norm{\calB_{\lambda,Q,I}^{\s{Er}}}_{\calD}\\
    &\leq 2K\binom{N}{t}\max_{I\in\binom{[N]}{t}}\sup_{\rho\in D(Q)}\norm{\rho_I-\Gamma_I}_1= 2K\binom{N}{t}V_{\s{sub}}(Q,t). 
\end{align*}
This proves the upper bound. 

We now prove the lower bound. By Remark~\ref{rem:generlizedLemma2} we have 
\begin{equation}
    \zeta(\calE_{t}^{\s{Er}},Q)\geq \frac{1}{2}\norm{\calB_{\lambda,Q}^{\s{Er}}}_{\diamond}.\label{eq:Lowlower}
\end{equation}
    On the other hand, let $\rho\in D(Q)$ and $I_{\max}\in\binom{[N]}{t}$ be such that $\norm{\rho_{I_{\max}}-\Gamma_{I_{\max}}}_1=V_{\s{sub}}(Q,t)$.  
By the definition of the diamond norm 
\begin{align}
    \norm{\calB_{\lambda,Q}^{\s{Er}}}_{\diamond}&\geq \norm{I_{L(\calH)}\otimes \calB_{\lambda,Q}^{\s{Er}} (\rho\otimes \rho)}_1=\norm{\rho}_1  \norm{\calB_{\lambda,Q}^{\s{Er}} (\rho)}_1\label{eq:multiplagain}\\
    &=\norm{\sum_{I\in \binom{[N]}{t}}\calB_{\lambda,Q,I}^{\s{Er}} (\rho)}_1=\sum_{I\in \binom{[N]}{t}}\norm{\calB_{\lambda,Q,I}^{\s{Er}} (\rho)}_1\label{eq:directsumTrace}\\
&\geq \biggl\|\calB_{\lambda,Q,I_{\max}}^{\s{Er}} (\rho)\biggr\|_1=\norm{\rho_{I_{\max}}-\Gamma_{I_{\max}}}_1=V_{\s{sub}}(Q,t).\label{eq:finale}
\end{align}
In the above, \eqref{eq:multiplagain} uses multiplicativity of the trace norm under tensor products, \eqref{eq:directsumTrace} uses additivity of the trace norm with respect to a direct sum of operators, and \eqref{eq:finale} follows from \eqref{eq:GotemanIden}. We conclude the proof by combining \eqref{eq:Lowlower}  and \eqref{eq:finale}.

\subsection{Proof of Proposition~\ref{prop:AQECcond}}\label{app:InnerProdCond}

        For a fixed normalized codeword $\ket{c_0}\in Q$ define
         \[ \varepsilon_{\max} :=  \max_{\substack{i,j\in [K]\\k,l\in [M]} } \abs{\bra{c_i} E_k^\dag E_l \ket{c_j}-\bra{c_0} E_k^\dag E_l \ket{c_0}  \delta_{i,j}}.\]
        For an $M\times M$ matrix $\lambda'$ define
        \[\varepsilon_{\lambda'} :=  \max_{\substack{i,j\in [K]\\k,l\in [M]} } \abs{\bra{c_i} E_k^\dag E_l \ket{c_j}-\lambda'_{k,l} \delta_{i,j}}.\]
        Fix some $k,l\in [M]$. Note that for $i\neq j$ we have 
        \begin{align}
             \abs{\bra{c_i} E_k^\dag E_l \ket{c_j}}=\abs{\bra{c_i} E_k^\dag E_l \ket{c_j}-\bra{c_0} E_k^\dag E_l \ket{c_0}  \delta_{i,j}}= \abs{\bra{c_i} E_k^\dag E_l \ket{c_j}-\lambda'_{k,l} \delta_{i,j}}\leq \varepsilon_{\lambda'}.\label{eq:epslambda1}
        \end{align}
        For $i=j$, by the triangle inequality, we have
        \begin{align}
            \abs{\bra{c_i} E_k^\dag E_l \ket{c_i}-\bra{c_0} E_k^\dag E_l \ket{c_0}  \delta_{i,i}}&\nonumber\leq \abs{\bra{c_i} E_k^\dag E_l \ket{c_i}-\lambda'_{k,l}  \delta_{i,i}}+\abs{\bra{c_0} E_k^\dag E_l \ket{c_0}\delta_{i,i}-\lambda'_{k,l}  \delta_{i,i}}\\
            &\nonumber=\abs{\bra{c_i} E_k^\dag E_l \ket{c_i}-\lambda'_{k,l}  \delta_{i,i}}+\abs{\bra{c_0} E_k^\dag E_l \ket{c_0}-\lambda'_{k,l}  \delta_{0,0}}\\
            &\leq 2\varepsilon_{\lambda'}.\label{eq:epslambda2}
        \end{align}
        Combining \eqref{eq:epslambda1} and \eqref{eq:epslambda2}, we have $\varepsilon_{\max}\leq 2 \varepsilon_{\lambda'}$ so the assumption of \eqref{eq:BOcond1} implies that
        \[ \varepsilon_{\max}\leq \frac{\varepsilon^2}{K^2M^2}.\]

Using virtually the same argument, it is shown that we may assume for the specific choice $\lambda_{k,l}=\bra{c_0} {E_l^\dag E_k} \ket{c_0}$, 
the assumptions of \eqref{eq:BOcond1}, \eqref{eq:BOcond2}, and \eqref{eq:BOcond3} hold, when we multiply the r.h.s. by $2$. We remark that 
we require this procedure to ensure that $\lambda$ is Hermitian, which assures that the corresponding B\'eny-Oreshkov operator 
$\calB_{\lambda,Q}^{\calE}$ is Hermitian preserving and therefore we can use Lemma~22 of \cite{elimelech2026asymptotically}, required
for our analysis.
      It was shown in the proof of Theorem~\ref{th:BOerrorset} that $\calB_{\lambda,Q}^{\calE}$ is a Hermitian preserving superoperator. As shown in the proof of Theorem~\ref{th:BOerrorset}, it is sufficient to show that 
      $\|\calB_{\lambda,Q}^{\calE}\|_{\diamond}\leq \varepsilon^2$. 
      We adapt the strategy of \cite{elimelech2026asymptotically}. Since $\calB_{\lambda,Q}^{\calE}$ eventually operates on $L(Q)$, by \cite[Theorem 3.51]{watrous2018theory}, there exists a pure state $\ket{\psi}\in Q^{\otimes 2}$ such that $\|{\calB_{\lambda,Q}^{\calE}}\|_{\diamond}=\norm{I_{L(Q)}\otimes \calB_{\lambda,Q}^{\calE} (\ket{\psi}\bra{\psi})}_{1}$ (see Lemma~22 of \cite{elimelech2026asymptotically} for a formal proof of this statement). Let $\ket{\psi}$ be given by the Schmidt decomposition $\ket{\psi}=\sum_i \sqrt{p_i} \ket{i}\otimes \ket{i'}$, where $\ket{i}$ and $\ket{i'}$ are orthonormal bases of $Q$. Using the triangle inequality, we have  
    \begin{align*}
        \norm{\calB_{\lambda,Q}^{\calE}}_{\diamond}&=\norm{I_{L(Q)}\otimes \calB_{\lambda,Q}^{\calE} (\ket{\psi}\bra{\psi})}_{1}\leq \sum_{i,j=0}^{k-1}\sqrt{p_ip_j}\norm{\ket{i}\bra{j}\otimes \calB_{\lambda,Q}^{\calE}(\ket{i'}\bra{j'})}_1.
    \end{align*}
    Since $\ppp{\ket{i}}_i$ are orthonormal, $\norm{\ket{i}\bra{j}}_1=1$, and therefore, by the multiplicativity of the trace norm with respect to the tensor product, we have:
    \begin{align}
        \norm{\calB_{\lambda,Q}^{\calE}}_{\diamond}&\nonumber
        \leq \sum_{i,j=0}^{k-1}\sqrt{p_ip_j}\norm{\ket{i}\bra{j}\otimes \calB_{\lambda,Q}^{\calE}(\ket{i'}\bra{j'})}_1\\
       &\nonumber \leq  \sum_{i,j=0}^{k-1}\sqrt{p_ip_j}\sup_{\norm{X}_1=1}\norm{\calB_{\lambda,Q}^{\calE}(X)}_1\\
        &\nonumber=\Big(\sum_{i=0}^{K-1}\sqrt{p_i}\Big)^2\sup_{\norm{X}_1=1}\norm{\calB_{\lambda,Q}^{\calE}(X)}_1\\
        &\nonumber \leq K\sup_{\norm{X}_1=1}\norm{\calB_{\lambda,Q}^{\calE}(X)}_1\\
        &=K\sup_{\norm{X}_1=1}\Big\|\sum_{k,l}\tr(XB_{k,l}) \ket{k}\bra{l}\Big\|_1,\label{eq:DiamBound}
    \end{align}
    where we have used the Cauchy--Schwarz inequality. Using \eqref{eq:DiamBound} and the triangle inequality, we obtain
            \begin{align}
                \norm{\calB_{\lambda,Q}^{\calE}}_{\diamond}\nonumber&\leq K \sup_{\norm{X}_1=1}\norm{\sum_{k,l}\tr(XB_{k,l})  \ket{k}\bra{l}}_1\leq K\sup_{\norm{X}_1=1}\sum_{k,l}|\tr(XB_{k,l})|\\
                &\leq K \sum_{k,l} \sup_{\norm{X}_1=1}|\tr(XB_{k,l})|\leq K \sum_{k,l} \norm{B_{k,l}}_\infty.\label{eq:diamondKinfty}
            \end{align}
            Here, the last inequality follows from the duality principle (see Lemma~\ref{lem:stateNorms})
            
            Let us now analyze the expression in \eqref{eq:DiamBound} under the assumptions in \eqref{eq:BOcond1}, \eqref{eq:BOcond2}, and \eqref{eq:BOcond3}:
           \begin{enumerate}
            \item Assume that \eqref{eq:BOcond1} holds. Using \eqref{eq:normsDef}, we obtain
            \begin{equation}
                \norm{B_{k,l}}_\infty=\max_{\lambda\in{\s{eig}(B_{k,l}B_{k,l}^\dag)}}\sqrt{\abs{\lambda}}\leq \sqrt{\sum_{\lambda\in{\s{eig}(B_{k,l}B_{k,l}^\dag)}}\abs{\lambda} }=\norm{B_{k,l}}_2.\label{eq:normIneq1}
            \end{equation}
            We also observe that $B_{k,l}$ is given by
            \begin{align}
                B_{k,l}=P {E_l^\dag E_k} P -\lambda_{k,l}P=\sum_{i\neq j} \bra{c_i}{E_l^\dag E_k}\ket{c_j }  \ket{c_i}\bra{c_j}+\sum_{i} (\bra{c_i}{E_l^\dag E_k}\ket{c_i }-\lambda_{k,l})  \ket{c_i}\bra{c_i}.
            \end{align}
            Estimating the 2-norm, we obtain
            \begin{align*}
                \norm{B_{k,l}}_2&=\sqrt{\tr\p{B_{k,l}B_{k,l}^\dag}}=\sqrt{\sum_{i\neq j} \abs{\bra{c_i}{E_l^\dag E_k}\ket{c_j }}^2+\sum_{i} \abs{(\bra{c_i}{E_l^\dag E_k}\ket{c_i }-\lambda_{k,l})}^2}\\
                &=\sqrt{\sum_{i, j=0}^{K-1} \abs{\bra{c_i}{E_l^\dag E_k}\ket{c_j }-\lambda_{k,l}  \delta_{i,j}}^2}\leq K   \varepsilon_{\max}.
                \end{align*}
                Combining the above with \eqref{eq:normIneq1} and \eqref{eq:diamondKinfty} we conclude that
                \[\norm{\calB_{\lambda,Q}^{\calE}}_{\diamond}\leq K^2M^2\varepsilon_{\max}
                \leq \varepsilon^2.\]

                \item The second case follows using similar arguments. Note that under the assumption that  $\bra{c_i} E_k^\dag E_l \ket{c_j}=0$ for all $j\neq i$, we have 
                 \[
                 \norm{B_{k,l}}_\infty =\Big\|{\sum_{i} (\bra{c_i}{E_l^\dag E_k}\ket{c_i }-\lambda_{k,l})  \ket{c_i}\bra{c_i}}\Big\|_\infty=\max_{\substack{i\in [K]\\k,l\in[M]}}\abs{\bra{c_i}{E_l^\dag E_k}\ket{c_i }-\lambda_{k,l}}\leq \frac{\varepsilon^2}{K M^2}.
                 \]
 Now, by \eqref{eq:normIneq1} and \eqref{eq:diamondKinfty} we have 
                \[\norm{\calB_{\lambda,Q}^{\calE}}_{\diamond}\leq K  \sum_{k,l} \frac{\varepsilon^2}{K M^2}=\varepsilon^2.\]
                \item Assuming that $\bra{c_i} E_k^\dag E_l \ket{c_j}=0$ for all $(i,k)\neq (j,l)$, we immediately conclude that $B_{k,l}=0$ for $k\neq l$ and therefore by the same argument used in the previous case 
                \[\norm{\calB_{\lambda,Q}^{\calE}}_{\diamond}\leq K  \sum_l \norm{B_{l,l}}_\infty \leq K  M   \max_{l\in[M]}\abs{\bra{c_i}E_l^\dag E_l\ket{c_i }-\lambda_{l,l}}\leq \varepsilon^2.\]
            
        \end{enumerate}

        \subsection{Proof of Theorem~\ref{th:exactQECpatition}}\label{app:tverberg}
        The proof relies on Tverberg's theorem, a classical result in convex geometry which states that any set of $(K-1)(m+1)+1$ points in $\R^m$ can be partitioned into $K$ subsets whose convex hulls have a nonempty intersection. In \cite{aydin2026quantum}, the authors showed that the existence of a partition for which the corresponding system of linear equations admits a nonnegative solution can be reduced to an application of Tverberg's theorem. This result is formulated as follows:
        \begin{lemma}[{\cite[Proposition~VII.4.]{aydin2026quantum}}]\label{lem:TvErequationsSys}
            Let $M\in \N$ be a positive integer, $C$ be a finite index set and let $\calX=\ppp{x_c ~:~c\in C}$ be a set of variables indexed by $C$. Assume that $\beta_{m,c}$, $m\in[M]$, $c\in C$, be real constants. Then, if $|C|\geq (K-1)(M+1)+1$ there exists a disjoint partition of $C$ to $C_1,\dots,C_K$ such that the system of equations 
            \begin{align*}
                \sum_{c\in C_1}\beta_{1,c}  x_c&=\sum_{c\in C_2}\beta_{1,c}  x_c =\cdots=\sum_{c\in C_K}\beta_{1,c}  x_c\\
                &~\vdots\\
                \sum_{c\in C_1}\beta_{M,c}  x_c&=\sum_{c\in C_2}\beta_{M,c}  x_c =\cdots=\sum_{c\in C_K}\beta_{M,c}  x_c\\
            \end{align*}
            has a nonnegative solution $(x_c)_{c\in C}$ which satisfies
            \[\sum_{c\in C_1}x_c=\sum_{c\in C_2}x_c= \cdots =\sum_{c\in C_K}x_c=1.\]
        \end{lemma}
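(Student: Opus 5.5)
The plan is to reduce the statement directly to Tverberg's theorem in $\R^M$, by reading the coefficients $\beta_{m,c}$ as coordinates of points. For each index $c\in C$, define the point
\[
p_c := (\beta_{1,c},\beta_{2,c},\dots,\beta_{M,c})\in \R^M .
\]
This gives a family $\ppp{p_c}_{c\in C}$ of $|C|\geq (K-1)(M+1)+1$ points in $\R^M$, counted with multiplicity, since different indices may yield the same point. Tverberg's theorem is applied to indexed families, i.e.\ multisets, so repeated points cause no difficulty. Its standard statement also takes the parts nonempty, so each part below has at least one element.

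Tverberg's theorem yields a partition $C=C_1\sqcup\dots\sqcup C_K$ into nonempty parts and a point $z\in\R^M$ with
\[
z\in \bigcap_{k=1}^K \mathrm{conv}\ppp{p_c : c\in C_k}.
\]
Unpacking membership in each convex hull, for every $k$ there are nonnegative weights $(a_c)_{c\in C_k}$ with $\sum_{c\in C_k}a_c=1$ and $\sum_{c\in C_k}a_c\,p_c=z$. Because the parts $C_k$ are disjoint, these weights define a single vector $(a_c)_{c\in C}$ without conflict. I set $x_c:=a_c$.

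Verification is coordinatewise. For each $m\in[M]$ and each $k$,
\[
\sum_{c\in C_k}\beta_{m,c}\,x_c = \Bigl(\sum_{c\in C_k}x_c\,p_c\Bigr)_m = z_m .
\]
The right-hand side does not depend on $k$, so all $M$ chains of equalities hold. Nonnegativity and the normalization $\sum_{c\in C_k}x_c=1$ for every $k$ hold by construction.

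I do not expect a genuine obstacle. The only points needing care are, first, that the argument uses the version of Tverberg's theorem for indexed families (multisets), with nonempty parts. Second, the ambient dimension must be exactly $M$, the number of equation rows, which is what produces the threshold $(K-1)(M+1)+1$. In the application to Theorem~\ref{th:exactQECpatition} one would take the rows to be indexed by the operators $E^\dag E$ (level $\mathscr{L}_2$) or $E^\dag F$ split into real and imaginary parts (level $\mathscr{L}_1$). That bookkeeping belongs to the theorem, however, not to this lemma.
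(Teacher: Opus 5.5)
Your proof is correct and follows the same route the paper indicates. The paper cites this lemma from \cite{aydin2026quantum} and describes it as a direct reduction to Tverberg's theorem in $\R^M$: the columns $(\beta_{1,c},\dots,\beta_{M,c})$ are treated as points, and the convex weights of a common Tverberg point serve as the nonnegative solution $x_c$. One detail is worth stating explicitly. When $|C|$ exceeds $(K-1)(M+1)+1$, either invoke the version of Tverberg's theorem for at least that many points, or apply it to a subfamily of exactly that size and put the remaining indices into any part with weight $x_c=0$.
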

Let us prove Theorem~\ref{th:exactQECpatition}.
         We begin with a slightly simpler case where $(\calH_X,d,\mathscr{E})\in \mathscr{L}_2$, which we then generalize to the first-level scenario. 
      \begin{proof}[
      {\underline{Proof for the case $(\calH_X,d,\mathscr{E})\in \mathscr{L}_2$}}]
        The assumption $(\calH_X,d,\mathscr{E})\in \mathscr{L}_2$ implies that $(\calH_X,d,\mathscr{E})\in \mathscr{L}_1$  (by definition). Thus, by item 1 of Lemma~\ref{lem:PartitionCodesProp}  a partition code $Q$ induced by $C$ (with $d(C)>t$) with a canonical basis $\ppp{\ket{c_i}}_{i=1}^K$ satisfies 
        \[\bra{c_i}E^\dag F \ket{c_j}=0 \] 
        for all $i\neq j$ and $E,F\in \calE_t$. In particular, the orthogonality KL conditions are fulfilled.  Item 2. of Lemma~\ref{lem:PartitionCodesProp} gives that
        \[\bra{c_i}E^\dag F \ket{c_i}=0 \] 
        for all $i$ and $E\neq F$. The conclusion is that a partition code $Q$ spanned by 
        \begin{equation}
            \ket{c_i}=\sum_{c\in C}\alpha_c  \ket{c}, \quad \sum_{c\in C_i}|\alpha_c|^2=1, \quad i=0,1,\dots, K\label{eq:Tver1}
        \end{equation}
        is QEC for $\calE_t$ if and only if the non-deformation conditions hold:
        \begin{equation}
            \bra{c_1}E^\dag E\ket{c_1}=\bra{c_2}E^\dag E\ket{c_2}= \cdots =\bra{c_K}E^\dag E\ket{c_K},\label{eq:Tver3}
        \end{equation}
        for all $E\in \calE_t$.
        Let us expand the above expressions:
        \begin{equation}
            \bra{c_i}E^\dag E\ket{c_i}=\sum_{c,c'\in C_i}\alpha_c^*\alpha_{c'}\bra{c}E^\dag E \ket{c'}=\sum_{c\in C_i}|\alpha_c|^2  \bra{c}E^\dag E \ket{c}
        \end{equation}
        where we used that $\bra{c}E^\dag E \ket{c'}=0$ for $c,c'$ such that $d(c,c')>t$ by the definition of $\mathscr{L}_1$. 
        Replacing $|\alpha_c|^2$ by $x_c$ denoting $\calE_t=\ppp{E_1,\dots,E_M}$ and $\beta_{i,c}=\bra{c} E_i^\dag E_i \ket{c}$, the system of equations given by \eqref{eq:Tver1} and \eqref{eq:Tver3} reduces to
        \begin{align}
        &\sum_{c\in C_i}x_c  \beta_{m,c}\nonumber=\sum_{c\in C_j}x_c  \beta_{m,c} \quad \forall i,j=1,\dots,K,~m=1,\dots,M \\
         &\sum_{c\in C_i}x_c=1  \quad \forall i=1,\dots,K\label{eq:Tver2}
        \end{align} which is at  form of the system of Lemma~\ref{lem:TvErequationsSys}. In particular, by Lemma~\ref{lem:TvErequationsSys}, under the assumption $|C|\geq (K-1)(|\calE_t|+1)+1
        $ there exists a partition $C=\bigcup_{i=1,\dots,K} C_i$ and a nonnegative solution $(x_c)_{c\in C}$, which satisfies \eqref{eq:Tver2}. In particular, the partition code obtained by the partition $C_1,\dots, C_K$ and coefficients $\alpha_c=\sqrt{x_c}$ (whose dimension is $K$) is QEC for $\calE_t$.
        
        \end{proof}
\begin{proof}[
      {\underline{Proof for the case $(\calH_X,d,\mathscr{E})\in \mathscr{L}_1$}}] The proof follows the same idea as the case $(\calH_X,d,\mathscr{E})\in \mathscr{L}_2$, with a minor difference:   While the equation $\bra{c_i}E^\dag F \ket{c_j}=0$ for $i\neq j$ remains  true, $\bra{c_i}E^\dag F \ket{c_i}$ may be nonzero for distinct $E$ and $F$. In particular, the constraint of \eqref{eq:Tver3}, extends to 
      \begin{equation*}
          \bra{c_1}E^\dag F\ket{c_1}=\bra{c_2}E^\dag F\ket{c_2}= \cdots =\bra{c_K}E^\dag F\ket{c_K}, \quad \forall E,F\in \calE_t.
      \end{equation*}
      This imposes more equations on our system: 
        \begin{align*}
            \bra{c_i}E^\dag F \ket{c_i}&=\sum_{c,c'\in C_i}\alpha_c^{*}\alpha_{c'} \bra{c}E^\dag F \ket{c'}=\sum_{c\in C_i}\abs{\alpha_c}^2   \bra{c}E^\dag F \ket{c}, 
        \end{align*}
where we used that $\bra{c}E^\dag F\ket{c'}=0$ for $c,c'$ such that $d(c,c')>t$ by the definition of $\mathscr{L}_1$. Reformulating in the form of \eqref{eq:Tver2}, we look for a partition that admits a nonnegative solution to the following system:
\begin{align}
        &\sum_{c\in C_i}x_c  \beta_{m,k,c}\nonumber=\sum_{c\in C_j}x_c  \beta_{m,k,c} \quad \forall i,j=1,\dots,K,~m,k=1,\dots,M \\
         &\sum_{c\in C_i}x_c=1  \quad \forall i=1,\dots,K\label{eq:Tver8}
        \end{align}
    where $\beta_{m,k,c}=\bra{c}E_m^\dag E_k \ket{c}$. Lemma~\ref{lem:TvErequationsSys} we conclude that a partition that admits a solution to the system \eqref{eq:Tver8} exists if $|C|\geq (K-1)(|\calE_t|^2 +1)+1$, as desired.
\end{proof}
\subsection{Proof of Theorem~\ref{th:AQECrandomPar}}\label{app:RandomPartit}

\begin{proof}[\underline{The case $(\calH_X,d,\mathscr{E})\in \mathscr{L}_2$}]  By Lemma~\ref{lem:SufCondHira} it is sufficient to find numbers $\ppp{\lambda_E}_{E\in\calE_t}$ such that with probability at least $p_t$ the canonical basis of the random code $\calQ_{K,L}$ satisfies 
\begin{equation}
    \max_{\substack{i\in [K]\\E\in \calE_t} } \abs{\bra{\s{c}_i} E^\dag E \ket{\s{c}_i}-\lambda_{E}}\leq \frac{\varepsilon^2}{2K |\calE_t|}  \qquad(={\rm Eq.}\eqref{eq:SecondLevAQEC}).\label{eq:secondLevAQEC1}
\end{equation}
Let $A_t$ be the event that the random underlying classical code $\s{C}=\ppp{\s{x_1},\dots,\s{x_L}}$ satisfies $d(\s{C})>t$. For $E\in \calE_t$ and $i\in[K]$ define $A_{E,i}$ to be the event that
\begin{equation}
    \Big|\frac{1}{T}\sum_{j=0}^{T-1} \bra{\s{x}_{i  T+j} }E^\dag E \ket{\s{x}_{i  T+j} }-\lambda_E\Big|\leq \frac{\varepsilon^2}{2KM_t}, \quad \lambda_E=\E_{\s{x}\sim \mu}\pp{\bra{\s{x} }E^\dag E \ket{\s{x} }}.\label{eq:events1}
\end{equation}
We claim that
\begin{align}
    \P\pp{\s{Q}^{\s{P}}_{K,L}(\mu) \text{ is }\varepsilon\text{-AQEC for }\calE_t, \dim(Q_{K,L})=K}&\nonumber\geq \P\biggl[{\biggl\{{\max_{\substack{i\in [K]\\E\in \calE_t} } \abs{\bra{\s{c}_i} E^\dag E \ket{\s{c}_i}-\lambda_{E}}\leq \frac{\varepsilon^2}{2K M_t}}\biggr\}\bigcap A_t}\biggr]\\
    &= \P\biggl[{A_t\bigcap \biggl({\bigcap_{\substack{E\in \calE_t\\ i\in [K]}}A_{E,i}}\biggr)}\biggr].\label{eq:whicgives}
\end{align}
Indeed, note that if  $A_t$ occurs and $d(\s{C})>t\geq 0$, the fact that $d$ is a metric function 
implies that the elements $\s{x}_0,\dots,\s{x}_{L-1}$ are all distinct, and by Lemma~\ref{lem:PartitionCodesProp}, $\dim(\s{Q})=K$. Additionally, whenever $A_t$ occurs (and $d(\s{C})>t$), and also
\[\max_{\substack{i\in[K]\\ E\in \calE_t}}\abs{\bra{\s{c}_i}E^\dag E \ket{\s{c}_i}-\lambda_E}\leq\frac{\varepsilon^2}{2KM_t},\]
we have that $\s{Q}^{\s{P}}_{K,L}(\mu)$ satisfies the condition \eqref{eq:BOcond2}, and by Lemma~\ref{lem:SufCondHira}, ${\s{Q}^{\s{P}}_{K,L}(\mu)}$ is an $\varepsilon$-AQEC code for $\calE_t$.
This proves the first inequality.  For the second inequality, note that if $d(\s{C})>t$ we have that for all $i\in [K]$ and $E\in \calE_t$ 
\begin{align*}
    \bra{\s{c}_i}E^\dag E \ket{{\s{c}_i}}-\lambda_E&=\sum_{j,k=0}^{T-1}\frac{1}{T} \bra{\s{x}_{T  i +j}}E^\dag E \ket{\s{x}_{T  i +k}}-\lambda_E\\
    &=\sum_{\substack{j, k=0\\ j\neq k}}^{T-1}\frac{1}{T} \bra{\s{x}_{T  i +j}}E^\dag E \ket{\s{x}_{T  i +k}}+\sum_{j=0}^{T-1}\frac{1}{T} \bra{\s{x}_{T  i +j}}E^\dag E \ket{\s{x}_{T  i +j}}-\lambda_E\\
    &=\sum_{j=0}^{T-1}\frac{1}{T} \bra{\s{x}_{T  i +j}}E^\dag E \ket{\s{x}_{T  i +j}}-\lambda_E,
\end{align*}
where the last equality follows since $d(\s{C})>t$ implies $\s{x}_{T  i+j}\neq \s{x}_{T  i+k}$ for all $j\neq k$ and therefore by the definition of $\mathscr{L}_2$, eq.~\eqref{eq:secondlavel},
$
    \bra{\s{x}_{T  i +j}}E^\dag E \ket{\s{x}_{T  i +k}}=0.
$
This implies that 
\begin{align*}
    A_t\bigcap \biggl({\bigcap_{\substack{E\in \calE_t\\ i\in [K]}}A_{E,i}}\biggr)&=A_t\bigcap \biggl({\bigcap_{\substack{E\in \calE_t\\ i\in [K]}}\ppp{\abs{\bra{\s{c}_i}E^\dag E \ket{{\s{c}_i}}-\lambda_E}\leq \frac{\varepsilon^2}{2KM_t}}}\biggr)\\
    &=A_t\bigcap \biggl\{\max_{\substack{i\in [K]\\E\in \calE_t} } \abs{\bra{\s{c}_i} E^\dag E \ket{\s{c}_i}-\lambda_{E}}\leq \frac{\varepsilon^2}{2K M_t}\biggr\},
\end{align*}
which gives \eqref{eq:whicgives}. 

Using the union bound, we have 
\begin{align}
    \P\biggl[{A_t\bigcap \biggl({\bigcap_{\substack{E\in \calE_t\\ i\in [K]}}A_{E,i}}}\biggr)\biggr]\geq 1- \P\pp{A_t^c}-\sum_{\substack{E\in \calE_t\\ i\in [K]}}\P\pp{A_{E,i}^c}=p_{\mu}(L,t)-\sum_{\substack{E\in \calE_t\\ i\in [K]}}\P\pp{A_{E,i}^c}. \label{eq:onebeforelast}
\end{align}
To upper-bound the probability of $A_{E,i}^c$, we use Hoeffding's inequality; see Lemma~\ref{lem:Hoeffding}. Note that the random variables $\bra{\s{x}_{i  T+j}}E ^\dag E \ket{\s{x}_{i  T+j}}$,  $j=0,\dots, T-1$ are i.i.d. with expectation $\lambda_E$.  Also observe that by the definition of the metric-basis error intensity $\kappa_t(\mathscr{E},X)$ (Def.~\ref{def:parametersForApprox}) we have
\[0\leq \bra{\s{x}_{i  T+j}}E ^\dag E \ket{\s{x}_{i  T+j}}\leq  \kappa_t(\mathscr{E},X)\] 
 with probability $1$. Thus, applying Lemma~\ref{lem:Hoeffding} with $a=0$ and $b=\kappa_t(\mathscr{E},X)$, we obtain 
\begin{align}
    \P\pp{A_{E,i}^c}&\nonumber=\P\biggl\{\biggl|\frac{1}{T}\sum_{j=0}^{T-1} \bra{\s{x}_{i  T+j} }E^\dag E \ket{\s{x}_{i  T+j} }-\lambda_E\biggr|>\frac{\varepsilon^2}{2KM_t}\biggr\}\\
    &\leq 2\exp\p{-\frac{2T \varepsilon^4}{4K^2M_t^2 \kappa_t(\mathscr{E},X)^2}}=2\exp\p{-\frac{L \varepsilon^4}{2K^3M_t^2 \kappa_t(\mathscr{E},X)^2}}.\label{eq:lastofus}
\end{align}
Combining \eqref{eq:whicgives}, \eqref{eq:onebeforelast}, and \eqref{eq:lastofus} concludes the proof.
\end{proof}

\begin{proof}[\underline{The case $(\calH_X,d,\mathscr{E})\in \mathscr{L}_1$}] We follow the same proof strategy with one minor difference: as implied by \eqref{eq:FirstLevAQEC} in Lemma~\ref{lem:SufCondHira}, conditions of the form \eqref{eq:events1} must now be satisfied for distinct operators $E,F\in\calE_t$. Accordingly, let $A_t$ be as above and let $A_{E,F,i}$ be the event that 
    \[
\biggl|{\frac{1}{T}\sum_{j=0}^{T-1} \bra{\s{x}_{i  T+j} }E^\dag F \ket{\s{x}_{i  T+j} }-\lambda_{E,F}}\biggr|\leq \frac{\varepsilon^2}{2KM_t^2}, \quad \lambda_{E,F}=\E_{\s{x}\sim \mu}\pp{\bra{\s{x} }E^\dag F \ket{\s{x} }}.
\]
Using the same arguments as in the first part of the proof, if $d(\s{C})>t$, we have that for all $E$ and $F$
\begin{align*}
    \bra{\s{c}_i}E^\dag F \ket{{\s{c}_i}}-\lambda_{E,F}=\sum_{j=0}^{T-1}\frac{1}{T} \bra{\s{x}_{T  i +j}}E^\dag F \ket{\s{x}_{T  i +j}}-\lambda_{E,F},
\end{align*}
as by the $\mathscr{L}_1$ assumption for all $j\neq k$
\begin{equation}
    \bra{\s{x}_{T  i +j}}E^\dag F \ket{\s{x}_{T  i +k}}=0.\label{eq:structureenable}
\end{equation}
Thus, by Lemma~\ref{lem:SufCondHira} we have
\begin{equation}
    \P\pp{\s{Q}^{\s{P}}_{K,L}(\mu) \text{ is }\varepsilon\text{-AQEC for }\calE_t, \dim(Q_{K,L})=K} \geq \P\biggl[{A_t\bigcap \biggl({\bigcap_{\substack{E,F\in \calE_t\\ i\in [K]}}A_{E,F,i}}}\biggr)\biggr].\label{eq:whicgives2}
\end{equation}
Using the union bound, we obtain 
    \begin{align}
         \P\pp{\s{Q}^{\s{P}}_{K,L}(\mu) \text{ is }\varepsilon\text{-AQEC for }\calE_t, \dim(Q_{K,L)}=K} &\geq p_\mu(L,t)-\sum_{\substack{E,F\in \calE_t\\ i\in [K]}}\P\pp{A_{E,F,i}^c}.\label{eq:againUnion}
    \end{align}
 Using Hoeffding's inequality once again, we bound the probability of $A_{E,F,i}^c$ as follows:
    \begin{align}
        \P\pp{A_{E,F,i}^c}&\nonumber=\P\biggl[{\bigg|{\sum_{j=0}^{T-1}\frac{1}{T} \bra{\s{x}_{T  i +j}}E^\dag F \ket{\s{x}_{T  i +j}}-\lambda_{E,F}}\bigg|>\frac{\varepsilon^2}{2M_t^2 K } }\biggr]\\
    &\leq \exp\p{-\frac{L \varepsilon^4}{8K^3M_t^4 \kappa_t(\mathscr{E},X)^2}},\label{eq:againConectration}
    \end{align}
     where we used the fact that for all $i$ and $j$, by the Cauchy--Schwarz inequality,
    \begin{align*}
    \abs{ \bra{\s{x}_{T  i +j}}E^\dag F \ket{\s{x}_{T  i +j}}}\leq \sqrt{\bra{\s{x}_{T  i +j}}E^\dag E \ket{\s{x}_{T  i +j}}  \bra{\s{x}_{T  i +j}}F^\dag F \ket{\s{x}_{T  i +j}}}\leq \kappa_t(\mathscr{E},X),
\end{align*}
so the range of each of the terms in the sum over $j$ satisfies
\[-\kappa_t(\mathscr{E},X)\leq  \bra{\s{x}_{T  i +j}}E^\dag F \ket{\s{x}_{T  i +j}} \leq \kappa_t(\mathscr{E},X) .\]
We conclude the proof by combining \eqref{eq:againUnion} and \eqref{eq:againConectration}.
    \end{proof}

\subsection{Proof of Theorem~\ref{th:TypicallityProperties}}\label{app:Typicallityproof}
Let $\s{C}=\set{\s{x_0},\dots,\s{x}_{L-1}}$ be the underlying i.i.d. classical code in the construction of $\s{Q}_{K,L}^{\s{P}}$, and let $\ket{\s{c}_0},\dots,\ket{\s{c}_{K-1}}$ be the canonical basis of $\s{Q}_{K,L}^{\s{P}}$ corresponding to the partition $\s{C}_0,\dots,\s{C}_{K-1}$. We need to prove that 
    \[\P\pp{\min_{\ket{\psi}\in \s{Q}_{K,L}^{\s{P}}} \bra{\psi}\Pi_{\mathscr{P}}\ket{\psi}\geq 1-\varepsilon}\geq  p_\mu(L,0)-2Ke^{-2T^{1-2\xi}}. \]
    Note that for any basis elements of $\calH_{X}$ of the form $\ket{x},\ket{x'}$, $x,x'\in X$, we have 
        \[
        \bra{x}\Pi_{\mathscr{P}}\ket{x'}=\delta_{x,x'}\Ind_{\ppp{x\in \mathscr{P}}},
        \]
    where $\Ind_{\ppp{x\in \mathscr{P}}}$ denotes the indicator function of the event that $x\in \mathscr{P}$. In particular, if $\s{x}_0, \dots \s{x}_{L-1}$ are all distinct, or equivalently $|\s{C}|=L$, for $i,j\in [K]$ we have
    \begin{align}
        \bra{\s{c}_i}\Pi_{\mathscr{P}}\ket{\s{c}_j}&\nonumber= \sum_{x\in \s{C}_i}\sum_{x'\in \s{C}_j} \frac{1}{T}\bra{x}\Pi_{\mathscr{P}}\ket{x'}= \sum_{x\in \s{C}_i}\sum_{x'\in \s{C}_j} \frac{1}{T}\delta_{x,x'}\Ind_{\ppp{x\in \mathscr{P}}}\\
        &=\begin{cases}
            0 & i\neq j,\\
            \frac{1}{T}\abs{\s{C}_i\bigcap \mathscr{P}} & i=j.
        \end{cases}=\begin{cases}
            0 & i\neq j,\\
            \frac{1}{T}\sum_{k=0}^{T-1}\Ind_{\ppp{\s{x}_{i  T+k}\in \mathscr{P}}} & i=j.
        \end{cases}\label{eq:HoeffAgain}
    \end{align}
    
    Let $A$ be the event that $\abs{\s{C}}=L$ and for $i\in [K]$ let $A_i$ be the event that $\abs{\s{C}_i\bigcap \mathscr{P} }/T\geq 1-\varepsilon$. For an arbitrary codeword $\sum_{i}\beta_i \ket{\s{c}_i}=\ket{\psi}\in \s{Q}_{K,L}^{\s{P}}$, as long as $\abs{\s{C}}=L$,   we have 
    \begin{align}
        \bra{\psi}\Pi_{\mathscr{P}}\ket{\psi}=\sum_{i=0}^{K-1}\sum_{j=0}^{K-1}\beta_{i}^*\beta_j \bra{\s{c}_i}\Pi_{\mathscr{P}}\ket{\s{c}_j}=\sum_{i=0}^{K-1}|\beta_i|^2 \frac{1}{T} \abs{\s{C}_i \bigcap \mathscr{P}}\geq \min_{i\in [K]} \frac{1}{T} \abs{\s{C}_i \bigcap \mathscr{P}},\label{eq:stamStateProp}
    \end{align}
    where we have used the normalization $\sum_i |\beta_i|^2=1$. By \eqref{eq:stamStateProp}, we conclude that
    \begin{equation}
        \P\Big[{\min_{\ket{\psi}\in \s{Q}_{K,L}^{\s{P}}} \bra{\psi}\Pi_{\mathscr{P}}\ket{\psi}\geq 1-\varepsilon\Big]}\geq \P\pp{A\cap A_0 \cap  \cdots \cap A_{K-1}}\geq \P[A]-\sum_{i=0}^{K-1}\P[A_i^c],\label{eq:unionb}
    \end{equation}
    where the last inequality follows since $\P[A\cap(\cap_{i=0}^{K-1}A_i)]\geq\P[A]-\P[\cup_{i=0}^{K-1} A_i^c]$ using the union bound.
    By definition of $\varepsilon$, for any $i$, we have
    \begin{align}
        \P\pp{A_i^c}&\nonumber=\P\pp{ \frac{1}{T}\sum_{k=0}^{T-1}\Ind_{\ppp{\s{x}_{i  T+k}\in \mathscr{P}}} < 1-\varepsilon}\\
        &\nonumber=\P\pp{ \frac{1}{T}\sum_{k=0}^{T-1}\Ind_{\ppp{\s{x}_{i  T+k}\in \mathscr{P}}}- p_{\mu}(\mathscr{P}) < -T^{-\xi} }\\
        &\nonumber\leq \P\pp{ \abs{\frac{1}{T}\sum_{k=0}^{T-1}\Ind_{\ppp{\s{x}_{i  T+k}\in \mathscr{P}}}- p_{\mu}(\mathscr{P})}\geq  T^{-\xi} }\\
        &\leq 2\exp\p{-2  T^{1-2\xi}},\label{eq:HoefShuv}
    \end{align}
where in \eqref{eq:HoefShuv} we apply Hoeffding's inequality (Lemma~\ref{lem:Hoeffding}) independent indicators  $\Ind_{\ppp{\s{x}_{i  T+k}\in \mathscr{P}}}$, which are bounded in $[0,1]$ and satisfy $\E[\Ind_{\ppp{\s{x}_{i  T+k}\in \mathscr{P}}}]= p_{\mu}(\mathscr{P})$. Now combining \eqref{eq:unionb} and \eqref{eq:HoefShuv} concludes the proof.

\bibliographystyle{abbrvurl}
\bibliography{quantum,PI}


\end{document}